\documentclass[11pt,letterpaper]{article} 
\usepackage[margin=1in]{geometry} 

\usepackage[T1]{fontenc}
\usepackage{natbib}
\usepackage{comment}
\usepackage{bm, bbm, amsmath, amssymb, amsthm, mathtools}
\usepackage{algorithm, algpseudocode}
\usepackage{enumitem}
\usepackage{float}

\usepackage{graphicx}
\graphicspath{{figures}}

\usepackage{hyperref}
\hypersetup{
	colorlinks=true,
	citecolor=black,
	linkcolor=black,
}

\newcommand{\mc}[1]{\mathcal{#1}}
\newcommand{\bvec}{\bm{\beta}}
\newcommand{\dvec}{\bm{\delta}}
\newcommand{\bvecOR}{\hat \bvec^o}
\newcommand{\Mset}{\mathcal{M}}
\newcommand{\Sset}{\mathcal{S}}

\newcommand{\MSset}{\overline \Mset}
\newcommand{\norm}[1]{\left\lVert#1\right\rVert}
\newcommand{\shortnorm}[1]{\lVert#1\rVert}
\newcommand{\Cmat}{\bm C}

\newcommand{\reals}{\mathbb R}
\newcommand{\defeq}{\vcentcolon=}
\newcommand{\avec}{\bm a}
\newcommand{\xvec}{\bm x}
\newcommand{\xmat}{\bm X}
\newcommand{\yvec}{\bm y}

\newcommand{\hatmat}[1]{\widehat {\bm #1} }
\newcommand{\hatvec}[1]{\hat {\bm #1} }
\newcommand{\empmean}{\mathbb{E}_n}
\newcommand{\empproc}{\mathbb{G}_n}

\newcommand{\zvec}{\bm z}
\newcommand{\prox}{\operatorname{Prox}}
\newcommand{\shrink}{\operatorname{Shrink}}
\newcommand{\checkloss}{\rho_\tau}

\newcommand{\eps}{\varepsilon}

\DeclareMathOperator{\E}{E}

\DeclareMathOperator{\tr}{tr}

\DeclareMathOperator*{\argmin}{arg \, min}
\DeclareMathOperator*{\argmax}{arg \, max}

\theoremstyle{plain}

\newtheorem{theorem}{Theorem}[section]

\newtheorem{corollary}[theorem]{Corollary}
\newtheorem{remark}{Remark}
\newtheorem{lemma}[theorem]{Lemma}

\allowdisplaybreaks

\begin{document}	

	\title{\bf Global Inference for High Dimensional Quantile Regression with Heterogeneity}
	
	\author{Tate Jacobson%
		\hspace{.2cm}\\
		Department of Statistics and Data Science, Oregon State University,\vspace{0.1cm} \\
		Daeyoung Ham \\
		Department of Statistics and Data Science, University of Texas at San Antonio, \vspace{0.1cm}\\
		and \\
		Ben Sherwood \\ School of Business, The University of Kansas}
	
	\maketitle
	
	\begin{abstract}
		Quantile regression has received considerable attention in high dimensional methodology due to its robustness and ability to accommodate heteroskedasticity.
		Despite this attention, current inferential methods for high dimensional quantile regression have significant limitations---only testing significance, testing at a single quantile at a time, or assuming homoskedasticity. 
		To generalize the inferences that can be drawn in high dimensional quantile regression, we introduce partial penalized Wald and rank score tests for testing linear hypotheses uniformly over a set of quantiles $\mc T$ under a heteroskedastic model.
		We propose globally adaptive local linear approximation (LLA) algorithms for folded concave penalized estimation and show that our global LLA algorithms recover ``testing oracles'' for the full and reduced models with high probability. 
		We then show that our global LLA estimators achieve the $\sqrt{s/n}$ oracle estimation rate uniformly over $\mc T$, establish Bahadur representations, and derive asymptotic distributions for the test statistics under the null and local alternatives.
		We develop a dual formulation incorporating adaptive penalty weights for the rank scores and use the strong oracle property to establish their oracle equivalence. Using empirical process techniques, we derive reference distributions based on suprema of squared Bessel processes, extending partial penalized inference beyond pointwise chi-square calibration.
	\end{abstract}
	
	\textbf{Keywords:} high dimensional testing, local linear approximation, oracle property, ultra-high dimensional data, uniform inference
	
	\section{Introduction}
	In recent decades ultra-high dimensional data, in which the number of predictors $p$ grows exponentially with the number of observations $n$, have become commonplace in domains from economics to genomics.
	The growing availability of these data has spurred the development of statistical methods for estimation and inference in ultra-high dimensions.
	With extensions of linear regression having been thoroughly explored \citep{fan_strong_2014, shi2019linear}, recent efforts have shifted to developing robust methods for this setting. 
	Quantile regression \citep{koenker_regression_1978} has been a leading method in these developments due to its well-known robustness to outliers and ability to model heterogeneity across quantiles.
	Under the quantile regression model, the $\tau$th conditional quantile of the response $Y$ given the predictors $\xvec = (1, x_1, \ldots, x_p)^T$ satisfies
	$
		Q_{y}(\tau|\xvec) = \xvec^T \bvec^*(\tau)
	$
	for some $\bvec^*(\cdot): (0,1) \mapsto \reals^{p+1}$.
	Under this model, both the central tendency and the spread of the conditional distribution of $Y$ can change with $\xvec$, with changes in the spread captured by changes in $\bvec^*(\cdot)$ across quantiles.

	Much research in high dimensional quantile regression has focused on modeling $\bvec^*(\tau)$ at a single quantile $\tau$, an approach referred to as \textit{locally concerned} quantile regression \citep{zheng2015globally}.
	Several recent works \citep{wangWuLi2012, fan_strong_2014, tan2022high} have studied the estimation and selection properties of folded-concave penalized quantile regression estimators at a single $\tau$, deriving their estimation rates and showing that they recover the local oracle estimator at $\tau$ with high probability.
	For inference at a single $\tau$, \citet{yan2023confidence} have introduced confidence intervals and significance tests for individual regression coefficients based on a debiased convolution smoothed estimator, \citet{yang_communication-efficient_2025} have proposed a partial penalized Wald test for distributed data, and \citet{zhou_testing_2026} have introduced a decorrelated score test for testing a high dimensional coefficient vector.
	
	Another line of research models the coefficient process $\bvec^*(\cdot)$ over a set of quantiles $\mc T \subset (0,1)$, an approach known as \textit{globally concerned} quantile regression \citep{zheng2015globally}.	
	Within this paradigm, \citet{BC.L1penalized} studied the $\ell_1$-penalized quantile regression estimator in the sparse ultra-high dimensional setting where $s \ll n$ predictors have nonzero effects over $\mc T$;
	they showed that the $\ell_1$-penalized estimator achieves an estimation rate of $\sqrt{s \log p/n}$ uniformly over $\mc T$ and contains the true model as a submodel with high probability.
	Further developing the globally concerned approach, \citet{zheng2015globally} introduced globally adaptive $\ell_1$ penalized quantile regression estimators and proved that they recover a regularized oracle estimator with high probability and thereby achieve a near-oracle rate of $\sqrt{s \log n/n}$ uniformly over $\mc T$.
	Despite these advances, recovering the true (unregularized) oracle over $\mc T$ and achieving the oracle rate of $\sqrt{s/n}$ uniformly over $\mc T$ have remained open problems.
	
	While some research has examined the problem of making global inferences about $\bvec^*(\cdot)$ over $\mc T$ in high dimensions, current methods are limited in the range of hypotheses they can test or by restrictive model assumptions.
	\citet{belloni_valid_2019} proposed using orthogonal score functions to construct confidence bands for multiple quantile regression coefficients $\bvec_{\Mset}^*(\cdot)$, $\Mset \subset \{1, \ldots, p\}$, over $\mc T$.
	However, their procedure cannot be used to test more complicated hypotheses, such as tests of contrasts, and has not been studied in the setting where the number of parameters being tested diverges.
	\citet{bradic2017uniform} developed a debiased test for testing linear hypotheses---that is, any hypothesis of the form $\text{H}_0: \Cmat \bvec_{\mc M} (\tau) = \bm \Psi(\tau)$---over $\mc T$, but imposed the restrictive assumption that the errors $\eps_i = Y_i -  \xvec_i^T \bvec^*(\tau)$ are i.i.d. and therefore homoskedastic.

	In this paper, we introduce globally concerned estimators and hypothesis tests for the quantile regression process $\bvec^*(\cdot)$ over $\mc T \in (0,1)$ in ultra-high dimensions which improve on the selection and estimation properties of existing estimators and expand the range of hypotheses that can be tested under a heteroskedastic quantile regression model.
	Our estimators combine two key elements: a novel globally adaptive local linear approximation (LLA) algorithm for folded-concave penalized quantile regression over $\mc T$ and partial penalization.
	We use these estimators to develop partial penalized Wald and rank scores tests for testing global linear hypotheses, which take the form $\text{H}_0: \Cmat \bvec_{\mc M} (\tau) = \bm \Psi(\tau)$ for $\tau \in \mc T$.
	For the theoretical study, we introduce ``testing oracles'' for the full and reduced models under the null and prove that our globally adaptive, partial penalized LLA estimators recover the corresponding testing oracles with high probability---a result often termed the ``strong oracle property'' \citep{fan_strong_2014}.
	Leveraging this result, we establish Bahadur representations for the full and reduced model global LLA estimators and derive the asymptotic distributions of our test statistics under the null and local alternatives in the setting where the number of parameters under test and the number of nonzero parameters in the true model can grow with $n$.
	
	We highlight the following contributions of this work:
	(1) We introduce the first high dimensional tests for linear hypotheses over a set of quantiles $\mc T \subset (0,1)$ under a heteroskedastic quantile regression model (see Table \ref{tbl:checklist} for a comparison of our method with other inferential procedures for high dimensional quantile regression).
	(2) We show that our global LLA estimators recover the (unregularized) oracle estimator and achieve the $\sqrt{s/n}$ oracle estimation rate uniformly over $\mc T$, improving on the $\sqrt{s\log n/n}$ rate established in \citet{zheng2015globally}.
	(3) We formulate our novel partial penalized rank scores through the dual of the constrained, adaptively weighted LLA problem \eqref{eqn:rank scores}. Leveraging the global strong oracle property, we establish their equivalence to the oracle rank scores uniformly over $\mc T$ with probability tending to one.
	(4) In the course of studying our partial penalized tests, we generalize the theory for the classical Wald and rank score tests for quantile regression, moving beyond the location-scale model assumed in prior studies \citep{gutenbrunner1992regression, koenker_goodness_1999,  koenker_quantile_2005} and allowing the number of parameters being tested to diverge.
	(5) For a fixed number of restrictions, we use empirical process techniques to derive asymptotic distributions based on suprema of squared Bessel processes under the null hypothesis and local alternatives (Theorem \ref{thm:part pen bessel}). This extends partial penalized inference beyond pointwise chi-square calibration to a process-level reference distribution for global testing over $\mc T$.
	(6) We establish consistent variable selection for our global LLA estimators when the penalty parameter is tuned with a generalized information criterion and show that the estimation rates and Bahadur representations for our global LLA estimators still hold when the coefficient process is approximated by a c\`adl\`ag step function.
	
	\begin{table}
		\centering
		\resizebox{\textwidth}{!}{
		\begin{tabular}{lcccc}
			\hline
			& Global in $\mathcal{T}$ 
			& Heteroskedastic
			& H$_0$: $\Cmat \bvec_{\mathcal{M}}^*(\tau) = \bm \Psi(\tau)$ 
			& $|\mathcal{M}|,s \to \infty$ \\
			\hline
			\textbf{Global LLA} (Ours) &  \checkmark & \checkmark  & \checkmark  & \checkmark   \\
			\citet{bradic2017uniform} &   \checkmark &  & \checkmark &  \checkmark  \\
			\citet{belloni_valid_2019} &  \checkmark  & \checkmark &  &   \\
			\citet{yan2023confidence} &  & \checkmark  &  &   \\
			\citet{yang_communication-efficient_2025} &  & \checkmark & \checkmark &   \\
			\hline
		\end{tabular}}
		\caption{Comparison of inferential procedures for high dimensional quantile regression} \label{tbl:checklist}
	\end{table}
	
	In simulations, we find that our global LLA estimators achieve superior estimation and selection performance to $\ell_1$-penalized quantile regression \citep{BC.L1penalized} and adaptive $\ell_1$-penalized quantile regression \citep{zheng2015globally}.
	We compare our partial penalized Wald and rank score tests with the debiased tests of \citet{bradic2017uniform} and \citet{yan2023confidence} and find that they achieve higher power both theoretically and empirically in simulations.
	We demonstrate the use of our partial penalized tests in an analysis of the effects of lung disease and smoking on lung capacity.

	\subsection*{Outline}
	The rest of the paper is organized as follows.
	In Section \ref{sec:glla estimation} we introduce our partial penalized estimators and the globally concerned LLA algorithms we use to compute them.
	In Section \ref{sec:estimation theory} we introduce ``testing oracles'' for $\text{H}_0: \Cmat \bvec_{\mc M} (\tau) = \bm \Psi(\tau)$ over $\tau \in \mc T$, prove that our globally concerned LLA algorithms converge to the testing oracles with probability converging to $1$, and derive estimation rates for our global LLA estimators.
	In Sections \ref{sec:partial penalized tests} and \ref{sec:inference theory} we introduce partial penalized Wald and rank score tests for testing $\text{H}_0: \Cmat \bvec_{\mc M} (\tau) = \bm \Psi(\tau)$ over $\tau \in \mc T$ and establish their asymptotic properties.
	We discuss penalty parameter tuning and further implementation details in Section \ref{sec:implementation details}.
	We conduct simulation studies in Section \ref{sec:sim study} and apply our partial penalized tests in a study of lung capacity in Section \ref{sec:real data} .

	\subsubsection*{Notation}
	For a matrix $\bm A$ and index set $\mc B$ we let $\bm A_{(\mc B)}$ denote submatrix with row indices in $\mc B$ and  $\bm A_{\mc B}$ denote submatrix with column indices in $\mc B$.
	For a square matrix $\bm A \in \reals^{m \times m}$, we let $\lambda_{\max}(\bm A)$ and $\lambda_{\min}(\bm A)$ denote its largest and smallest eigenvalues, respectively.
	If $\bm A$ is positive semi-definite with eigen-decomposition $\bm A = \bm V \bm D \bm V^T$, we define $\bm A^{1/2} = \bm V \bm D^{1/2} \bm V^T$ and $\bm A^{-1/2} = \bm V \bm D^{-1/2} \bm V^T$.
	We use $\bm I_d$ to denote the $d\times d$ identity matrix.
	We use $\norm{\bm v}_2 = \sqrt {\sum v_i^2 } $ to denote the $\ell_2$ norm of a vector $\bm v$.
	
	For deterministic sequences $a_n,b_n$ we use $a_n = O(b_n)$ to denote that $a_n/b_n$ is bounded and $a_n = o(b_n)$ to denote that $a_n/b_n \to 0$. 
	Likewise, for random sequences $X_n, Y_n$ we use $X_n = O_p(Y_n)$ to denote that $X_n/Y_n$ is bounded in probability and $X_n = o(Y_n)$ to denote that $X_n/Y_n$ converges in probability to $0$.
	We use $C, C_1, \ldots, c, c_1, \ldots$ to denote fixed constants. These symbols may be used repeatedly and have their meanings change from line to line.	
	
	\section{Global Local Linear Approximation and Partial Penalization} \label{sec:glla estimation}
	
	\subsection{Testing the quantile process over $\mc T$}
	Let $Q_{y}(\tau|\xvec) \defeq \inf\{y: \Pr(Y \leq y|\xvec) \geq \tau \} $ denote the $\tau$th conditional quantile of $Y$ given the predictors $\xvec = (1, x_1, \ldots, x_p)^T$.
	Throughout this paper we assume
	\begin{equation}
		Q_{y}(\tau|\xvec) = \xvec^T \bvec^*(\tau) \text{ \; for \; } \tau \in \mc T \label{eqn:conditional quantile linear expr}
	\end{equation}
	where $\bvec^*(\tau) = (\beta_0^*(\tau), \ldots, \beta_p^*(\tau))^T$ is a vector of unknown coefficient functions and $\mc T \subseteq [\delta,1 - \delta]$, $0 < \delta < 1/2$, is a set of quantiles of interest. 
	Note that under \eqref{eqn:conditional quantile linear expr} the variance of the conditional distribution can vary with $\xvec$.
	The set $\mc T$ can be a countable set, such as $\{0.25, 0.5, 0.75\}$, or an uncountable set, such as the interval $[0.25, 0.75]$.
	
	Our interest lies in testing linear hypotheses, which are of the form
	\begin{equation}
		\text{H}_0: \Cmat \bvec^*_{\Mset}(\tau) = \bm \Psi (\tau) \text{\; for \;} \tau \in \mc T, \label{hyp:linear hypothesis}
	\end{equation}
	where $\Mset \subseteq \{0, \ldots, p\}$ is a set of parameters of interest with $m \defeq |\Mset|$, $\Cmat \in \reals^{r \times m}$ is a constraint matrix with full row rank, and $\bm \Psi(\tau): \mc T \to \reals^r$ is a known function.
	This is a rich class of hypotheses, covering many important cases, such as significance tests for multiple parameters for which the null is $\rm{H}_0: \bvec^*_{\Mset}(\tau) = \bm 0$ and tests of contrasts such as $\rm{H}_0:\beta_1^*(\tau) - \beta_2^*(\tau) = 0$.
	 
	We consider $\bvec_{\Mset}(\tau)$ to be the parameters of interest and $\bvec_{\Mset^c}(\tau)$ to be nuisance parameters, which we aim to estimate consistently in order to improve our inferences about $\bvec_{\Mset}(\tau)$.
	We define $\mc A_{\tau} \defeq \{ j : |\beta_j^*(\tau)| > 0 \} $ to be the true support set for $\bvec(\tau)$ at $\tau \in (0,1)$. 
	For the set of quantiles $\mc T$, we define $\mc A_{\mc T} \defeq \{ j : \sup_{\tau \in \mc T}|\beta_j^*(\tau)| > 0 \} $.
	For convenience we define $\Sset_{\mc T} = \mc A_{\mc T} \cap \Mset^c$ to be the support set among the nuisance parameters, such that $\Sset_{\mc T} \cup \Mset = \mc A_{\mc T} \cup \Mset$ and $\Sset_{\mc T}$ and $\Mset$ are disjoint, and define $s \defeq |\Sset_{\mc T}|$.
	We likewise define $\Sset_{\tau} = \mc A_{\tau} \cap \Mset^c$ for $\tau \in (0,1)$.
	We drop the $\mc T$ subscript from $\Sset = \Sset_{\mc T}$ where its meaning is clear from context.
	We define $\MSset = \Mset \cup \Sset$ for ease of presentation.
	We assume, without loss of generality, that $0 \in \Mset$ (one may set $\Cmat_{\cdot 1} = \bm 0$ if no constraint on the intercept is desired) and that the predictors are reordered so that $\bvec(\tau) = (\bvec_{\Mset}^T, \bvec_{\Sset}^T, \bvec_{\MSset^c}^T)^T(\tau)$.
	
	\subsection{Partial penalized estimation}
	Suppose we observe a sample $\{(\xvec_i, y_i)\}_{i = 1}^n$ generated from \eqref{eqn:conditional quantile linear expr}. Let $\rho_{\tau}(u) = u(\tau - I{(u < 0)})$, the check loss function \citep{koenker_regression_1978}. When the total number of predictors $p$ is smaller than $n$, the quantile regression coefficients can be estimated by minimizing the empirical loss: 
	$
		\frac{1}{n} \sum_{i=1}^{n} \rho_{\tau}( y_i - \xvec_i^T\bvec(\tau) ).
	$
		
	Our interest, however, lies in the high dimensional setting where $p$ is larger than $n$.
	In high dimensions the regular quantile regression estimator is not consistent, with the regression signal dominated by noise from uninformative predictors.
	To address these challenges in high dimensions, \citet{BC.L1penalized} proposed the $\ell_1$ penalized quantile regression estimator,
	\begin{equation}
		\hat\bvec^{\rm lasso}(\tau)=\argmin_{\bvec} \frac{1}{n}\sum_{i=1}^n \rho_{\tau}(y_i-\bm x_i^T\bvec) +  \frac{ \lambda_{\rm lasso} \sqrt{\tau (1-\tau)}}{n} \sum_{j=1}^p \hat{\sigma}_j|\beta_j| \label{eqn:BC l1 estimator}
	\end{equation}
	with tuning parameter $\lambda_{\rm lasso}>0$ and penalty weights $\hat \sigma_j^2 = \sum_{i=1}^n x_{ij}^2/n$.
	\citet{BC.L1penalized} showed that the $\ell_1$-penalized estimator is consistent when the true model is \textit{sparse}, meaning that $\sup_{\tau \in \mc T} |\mc A_{\tau}| \leq n$. 
	However, the $\ell_1$-penalized estimator is biased and sets some elements of $\bvec_{\Mset}$ exactly equal to $0$, giving $\hat\bvec^{\rm lasso}_{\Mset}(\tau)$ an intractable asymptotic distribution and hampering inference.
	
	We propose a globally adaptive, partial penalization approach to obtain estimators which are consistent, consistently select the nonzero elements of $\bvec_{\Mset^c}^*(\tau)$, and can be used to make inference about $\bvec^*_{\Mset}(\tau)$ over $\tau \in \mc T$. 
	We define the \textit{full} model estimator for testing \eqref{hyp:linear hypothesis} to be
	\begin{align}\label{eqn:FE}
		\hat\bvec_{a} (\tau) =\argmin_{\bvec } \frac{1}{n}\sum_{i=1}^n \rho_{\tau}(y_i-\bm x_i^T\bvec(\tau))+\sum_{j \in \Mset^c} p_{\lambda}(|\beta_j(\tau)|)
	\end{align}
	and the \textit{reduced} model estimator for testing \eqref{hyp:linear hypothesis} to be
	\begin{align}\label{eqn:RE}
		\hat\bvec_{0} (\tau)= \argmin_{\bvec } \frac{1}{n}\sum_{i=1}^n \rho_{\tau}(y_i-\bm x_i^T\bvec(\tau))+\sum_{j \in \Mset^c} p_{\lambda}(|\beta_j(\tau)|) {\rm \; ~s.t.~} \Cmat \bvec_{\mc M} (\tau) =\bm \Psi(\tau),
	\end{align}
	where $p_{\lambda}(\cdot)$ is a sparsity-inducing penalty.
	
	The defining property of the partial penalized estimators is that they leave the parameters being tested unpenalized.
	This ensures those parameters are included in model and avoids introducing bias in their estimation, giving our tests power when the true signals under test $\{\beta_j^*(\tau) : j \in \Mset\}$ are small.
	
	In Section \ref{sec:partial penalized tests} we introduce partial penalized Wald and rank score tests of \eqref{hyp:linear hypothesis} using these estimators. For the remainder of this section and Section \ref{sec:estimation theory}, however, we will discuss our approach to solving \eqref{eqn:FE} and \eqref{eqn:RE} over $ \mc T$ and study the estimation and variable selection properties of our estimators.
	
	\subsection{Globally concerned local linear approximations}
	To mitigate the bias in estimating the nuisance parameters $\bvec_{\Mset^c}(\tau)$, we use a folded-concave penalty (FCP) in \eqref{eqn:FE} and \eqref{eqn:RE}. 
	An FCP is defined by the following properties: 
	\begin{enumerate}[label=(\roman*)]
		\item $p_\lambda(t)$ is increasing and concave on $t\in[0,\infty)$ with $p_\lambda(0)=0$ \label{as:FCP i}
		\item $p_\lambda(t)$ is differentiable on $t\in(0,\infty)$ with $p_\lambda'(0):=p_\lambda'(0+)\geq d_1\lambda$ \label{as:FCP ii}
		\item $p_\lambda'(t)\geq d_1\lambda$ for $t\in(0,d_2\lambda]$ \label{as:FCP iii}
		\item $p_\lambda'(t)=0$ for $t\in[d\lambda,\infty)$ where $d>d_2$ \label{as:FCP iv}
	\end{enumerate}
	for some positive constants $d$, $d_1$, and $d_2$.
	Two well-known FCPs are the SCAD penalty \citep{Fan2001}, which has derivative
	$
	p_{\lambda}'(t) = \lambda I{(t \leq \lambda)} + \frac{(d\lambda - t)_+}{d - 1} I{(t > \lambda)}
	$
	where $d > 2$, and the MCP \citep{Zhang2010}, which has derivative
	$
	p_{\lambda}'(t) = \left(\lambda - \frac{t}{d}\right)_+
	$
	where $d > 1$.
	For both penalties, $d$ is specified by the user and corresponds to $d$ in the conditions above.
	One can show that the SCAD penalty with $d_1 = d_2 = 1$ and the MCP with $d_1 = 1 - d^{-1}$ and $d_2 = 1$ satisfy conditions \ref{as:FCP i} - \ref{as:FCP iv}.
	
	One complication that arises from using an FCP is that it makes the objective function potentially non-convex.
	As a consequence, \eqref{eqn:FE} and \eqref{eqn:RE} may have multiple local minima.
	To address this, we develop iterative local linear approximation (LLA) algorithms  \citep{Zou2008, fan_strong_2014} to solve \eqref{eqn:FE} and \eqref{eqn:RE}.
	
	Here we must draw a distinction between \textit{locally concerned} and \textit{globally concerned} quantile regression in high dimensions. In locally concerned quantile regression, variable selection is performed independently at each $\tau$.
	However, the set of selected variables may change when $\tau$ changes even slightly, making the locally concerned approach sensitive to the choice of $\tau$.	
	Globally concerned quantile regression avoids this problem by performing variable selection over a set of quantiles $\mc T \subset (0,1)$.
	
	Suppose we have an initial estimate $\tilde\bvec(\tau)$.
	In a locally concerned LLA algorithm for solving \eqref{eqn:FE}, one would approximate the penalty $p_{\lambda}(|\beta_j(\tau)|)$ by the linear approximation at $\tilde\bvec(\tau)$, given by $p_{\lambda}(|\beta_j(\tau)|) \approx p_{\lambda}(|\tilde \beta_j(\tau)|) + p_{\lambda}'(|\tilde \beta_j(\tau)|)|\beta_j(\tau)|$, and solve the weighted $\ell_1$-penalized problem
	$$
	\min_{\bvec} \frac{1}{n}\sum_{i=1}^n \rho_{\tau}(y_i-\bm x_i^T\bvec(\tau)) + \sum_{j \in \Mset^c} p_{\lambda}'(|\tilde \beta_j(\tau)|)|\beta_j(\tau)|
	$$
	separately at each $\tau$ to obtain an updated estimate $\hat \bvec (\tau)$.
	The process is repeated, with $\hat \bvec (\tau)$ being used to compute the LLA weights $p_{\lambda}'(|\hat \beta_j(\tau)|)$ for the next LLA iteration.
	\citet{fan_strong_2014} and \citet{tan2022high} have proposed such locally concerned LLA algorithms for canonical quantile regression and convolution smoothed quantile regression, respectively.
	
	As our interest lies in testing \eqref{hyp:linear hypothesis} over $\mc T$, we propose globally concerned LLA algorithms for \eqref{eqn:FE} and \eqref{eqn:RE}.
	We replace the local LLA weights $p_{\lambda}'(|\tilde \beta_{j}(\tau)|)$ with the global LLA weights 
	$p_{\lambda}'\left(\sup_{\tau \in \mc T} |\tilde \beta_{j}(\tau)|\right)$, solving 
	$$
		\min_{\bvec} \frac{1}{n}\sum_{i=1}^n \rho_{\tau}(y_i-\bm x_i^T\bvec(\tau)) + \sum_{j \in \Mset^c} p_{\lambda}'\left(\sup_{\tau \in \mc T} |\tilde \beta_{j}(\tau)|\right) |\beta_j(\tau)| \text{~for~} \tau \in \mc T
	$$
	in each LLA iteration.
	Our complete globally concerned LLA algorithms for \eqref{eqn:FE} and \eqref{eqn:RE} are given as Algorithms \ref{algo:LLA for full estimator} and \ref{algo:LLA for reduced estimator}, respectively.
	By using global LLA weights, we avoid the sensitivity of local weighting schemes and enable uniform selection and inference over $\mc T$.
	
	\begin{remark}
		One can substitute the following alternative global LLA weights $\hat w_j(\tau) = p_{\lambda}'(\int_{\mc T} |\tilde \beta_{j}(\tau)| d \tau)$ in Algorithms \ref{algo:LLA for full estimator} and \ref{algo:LLA for reduced estimator}. With minor adjustments, equivalent theoretical guarantees will hold, as with the analogous integrated global weights in \citet{zheng2015globally}.
	\end{remark}
	
	\begin{algorithm}[t]
		\caption{Global LLA for full model} \label{algo:LLA for full estimator}
		\begin{algorithmic}
			\State \textit{Input:} Initial estimator $\hat{\bvec}_a^{(0)}(\tau)$, penalty parameter $\lambda$, number of LLA iterations $B$
			\State Compute the adaptive weights $\hat{w}_{j}^{(0)}(\tau) = p_{\lambda}'\left(\sup_{\tau \in \mc T} |\hat \beta_{a,j}^{(0)}(\tau)|\right)$ for $j \in \Mset^c$
			\For{$b = 1, \ldots, B$}
			\State Compute $\hat \bvec_a^{(b)}(\tau)$ by solving the following weighted lasso problem for $\tau \in \mc T$:
			\begin{equation}
				\hat \bvec_a^{(b)} (\tau)
				= \argmin_{\bvec} \frac{1}{n}\sum_{i=1}^n \rho_{\tau}(y_i-\bm x_i^T\bvec(\tau)) + \sum_{j \in \Mset^c} \hat w_j^{(b-1)}(\tau)|\beta_j(\tau)| \label{algo:LLA update}
			\end{equation}
			\State Update the adaptive weights $\hat{w}_{j}^{(b)}(\tau) = p_{\lambda}'\left(\sup_{\tau \in \mc T} |\hat \beta_{a,j}^{(b)}(\tau)|\right)$ for $j \in \Mset^c$
			\EndFor
			\State \textit{Return:} $\hat \bvec_a^{(B)} (\tau)$ for $\tau \in \mc T$
		\end{algorithmic}
	\end{algorithm}
	
	\begin{algorithm}[t]
		\caption{Global LLA for reduced model} \label{algo:LLA for reduced estimator}
		\begin{algorithmic}
			\State \textit{Input:} Initial estimator $\hat{\bvec}_0^{(0)}(\tau)$, penalty parameter $\lambda$, number of LLA iterations $B$
			\State Compute the adaptive weights $\hat{w}_{j}^{(0)}(\tau) = p_{\lambda}'\left(\sup_{\tau \in \mc T} |\hat \beta_{0,j}^{(0)}(\tau)|\right)$ for $j \in \Mset^c$
			\For{$b = 1, \ldots, B$}
			\State Compute $\hat \bvec_0^{(b)}(\tau)$ by solving the following constrained weighted lasso problem for $\tau \in \mc T$:
			\begin{equation}
				\hat \bvec_0^{(b)} (\tau)
				= \argmin_{\bvec} \frac{1}{n}\sum_{i=1}^n \rho_{\tau}(y_i-\bm x_i^T\bvec(\tau)) + \sum_{j \in \Mset^c} \hat w_j^{(b-1)}(\tau)|\beta_j(\tau)| {\rm \; ~s.t. ~} \Cmat \bvec_{\mc M} (\tau) =\bm \Psi(\tau)\label{algo:LLA update reduced}
			\end{equation}
			\State Update the adaptive weights $\hat{w}_{j}^{(b)}(\tau) = p_{\lambda}'\left(\sup_{\tau \in \mc T} |\hat \beta_{0,j}^{(b)}(\tau)|\right)$ for $j \in \Mset^c$
			\EndFor
			\State \textit{Return:} $\hat \bvec_0^{(B)} (\tau)$ for $\tau \in \mc T$
		\end{algorithmic}
	\end{algorithm}
	
	\section{Estimation Rates and the Strong Oracle Property}\label{sec:estimation theory}
	As we noted above, the partial penalized estimation problems \eqref{eqn:FE} and \eqref{eqn:RE} may have multiple local solutions when $p_{\lambda}(\cdot)$ is an FCP.
	As such, we will develop theoretical guarantees specifically for the LLA solutions as in \citet{fan_strong_2014} and \citet{jacobson2025strong}.
	In this section, we introduce oracle estimators for testing \eqref{hyp:linear hypothesis}, study their asymptotic behavior, and prove that our globally adaptive LLA algorithms converge to these oracles with high probability as $n$ diverges.
	
	\subsection{Testing oracles for globally concerned quantile regression}
	A common object of study in high-dimensional regression is the \textit{oracle} estimator, which knows the true support set in advance and removes uninformative predictors prior to estimation.
	For globally concerned quantile regression the canonical oracle estimator is
	$$
		\hat \bvec^o(\tau) = \argmin_{\bvec_{\mc A_{\mc T}^c} = \bm 0} \frac{1}{n}\sum_{i=1}^n \rho_{\tau}(y_i-\bm x_i^T\bvec(\tau)).
	$$
	The canonical oracle estimator represents the best-case scenario for high-dimensional estimation: by removing the noise variables in advance, it avoids the estimation cost of performing variable selection.
	
	Because some elements $\Mset$ may not belong to $\mc A_{\mc T}$, the canonical oracle estimator may set some of the coefficients we want to test equal to be $0$, making it an unsuitable target for making inferences about $\bvec_{\Mset}^*(\tau)$. To address the shortcomings of the canonical oracle for testing \eqref{hyp:linear hypothesis}, we introduce \textit{testing oracle estimators}. Recall that $\MSset = \Mset \cup \Sset$. We define
	the \textit{full oracle estimator} to be
	\begin{align}\label{eqn:FE.oracle}
		\bvecOR_a(\tau) =\argmin_{\bvec_{\MSset^c}(\tau) = \bm 0} \frac{1}{n}\sum_{i=1}^n \rho_{\tau}(y_i-\bm x_i^T\bvec(\tau)),
	\end{align}
	and the \textit{reduced oracle estimator}, which satisfies the constraints of the null hypothesis \eqref{hyp:linear hypothesis}, to be
	\begin{align}\label{eqn:RE.oracle}
		\hat\bvec^{o}_{0}(\tau)=\argmin_{\bvec_{\MSset^c}(\tau) = \bm 0} \frac{1}{n}\sum_{i=1}^n \rho_{\tau}(y_i-\bm x_i^T\bvec(\tau)) {\rm \; ~s.t.~} \Cmat \bvec_{\Mset}(\tau) = \bm \Psi(\tau).
	\end{align}

	Unlike traditional oracle estimators, these testing oracles include the elements of ${\Mset}$ among the set of important predictors which are kept in the model.
	This enables us to make inferences about $\bvec_{\Mset}^*(\tau)$ even if some of its elements are zero in the true model.

	In the sequel, we will study the asymptotic properties of the testing oracles. We will then use these results to show that our globally adaptive LLA algorithms for the reduced and full models recover the corresponding oracle estimators with probability converging to $1$, a result referred to as the \textit{strong oracle property} \citep{fan_strong_2014, jacobson2025strong}.

	\subsection{Oracle estimation theory}
	Let $\tilde \xvec = \xvec_{\MSset}$ and $\widetilde \xmat  = \xmat_{ \MSset}$. 
	We assume the following regularity conditions for our study of the testing oracles:
	\begin{enumerate}[label=(A\arabic*), start = 1 , leftmargin=2\parindent]
		\item The data form a triangular array of random variables so that for any given $n$, the data $\mc D_n = \{(\xvec_i, y_i) : 1 \leq i \leq n\}$ is an i.i.d. random sample from \eqref{eqn:conditional quantile linear expr}. \label{as:iid array}
		\item 
		\begin{enumerate}[label = (\roman*)]
			\item The conditional density $f_{y|\xvec}(y|\xvec)$ is bounded from above uniformly over $y \in \mc Y_{\xvec}$, $\xvec \in \mc X$, and $n$ where $\mc X$ denotes the support of $\xvec$ and $\mc Y_{\xvec}$ denotes the support of $y$ conditional on $\xvec$. 
			
			\item $f_{y|\xvec}(Q_y(\tau|\xvec) )$ is bounded away from 0 uniformly over $\tau \in \overline{\mc T}$, $\xvec \in \mc X$, and $n$, where $\overline{ \mc T}$ denotes the convex hull of $\mc T$.
			
			\item the derivative of $y \mapsto f_{y|\xvec}(y|\xvec)$ is continuous and bounded in absolute value from above uniformly over $y \in \mc Y_{\xvec}$, $\xvec \in \mc X$, and $n$.
		\end{enumerate}
		\label{as:density conditions}
		\item The eigenvalues of the Gram matrix $\bm \Sigma = \E[\tilde \xvec \tilde \xvec^T]$ are bounded from above and away from zero uniformly over $n$. \label{as:Gram lower bound} 
	\end{enumerate}
	In addition, we assume the following for our study of the reduced oracle estimator (and later our hypothesis testing procedures):
	\begin{enumerate}[label=(A\arabic*), start = 4, leftmargin=2\parindent]
		\item $\Cmat \bvec_{\Mset}^*(\tau) = \bm \Psi(\tau) + \frac{\bm \eta(\tau)}{\sqrt{n}}$ for some function $\bm \eta(\cdot):[0,1] \mapsto \reals^r$, with 
		$ \sup_{\tau \in \mc T}\norm{\bm \eta(\tau)}_2 = O\left( \sqrt{ \min \{r, m-r+s\} } \right) $, $\norm{ \bm \eta (\tau_1) - \bm \eta (\tau_2) }_2 \le C' \sqrt{(m-r+s)n} \,|\tau_1 - \tau_2|$, $\forall_{\tau_1, \tau_2 \in \mc T}$ for a fixed constant $C'$, and $\lambda_{\max} \{(\Cmat \Cmat^T)^{-1} \} = O(1)$. \label{as:local alts}
	\end{enumerate}
	
	Conditions \ref{as:iid array}-\ref{as:Gram lower bound} are standard in the quantile regression literature (see, for example, \citet{BC.L1penalized, belloni_conditional_2019, zheng2015globally}) and allow the conditional distribution of $y$ to change with $\xvec$. In particular, the conditional variance of $\yvec$ can vary with $\xvec$, meaning that the model allows for heteroskedasticity. Conditions similar to \ref{as:local alts} have been used in the partial penalization literature \citep{shi2019linear,jacobson2025strong} to allow either for the null to be true or for the truth to get closer to the null as more data is collected.
	
	We define the following important quantity which will appear frequently in our results:
	\begin{equation}
		\xi_n \defeq \sup_{\xvec \in \mc X} \norm{\tilde \xvec}_2.
	\end{equation}	
	First, we establish rates for the full and reduced oracle estimators as $m$, $s$, $r$ diverge:
	\begin{theorem} \label{thm:oracle rate}
		Suppose that \ref{as:iid array} - \ref{as:Gram lower bound} hold and that $(m + s)\xi_n^2 \log^2 n = o(n)$.
		Then
		\begin{equation}
			\sup_{\tau \in \mc T} \norm{\bvecOR_{a,\MSset}(\tau) - \bvec_{\MSset}^*(\tau) }_2 = O_p\left(\sqrt{\frac{m+s}{n}} \right).
			\label{eqn:FE.oracle rate}
		\end{equation}
		If \ref{as:local alts} also holds, then
		\begin{equation}
			\sup_{\tau \in \mc T} \norm{\bvecOR_{0,\MSset}(\tau) - \bvec^*_{\MSset}(\tau) }_2 = O_p\left(\sqrt{\frac{m - r +s}{n}} \right).  \label{eqn:RE.oracle rate}
		\end{equation}
	\end{theorem}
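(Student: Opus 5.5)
The plan is the standard convexity-plus-uniform-empirical-process argument for quantile regression with a diverging number of parameters, made uniform over $\tau\in\mc T$. We treat the full oracle first. Write $d=m+s$, $\bm\theta=\sqrt{n/d}\,(\bvec_{\MSset}-\bvec^*_{\MSset}(\tau))$ and
$$
Q_n(\bm\theta,\tau)=\sum_{i=1}^n\Big[\rho_\tau\big(\eps_i(\tau)-\sqrt{d/n}\,\tilde\xvec_i^T\bm\theta\big)-\rho_\tau(\eps_i(\tau))\Big],
\qquad \eps_i(\tau)=y_i-\tilde\xvec_i^T\bvec^*_{\MSset}(\tau).
$$
The function $Q_n$ is convex in $\bm\theta$, and $\bvecOR_{a,\MSset}(\tau)$ is its rescaled minimizer. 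It therefore suffices to show the following: for any $\epsilon>0$ there is a constant $L$ such that, with probability at least $1-\epsilon$,
$$
\inf_{\tau\in\mc T}\ \inf_{\|\bm\theta\|_2=L} Q_n(\bm\theta,\tau)>0 .
$$
By convexity, the minimizer then lies in the ball $\|\bm\theta\|_2<L$ simultaneously for all $\tau$.

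To prove the displayed bound, use Knight's identity to split $Q_n=-\sqrt{d/n}\,\bm\theta^T\sum_i\tilde\xvec_i\psi_\tau(\eps_i)+R_n(\bm\theta,\tau)$, where $\psi_\tau(u)=\tau-I(u<0)$. Then analyse three pieces.

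\emph{Linear term.} The process $\tau\mapsto n^{-1/2}\sum_i\tilde\xvec_i\psi_\tau(\eps_i(\tau))$ has mean zero, and
$$
\sup_{\tau\in\mc T}\Big\|n^{-1/2}\sum_i\tilde\xvec_i\psi_\tau(\eps_i(\tau))\Big\|_2=O_p(\sqrt d).
$$
Pointwise this is immediate because the covariance is bounded by $\bm\Sigma$, using \ref{as:Gram lower bound}. For uniformity, cover $\mc T$ by a grid of mesh $n^{-1}$ or finer. Between grid points, control the increments using the bounded conditional density \ref{as:density conditions}(i): the indicators $I(y_i<\tilde\xvec_i^T\bvec^*(\tau))$ are monotone in $\tau$. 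Take a maximal inequality (Bernstein plus a union bound) over the grid. The $\log n$ factors from the union bound are absorbed by the condition $d\xi_n^2\log^2n=o(n)$.

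\emph{Expected remainder.} A Taylor expansion using \ref{as:density conditions}(ii)--(iii) gives
$$
E R_n(\bm\theta,\tau)\ge c\,d\,\|\bm\theta\|_2^2-C\,d\,\|\bm\theta\|_2^3\sqrt{d/n}\,\xi_n ,
$$
uniformly in $\tau$, with $c>0$ coming from the lower bound on the density and on $\lambda_{\min}(\bm\Sigma)$. The cubic term is negligible since $\xi_n\sqrt{d/n}\to0$.

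\emph{Centred remainder.} The main obstacle is
$$
\sup_{\tau\in\mc T}\ \sup_{\|\bm\theta\|_2\le L}\big|R_n-ER_n\big|=o_p(d).
$$
The summands are bounded by $\sqrt{d/n}\,\xi_n L$, and the variance of the sum is $O\big(d\,L^2\cdot L\sqrt{d/n}\,\xi_n\cdot n\cdot n^{-1}\big)$. I would discretize both $\bm\theta$ (an $\epsilon$-net of the $d$-dimensional ball, with $\log$-cardinality of order $d\log n$) and $\tau$ (a grid of size $n$). Apply Bernstein at each net point. Then handle the discretization error by Lipschitz continuity of $\rho_\tau$ in $\bm\theta$, and by monotonicity in $\tau$ with the density bound. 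This yields a bound of order
$$
\sqrt{d\cdot d\log n\cdot \xi_n\sqrt{d/n}}+\xi_n\sqrt{d/n}\,d\log n ,
$$
which is $o(d)$ exactly under $d\xi_n^2\log^2 n=o(n)$. This is where the rate condition is used, and the bookkeeping is the delicate part. Combining the three pieces, the quadratic term $cdL^2$ dominates $O_p(\sqrt d\cdot\sqrt d\,L)+o_p(d)$ for large $L$, which gives \eqref{eqn:FE.oracle rate}.

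For the reduced oracle, reparametrize the constraint. Since $\Cmat$ has full row rank, write
$$
\bvec_{\Mset}=\Cmat^T(\Cmat\Cmat^T)^{-1}\bm\Psi(\tau)+\bm B\bm\gamma_1,
$$
where the columns of $\bm B\in\reals^{m\times(m-r)}$ form an orthonormal basis of the null space of $\Cmat$. The free parameter is then $\bm\gamma=(\bm\gamma_1,\bvec_{\Sset})\in\reals^{m-r+s}$, with design $\bm z_i=(\bm B^T\tilde\xvec_{i,\Mset},\tilde\xvec_{i,\Sset})$. The Gram matrix of $\bm z$ has eigenvalues bounded as in \ref{as:Gram lower bound}, and $\|\bm z_i\|_2\le\xi_n$.

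Under \ref{as:local alts}, the truth is $\bvec^*_{\MSset}(\tau)$, but the constrained pseudo-true value differs from it by the projection
$$
\Cmat^T(\Cmat\Cmat^T)^{-1}\bm\eta(\tau)/\sqrt n ,
$$
whose norm is $O\big(\sqrt{\min\{r,m-r+s\}/n}\big)\le O\big(\sqrt{(m-r+s)/n}\big)$ uniformly in $\tau$, using $\lambda_{\max}\{(\Cmat\Cmat^T)^{-1}\}=O(1)$. Repeat the convexity argument in $\bm\gamma$ with $d'=m-r+s$, centring at $\bm\gamma^*(\tau)$. The shift now adds a bias to the linear term, of order $n\cdot\sqrt{d'/n}\cdot\|\bm\theta\|\sqrt{d'/n}=O(d'\|\bm\theta\|)$. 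This is again dominated by the quadratic term $c\,d'L^2$. The Lipschitz condition on $\bm\eta$ is used only to keep the $\tau$-grid discretization valid for the shifted centre. This yields \eqref{eqn:RE.oracle rate}.
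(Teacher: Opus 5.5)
\textbf{Gap: the uniform bound on the linear (score) term.}

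Your overall route is a legitimate alternative to the paper's argument: you use Knight's identity and show positivity of the convex criterion $Q_n(\cdot,\tau)$ on the sphere $\norm{\bm\theta}_2=L$. The paper instead works with the subgradient. It shows $\bm\alpha^T\empmean[\bar\psi_i(\bm\alpha d_n,\tau)]>0$ for all unit $\bm\alpha$, splitting this into the score at the truth, the bias from $\tilde\bvec^*-\bvec^*$, a modulus term $\tilde\epsilon_1$, and a Taylor term $\tilde\epsilon_2$. Your treatment of the expected and centred Knight remainders is sound. That remainder has variance of order $dL^3\xi_n\sqrt{d/n}$, so the $\log n$ from a union bound there is indeed absorbed by $d\xi_n^2\log^2 n=o(n)$. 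Your reparametrization of the constraint also matches the paper's $\tilde\bvec^*$, $\widetilde{\bm Z}$ construction.

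The problem is the linear term. In each direction, $\bm\alpha^T n^{-1/2}\sum_i\tilde\xvec_i\psi_\tau(\eps_i(\tau))$ has variance of order one, not a vanishing variance. Bernstein plus a union bound over a $\tau$-grid with at least $n$ points (and a sphere net of size $e^{Cd}$) therefore only yields $\sup_{\tau}\norm{\cdot}_2=O_p(\sqrt{d+\log n})$.

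The rate condition cannot absorb this. It only controls logarithms that multiply the range term $\xi_n/\sqrt n$, not the sub-Gaussian term $\sqrt{\sigma^2\log N}$ with $\sigma^2\asymp 1$. When $d=m+s\lesssim\log n$ (for example fixed $m,s$, which the theorem allows), the linear term is $O_p(L\sqrt{d\log n})$. For fixed $L$ the quadratic term $c\,dL^2$ does not dominate this, so you would only obtain the rate $\sqrt{\log n/n}$.

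\textbf{How to fix it.} You need a maximal inequality that exploits the strong dependence across $\tau$ rather than a union bound.
\begin{itemize}
\item One option is Lemma 31 of \citet{belloni_conditional_2019}, which the paper uses for $\tilde\epsilon_{01}$.
\item Alternatively, write $\psi_\tau(\eps_i(\tau))=\tau-1(U_i<\tau)$ with the pivotal uniforms $U_i$. Conditionally on the covariates, $\tau\mapsto(1-\tau)^{-1}n^{-1/2}\sum_i\tilde\xvec_i\{1(U_i\le\tau)-\tau\}$ is a vector martingale. Doob's $L^2$ inequality then gives $\E\sup_{\tau\in\mc T}\norm{\cdot}_2^2\lesssim\tr(\bm\Sigma)/\delta=O(d)$, where $\mc T\subseteq[\delta,1-\delta]$.
\end{itemize}

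The same issue recurs in the reduced case, where your linear term is the centred score at the shifted centre $\tilde\bvec^*(\tau)$. There you should split it as the paper does:
\begin{itemize}
\item the score at $\bvec^*(\tau)$, which is $O_p(\sqrt{m-r+s})$ by the martingale or chaining bound;
\item the difference of indicators at $\tilde\bvec^*(\tau)$ and $\bvec^*(\tau)$. Its variance is only $O(\xi_n\sqrt{(m-r+s)/n})$, so a union or bracketing bound on it costs just $o(\sqrt{m-r+s})$.
\end{itemize}
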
 
	The full oracle estimator rate \eqref{eqn:FE.oracle rate} matches the rate in \citet{he_parameters_2000} for quantile regression at a single $\tau$ and the rate in \citet{belloni_conditional_2019} for estimating the quantile process $\bvec^*(\tau)$ over $\tau \in \mc T$.
	The rate for the reduced oracle estimator \eqref{eqn:RE.oracle rate} is, to our knowledge, the first result of its kind for a linearly constrained quantile regression estimator, let alone with diverging dimension.

	\subsection{Global strong oracle property}
	In this section, we will show that the two-step LLA estimators possess the strong oracle property---that is, 
	$ \Pr( \sup_{\tau \in \mc T} \shortnorm{\hat \bvec^{(2)}_e(\tau) - \bvecOR_e(\tau)}_2 = 0 ) \to 1 $ for the full ($e = a$) and reduced ($e = 0$) estimators.
	To accomplish this, we will first establish conditions under which the LLA algorithms find the oracle estimators in their first iteration and return to them in all subsequent iterations.
	We will then show that these conditions hold with probability converging to $1$.	
	
	For ease of presentation, we define
	\begin{equation}
		\label{eqn:qndefine}
		\mc Q_{\tau,n}(\bvec) = \frac{1}{n} \sum_{i=1}^n \rho_\tau(y_i - \xvec_i^T\bvec).
	\end{equation} 
	Let $\psi_{\tau}(u) \defeq \tau - 1(u < 0) $ denote the quantile score function and $\mc I(\bvec) = \{i \in \{1,\ldots,n\} : y_i = \xvec_i^T\bvec\}$ denote the observations that are exactly interpolated.
	The subdifferential of $\mc Q_{\tau,n}(\bvec)$ is 
	\begin{equation}
		\label{eqn:subdiff}
		\partial \mc Q_{\tau,n}(\bvec) =  \left\{\bm \xi \in \reals^{p+1} : \xi_j =  - n^{-1}\left(\sum_{i \in (\mc I(\bvec))^c }x_{ij} \psi_\tau(y_i-\xvec_i^T\bvec) + \sum_{i \in \mc I (\bvec)} x_{ij}v_i^*\right)\right\},
	\end{equation}
	where $v_i^* \in [\tau-1,\tau]$.		
	We assume the following additional regularity conditions for our deterministic oracle recovery results:
	\begin{enumerate}[label=(A\arabic*), start = 5, leftmargin=2\parindent]
		\item There is a lower bound on the largest signal for the elements of $\bvec^*_{\Sset}(\tau)$ across the quantiles of interest. Specifically,
		\begin{equation*}
			\inf_{j \in \Sset} \sup_{\tau \in \mc T} |\beta_j^*(\tau)| > (d+1)\lambda.
		\end{equation*}\label{as:strength of signal}
		\item The oracle estimators exist and are unique. That is, for all $\tau \in \mc T$, 
		$$\bm 0 \in \partial_{\MSset} \mc Q_{\tau,n}(\hat{\bvec}_a^o(\tau))$$ 
		and there exist $\tilde{\bm w}(\tau) \in  \mc Q_{\tau,n}(\hat{\bvec}^{o}_0(\tau))$ and $\bm d(\tau)$ such that $$\tilde{\bm w}_{\MSset}(\tau) = \begin{bmatrix} \Cmat^T \bm d(\tau) \\ \bm 0 \end{bmatrix}.$$ \label{as:oracleDefined}
	\end{enumerate}
	Since $\Sset \subset \Mset^c$, the minimum signal strength condition \ref{as:strength of signal} applies only to coefficients that are not being tested. The lack of such a condition on $\bvec^*_{\Mset}(\tau)$ ensures that our tests have power when the signals $\bvec^*_j(\tau)$ are small.
	Moreover, the minimum signal condition on the supremum is weaker than pointwise-in-$\tau$ minimum signal conditions---such as (A0) in \citet{fan_strong_2014}---and allows each coefficient process to touch zero for some $\tau \in \mc T$.
	
	The following theorem establishes conditions under which the LLA algorithms find the testing oracles in their first iterations. As in Algorithms \ref{algo:LLA for full estimator} and \ref{algo:LLA for reduced estimator}, let $\hat \bvec_e^{(0)}$ denote the initial estimator for $e \in \{0,a\}$. 
	Recall that $d$, $d_1$, and $d_2$ are constants from conditions \ref{as:FCP i}-\ref{as:FCP iv}. 
	\begin{theorem}\label{thm:global33mod}
		Suppose that \ref{as:strength of signal} and \ref{as:oracleDefined} hold and define $d_0 = \min(1,d_2)$. For the reduced ($e = 0$) and full ($e = a$) models, if the events 
		\begin{eqnarray*}
			\mc E_{e1} & = & \left\{\sup_{\tau \in \mc T }||\hat \bvec^{(0)}_{e,\Mset^c}(\tau)-\bvec^*_{\Mset^c}(\tau)||_{\infty} \leq d_0\lambda \right\},\\
			\mc E_{e2} & = &  \left\{\sup_{\tau \in \mc T, \bm w(\tau) \in \partial_{{\MSset^c}} \mc Q_{\tau,n}(\hat{\bvec}_e^{o}(\tau))} ||\bm w(\tau)||_{\infty} < d_1\lambda \right\}
		\end{eqnarray*} 
		hold, then 
		\begin{equation}
			\label{eqn:LLAOracleStep1}
			\sup_{\tau \in \mc T} ||\hat{\bvec}^{(1)}_e(\tau)-\hat \bvec_e^o(\tau)||_2=0.
		\end{equation}		
	\end{theorem}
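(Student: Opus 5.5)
The argument has two parts. First, on $\mc E_{e1}$ the initial weights $\hat w_j^{(0)}$ take a convenient form, vanishing on $\Sset$ and being large off $\MSset$. Second, given those weights, the oracle is a global minimizer of the convex weighted lasso problem \eqref{algo:LLA update} (or \eqref{algo:LLA update reduced}) at every $\tau\in\mc T$, and it is the unique one.

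\textbf{Step 1 (weights).} Take $j\in\Sset$. By \ref{as:strength of signal}, $\sup_{\tau}|\beta_j^*(\tau)|>(d+1)\lambda$. On $\mc E_{e1}$ we have $|\hat\beta^{(0)}_{e,j}(\tau)-\beta^*_j(\tau)|\le d_0\lambda\le\lambda$ uniformly in $\tau$. Hence $\sup_\tau|\hat\beta^{(0)}_{e,j}(\tau)|>d\lambda$, and \ref{as:FCP iv} gives $\hat w_j^{(0)}=0$.

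Now take $j\in\MSset^c$. Here $\beta_j^*\equiv 0$ on $\mc T$, so $\sup_\tau|\hat\beta^{(0)}_{e,j}(\tau)|\le d_0\lambda\le d_2\lambda$. By \ref{as:FCP ii}--\ref{as:FCP iii} (together with \ref{as:FCP ii} for the value $0$), $\hat w_j^{(0)}\ge d_1\lambda$. Because the weights are global, both facts hold for all $\tau$ simultaneously.

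\textbf{Step 2 (KKT certificate at the oracle).} Fix $\tau$ and consider the convex objective $L_\tau(\bvec)=\mc Q_{\tau,n}(\bvec)+\sum_{j\in\Mset^c}\hat w_j^{(0)}|\beta_j|$, subject to the linear constraint when $e=0$. We exhibit a subgradient showing that $\hat\bvec_e^o(\tau)$ is optimal.

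On $\MSset$, \ref{as:oracleDefined} supplies a subgradient $\bm w\in\partial\mc Q_{\tau,n}(\hat\bvec^o_e(\tau))$. Its $\MSset$-block is $\bm 0$ for the full model, or $(\Cmat^T\bm d(\tau),\bm 0)$ for the reduced model, and the latter is absorbed by the multiplier of the constraint. The penalty contributes nothing on $\Sset$, since those weights are zero, and it is absent on $\Mset$.

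On $\MSset^c$, the oracle has $\hat\beta^o_{e,j}=0$, so the penalty subdifferential is $\hat w_j^{(0)}[-1,1]\supseteq[-d_1\lambda,d_1\lambda]$. On $\mc E_{e2}$ every coordinate of the subgradient $\bm w_{\MSset^c}$ satisfies $|w_j|<d_1\lambda$, so it can be cancelled. Therefore $\bm 0\in\partial L_\tau(\hat\bvec_e^o(\tau))$, plus the normal cone of the constraint set when $e=0$, and the oracle is a minimizer.

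\textbf{Step 3 (uniqueness), the main obstacle.} Showing that any minimizer $\hat\bvec$ equals $\hat\bvec^o_e(\tau)$ needs care because the objective is not strictly convex. The plan runs as follows.

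Let $\hat\bvec$ be any minimizer. Take the certificate $\bm w$ from Step 2 and use that $\mc Q_{\tau,n}$ is convex with $\bm w$ a subgradient at the oracle. This gives
\[
0\ge L_\tau(\hat\bvec)-L_\tau(\hat\bvec^o_e)\ge \sum_{j\in\MSset^c}\bigl(\hat w_j^{(0)}-|w_j|\bigr)|\hat\beta_j|,
\]
where for $e=0$ the constraint term vanishes because both points are feasible. The strict inequality $|w_j|<d_1\lambda\le\hat w^{(0)}_j$ from $\mc E_{e2}$ then forces $\hat\bvec_{\MSset^c}=\bm 0$.

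So $\hat\bvec$ is feasible for the oracle problem \eqref{eqn:FE.oracle} or \eqref{eqn:RE.oracle}. Since the penalty is zero there, $\hat\bvec$ attains the oracle objective value, and the uniqueness in \ref{as:oracleDefined} yields $\hat\bvec=\hat\bvec_e^o(\tau)$.

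Since every step holds for every $\tau\in\mc T$ on the same events, we obtain $\hat\bvec^{(1)}_e(\tau)=\hat\bvec^o_e(\tau)$ for all $\tau$. This gives \eqref{eqn:LLAOracleStep1}.
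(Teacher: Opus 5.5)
Your proof is correct and follows essentially the same route as the paper. You classify the global weights on $\mc E_{e1}$ using \ref{as:strength of signal} and the FCP properties, then use the subgradient from \ref{as:oracleDefined} together with $\mc E_{e2}$ to show that any competitor pays a strictly positive net penalty on $\MSset^c$. Your Step 2 is already implied by the inequality in Step 3, and your explicit appeal to the uniqueness in \ref{as:oracleDefined} for candidates with $\bvec_{\MSset^c}=\bm 0$ fills in a step the paper leaves implicit.
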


	The next theorem establishes conditions under which the LLA algorithms, having found the testing oracles, return to them in the next iteration.
	\begin{theorem}\label{thm:global33modStep2}
		Suppose that \ref{as:strength of signal} and \ref{as:oracleDefined} hold.
		For the reduced ($e = 0$) and full ($e = a$) models, if $\sup_{\tau \in \mc T}||\hat{\bvec}_e^{(b-1)}(\tau)-\hat{\bvec}_e^{o}(\tau)||_2=0$ and the events $\mc E_{e2}$
		and
		\begin{equation*}
			\mc E_{e3} = \left\{ \inf_{j \in S} \sup_{\tau \in \mc T }|\hat{\beta}^o_{e,j}(\tau)| > d\lambda \right\}
		\end{equation*}
		hold, then
		\begin{equation}
			\label{eqn:LLAOracleStep2}
			\sup_{\tau \in \mc T} ||\hat{\bvec}^{(b)}_e(\tau)-\hat \bvec_e^o(\tau)||_2=0.
		\end{equation}		
	\end{theorem}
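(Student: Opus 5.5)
The plan is to show that, under the stated events, the oracle $\hat\bvec_e^o(\tau)$ satisfies the KKT conditions of the weighted lasso problem \eqref{algo:LLA update} (or \eqref{algo:LLA update reduced} for $e=0$) at every $\tau \in \mc T$, and that it is the unique minimizer. Both steps use only the weights built from the previous iterate. Because $\sup_{\tau}\|\hat\bvec_e^{(b-1)}(\tau)-\hat\bvec_e^o(\tau)\|_2=0$, the weights are $\hat w_j^{(b-1)} = p_\lambda'\bigl(\sup_{\tau\in\mc T}|\hat\beta^o_{e,j}(\tau)|\bigr)$ for $j\in\Mset^c$, and these do not depend on $\tau$.

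\textbf{Step 1: the weights.} We compute the weights on the two parts of $\Mset^c$.
\begin{itemize}
\item For $j \in \Sset$, event $\mc E_{e3}$ gives $\sup_\tau|\hat\beta^o_{e,j}(\tau)|>d\lambda$. Property \ref{as:FCP iv} then gives $\hat w_j^{(b-1)}=0$.
\item For $j\in\MSset^c$, the oracle sets $\hat\beta^o_{e,j}(\tau)=0$ for all $\tau$, so $\hat w_j^{(b-1)}=p_\lambda'(0)\ge d_1\lambda$ by \ref{as:FCP ii}.
\end{itemize}
So the update at iteration $b$ is a quantile regression in which the coordinates in $\MSset$ are unpenalized, apart from the linear constraint when $e=0$, and every coordinate in $\MSset^c$ carries an $\ell_1$ weight of at least $d_1\lambda$.

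\textbf{Step 2: KKT at the oracle.} Fix $\tau$. The objective is convex, and for $e=0$ the constraint is affine, so a point minimizes iff $\bm 0$ lies in the subdifferential plus the normal cone of the constraint.
\begin{itemize}
\item On $\MSset$, condition \ref{as:oracleDefined} provides the subgradient. For $e=a$ it is $\bm 0\in\partial_{\MSset}\mc Q_{\tau,n}(\hat\bvec^o_a(\tau))$. For $e=0$ it is $\tilde{\bm w}(\tau)$, whose $\MSset$ block is $(\Cmat^T\bm d(\tau);\bm 0)$, which is cancelled by the Lagrange multiplier $-\bm d(\tau)$.
\item On $\MSset^c$, the penalty subdifferential at $\beta_j=0$ is $[-\hat w_j,\hat w_j]\supseteq[-d_1\lambda,d_1\lambda]$. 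Event $\mc E_{e2}$ says every element of $\partial_{\MSset^c}\mc Q_{\tau,n}(\hat\bvec^o_e(\tau))$ has sup-norm strictly below $d_1\lambda$, so these coordinates can be cancelled as well.
\end{itemize}
One subtlety is that the subgradient used on $\MSset$ and the one used on $\MSset^c$ must come from the same vector, namely the same choice of $v_i^*$ in \eqref{eqn:subdiff}. Because $\mc E_{e2}$ bounds all elements of $\partial_{\MSset^c}$, we can take the full vector from \ref{as:oracleDefined} and the $\MSset^c$ bound still applies to it. Hence $\hat\bvec_e^o(\tau)$ is a minimizer of the iteration-$b$ problem for every $\tau\in\mc T$.

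\textbf{Step 3: uniqueness.} This is the main obstacle. Suppose $\check\bvec(\tau)$ is another minimizer. By convexity, the objective is constant on the segment joining the two minimizers. Along that segment the objective splits into a convex part (the loss) plus $\sum_{j\in\MSset^c}\hat w_j|\beta_j|$. The strict inequality in $\mc E_{e2}$ gives a strict gap in the penalty's subgradient, so any movement of the $\MSset^c$ coordinates away from zero strictly increases the objective. This is the standard argument that $\|\bm w\|_\infty<\hat w_j$ forces every minimizer to vanish off $\MSset$, as in the proof of Theorem \ref{thm:global33mod} and \citet{fan_strong_2014}. Once $\check\bvec_{\MSset^c}=\bm 0$, the problem reduces to the oracle problem \eqref{eqn:FE.oracle} or \eqref{eqn:RE.oracle}, and uniqueness of the oracle from \ref{as:oracleDefined} gives $\check\bvec(\tau)=\hat\bvec^o_e(\tau)$. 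Since all of this holds for every $\tau\in\mc T$ simultaneously, the supremum in \eqref{eqn:LLAOracleStep2} is zero.
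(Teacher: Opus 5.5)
Your proposal is correct and follows the paper's route. The paper computes the same weights: zero on $\Sset$ by $\mc E_{e3}$ and property (iv) of the penalty, and at least $d_1\lambda$ on $\MSset^c$ because the oracle vanishes there. It then reruns the argument of Theorem \ref{thm:global33mod}, where the subgradient inequality $\mc Q_{\tau,n}(\bvec)-\mc Q_{\tau,n}(\hat\bvec^o_e(\tau))\ge \tilde{\bm w}_{\MSset^c}^T(\tau)\bvec_{\MSset^c}$ for feasible $\bvec$, combined with $\mc E_{e2}$, gives your Steps 2 and 3 at once. Your KKT phrasing and your explicit appeal to uniqueness in \ref{as:oracleDefined} only spell out a step the paper leaves implicit.
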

	
	The events $\mc E_{e2}$ and $\mc E_{e3}$ are regularity conditions for the testing oracles, which we will show hold with high probability as a consequence of Theorem \ref{thm:oracle rate}.
	Whether $\mc E_{e1}$ holds depends on the choice of initial estimator.
	Note, however, that $\mc E_{e1}$ only places conditions on the the initial estimates of $\bvec^*_{\Mset^c}(\tau)$, and the initial estimator for the reduced model is not required to satisfy the constraints of \eqref{hyp:linear hypothesis}. As such, we can use the same initial estimator for both LLA algorithms.
	
	The final step to establishing the strong oracle property is to show that $\mc E_{e1}$, $\mc E_{e2}$, and $\mc E_{e3}$, $e \in \{0,a\}$, hold with probability converging to $1$.
	To do so, we make the following additional assumptions:
	\begin{enumerate}[label=(A\arabic*), start = 7, leftmargin=2\parindent]
		\item For all $j \in \{1,\ldots,p\}$, $\E[x_{ij}^2]=1$ and \begin{equation*}\lim_{n \rightarrow \infty} \Pr\left( \max_{j \in \{1,\ldots,p\}} |\hat{\sigma}_j-1| \leq 1/2\right)=1,\end{equation*}
		where $\hat{\sigma}_j=\sqrt{\sum_{i=1}^n x_{ij}^2/n}$. \label{as:xbounds}
		
		\item \label{as:x_max} $\max_{i, j \in \MSset} |x_{ij}| = o_p(n^{1/2}/((m+s)\log n)^{3/2})$ and $\max_{i, j \in \MSset^c} |x_{ij}| = o_p(n^{1/2}/((m+s)^2\log (p \vee n))^{1/2})$.
		
		\item There exists a positive constant $C$ such that $||\bvec^*_{\MSset}(\tau_1) - \bvec^*_{\MSset}(\tau_2)||_2 \leq C \sqrt{m+s}|\tau_1-\tau_2|$ for all $\tau_1,\tau_2 \in \mc T$ and $n$.  \label{as:bvec}
	
		\item 
		$
			\inf_{\avec \in \reals^{m+s}, \avec \neq \bm 0} \frac{\E[(\tilde \xvec^T\avec)^2]^{3/2}}{\E[|\tilde \xvec^T \avec|^3]}
		$ is bounded away from $0$ uniformly for all $n$ \label{as:eigens} 
		
		\item $\sqrt{\frac{(m+s) \log(p \vee n)}{n}} = o(\lambda) = o(1)$. \label{as:lambda}
		
		\item $(m+s)^2\log p = O(n^\alpha)$ for some $\alpha \in (0,1)$. \label{as:p rate}
		
		\item \label{as:new_re}
		$
			\sup_{j \in \MSset^c, \bm \delta \in \reals^{ m+s }} \frac{E[ (x_{ij} \tilde \xvec_i^T\bm \delta)^2]}{||\bm \delta||_2^2} = o\left(\frac{\log p}{s \log^2 n}\right).
		$ 
	\end{enumerate}
	
	Assumptions \ref{as:xbounds}-\ref{as:new_re} are similar to the assumptions made by \citet{zheng2015globally} to establish recovery of a regularized oracle estimator over $\mc T$ and by \citet{wangWuLi2012} to prove an oracle property at a single $\tau$, and mirror many of the assumptions of \citet{BC.L1penalized}. Assumptions \ref{as:xbounds}, \ref{as:x_max}, \ref{as:eigens} and \ref{as:new_re} provide that the covariates are well behaved. Assumption \ref{as:bvec} provides that the quantile functions are smooth with respect to $\tau$. Assumptions \ref{as:lambda} and \ref{as:p rate} provide bounds for how fast $m$, $s$, and $p$ can grow with $n$.  
	
	We use the $\ell_1$-penalized quantile regression estimator $\hat \bvec^{\rm lasso}(\tau)$, defined in \eqref{eqn:BC l1 estimator}, as the initial estimator for both global LLA algorithms.
	The following lemma, which follows directly from Theorem 2 in \citet{BC.L1penalized}, shows that $\hat \bvec^{\rm lasso}(\tau)$ satisfies $\mc E_{e1}$ with high probability.
	For clarity of presentation, we state the additional assumptions from \citet{BC.L1penalized} required for this lemma in Section S1.2 of the supplement.
	 
	\begin{lemma}\label{lemma:global33modAssumptions}
		Under \ref{as:iid array}, \ref{as:density conditions}, \ref{as:xbounds}, \ref{as:lambda}, and the additional conditions in Section S1.2
		\begin{equation}
			\label{eqn:thm33as1}
			\Pr \left(
			\sup_{\tau \in \mc T }||\hat{\bvec}^{\rm lasso}_{\Mset^c}(\tau)-\bvec^*_{\Mset^c}(\tau)||_{\infty} \leq d_0\lambda\right) \rightarrow 1.
		\end{equation}
	\end{lemma}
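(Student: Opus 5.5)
The plan is to deduce the statement from the uniform $\ell_2$ rate for the $\ell_1$-penalized estimator in Theorem 2 of \citet{BC.L1penalized}. The key observation is that $\|\cdot\|_\infty \le \|\cdot\|_2$. So it suffices to show that
\begin{equation*}
\sup_{\tau\in\mc T}\shortnorm{\hat\bvec^{\rm lasso}(\tau)-\bvec^*(\tau)}_2 \le d_0\lambda
\end{equation*}
with probability tending to one, since restricting to the coordinates in $\Mset^c$ can only decrease the norm.

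First, I will check that the hypotheses of Theorem 2 in \citet{BC.L1penalized} hold. Their i.i.d. sampling and density conditions (bounded conditional density, bounded derivative, and density at the conditional quantile bounded away from zero uniformly over $\mc T$) are exactly \ref{as:iid array} and \ref{as:density conditions}. Their normalization $\E[x_{ij}^2]=1$, with $\hat\sigma_j$ concentrating near one, is \ref{as:xbounds}. The remaining conditions, namely sparsity of $\sup_\tau|\mc A_\tau|$, restricted identifiability/nonlinearity constants, and the choice of $\lambda_{\rm lasso}$ of order $\sqrt{n\log(p\vee n)}$ via the pivotal quantile of the score, are those collected in Section S1.2.

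Under these conditions, Theorem 2 gives, with probability at least $1-\gamma-o(1)$ for any fixed $\gamma$,
\begin{equation*}
\sup_{\tau\in\mc T}\shortnorm{\hat\bvec^{\rm lasso}(\tau)-\bvec^*(\tau)}_2 \le C\sqrt{\frac{s_0\log(p\vee n)}{n}},
\end{equation*}
where $s_0=\sup_\tau|\mc A_\tau| \le m+s$. Allowing $\gamma=\gamma_n\to0$ slowly, as their theorem permits through the choice of the penalty level, turns this into a statement with probability tending to one.

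The final step is to compare rates. By \ref{as:lambda}, $\sqrt{(m+s)\log(p\vee n)/n}=o(\lambda)$. Hence for $n$ large the right-hand side above is at most $d_0\lambda$ for any fixed constant $d_0>0$, which yields \eqref{eqn:thm33as1}.

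The main obstacle is bookkeeping rather than substance. I need to make sure the sparsity index in Belloni--Chernozhukov's bound is dominated by $m+s$. This can fail if elements of $\Mset$ are in the support, but $s_0=|\mc A_{\mc T}|\le m+s$ still holds because $\mc A_{\mc T}\subseteq \Mset\cup\Sset$. I also need the probability statement to hold uniformly over $\tau$. Their result is already uniform over $\mc T$, which is why Theorem 2, rather than a pointwise result, is the one to invoke.
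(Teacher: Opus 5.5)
Your proposal is correct and follows the paper's proof essentially verbatim. Theorem 2 of \citet{BC.L1penalized} gives $\sup_{\tau\in\mc T}\|\hat{\bvec}^{\rm lasso}(\tau)-\bvec^*(\tau)\|_2=O_p(\sqrt{(m+s)\log(p\vee n)/n})$; the sup-norm over $\Mset^c$ is dominated by this full $\ell_2$ error; and \ref{as:lambda} makes it $o_p(\lambda)$, hence eventually at most $d_0\lambda$. Your extra observation that $\mc A_{\mc T}\subseteq\Mset\cup\Sset$, so the sparsity index is at most $m+s$, is exactly what justifies the $(m+s)$ in the rate the paper quotes.
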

	
	Our next lemma establishes that the reduced ($e = 0$) and full ($e = a$) oracle estimators satisfy $\mc E_{e2}$ and $\mc E_{e3}$ with high probability. 
	\begin{lemma}\label{lemma:global33modAssumptions2}
		Suppose that \ref{as:iid array} - \ref{as:new_re} hold and that $(m + s)\xi_n^2 \log^2 n = o(n)$. 
		Then for $e \in \{0,a\}$
		\begin{equation}
			\label{eqn:thm33as2}
			\Pr \left(
			\sup_{\tau \in \mc T, \bm w(\tau) \in \partial_{{\MSset^c}} \mc Q_{\tau,n}(\hat{\bvec}_e^{o}(\tau))} ||\bm w(\tau)||_{\infty} < d_1\lambda\right) \rightarrow 1
		\end{equation}
		and 
		\begin{equation}
			\label{eqn:thm33as3}
			\Pr\left(
			\inf_{j \in \mathcal{S}} \sup_{\tau \in \mc T}|\hat{\beta}^o_{e,j}(\tau)| > d\lambda\right) \rightarrow 1. 
		\end{equation}
	\end{lemma}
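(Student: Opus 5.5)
The two claims are handled separately, and both rest on the uniform rates of Theorem \ref{thm:oracle rate}.

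\emph{Claim \eqref{eqn:thm33as3}.} This one is quick. Fix $j\in\Sset$ and pick $\tau_j\in\mc T$ with $|\beta^*_j(\tau_j)|$ close to its supremum, so that $|\beta^*_j(\tau_j)|>(d+1)\lambda$ by \ref{as:strength of signal}. Then
\[
\sup_{\tau}|\hat\beta^o_{e,j}(\tau)| \ge |\beta^*_j(\tau_j)| - \sup_{\tau}\norm{\bvecOR_{e,\MSset}(\tau)-\bvec^*_{\MSset}(\tau)}_2 .
\]
By Theorem \ref{thm:oracle rate} the last term is $O_p(\sqrt{(m+s)/n})=o_p(\lambda)$ under \ref{as:lambda}. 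This holds uniformly in $j$, because the $\ell_2$ bound controls every coordinate at once. So the infimum over $\Sset$ exceeds $d\lambda$ with probability tending to one.

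\emph{Claim \eqref{eqn:thm33as2}.} For $j\in\MSset^c$, write the subgradient coordinate at $\hat\bvec^o_e(\tau)$ as
\[
w_j(\tau) = -n^{-1}\sum_i x_{ij}\psi_\tau(y_i-\xvec_i^T\hat\bvec^o_e) + R_j(\tau).
\]
The remainder $R_j$ comes from interpolated points. Since $|\mc I(\hat\bvec^o_e(\tau))|\le m+s$ almost surely (the conditional density is continuous), $|R_j|\le (m+s)\max_{ij}|x_{ij}|/n=o_p(\lambda)$ by \ref{as:x_max}. For the main term I will decompose around the truth:
\begin{align*}
n^{-1}\sum_i x_{ij}\psi_\tau(y_i-\xvec_i^T\hat\bvec^o_e)
&= n^{-1}\sum_i x_{ij}\psi_\tau(\eps_i(\tau))
+ \bigl\{\empmean-\E\bigr\}\bigl[x_{ij}\{\psi_\tau(y-\tilde\xvec^T\bm b)-\psi_\tau(\eps(\tau))\}\bigr]_{\bm b=\hat\bvec^o_{e,\MSset}} \\
&\quad + \E\bigl[x_{ij}\{F(\tilde\xvec^T\bvec^*_{\MSset}|\xvec)-F(\tilde\xvec^T\bm b|\xvec)\}\bigr]_{\bm b=\hat\bvec^o_{e,\MSset}},
\end{align*}
and bound the three terms in order.

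First term: it has mean zero, since $\bvec^*_{\MSset^c}=0$ makes $\psi_\tau(\eps_i)$ conditionally centered. I will bound its supremum over $\tau\in\mc T$ and $j\in\MSset^c$. The plan is a union bound over $p$ coordinates combined with a grid in $\tau$ of mesh $n^{-1}$. Bernstein or Hoeffding on each grid point gives $O(\sqrt{\log(p\vee n)/n})$. The discretization error is controlled by the Lipschitz property \ref{as:bvec} and the bounded density, with a VC argument for the indicator class. This is $o_p(\lambda)$ by \ref{as:lambda}.

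Third term: a Taylor expansion with \ref{as:density conditions} gives it as $\E[f\,x_{ij}\tilde\xvec^T]\bm\delta+O(\E|x_{ij}|(\tilde\xvec^T\bm\delta)^2)$, where $\bm\delta=\hat\bvec^o_{e,\MSset}-\bvec^*_{\MSset}$. By Cauchy--Schwarz and \ref{as:new_re}, it is at most $\norm{\bm\delta}_2\cdot o(\sqrt{\log p/(s\log^2 n)})$. With $\norm{\bm\delta}_2=O_p(\sqrt{(m+s)/n})$ this is $o_p(\lambda)$. The quadratic piece is handled with \ref{as:eigens} and \ref{as:x_max}.

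Second term (the main obstacle): an empirical process indexed by $(\tau,\bm b,j)$, with $\bm b$ ranging over an $(m+s)$-dimensional ball of radius $C\sqrt{(m+s)/n}$ around $\bvec^*_{\MSset}(\tau)$. I would first try a chaining or maximal inequality for VC classes. The function class for fixed $j$ has VC index $O(m+s)$. Its envelope is $|x_{ij}|$ times an indicator whose probability is $O(\xi_n\sqrt{(m+s)/n})$, which gives a variance bound of the same order. The resulting bound is roughly
\[
\sqrt{\frac{(m+s)\log(p\vee n)}{n}}\Bigl(\xi_n\sqrt{\tfrac{m+s}{n}}\Bigr)^{1/2}+\frac{(m+s)\log(p\vee n)\max|x_{ij}|}{n}.
\]
Both pieces are $o(\lambda)$ by \ref{as:lambda}, \ref{as:x_max}, \ref{as:p rate} and $(m+s)\xi_n^2\log^2n=o(n)$. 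Getting the envelope and variance bookkeeping right uniformly over $\tau$ is where the real care is needed. If direct chaining fails, I would instead discretize $\bm b$ and $\tau$ on nets of cardinality $(n)^{C(m+s)}$ and apply Bernstein to each net point.

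Combining the three terms and $R_j$ gives $\sup_{\tau,j}|w_j(\tau)|=o_p(\lambda)<d_1\lambda$ with probability tending to one. The argument is identical for $e=0$ and $e=a$, since only the rate from Theorem \ref{thm:oracle rate} is used.
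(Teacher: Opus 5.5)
Your proposal is correct and follows the paper's proof. The paper makes the same moves:
\begin{itemize}
\item it bounds the interpolated-point remainder via \ref{as:x_max};
\item it splits the score at $\hat\bvec^o_e(\tau)$ into the score at the truth, a centered empirical-process increment and a mean increment (its $\Pi_3$, $\Pi_1$, $\Pi_2$), then takes a union bound over $j\in\MSset^c$;
\item it proves \eqref{eqn:thm33as3} by the same triangle inequality with Theorem \ref{thm:oracle rate}.
\end{itemize}
The only difference is that the paper imports the three bounds from \citet{zheng2015globally} on a ball of radius $B\sqrt{(m+s)\log^2 n/n}$, whereas you prove them directly on the $\sqrt{(m+s)/n}$ ball. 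One small correction: the quadratic Taylor term should be handled by $|\tilde\xvec^T\bm\delta|\le\xi_n\norm{\bm\delta}_2$ together with \ref{as:new_re}, not by \ref{as:x_max}, which controls only sample maxima and says nothing about a population expectation.
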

	
	Together, Lemmas \ref{lemma:global33modAssumptions} and \ref{lemma:global33modAssumptions2} establish that the global LLA algorithms converge to the testing oracles in two iterations with probability converging to $1$, as is summarized in the following theorem.
	\begin{theorem}\label{thm:strong oracle}
		Suppose \ref{as:iid array} - \ref{as:new_re} and the additional conditions in Section S1.2 hold and that $(m + s)\xi_n^2 \log^2 n = o(n)$. Then for $e \in \{0,a\}$
		$$
			\Pr\left( \sup_{\tau \in \mc T} \shortnorm{\hat \bvec^{(2)}_e(\tau) - \bvecOR_e(\tau)}_2 = 0 \right) \to 1.
		$$
		That is, the two-step global LLA estimators are equal to the corresponding testing oracles uniformly over $\mc T$ with probability converging to $1$.
	\end{theorem}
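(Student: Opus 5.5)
The argument is a deterministic chaining of the two oracle-recovery theorems, followed by a union bound over the probabilistic events. Fix $e \in \{0,a\}$ and run Algorithm \ref{algo:LLA for full estimator} (respectively Algorithm \ref{algo:LLA for reduced estimator}) with initial estimator $\hat\bvec_e^{(0)}(\tau) = \hat\bvec^{\rm lasso}(\tau)$.

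On the event $\mc E_{e1} \cap \mc E_{e2}$, Theorem \ref{thm:global33mod} gives $\sup_{\tau \in \mc T}\shortnorm{\hat\bvec_e^{(1)}(\tau) - \bvecOR_e(\tau)}_2 = 0$. On $\mc E_{e2} \cap \mc E_{e3}$, Theorem \ref{thm:global33modStep2} with $b = 2$ then gives $\sup_{\tau \in \mc T}\shortnorm{\hat\bvec_e^{(2)}(\tau) - \bvecOR_e(\tau)}_2 = 0$. Hence
$$
\Pr\left(\sup_{\tau \in \mc T}\shortnorm{\hat\bvec_e^{(2)}(\tau) - \bvecOR_e(\tau)}_2 = 0\right) \ge 1 - \Pr(\mc E_{e1}^c) - \Pr(\mc E_{e2}^c) - \Pr(\mc E_{e3}^c).
$$
This inclusion requires that the hypotheses \ref{as:strength of signal} and \ref{as:oracleDefined}, used by both theorems, be in force. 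Condition \ref{as:strength of signal} is among the assumed \ref{as:iid array}--\ref{as:new_re}. Condition \ref{as:oracleDefined} is also among them; if one prefers to view it as a random event, it holds with probability tending to one by the standard existence of quantile regression solutions, in which case its complement is simply added to the union bound.

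The three error probabilities are handled by the preceding lemmas. First, $\Pr(\mc E_{e1}^c) \to 0$ by Lemma \ref{lemma:global33modAssumptions}. The event $\mc E_{e1}$ involves only the coordinates in $\Mset^c$ of the initial estimator and does not require the constraints of \eqref{hyp:linear hypothesis}, so the same lasso initializer serves for both $e = 0$ and $e = a$, and the needed conditions (A1), (A2), (A7), (A11) and Section S1.2 are all assumed. Second, $\Pr(\mc E_{e2}^c) \to 0$ by \eqref{eqn:thm33as2} of Lemma \ref{lemma:global33modAssumptions2}. Third, $\Pr(\mc E_{e3}^c) \to 0$ by \eqref{eqn:thm33as3} of the same lemma, which applies because \ref{as:iid array}--\ref{as:new_re} and $(m+s)\xi_n^2\log^2 n = o(n)$ are assumed.

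Summing the three bounds gives the claim for each $e$. There is no substantive obstacle, since the difficult probabilistic work is contained in Lemmas \ref{lemma:global33modAssumptions} and \ref{lemma:global33modAssumptions2}. The only point needing care is that the uniformity over $\mc T$ is already built into the events $\mc E_{e1}$, $\mc E_{e2}$ and $\mc E_{e3}$. Consequently the deterministic theorems deliver the supremum statement directly, and no additional union over $\tau$ is required.
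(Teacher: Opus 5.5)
Your proposal is correct and is the paper's proof: the paper's proof also just notes that Lemmas \ref{lemma:global33modAssumptions} and \ref{lemma:global33modAssumptions2} make $\mc E_{e1}$, $\mc E_{e2}$, $\mc E_{e3}$ hold with probability tending to one, then chains Theorems \ref{thm:global33mod} and \ref{thm:global33modStep2} (with $b=2$); you have simply written out the union bound. You should drop the hedge about \ref{as:oracleDefined}, because it is already assumed, and existence of quantile regression solutions would not give its uniqueness part anyway (over an interval $\mc T$, uniqueness typically fails at the breakpoints of the quantile regression process).
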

	
	\begin{remark} \label{remark:faster oracle rate}
		While \citet{zheng2015globally} showed that their adaptively weighted quantile regression estimators recover a \emph{regularized} oracle estimator over a range of quantiles $\mc T$, Theorem \ref{thm:strong oracle} is the first to establish recovery of \emph{unregularized} oracle estimators over $\mc T$.
		These oracles attain different uniform estimation rates over $\mc T$---$\sqrt{s \log n/n}$ for the regularized oracle (Theorem 3.1 of \citet{zheng2015globally}) vs. $\sqrt{s/n}$ for the unregularized oracle (with $\Mset =\{0\}$ for the sake of comparison) (Theorem \ref{thm:oracle rate})---making the unregularized oracle a more attractive target for recovery.
	\end{remark}
	
	\subsection{Global LLA estimation rate}
	Having established that the two-step global LLA estimators possess the strong oracle property, we can show that they attain the same rates of convergence as the testing oracles.
	\begin{theorem} \label{thm:part pen rates}
		Suppose \ref{as:iid array} - \ref{as:new_re} and the additional conditions in Section S1.2 hold and that $(m + s)\xi_n^2 \log^2 n = o(n)$. 
		Then 
		\begin{equation}
			\sup_{\tau \in \mc T} \norm{\hat \bvec_{a}^{(2)}(\tau) - \bvec^*(\tau) }_2 = O_p\left(\sqrt{\frac{m+s}{n}} \right)\label{eqn:FE.part pen rate}
		\end{equation}
		and
		\begin{equation}
			\sup_{\tau \in \mc T} \norm{\hat \bvec_{0}^{(2)}(\tau) - \bvec^*(\tau) }_2 = O_p\left(\sqrt{\frac{m-r+s}{n}} \right).\label{eqn:RE.part pen rate}
		\end{equation}
		
	\end{theorem}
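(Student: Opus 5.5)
The plan is to combine Theorem \ref{thm:strong oracle} with the oracle rates in Theorem \ref{thm:oracle rate}, working on the high-probability event where the two-step global LLA estimators coincide with the testing oracles.

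For $e \in \{0,a\}$ define the event
$$
\mc G_{e,n} = \left\{ \sup_{\tau \in \mc T} \shortnorm{\hat \bvec^{(2)}_e(\tau) - \bvecOR_e(\tau)}_2 = 0 \right\}.
$$
Theorem \ref{thm:strong oracle} applies under exactly the stated hypotheses, so $\Pr(\mc G_{e,n}) \to 1$. Fix $\epsilon > 0$ and let $M$ be a constant to be chosen. For the full model,
$$
\Pr\left( \sup_{\tau \in \mc T} \norm{\hat \bvec_a^{(2)}(\tau) - \bvec^*(\tau)}_2 > M \sqrt{(m+s)/n} \right) \le \Pr(\mc G_{a,n}^c) + \Pr\left( \sup_{\tau \in \mc T} \norm{\bvecOR_a(\tau) - \bvec^*(\tau)}_2 > M\sqrt{(m+s)/n}\right).
$$
It therefore suffices to show that the oracle attains the rate in the full $(p+1)$-dimensional norm, not just on the coordinates in $\MSset$.

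The oracle is constrained to $\bvecOR_{a,\MSset^c}(\tau) = \bm 0$. By construction $\MSset^c \subseteq \mc A_{\mc T}^c$, because $\mc A_{\mc T} \subseteq \Mset \cup \Sset_{\mc T}$. Hence $\bvec^*_{\MSset^c}(\tau) = \bm 0$ for every $\tau \in \mc T$, and
$$
\norm{\bvecOR_a(\tau) - \bvec^*(\tau)}_2 = \norm{\bvecOR_{a,\MSset}(\tau) - \bvec^*_{\MSset}(\tau)}_2
$$
uniformly in $\tau$. Equation \eqref{eqn:FE.oracle rate} of Theorem \ref{thm:oracle rate}, whose hypotheses (A1)--(A3) and $(m+s)\xi_n^2\log^2 n = o(n)$ are included here, then lets us choose $M$ so that the second probability is below $\epsilon/2$ for all large $n$. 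Since $\Pr(\mc G_{a,n}^c) \to 0$, this gives \eqref{eqn:FE.part pen rate}.

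The reduced model is handled the same way. Condition (A4) is among (A1)--(A13), so \eqref{eqn:RE.oracle rate} of Theorem \ref{thm:oracle rate} applies. The reduced oracle \eqref{eqn:RE.oracle} is also constrained to vanish on $\MSset^c$, where $\bvec^*$ vanishes too, so the zero-padding argument carries over and yields \eqref{eqn:RE.part pen rate}.

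There is no substantive obstacle here: the analytical work was already done in Theorems \ref{thm:oracle rate} and \ref{thm:strong oracle}. The one point to check carefully is the support inclusion $\mc A_{\mc T} \subseteq \MSset$, since the rates are stated for the full vector while Theorem \ref{thm:oracle rate} controls only the $\MSset$ coordinates. One should also confirm that the $O_p$ statement transfers correctly through an event of probability tending to one, which is the standard argument in the display above.
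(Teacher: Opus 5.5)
Your proposal is correct and follows the paper's own proof. Both intersect the strong-oracle event from Theorem~\ref{thm:strong oracle} with the uniform oracle-rate event from Theorem~\ref{thm:oracle rate} and bound the complement, using $q_a = m+s$ for the full model and $q_0 = m-r+s$ for the reduced model. Your explicit zero-padding step, $\mc A_{\mc T} \subseteq \MSset$ so that both the oracle and $\bvec^*$ vanish on $\MSset^c$, is a detail the paper uses implicitly when it passes from the $\MSset$-coordinate rates to the full-vector statement.
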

	
	\begin{remark}\label{remark:faster penalized rate}
		Supposing that $\Mset =\{ 0 \}$ for the sake of comparison to other methods, 
		Theorem \ref{thm:part pen rates} establishes that the estimation rate for the global LLA estimators is $\sqrt{s/n}$ uniformly over $\mc T$ -- faster than the $\sqrt{s \log (p \vee n) /n}$ rate for \citet{BC.L1penalized}'s $\ell_1$-penalized estimator and the $\sqrt{s \log n /n}$ rate for \citet{zheng2015globally}'s globally adaptive $\ell_1$-penalized estimator.
		To our knowledge, the global LLA estimators are the first to attain the $\sqrt{s/n}$ uniform oracle estimation rate over $\mc T$. 
	\end{remark}

	\section{Partial Penalized Wald and Rank Score Tests} \label{sec:partial penalized tests}
	In this section, we propose partial penalized Wald and rank score tests for testing \eqref{hyp:linear hypothesis} over $\mc T$ in high dimensions based the global LLA estimators.
	
	\subsection{Partial penalized Wald test}
	We use the two-step LLA estimator for the full model, which we denote by $\hat \bvec_a^{(2)}(\tau)$, to compute the partial penalized Wald process.
	Let $\hat {\Sset}_a = \{j \notin \Mset: \sup_{\tau \in \mc T}|\hat \beta_{a,j}^{(2)}(\tau)| > 0 \}$ denote the support set for the full model estimator.
	We use the selected predictors from the full model, which we denote by $\tilde{\xvec}_a = \xvec_{\Mset \cup \hat \Sset_a}$, to compute the sample Gram matrix $\hatmat \Sigma_a = \frac{1}{n} \sum_{i=1}^n \tilde{\xvec}_{a,i} \tilde{\xvec}_{a,i}^T$.
	Let
	\begin{equation}
		\bm J(\tau) \defeq \E[f_{y|\xvec} (Q_y(\tau|\xvec)) \tilde \xvec \tilde \xvec^T ] = \E[f_{y|\xvec} (\xvec^T\bvec^*(\tau))) \tilde \xvec \tilde \xvec^T ] \label{eqn:Jacobian}
	\end{equation}
	denote the Jacobian matrix.
	We use \citet{Powell1984}'s estimator of the Jacobian with the selected predictors:
	\begin{equation}	
		\hatmat J_a(\tau) =\frac{1}{2hn} \sum_{i=1}^n  I(|y_i - \xvec_i^T \hat \bvec_{a}^{(2)}(\tau)| \leq h ) \tilde{\xvec}_{a,i} \tilde{\xvec}_{a,i}^T, \label{eqn:Jacobian (part pen)}
	\end{equation}
	where $h>0$ is a bandwidth parameter.
	Lastly, we define the sandwich estimator $\hatmat K_a(\tau) = \Cmat [(\hatmat J_a (\tau))^{-1} \hatmat \Sigma_a (\hatmat J_a (\tau))^{-1} ]_{\Mset, \Mset} \Cmat^T$.
	Using these quantities, we define the \textbf{partial penalized Wald process} for testing \eqref{hyp:linear hypothesis} to be
	\begin{equation}
		W_n(\tau) = n (\Cmat \hat \bvec_{a, \Mset}^{(2)}(\tau) - \bm \Psi(\tau) )^T(\hatmat K_a(\tau))^{-1}(\Cmat \hat \bvec_{a, \Mset}^{(2)}(\tau) - \bm \Psi(\tau) )/(\tau(1 - \tau)) \text{ \; for \;  } \tau \in \mc T. \label{eqn:wald process}
	\end{equation}
	
	\subsection{Partial penalized rank score test}
	The classical quantile rank score process \citep{gutenbrunner1992regression} is limited to testing hypotheses of the form
	$
	\text{H}_0: \bvec^*_{\Mset}(\tau) = \bm 0  \text{ for } \tau \in \mc T.
	$
	We propose a generalization of the rank score process for testing all linear hypotheses \eqref{hyp:linear hypothesis}.
	
	We use the two-step LLA estimator for the reduced model, which we denote by $\hat \bvec_0^{(2)}$, to compute the partial penalized rank score process.
	Let $\hat \Sset_0 = \{j \notin \Mset: \sup_{\tau \in \mc T}|\hat \beta_{0,j}^{(2)}(\tau)| > 0 \}$ denote the support set for the reduced model estimator.
	We define the \textit{partial penalized regression rank scores} $\hat {\bm a}(\tau)$ to be the solution to the dual problem for the reduced model LLA optimization \eqref{algo:LLA update reduced} with adaptive weights $\hat w_j^{(B-1)}$ for $j \in \Mset^c$, given by
	\begin{multline}
		(\hat {\bm a}(\tau), \hat {\bm \lambda}) = \argmax_{\bm a, \bm \lambda} \{ \bm y^T\bm a - \bm \lambda \bm \Psi(\tau) : \bm a \in [0,1]^n, \, \xmat_{\Mset}^T \bm a = (1 - \tau)\xmat_{\Mset}^T \bm 1_n +  \Cmat^T\bm \lambda, \,\\ 
		|\xmat_{j}^T (\bm a - (1-\tau) \bm 1_n ) | \leq \hat w_j^{(B-1)} \; \;\forall_{j \in \Mset^c} \}. \label{eqn:rank scores}
	\end{multline}
	The rank scores can be expressed in terms of the reduced model estimator $\hat \bvec_0$, as
	\begin{equation*}
		\hat a_i (\tau) = 
		\begin{cases}
			1 \text{\; if \;} y_i > \xvec_{i}^T \hat{\bvec}_{0}^{(2)}\\
			0 \text{\; if \;} y_i < \xvec_{i}^T \hat{\bvec}_{0}^{(2)}
		\end{cases}
	\end{equation*}
	with $\hat a_i (\tau) $ for $i \in \mathcal{I}(\hat \bvec_0^{(2)}) = \{i: y_i = \xvec_{i}^T \hat{\bvec}_{0}^{(2)}\}$ satisfying
	the constraints in \eqref{eqn:rank scores}.
	
	Define $\tilde \xvec_0$, $\hatmat \Sigma_0$, $\hatmat J_0(\tau)$, and $\hatmat K_0(\tau)$ based on $\hat \bvec_0^{(2)}$ equivalently to their full model counterparts.
	Using these terms, we define the \textbf{partial penalized rank score process} for testing \eqref{hyp:linear hypothesis}  to be
	\begin{equation}
		R_n(\tau) = \norm{ \frac{1}{\sqrt{n}} [\widehat{\bm K}_0(\tau)]^{-1/2} 
			\bm C [\widehat{\bm J}_0^{-1}(\tau)]_{(\Mset)}\sum_{i \in [n]} \tilde \xvec_i ( \hat a_i(\tau) - (1 - \tau) ) }_2^2/(\tau(1 - \tau))  \label{eqn:rs process}
	\end{equation}
	
	\subsection{Testing procedure}
	Our Wald and rank score statistics for testing \eqref{hyp:linear hypothesis} over $\mc T$ are given by $ \sup_{\tau \in \mc T} W_n(\tau)$ and $\sup_{\tau \in \mc T} R_n(\tau)$, respectively. For testing at a single $\tau$, the test statistics simplify to $W_n(\tau)$ and $R_n(\tau)$.
	In Section \ref{sec:inference theory}, we will show that the partial penalized Wald and rank score processes converge in distribution to a squared Bessel process of order $r$, which we denote by $Q^2_{r}(\tau)$, under \eqref{hyp:linear hypothesis}.
	As such, we reject the null over $\mc T$ when the test statistics exceed the upper-$\alpha$ critical value for the Bessel process $\sup_{\tau \in \mc T} Q^2_{r,\alpha}(\tau)$.
	For testing at a single $\tau$, $W_n(\tau)$ and $R_n(\tau)$ converge to chi-square random variables with $r$ degrees of freedom, denoted by $\chi^2_r$, under \eqref{hyp:linear hypothesis}.
	As such, we reject the null at significance level $\alpha$ when the test statistics exceed the upper-$\alpha$ critical value $\chi^2_{r,\alpha}$.

	\section{Asymptotic Theory for Inference}\label{sec:inference theory}
	Having proven that our two-step global LLA estimators possess the strong oracle property, we can leverage the asymptotic properties of the full and reduced oracle estimators to derive limiting distributions for the partial penalized Wald and rank score processes evaluated at the global LLA solutions.
	First, we introduce the oracle Wald and rank score tests and study their asymptotic behavior.
	
	\subsection{Oracle Wald and rank score tests}
	We define oracle Wald and rank score tests based on the full and reduced oracle estimators.
	Let $\hatmat \Sigma^o = \frac{1}{n} \sum_{i = 1}^n \tilde{\xvec}_i \tilde{\xvec}_i^T$ and 
	$\hatmat{J}_a^o(\tau) =\frac{1}{2hn} \sum_{i=1}^n  I(|y_i - \xvec_i^T \hat \bvec_{a}^{o}(\tau)| \leq h ) \tilde{\xvec}_{i} \tilde{\xvec}_{i}^T $
	denote the full oracle estimators of the Gram and Jacobian matrices.
	Let $\widetilde{\Cmat} = [\Cmat \text{\;} \bm 0] \in \reals^{r \times m+s}$ and
	$\hatmat K_a^o(\tau) = \widetilde{\Cmat} (\hatmat J_a^o (\tau))^{-1} \hatmat \Sigma (\hatmat J_a^o (\tau))^{-1} \widetilde\Cmat^T$.
	Using these terms, the \textit{oracle Wald process} for testing \eqref{hyp:linear hypothesis} is
	\begin{equation}
		W_n^o(\tau) = n (\Cmat \hat \bvec^o_{a, \Mset }(\tau) - \bm \Psi(\tau) )^T(\hatmat K_a^o(\tau))^{-1}(\Cmat \hat \bvec^o_{a, \Mset }(\tau) - \bm \Psi(\tau) )/(\tau(1 - \tau)). \label{eqn:oracle wald process}
	\end{equation}
	
	We define the oracle regression rank scores $\hat {\bm a}^o(\tau)$ as the solution to the dual problem for the reduced oracle optimization problem \eqref{eqn:RE.oracle}, given by
	\begin{equation}
		(\hat {\bm a}^o(\tau), \hat {\bm \lambda}) = \argmax_{\bm a, \bm \lambda} \{ \bm y^T\bm a - \bm \lambda \bm \Psi(\tau) : \widetilde \xmat^T \bm a = (1 - \tau)\widetilde\xmat^T \bm 1_n + \widetilde \Cmat^T  \bm \lambda , \, \bm a \in [0,1]^n  \}.  \label{eqn:oracle rank scores}
	\end{equation}
	The oracle rank scores can be expressed in terms of the reduced oracle estimator, as
	\begin{equation*}
		\hat a_i^o (\tau) = 
		\begin{cases}
			1 \text{\; if \;} y_i >  \xvec_{i}^T \hat{\bvec}_{0}^o\\
			0 \text{\; if \;} y_i < \xvec_{i}^T \hat{\bvec}_{0}^o
		\end{cases}
	\end{equation*}
	with $\hat a_i^o (\tau) $ for $i \in \mathcal{I}(\hat \bvec^o_0) = \{i: y_i = \xvec_{i}^T \hat{\bvec}_{0}^o\}$ satisfying
	$
	\sum_{i \in \mathcal{I}} \tilde \xvec_{i} \hat a_i^o
	+ \sum_{i=1}^n \tilde \xvec_{i} I(y_i > \xvec_{i}^T \hat{\bvec}_{0}^o)
	= (1 - \tau)  \sum_{i=1}^n \tilde \xvec_{i} + \widetilde \Cmat^T  \bm \lambda  .
	$
	
	Define $\hatmat J_0^o(\tau)$ and $\hatmat K_0^o(\tau)$ based on $\hat \bvec_0^o$ equivalently to their full model counterparts.
	Using these terms, we define the \textit{oracle rank score process} for testing \eqref{hyp:linear hypothesis} to be
	\begin{equation}
		R_n^o(\tau) = \norm{ \frac{1}{\sqrt{n}} [\widehat{\bm K}_0^o(\tau)]^{-1/2} 
			\widetilde{\bm C} [\widehat{\bm J}_0^o(\tau)]^{-1}\sum_{i \in [n]} \tilde \xvec_i ( \hat a_i^o(\tau) - (1 - \tau)) }_2^2/(\tau(1 - \tau)). \label{eqn:oracle rs process}
	\end{equation}
	
	\subsection{Asymptotic results for oracle tests}	
	Our first theorem establishes Bahadur representations for the testing oracles and the oracle rank scores.
	Recall that $\psi_{\tau}(u) = \tau - 1(u < 0) $ and define $ \bm H(\tau) = \widetilde{\bm C} [\bm J(\tau)]^{-1} \widetilde{\bm C}^T $ and
	$ \bm P(\tau)  = [\bm J(\tau)]^{-1/2}\widetilde{\bm C}^T [\bm H(\tau)]^{-1}\widetilde{\bm C}[\bm J(\tau)]^{-1/2}$.
	\begin{theorem} \label{thm:oracle bahadur}
		Suppose that \ref{as:iid array} - \ref{as:Gram lower bound} hold and that $(m + s)^3 \xi_n^2 = o(n^{1 -\epsilon})$ for some constant $\epsilon > 0$.
		Then the full oracle estimator satisfies
		\begin{equation}
			\sqrt{n}(\bvecOR_{a, \MSset}(\tau) - \bvec_{\MSset}^*(\tau)) =  \frac{1}{\sqrt n} [\bm J(\tau)]^{-1} \sum_{i \in [n]} \tilde \xvec_i \psi_{\tau}(y_i - \xvec_i^T \bvec^*(\tau)) + \bm r_{1n}(\tau) \label{eqn:FE.oracle bahadur}
		\end{equation}
		for all $\tau \in \mc T$,
		where 
		$
		\sup_{\tau \in \mc T} \norm{\bm r_{1n}(\tau)}_2 = O_p\left(n^{-1/4}(m + s)^{3/4}\xi_n^{1/2} \log^{1/2}n\right) = o(n^{- \epsilon'})
		$
		for some $\epsilon' > 0$.
		If \ref{as:local alts} also holds, then the reduced oracle estimator satisfies
		\begin{align}
			\sqrt{n}(\bvecOR_{0, \MSset}(\tau) - \bvec_{\MSset}^*(\tau)) 
			& =  [\bm J(\tau)]^{-1/2}(\bm I - \bm P(\tau)) [\bm J(\tau)]^{-1/2} \frac{1}{\sqrt n} \sum_{i \in [n]}  \tilde \xvec_i \psi_{\tau}(y_i - \xvec_i^T \bvec^*(\tau)) \notag \\
			& \text{\; \;} - [\bm J(\tau)]^{-1} \widetilde{\bm C}^T \bm H^{-1}(\tau) \bm \eta(\tau) + \bm r_{2n}(\tau)  \label{eqn:RE.oracle bahadur}
		\end{align}
		for all $\tau \in \mc T$
		and the oracle rank scores satisfy
		\begin{align}
			\frac{1}{\sqrt{n}} \sum_{i \in [n]} \tilde \xvec_i ( \hat a_i^o(\tau) - (1 - \tau) ) 
			& =  [\bm J(\tau)]^{1/2} \bm P(\tau) [\bm J(\tau)]^{-1/2}\frac{1}{\sqrt n} \sum_{i \in [n]} \tilde \xvec_i \psi_{\tau}(y_i - \xvec_i^T \bvec^*(\tau)) \notag \\ 
			& \text{\; \;} + \widetilde{\bm C}^T \bm H^{-1}(\tau) \bm \eta(\tau) + \bm r_{3n}(\tau) \label{eqn:oracle rs rep}
		\end{align}
		for all $\tau \in \mc T$, where $ \bm r_{2n}(\tau) $ and $ \bm r_{3n}(\tau) $ attain the same rate as $ \bm r_{1n}(\tau) $.
	\end{theorem}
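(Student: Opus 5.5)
We follow the standard Bahadur-representation route for quantile regression with diverging dimension: a uniform empirical process bound over a shrinking neighbourhood, combined with approximate first-order conditions. Everything is restricted to the fixed low-dimensional design $\tilde\xvec = \xvec_{\MSset}$ of dimension $m+s$. By Theorem \ref{thm:oracle rate}, $\sup_{\tau}\|\bvecOR_{e,\MSset}(\tau)-\bvec^*_{\MSset}(\tau)\|_2 = O_p(\sqrt{(m+s)/n})$ for $e\in\{0,a\}$. We therefore work on the set $\mc B_n=\{\bm\delta: \|\bm\delta\|_2\le C\sqrt{(m+s)/n}\}$.

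First we define $\bm g_n(\bm\delta,\tau) = n^{-1}\sum_i \tilde\xvec_i \psi_\tau(y_i - \xvec_i^T\bvec^*(\tau) - \tilde\xvec_i^T\bm\delta)$ and its mean $\bm g(\bm\delta,\tau)$. Taylor-expanding the conditional CDF with \ref{as:density conditions}(iii) gives
\[
\bm g(\bm\delta,\tau) = -\bm J(\tau)\bm\delta + O\big(\xi_n \|\bm\delta\|_2^2\big),
\]
uniformly in $\tau$. By \ref{as:Gram lower bound} and \ref{as:density conditions}(ii), $\bm J(\tau)$ has eigenvalues bounded above and away from zero.

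The main obstacle is the stochastic equicontinuity bound
\[
\sup_{\tau\in\mc T,\bm\delta\in\mc B_n}\big\|\sqrt n\{\bm g_n(\bm\delta,\tau)-\bm g(\bm\delta,\tau)\}-\sqrt n\{\bm g_n(\bm 0,\tau)-\bm g(\bm 0,\tau)\}\big\|_2 = O_p\big(n^{-1/4}(m+s)^{3/4}\xi_n^{1/2}\log^{1/2} n\big).
\]
I would prove it in three steps.
\begin{itemize}
\item Discretize $\mc T\times\mc B_n$ with a grid of mesh polynomial in $n^{-1}$. Continuity in $\tau$ uses \ref{as:density conditions} together with the Lipschitz property of $\bvec^*$, and continuity in $\bm\delta$ uses the boundedness of $\tilde\xvec$ by $\xi_n$. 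This gives a grid of cardinality $n^{O(m+s)}$.
\item At each grid point, apply Bernstein's inequality to each projection $\bm u^T(\cdot)$, with $\bm u$ ranging over a $1/2$-net of the sphere. Each summand is bounded by $\xi_n$, and its variance is bounded by $E[(\bm u^T\tilde\xvec)^2 |\tilde\xvec^T\bm\delta|]\lesssim \xi_n\|\bm\delta\|_2$, since the indicator difference has conditional probability $O(|\tilde\xvec^T\bm\delta|)$ by bounded density.
\item Take a union bound, which costs a factor $\sqrt{(m+s)\log n}$. This yields the displayed rate, since $\sqrt{\xi_n\sqrt{(m+s)/n}\,(m+s)\log n}$ equals it. The bounded-range term $\xi_n(m+s)\log n/\sqrt n$ is of smaller order under $(m+s)^3\xi_n^2=o(n^{1-\epsilon})$.
\end{itemize}
Under that same condition the rate is also $o(n^{-\epsilon'})$, and so is the bias term $\sqrt n\,\xi_n (m+s)/n$.

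Second, for the full oracle estimator, the subgradient condition \ref{as:oracleDefined} (or directly the LP structure) gives $\|\sqrt n\,\bm g_n(\hat{\bm\delta}_a,\tau)\|_2 \le n^{-1/2}(m+s)\max_i\|\tilde\xvec_i\|_2 \le (m+s)\xi_n/\sqrt n$. This holds because at most $m+s$ points are interpolated, almost surely, by continuity of the conditional density. Plugging $\hat{\bm\delta}_a$ into the equicontinuity bound and the Taylor expansion, and solving for $\hat{\bm\delta}_a$, yields \eqref{eqn:FE.oracle bahadur}.

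Third, for the reduced estimator, the KKT conditions read $\bm g_n(\hat{\bm\delta}_0,\tau)=\widetilde\Cmat^T\bm d(\tau)+O((m+s)\xi_n/n)$, together with the constraint $\widetilde\Cmat\hat{\bm\delta}_0 = -\bm\eta(\tau)/\sqrt n$ from \ref{as:local alts}. The same linearization gives
\[
-\bm J\hat{\bm\delta}_0 + \bm g_n(\bm 0,\tau) = \widetilde\Cmat^T\bm d + \text{rem}.
\]
We solve for $\bm d$ by multiplying by $\widetilde\Cmat\bm J^{-1}$ and using the constraint, which gives $\bm d = \bm H^{-1}\{\widetilde\Cmat\bm J^{-1}\bm g_n(\bm 0) + \bm\eta/\sqrt n\}$. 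Substituting back gives exactly \eqref{eqn:RE.oracle bahadur}, with the projection $\bm P$. Here $\lambda_{\max}((\Cmat\Cmat^T)^{-1})=O(1)$ keeps $\|\bm H^{-1}\|$ bounded, so the remainder rate is preserved.

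For the rank scores, we note that $n^{-1/2}\sum_i\tilde\xvec_i(\hat a_i^o-(1-\tau))$ equals $\sqrt n\,\bm g_n(\hat{\bm\delta}_0,\tau)$ up to the interpolated points, which contribute $O((m+s)\xi_n/\sqrt n)$. This in turn equals $\sqrt n\,\widetilde\Cmat^T\bm d(\tau)$ plus remainder. Inserting the expression for $\bm d$ and writing $\widetilde\Cmat^T\bm H^{-1}\widetilde\Cmat\bm J^{-1} = \bm J^{1/2}\bm P\bm J^{-1/2}$ gives \eqref{eqn:oracle rs rep}. All remainders inherit the rate of $\bm r_{1n}$ uniformly over $\mc T$.
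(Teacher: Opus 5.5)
Your proposal is correct and follows essentially the paper's route. The shared steps are localization via Theorem~\ref{thm:oracle rate}, an empirical-process increment bound plus population linearization of the score, the $O((m+s)\xi_n/\sqrt n)$ bound for interpolated observations, and solving the constrained first-order system using $\widetilde{\bm C}(\bvecOR_{0,\MSset}(\tau)-\bvec^*_{\MSset}(\tau))=-\bm\eta(\tau)/\sqrt n$; the one difference is that you sketch the stochastic equicontinuity bound directly (grid, Bernstein, union bound) where the paper invokes Lemma 32 of \citet{belloni_conditional_2019}.

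One small correction: the remainders stay of the same order because they enter through $[\bm J(\tau)]^{-1/2}(\bm I-\bm P(\tau))[\bm J(\tau)]^{-1/2}$ and $[\bm J(\tau)]^{1/2}\bm P(\tau)[\bm J(\tau)]^{-1/2}$ with $\bm P(\tau)$ an orthogonal projection, as in the paper. Boundedness of $\|\bm H^{-1}(\tau)\|_2$ alone does not suffice, since \ref{as:local alts} gives no upper bound on $\|\Cmat\|_2$.
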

	
	\begin{remark}
		As noted above, we have proposed more general versions of the reduced model estimator \eqref{eqn:RE.oracle} and rank scores \eqref{eqn:oracle rank scores} than in prior studies of quantile regression (see, for example, \citet{koenker_quantile_2005}). As such, the Bahadur representations for these quantities in Theorem \ref{thm:oracle bahadur} are entirely novel.  
	\end{remark}
	
	Together the oracle convergence rates in Theorem \ref{thm:oracle rate} and the Bahadur representations in Theorem \ref{thm:oracle bahadur} enable us to study the asymptotic properties of the oracle Wald and rank score processes.
	Let $\bm B_r(\tau)$ denote a standard $r$-dimensional Brownian bridge.
	Then
	$$Q^2_{r,\bm \mu(\tau)}(\tau) \defeq \norm{ \bm B_r(\tau) + \bm \nu(\tau)}_2^2/(\tau(1 - \tau))$$ 
	is a squared non-central Bessel process of order $r$ with $\bm \mu(\tau) = \bm \nu (\tau)^T \bm \nu(\tau)/( \tau(1 - \tau) )$.
	Our final result for the testing oracles establishes the limiting distribution of the oracle Wald and rank score test statistics under the local alternatives in \ref{as:local alts}.
	
	\begin{theorem}\label{thm:oracle test stat convergence}
		Suppose that \ref{as:iid array} - \ref{as:local alts} hold, $r$ is fixed, and $(m + s)^3 \xi_n^2 = o(n^{1 - \delta_1})$ and $\xi_n^6 = o(n^{1 - \delta_2})$ for some $\delta_1, \delta_2 >0$. 
		If the bandwidth parameter $h$ for $\hatmat J_e^o(\tau)$, $e \in \{0,a\}$, satisfies
		$(m+s)^2 \xi_n^2 \log n = o(n^{1 - \epsilon} h)$ and $h = o((m+s)^{-1/2}n^{-\epsilon/2})$ and $m+s = o(n^{\epsilon})$ for some $\epsilon > 0$, then
		\begin{eqnarray}
			& & \sup_{\tau \in \mc T} W_n^o(\tau) \overset{d}{\to} \sup_{\tau \in \mc T} Q^2_{r,\bm \mu(\tau)}(\tau) \label{eqn:thm:Wald to Bessel}\\
			& & \sup_{\tau \in \mc T} R_n^o(\tau) \overset{d}{\to} \sup_{\tau \in \mc T} Q^2_{r,\bm \mu(\tau)}(\tau) \label{eqn:thm:rs to Bessel}
		\end{eqnarray}
		as $n \to \infty$, where $\bm \mu(\tau) = \bm \eta (\tau)^T [\bm K(\tau)]^{-1} \bm \eta(\tau)/(\tau(1 - \tau))$.
	\end{theorem}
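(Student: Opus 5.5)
The plan is to reduce both statistics to a common quadratic form in the projected score process, prove weak convergence of that process in $\ell^\infty(\mc T)$ to an $r$-dimensional Brownian bridge plus a drift, and finish with the continuous mapping theorem. Write
$$\bm U_n(\tau) = \frac{1}{\sqrt n}\sum_{i\in[n]}\tilde\xvec_i\psi_\tau(y_i-\xvec_i^T\bvec^*(\tau)),\qquad \bm K(\tau)=\widetilde\Cmat[\bm J(\tau)]^{-1}\bm\Sigma[\bm J(\tau)]^{-1}\widetilde\Cmat^T.$$

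\textbf{Step 1: reduction of the Wald process.} Premultiply the Bahadur representation \eqref{eqn:FE.oracle bahadur} by $\widetilde\Cmat$ and use \ref{as:local alts}. This gives
$$\sqrt n(\Cmat\hat\bvec^o_{a,\Mset}(\tau)-\bm\Psi(\tau)) = \widetilde\Cmat\bm J^{-1}(\tau)\bm U_n(\tau)+\bm\eta(\tau)+\widetilde\Cmat\bm r_{1n}(\tau).$$
By Theorem \ref{thm:oracle bahadur}, $\sup_\tau\|\bm r_{1n}\|_2=o_p(1)$.

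\textbf{Step 2: reduction of the rank score process.} Premultiply \eqref{eqn:oracle rs rep} by $\widetilde\Cmat\bm J^{-1}(\tau)$. From the definition of $\bm P$ we have $\widetilde\Cmat\bm J^{-1/2}\bm P\bm J^{-1/2} = \widetilde\Cmat\bm J^{-1}$ and $\widetilde\Cmat\bm J^{-1}\widetilde\Cmat^T\bm H^{-1}=\bm I_r$. The rank score vector therefore reduces to the same quantity $\widetilde\Cmat\bm J^{-1}\bm U_n+\bm\eta+o_p(1)$, uniformly in $\tau$. Since $r$ is fixed, the $r\times(m+s)$ projection keeps the remainder small in $\ell_2$ even though $m+s$ diverges.

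\textbf{Step 3: uniform consistency of the plug-in matrices.} I will show $\sup_\tau\|\hatmat K^o_e(\tau)-\bm K(\tau)\|=o_p(1)$ for $e\in\{0,a\}$. The key ingredient is a uniform bound on $\sup_\tau\|\hatmat J^o_e(\tau)-\bm J(\tau)\|$. I would split it into three pieces:
\begin{enumerate}[label=(\roman*)]
\item an empirical process term over the VC class $\{I(|y-\xvec^T\bvec|\le h)\tilde\xvec\tilde\xvec^T\}$. Using $\xi_n$ bounds and a matrix Bernstein/chaining argument, this is $O_p\bigl(\sqrt{(m+s)\xi_n^2\log n/(nh)}\bigr)$.
\item a bias term of order $h$, using the bounded derivative of $f_{y|\xvec}$ in \ref{as:density conditions}(iii).
\item a plug-in term from replacing $\bvec^*$ by $\hat\bvec^o_e$. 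By Theorem \ref{thm:oracle rate} this is $O_p\bigl(\xi_n\sqrt{(m+s)/n}/h\bigr)$.
\end{enumerate}
The bandwidth conditions are designed so that each piece, multiplied by the dimension factors that appear when the error is propagated through $\bm J^{-1}$ and $\widetilde\Cmat$, is $o(1)$. Also $\|\hatmat\Sigma-\bm\Sigma\|=o_p(1)$ by matrix concentration. Finally, $\bm K$ has eigenvalues bounded away from zero by \ref{as:Gram lower bound}, \ref{as:density conditions}, and full row rank of $\Cmat$, so $\hatmat K^{-1}$ is also uniformly consistent. I expect this step to be routine but tedious.

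\textbf{Step 4: the main obstacle.} The hardest step is proving that $\bm Z_n(\tau)=\bm K(\tau)^{-1/2}\widetilde\Cmat\bm J^{-1}(\tau)\bm U_n(\tau)$ converges weakly to $\bm B_r(\tau)$ in $\ell^\infty(\mc T)^r$. The difficulty is that the index vectors $\bm K^{-1/2}\widetilde\Cmat\bm J^{-1}\tilde\xvec$ live in a diverging-dimensional space and depend on $n$, so a fixed-class Donsker theorem does not apply.

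The covariance is immediate from $E[\psi_{\tau_1}\psi_{\tau_2}\mid\xvec]=\min(\tau_1,\tau_2)-\tau_1\tau_2$. At $\tau_1=\tau_2$ it equals $\tau(1-\tau)\bm I_r$ exactly. For $\tau_1\ne\tau_2$ it should converge to $(\tau_1\wedge\tau_2-\tau_1\tau_2)\bm I_r$ (or an equivalent limiting kernel), and this needs to be checked. For finite-dimensional convergence I will use the Lindeberg CLT; the third-moment condition $\xi_n^6=o(n^{1-\delta_2})$ controls the Lyapunov ratio.

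For asymptotic equicontinuity, I would first try a bracketing or entropy bound for a triangular-array class (e.g.\ van der Vaart–Wellner Theorem 2.11.23). The increments split into a smooth part, Lipschitz in $\tau$ via \ref{as:local alts}, \ref{as:bvec} and the smoothness of $\bm J(\tau)$, and the indicator part $I(y\le\xvec^T\bvec^*(\tau))$. For the indicator part, the $L_2$ modulus is of order $|\tau_1-\tau_2|^{1/2}$ thanks to the bounded density. If the chaining over the diverging-dimensional index is too costly, the fallback is a grid of mesh $n^{-1}$, a union bound over it, and Bernstein's inequality, using $m+s=o(n^\epsilon)$.

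\textbf{Step 5: conclusion.} Steps 1–4 give $W_n^o(\tau)=\|\bm Z_n(\tau)+\bm K^{-1/2}\bm\eta(\tau)\|_2^2/(\tau(1-\tau))+o_p(1)$ uniformly over $\mc T$. The same holds for $R_n^o$, because $\hatmat K_0^o$ is consistent for the same $\bm K$. The continuous mapping theorem for the sup functional then yields the Bessel limit with $\bm\nu=\bm K^{-1/2}\bm\eta$, which gives the stated $\bm\mu$.
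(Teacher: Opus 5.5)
The gap is the check you defer in Step 4, and under the stated hypotheses it fails. Put $\bm A(\tau)=[\bm K(\tau)]^{-1/2}\widetilde\Cmat[\bm J(\tau)]^{-1}$. By Lemmas \ref{lemma:Ui uniform} and \ref{lemma:Ui equivalence}, the covariance is exact for every $n$, not merely asymptotic:
\[
\E\bigl[\bm Z_n(\tau_1)\bm Z_n(\tau_2)^T\bigr]=(\tau_1\wedge\tau_2-\tau_1\tau_2)\,\bm A(\tau_1)\bm\Sigma\bm A(\tau_2)^T .
\]
The diagonal factor $\bm A(\tau)\bm\Sigma\bm A(\tau)^T$ equals $\bm I_r$, but the off-diagonal factor generally does not.

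For $r=1$ it equals $\bm a(\tau_1)^T\bm a(\tau_2)$, where $\bm a(\tau)$ is the unit vector in the direction of $\bm\Sigma^{1/2}[\bm J(\tau)]^{-1}\widetilde\Cmat^T$. By Cauchy--Schwarz this is strictly below $1$ unless that direction is constant in $\tau$. Constancy holds when $\bm J(\tau)$ is a scalar multiple of a fixed matrix, i.e. when $f_{y\mid\xvec}\{Q_y(\tau\mid\xvec)\}$ has the product form $g(\tau)k(\xvec)$, which is the location--scale case. Conditions \ref{as:iid array}--\ref{as:local alts} allow general heteroskedastic linear quantile models without this structure.

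In such models $\bm Z_n$ is not asymptotically a standard Brownian bridge. Because $\bm J$ and $\bm\Sigma$ depend on $n$, the kernel need not even converge. The supremum then does not converge to $\sup_{\tau}Q^2_{r,\bm\mu(\tau)}(\tau)$, and your hedge ``or an equivalent limiting kernel'' does not rescue the statement as written. Pointwise in $\tau$, your Steps 1--3 plus Lindeberg do give the noncentral $\chi^2_r$ limit.

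The paper's proof does not close this either. Lemma \ref{lemma:GP for Wald with known} invokes a Gaussian coupling with covariance $\mathbb{E}_n(\bm Z_i\bm Z_i^T)(\tau\wedge\tilde\tau-\tau\tilde\tau)$ even though $\bm Z_i$ depends on $\tau$. The proof of Lemma \ref{lemma:Bessel for Wald with known J} then identifies $\bm G_n$ in law with $(\mathbb{E}_n[\bm Z_i\bm Z_i^T])^{1/2}\bm B_r$, which tacitly sets the cross term to $\bm I_r$.

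To obtain the stated limit you need an extra hypothesis, for instance $\bm A(\tau_1)\bm\Sigma\bm A(\tau_2)^T=\bm I_r$ for all $\tau_1,\tau_2\in\mc T$. Otherwise the conclusion must be restated in terms of the Gaussian process with the kernel above, assuming that kernel converges. With that fixed, your route (finite-dimensional convergence plus equicontinuity and continuous mapping) is a valid alternative to the paper's conditional strong approximation via Lemma 13 of \citet{belloni_conditional_2019}.

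A smaller point concerns Step 3(iii). Your plug-in bound $\xi_n\sqrt{(m+s)/n}/h$ need not vanish under the stated bandwidth conditions; take $m+s$ and $\xi_n$ bounded, $\epsilon=0.1$ and $h=n^{-0.8}$. Use instead the bounded density derivative in \ref{as:density conditions}(iii), which gives $(2h)^{-1}\int_{-h}^{h}\{f(a+\Delta+t)-f(a+t)\}\,dt=O(|\Delta|)$. The plug-in bias is then of order $\xi_n\sqrt{(m+s)/n}$ with no factor $1/h$. Absorb the randomness of $\hat\bvec^o_e$ into an empirical-process term that is uniform over a shrinking ball, as in Lemma 29 of \citet{belloni_conditional_2019}.
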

	
	\subsection{Asymptotic results for partial penalized tests}
	By combining our strong oracle results for the partial penalized LLA estimators with our asymptotic results for the oracle Wald and rank score tests, we establish corresponding guarantees for the partial penalized tests.
	First, we establish Bahadur representations for the global LLA estimators.
	\begin{theorem}\label{thm:part pen bahadur}
		Suppose that the conditions of Theorems \ref{thm:strong oracle} and \ref{thm:oracle bahadur} hold.
		Then
		\begin{equation}
			\sqrt{n}(\hat \bvec_{a, \MSset}^{(2)}(\tau) - \bvec_{\MSset}^*(\tau)) =  \frac{1}{\sqrt n} [\bm J(\tau)]^{-1} \sum_{i \in [n]} \tilde \xvec_i \psi_{\tau}(y_i - \xvec_i^T \bvec^*(\tau)) + \bm r_n(\tau) \label{eqn:FE.part pen bahadur}
		\end{equation}
		and
		\begin{align}
			\sqrt{n}(\hat \bvec_{0, \MSset}^{(2)}(\tau) - \bvec_{\MSset}^*(\tau)) 
			& =  [\bm J(\tau)]^{-1/2}(\bm I - \bm P(\tau)) [\bm J(\tau)]^{-1/2} \frac{1}{\sqrt n} \sum_{i \in [n]}  \tilde \xvec_i \psi_{\tau}(y_i - \xvec_i^T \bvec^*(\tau)) \notag \\
			& \text{\; \;} - [\bm J(\tau)]^{-1} \widetilde{ \bm C }^T \bm H^{-1}(\tau) \bm \eta(\tau) + \tilde{\bm r}_{n}(\tau)  \label{eqn:RE.part pen bahadur}
		\end{align}
		where 
		$
		\sup_{\tau \in \mc T} \norm{\tilde{ \bm r }_n(\tau)}_2 \asymp_p \sup_{\tau \in \mc T} \norm{\bm r_n(\tau)}_2 = O_p\left(\frac{(m + s)^{3/4}\xi_n^{1/2} \log^{1/2}n}{n^{1/4}}\right) = o(n^{- \epsilon'})
		$
		for some $\epsilon' > 0$.
	\end{theorem}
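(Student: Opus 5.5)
The plan is to transfer the oracle Bahadur representations of Theorem \ref{thm:oracle bahadur} to the two-step global LLA estimators through the strong oracle property of Theorem \ref{thm:strong oracle}. For $e \in \{0,a\}$, define the event $\Omega_{e,n} = \{\sup_{\tau \in \mc T} \shortnorm{\hat \bvec^{(2)}_e(\tau) - \bvecOR_e(\tau)}_2 = 0\}$. Theorem \ref{thm:strong oracle} gives $\Pr(\Omega_{e,n}) \to 1$. We then define the remainders by the identities \eqref{eqn:FE.part pen bahadur} and \eqref{eqn:RE.part pen bahadur}; that is, $\bm r_n(\tau)$ and $\tilde{\bm r}_n(\tau)$ are whatever is left after subtracting the leading terms. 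This makes both representations hold by construction for every $\tau \in \mc T$ and every sample. All the work therefore lies in controlling the remainders uniformly in $\tau$.

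On $\Omega_{a,n}$ we have $\hat \bvec^{(2)}_{a,\MSset}(\tau) = \bvecOR_{a,\MSset}(\tau)$ for all $\tau \in \mc T$ simultaneously. This uses the supremum in the event, so no union bound over an uncountable $\mc T$ is needed. Hence $\bm r_n(\tau) = \bm r_{1n}(\tau)$ for all $\tau$ on $\Omega_{a,n}$. In the same way, on $\Omega_{0,n}$ we get $\tilde{\bm r}_n(\tau) = \bm r_{2n}(\tau)$ for all $\tau$. The leading terms involve only $\bvec^*$, $\bm J$, $\bm P$, $\bm H$ and $\bm \eta$, so they coincide exactly and no additional error is introduced. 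For any $M>0$,
\[
\Pr\Big(\sup_{\tau}\norm{\bm r_n(\tau)}_2 > M a_n\Big) \le \Pr\Big(\sup_\tau \norm{\bm r_{1n}(\tau)}_2 > M a_n\Big) + \Pr(\Omega_{a,n}^c),
\]
where $a_n = n^{-1/4}(m+s)^{3/4}\xi_n^{1/2}\log^{1/2} n$. The first term can be made small uniformly in $n$ by Theorem \ref{thm:oracle bahadur}, and the second tends to zero. This yields $\sup_\tau\norm{\bm r_n(\tau)}_2 = O_p(a_n)$, and the same argument gives the bound for $\tilde{\bm r}_n$. The rate $a_n = o(n^{-\epsilon'})$ follows from the assumption $(m+s)^3\xi_n^2 = o(n^{1-\epsilon})$. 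Squaring gives $a_n^4 = (m+s)^3\xi_n^2\log^2 n/n = o(n^{-\epsilon}\log^2 n)$, so $a_n = o(n^{-\epsilon/4}\log^{1/2}n)$, and any $\epsilon'<\epsilon/4$ works.

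The statement $\sup\norm{\tilde{\bm r}_n}_2 \asymp_p \sup \norm{\bm r_n}_2$ I read as saying that both attain the common rate $O_p(a_n) = o(n^{-\epsilon'})$. This follows because Theorem \ref{thm:oracle bahadur} asserts that $\bm r_{2n}$ and $\bm r_{1n}$ attain the same rate. The reduced-model representation additionally requires \ref{as:local alts}, which is included among the hypotheses of Theorem \ref{thm:strong oracle}. Since $\bvec^*_{\MSset^c} = \bm 0$, one can also note that on $\Omega_{e,n}$ the LLA estimator vanishes off $\MSset$, which matches the oracle structure, although this is not needed for the $\MSset$-block statement.

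I expect the only real obstacle to be a bookkeeping one: checking that the hypotheses of Theorems \ref{thm:strong oracle} and \ref{thm:oracle bahadur} are compatible. In particular, the condition $(m+s)\xi_n^2\log^2 n = o(n)$ used for strong oracle recovery is implied by, or at least consistent with, $(m+s)^3\xi_n^2 = o(n^{1-\epsilon})$. I also need the strong oracle event to be uniform in $\tau$, as stated, rather than pointwise. Once these are confirmed, the proof is the two-line probabilistic transfer above.
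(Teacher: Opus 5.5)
Your proposal is correct and follows essentially the same route as the paper. On the uniform strong-oracle event of Theorem~\ref{thm:strong oracle}, the LLA remainder coincides with the oracle remainder $\bm r_{1n}$ (resp.\ $\bm r_{2n}$), and the $O_p$ bound of Theorem~\ref{thm:oracle bahadur} then transfers via the same union-bound argument as in the proof of Theorem~\ref{thm:part pen rates}; your check that $a_n=o(n^{-\epsilon'})$ for any $\epsilon'<\epsilon/4$ is a detail the paper leaves implicit.
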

	\noindent Next, we derive the limiting distributions of the partial penalized test statistics.
	\begin{theorem}\label{thm:part pen bessel}
		Suppose that the conditions of Theorems \ref{thm:strong oracle} and \ref{thm:oracle test stat convergence} hold.
		Let $W_n^{(2)}$ and $R_n^{(2)}$ denote the partial penalized Wald and rank score processes evaluated at the two-step LLA solutions.
		Then
		\begin{eqnarray}
			& & \sup_{\tau \in \mc T} W_n^{(2)}(\tau) \overset{d}{\to} \sup_{\tau \in \mc T} Q^2_{r,\bm \mu(\tau)}(\tau), \label{eqn:thm:pp Wald to Bessel}\\
			& & \sup_{\tau \in \mc T} R_n^{(2)}(\tau) \overset{d}{\to} \sup_{\tau \in \mc T} Q^2_{r,\bm \mu(\tau)}(\tau) \label{eqn:thm:pp RS to Bessel}
		\end{eqnarray}
		as $n \to \infty$, where $\bm \mu(\tau) = \bm \eta (\tau)^T [\bm K(\tau)]^{-1} \bm \eta(\tau)/(\tau(1 - \tau))$.
	\end{theorem}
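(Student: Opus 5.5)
The plan is a transfer argument: Theorem \ref{thm:strong oracle} says the two-step LLA fits equal the testing oracles with high probability, and Theorem \ref{thm:oracle test stat convergence} gives the limits of the oracle statistics. So it suffices to show
\[
\sup_{\tau \in \mc T} |W_n^{(2)}(\tau) - W_n^o(\tau)| = 0
\quad\text{and}\quad
\sup_{\tau \in \mc T} |R_n^{(2)}(\tau) - R_n^o(\tau)| = 0
\]
on an event of probability tending to one. Then Slutsky's lemma (or the elementary bound $|\Pr(A)-\Pr(B)| \le \Pr(A \triangle B)$) applied to the suprema gives \eqref{eqn:thm:pp Wald to Bessel} and \eqref{eqn:thm:pp RS to Bessel}.

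For the Wald process, let $\mc G_n$ be the event $\sup_{\tau} \|\hat\bvec_a^{(2)}(\tau) - \bvecOR_a(\tau)\|_2 = 0$. Intersect it with the event from Lemma \ref{lemma:global33modAssumptions2} that $\inf_{j\in\Sset}\sup_\tau |\hat\beta^o_{a,j}(\tau)| > d\lambda>0$. Since the oracle vanishes off $\MSset$, on this intersection $\hat\Sset_a = \Sset$, so $\tilde\xvec_a = \tilde\xvec$ and $\hatmat\Sigma_a = \hatmat\Sigma^o$. Also $y_i - \xvec_i^T\hat\bvec_a^{(2)}(\tau) = y_i - \xvec_i^T\bvecOR_a(\tau)$ for every $i$ and $\tau$, so $\hatmat J_a(\tau) = \hatmat J_a^o(\tau)$. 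Then $[\hatmat J_a^{-1}\hatmat\Sigma_a\hatmat J_a^{-1}]_{\Mset,\Mset}$ with $\Cmat$ on both sides equals $\widetilde\Cmat(\cdot)\widetilde\Cmat^T$, because $\widetilde\Cmat = [\Cmat\ \bm 0]$ selects the $\Mset$ block. Hence $\hatmat K_a = \hatmat K_a^o$, and together with $\hat\bvec^{(2)}_{a,\Mset} = \bvecOR_{a,\Mset}$ this gives $W_n^{(2)} \equiv W_n^o$ on $\mc T$. Invertibility of these matrices is already needed for $W_n^o$, and the proof of Theorem \ref{thm:oracle test stat convergence} supplies it with probability tending to one.

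The rank score process is the main obstacle, because $\hat{\bm a}(\tau)$ comes from the LLA dual \eqref{eqn:rank scores} and is not a function of $\hat\bvec_0^{(2)}$ alone at interpolated points. On the analogous reduced-model event, $\hat\Sset_0=\Sset$ and $\hat\bvec_0^{(2)} = \bvecOR_0$. So $\hat a_i = \hat a_i^o$ off $\mc I(\bvecOR_0(\tau))$, and $\hatmat J_0 = \hatmat J_0^o$, $\hatmat K_0 = \hatmat K_0^o$ as before. The weights are $\hat w_j^{(B-1)} = \hat w_j^{(1)}$ for $B=2$. On $\mc E_{03}$ and $\hat\bvec^{(1)}_0 = \bvecOR_0$ (Theorem \ref{thm:global33mod}), these weights vanish for $j\in\Sset$. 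The dual constraint then forces $\xmat_{\Sset}^T(\bm a - (1-\tau)\bm 1)=0$, which together with the $\Mset$ equality constraint is exactly the oracle constraint $\widetilde\xmat^T\bm a = (1-\tau)\widetilde\xmat^T\bm 1 + \widetilde\Cmat^T\bm\lambda$.

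Rather than proving the two dual solutions coincide, I will check that $R_n$ depends only on $\sum_i \tilde\xvec_i(\hat a_i - (1-\tau))$. That vector equals $\widetilde\Cmat^T\hat{\bm\lambda}$ for any feasible dual point. In the LLA problem, complementary slackness and the KKT conditions at the primal solution $\bvecOR_0$ pin down $\hat{\bm\lambda}$ through $\bm d(\tau)$ in \ref{as:oracleDefined}. On $\mc E_{02}$ the remaining constraints for $j\in\MSset^c$ are strictly slack, so the LLA dual restricted to the oracle constraint set has the same optimizers as the oracle dual. I would verify this via strong LP duality for both problems, which share the primal optimum value once the primal solutions agree. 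It follows that $\sum_i\tilde\xvec_i(\hat a_i-(1-\tau)) = \sum_i\tilde\xvec_i(\hat a_i^o-(1-\tau))$ up to nonuniqueness in $\mc I$. Any nonuniqueness is absorbed because the formula uses only $\widetilde\Cmat^T\bm\lambda$, or it is $O(\xi_n \cdot (m+s)/\sqrt n)$ since $|\mc I|\le m+s$ almost surely under continuous densities. That bound is negligible uniformly in $\tau$ under the rate conditions. Either way $\sup_\tau|R_n^{(2)}-R_n^o| = o_p(1)$, and the result follows from \eqref{eqn:thm:rs to Bessel}.
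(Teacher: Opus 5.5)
Your proposal is correct and follows the paper's proof. On the strong-oracle event the Wald processes coincide identically; your use of $\mc E_{a3}$ to get $\hat{\Sset}_a=\Sset$ fills in a detail the paper leaves implicit. For the rank scores, the paper likewise shows that the LLA dual \eqref{eqn:rank scores} and the oracle dual \eqref{eqn:oracle rank scores} agree with probability tending to one, then transfers Theorem \ref{thm:oracle test stat convergence} via Slutsky.

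The only difference is how that dual equivalence is justified. The paper checks the $\Mset$, $\Sset$, $\MSset^c$ constraints case by case. In the $\MSset^c$ case it asserts the constraints hold for all $\bm a\in[0,1]^n$, which is more than $\mc E_{02}$ delivers: that event only controls $\bm a$ complementary to $\bvecOR_0$. Your argument uses equal primal optimal values plus containment of the LLA dual feasible set in the oracle one, so every LLA dual optimizer is an oracle dual optimizer. This is a more careful version of the same step.
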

	
	\begin{remark}\label{rem:Power_comparison}
		As a corollary to this theorem, one can compare the power of our tests with that of the global-over-$\mc T$ test of \citet{bradic2017uniform} and pointwise-in-$\tau$ test of \citet{yan2023confidence}.
		Under the homoskedastic common-error model, Lemma S.1 of \citet{shi2019linear} gives our single-coefficient test a pointwise-in-$\tau$ asymptotic local power no smaller than \citet{bradic2017uniform} and \citet{yan2023confidence}.
		For $\beta_j^*(\tau)=n^{-1/2}h$, with $h\ne0$, Anderson's inequality gives an analogous ordering against \citet{bradic2017uniform} for global-over-$\mc T$ testing. See Section S2.5 of the supplement for details.
	\end{remark}

	\section{Model Fitting and Tuning}\label{sec:implementation details}
	
	\subsection{ADMM for LLA updates}\label{sec:admm}
	In this subsection, we describe how we compute the reduced model LLA updates \eqref{algo:LLA update reduced} in Algorithm \ref{algo:LLA for reduced estimator} (the updates in Algorithm \ref{algo:LLA for full estimator} are a special case with $\Cmat = \bm 0$ and $\bm \Psi \equiv \bm 0$ and are therefore omitted). 
	
	Consider the $\ell_1$-penalized quantile regression problem with linear constraints
	$$
	\min_{\bvec\in \reals^{p+1}} \frac{1}{n}\sum_{i=1}^{n}\checkloss\big(y_i - \xvec_i^\top \bvec\big)
	+ \lambda \|\bm w \odot \bvec \|_{1} \text{ s.t. } \overline \Cmat \bvec = \bm \Psi
	$$
	where $\bm w = (w_0, \ldots, w_p)^T$ are weights with $\bm w_{\Mset} = 0$, ``$\odot$'' denotes the elementwise product, and $\overline \Cmat = [\Cmat \; \bm 0]$.
	Following \citet{gu_admm_2018} we introduce the dummy variables $z_i = y_i - \xvec_i^T \bvec$, making the problem
	\begin{equation}	
		\min_{\bvec\in \reals^{p+1}} \frac{1}{n}\sum_{i=1}^{n}\checkloss\big(z_i \big)
	+ \lambda \|\bm w \odot \bvec \|_{1} \text{ s.t. } \zvec = \yvec - \xmat \bvec, \overline\Cmat \bvec = \bm \Psi .\label{eqn:l1 with dummy z}
	\end{equation}
	Letting $\sigma > 0$ be a fixed penalty parameter, the augmented Lagrangian function for this problem is
	\begin{align*}
		\mc L_{\sigma}(\bvec, \zvec, \bm \theta) 
		& = \frac{1}{n}\sum_{i=1}^{n}\checkloss\big(z_i \big)
		+ \lambda \|\bm w \odot \bvec \|_{1} - \bm \theta_1^T\left( \zvec - \yvec + \xmat \bvec \right)
		+ \frac{\sigma}{2} \norm{ \zvec - \yvec + \xmat \bvec }_2^2 \\
		& \text{\; \; } - \bm \theta_2^T\left( \overline \Cmat \bvec  - \bm \Psi\right)
		+ \frac{\sigma}{2} \norm{\overline \Cmat \bvec - \bm \Psi }_2^2
	\end{align*}
	where $\bm \theta_1 \in \reals^n, \bm \theta_2 \in \reals^r$ are the Lagrange multipliers.
	We  solve \eqref{eqn:l1 with dummy z} using the alternating direction method of multipliers (ADMM) \citep{Boyd2010}.
%
	We perform the $\bvec$ updates using coordinate descent and the $\bm z$ updates elementwise using the proximal mapping of $\checkloss$ as in Algorithm 2 of \citet{gu_admm_2018}.
	We declare convergence when the primal and dual residuals, given by $\bm r_{\rm p}^{k+1} = ( (\xmat \bvec^{k+1} + \bm z^{k+1} - \bm y)^T, (\overline \Cmat \bvec^{k+1} - \bm \Psi)^T)^T$ and $\bm r_{\rm d}^{k+1} = \sigma\xmat^T(\bm z^{k+1} - \bm z^{k}) $ respectively, are sufficiently small.
	By default, we set the tolerances for convergence to $\eps_{\rm abs} = \eps_{\rm rel} = 10^{-3}$.
	Denoting the shrinkage operator by
	$\shrink(x, \lambda) = \text{sgn} (x) \max\{ |x| - \lambda, 0 \}$
	and the proximal mapping of $\checkloss$ by
	$\prox_{\checkloss}(x, \alpha) = x - \max\{ \frac{\tau - 1}{\alpha}, \min \left( x, \frac{\tau}{\alpha} \right) \} $, we present our complete ADMM algorithm in Algorithm \ref{algo:ADMM reduced}.
	
	\begin{algorithm}
		\caption{ADMM for reduced model LLA updates}
		\label{algo:ADMM reduced}
		\begin{algorithmic}
			\State \textit{Input:} $\lambda$, $\bm w$, $\bvec^0$, $\zvec^0$, $\bm \theta_1^0$, $\bm \theta_2^0$, $\epsilon_{\rm rel}$, $\epsilon_{\rm abs}$
			\State $k = 0$
			\Repeat:
			\State $m = 0$
			\Repeat
			\For{$j = 0,1,2,\ldots,p$}
			\State
			{\tiny
				$v_j = \xmat_j^T \left[\bm \theta_1^k  + \sigma \left(\yvec - \sum_{t \neq j }\xmat_{t}\beta_t^{k, m + 1(t < j)} - \zvec^k\right) \right] 
				+ \Cmat_j^T \left[ \bm \theta_2^k + \sigma\left( \bm \Psi - \sum_{t \neq j }\overline \Cmat_{t}\beta_t^{k, m + 1(t < j)} \right) \right] $}
			\State $s_j  = \sigma \left(\norm{\xmat_j}_2^2 + \norm{\overline \Cmat_j}_2^2 \right)$
			\State $ \beta_j^{k, m+1} = \shrink \left( v_j , \lambda w_j\right)/s_j$
			\EndFor
			\State $m = m+1$
			\Until $\max_j (\beta_j^{k, m+1} - \beta_j^{k, m})^2  < \epsilon $
			\State $\bvec^{k+1} = \bvec^{k, m+1}$
			\State
			$
			z_i^{k+1} = \left( \prox_{\checkloss}( y_i - \xvec_i^T \bvec^{k+1} + \theta_{1i}^k/\sigma, n \sigma )  \right)_{i \in [n]}
			$
			\State
			$
			\bm \theta_1^{k+1} =  \bm \theta_1^k - \sigma(\xmat \bvec^{k+1} + \zvec^{k+1} - \yvec ).
			$
			\State
			$
			\bm \theta_2^{k+1}  =  \bm \theta_2^k - \sigma(\overline \Cmat \bvec^{k+1} - \bm \Psi ).
			$
			\State $k = k+1$
			\Until $ \shortnorm{\bm r_p^{k+1}}_2 \leq \eps_{\rm abs} \sqrt{n+r}  + \eps_{\rm rel} \max\{  \shortnorm{ ((\xmat \bvec^{k+1})^T, (\overline \Cmat \bvec^{k+1})^T)^T }_2, \shortnorm{ \bm z^{k+1} }_2, \shortnorm{(\yvec^T, \bm \Psi^T)^T}_2 \}$, 
			$ \shortnorm{\bm r_d^{k+1}}_2 \leq \eps_{\rm abs}\sqrt{p+1}  + \eps_{\rm rel} \shortnorm{ ( \bm \theta_1^{k+1} \xmat ,  \bm \theta_1^{k+2} \overline \Cmat )^T }_2$ 
		\end{algorithmic}
	\end{algorithm}
	
	We accelerate computation of the solution path over $\lambda_1 > \cdots > \lambda_L$ using warm starts---initializing Algorithm \ref{algo:ADMM reduced} for $\lambda_{\ell}$ with the solution for $\lambda_{\ell-1}$---and select $\lambda_1$ such that all the coefficients $\bvec_{\Mset^c}$ are shrunk to $0$ based on the KKT conditions, setting 
	$$
	\lambda_{1}
	= \max_{j \in \Mset^c} w_j^{-1}
	\left\{
	\left|
	\frac{2\tau - 1}{2n} \sum_{i=1}^n x_{ij}
	\;+\;
	\frac{1}{2n} \sum_{i \notin \mc I(\tilde \bvec_0)} \operatorname{sgn}(y_i - \xvec_i^T\tilde \bvec_0)\, x_{ij}
	\right|
	\;+\;
	\frac{1}{2n} \sum_{i \in \mc I(\tilde \bvec_0)} |x_{ij}|
	\right\}
	$$
	where $\tilde \bvec_0 = \argmin_{\bvec \in \reals^{p+1}} \frac{1}{n} \sum_{i = 1}^{n} \checkloss(y_i - \xvec_i^T \bvec)$ s.t. $\overline \Cmat \bvec = \bm \Psi, \bvec_{\Mset^c} = \bm 0$.
	We further speed up the $\bvec$ updates in Algorithm \ref{algo:ADMM reduced} with active set cycling, though we have omitted the details here in the interest of space.
	
	\subsection{Tuning $\lambda$}
	For this subsection, let $\mc T=[\tau_L,\tau_U]\subsetneq(0,1)$ with $\tau_L<\tau_U$.
	Following Section 3.3 of \citet{zheng2015globally}, let
	$\mathcal P_{e,n}$ be a nonempty finite set of candidate tuning parameters,
	$e\in\{0,a\}$, and let $K_{\mathrm G,n}\ge s\vee1$ be the maximum
	number of selected nuisance coordinates.
	Let
	\begin{align}
		\widehat{\Sset}_e(\lambda)
		&=\left\{j\in\Mset^c\setminus\{0\}:
		\sup_{\tau\in\mc T}|\widehat\beta^{(2)}_{e,j}(\tau;\lambda)|>0\right\}.
		\label{eqn:gic_selected_support}
	\end{align}
	Using the loss in \eqref{eqn:qndefine}, we define the integrated generalized information criterion (GIC) by
	\begin{align}
		\operatorname{GIC}_{e;\mc T}(\lambda)
		&=\int_{\mc T}\log\mc Q_{\tau,n}
		\{\widehat{\bvec}_e^{(2)}(\tau;\lambda)\}\,d\tau
		+|\widehat{\Sset}_e(\lambda)|\phi_n,
		\qquad \phi_n>0,
		\label{eqn:gic_lla}
	\end{align}
	and select
	\begin{equation}\label{eqn:gic_lambda_selector}
		\widehat\lambda_{e,n}\in
		\argmin_{\lambda\in\mathcal P_{e,n}:\,
			|\widehat{\Sset}_e(\lambda)|\le K_{\mathrm G,n}}
		\operatorname{GIC}_{e;\mc T}(\lambda).
	\end{equation}
	Our numerical implementation uses
	$
	\phi_n=\frac{\log(\log n)\log p}{n}.
	$
	
	Our next theorem establishes the selection consistency of the integrated GIC criterion \eqref{eqn:gic_lla}. For clarity of presentation, we state the additional conditions for this theorem in Section S2.4 of the supplement.
	\begin{theorem}\label{thm:GIC.consistency}
		Suppose \ref{as:iid array}, \ref{as:density conditions}, and
		the additional conditions in Section S2.4 hold. Then for $e=a$,
		\begin{equation}\label{eqn:gic_selection_consistency}
			P\{\widehat{\Sset}_e(\widehat\lambda_{e,n})\ne\Sset\}\longrightarrow0.
		\end{equation}
		For $e=0$, \eqref{eqn:gic_selection_consistency} holds under
		\eqref{hyp:linear hypothesis} or
		\ref{as:local alts} if
		\begin{equation}\label{eqn:gic_local_alt_rate}
			\sup_{\tau\in\mc T}\|\bm\eta(\tau)\|_2^2=o(n\phi_n).
		\end{equation}
	\end{theorem}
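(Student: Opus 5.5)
Throughout we work on the event $\Omega_n$ on which the two-step global LLA estimator coincides with the testing oracle uniformly over $\mc T$. By Theorem \ref{thm:strong oracle}, applied along the tuning path, $\Pr(\Omega_n)\to 1$. The argument follows the standard GIC consistency scheme of \citet{zheng2015globally}, adapted to our integrated criterion \eqref{eqn:gic_lla}. The key structural fact is that, for every $\lambda\in\mathcal P_{e,n}$, the estimate $\widehat{\bvec}_e^{(2)}(\tau;\lambda)$ minimizes $\mc Q_{\tau,n}$, subject to the null constraint when $e=0$, over the coordinates $\Mset\cup\widehat\Sset_e(\lambda)$.

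We partition the candidate supports $\widehat\Sset_e(\lambda)$ with $|\widehat\Sset_e(\lambda)|\le K_{\mathrm G,n}$ into three classes: the true support ($\widehat\Sset=\Sset$), overfitted supports ($\widehat\Sset\supsetneq\Sset$), and underfitted supports ($\widehat\Sset\not\supseteq\Sset$). Two facts are needed.
\begin{enumerate}[label=(\roman*)]
\item Some $\lambda^\ast\in\mathcal P_{e,n}$ selects $\Sset$ exactly. This follows from the strong oracle property, via Lemmas \ref{lemma:global33modAssumptions}--\ref{lemma:global33modAssumptions2}, for any $\lambda^\ast$ satisfying \ref{as:lambda} and \ref{as:strength of signal}; we assume the additional conditions guarantee that $\mathcal P_{e,n}$ contains such a value.
\item With probability tending to one, $\operatorname{GIC}_{e;\mc T}(\lambda)>\operatorname{GIC}_{e;\mc T}(\lambda^\ast)$ for every $\lambda$ in the other two classes.
\end{enumerate}
Since $\log$ is Lipschitz on the relevant range (because $\mc Q_{\tau,n}$ is bounded away from zero and infinity with high probability), it suffices to control
\[
\int_{\mc T}\bigl\{\mc Q_{\tau,n}(\hat\bvec_{\widehat\Sset}(\tau))-\mc Q_{\tau,n}(\hat\bvec_{\Sset}(\tau))\bigr\}\,d\tau .
\]

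\textbf{Overfitting.} Here the loss difference is nonpositive, and we show it is $O_p(K_{\mathrm G,n}\log p/n)$ uniformly over all supersets $\mc B\supseteq\Sset$ of size at most $K_{\mathrm G,n}$ and uniformly in $\tau$. We take a union bound over at most $p^{K_{\mathrm G,n}}$ supports. For each support we combine a quadratic-expansion lower bound for the check loss (using \ref{as:density conditions} and the minimal eigenvalue condition) with a maximal inequality for the score $n^{-1}\sum_i\tilde\xvec_{i,\mc B}\psi_\tau(\eps_i)$, uniform over $\tau$ by a chaining argument over a grid in $\mc T$ together with \ref{as:bvec}. This yields a reduction of at most $C(|\mc B|-s)\log p/n$. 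The penalty increases by $(|\mc B|-s)\phi_n$, and since $\log p/n=o(\phi_n)$, the GIC increases.

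\textbf{Underfitting.} Here some $j\in\Sset$ is omitted. By \ref{as:strength of signal}, $\sup_\tau|\beta^*_j(\tau)|$ is bounded below, and by \ref{as:bvec} this holds on a $\tau$-interval of positive length. The population excess loss then grows at least like $c\min_{j\in\Sset}\int_{\mc T}\beta_j^{*2}(\tau)\,d\tau$. The empirical version concentrates uniformly over underfitted supports by the same union and chaining argument, and this gain dominates the penalty saving $s\phi_n$. This is exactly the beta-min versus $\phi_n$ rate condition, which I expect the additional conditions to supply.

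For $e=0$ under \ref{as:local alts}, the constrained fits are centered at a pseudo-truth that is off the constraint by $\bm\eta(\tau)/\sqrt n$. This perturbs the loss by at most $O(\sup_\tau\|\bm\eta(\tau)\|_2^2/n)$, as seen from the reduced-oracle Bahadur term in Theorem \ref{thm:oracle bahadur}. The perturbation is negligible relative to $\phi_n$ precisely under \eqref{eqn:gic_local_alt_rate}, so the same comparison goes through.

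The main obstacle is the overfitting step. It requires a uniform-in-$(\tau,\mc B)$ bound on the loss reduction of order $\log p/n$, not $\sqrt{\log p/n}$. This demands a careful localized analysis: the estimator lies within a $\sqrt{K\log p/n}$ ball of the truth, and within that ball the linearized score cross term and the quadratic remainder both have to be shown to be of order $K\log p/n$. To obtain this, I would first establish a uniform rate via a Theorem \ref{thm:oracle rate}-type argument applied to each superset, and then bound the empirical-process remainder with a Bernstein-type inequality.
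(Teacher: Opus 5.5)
Your overall architecture matches the paper's. You compare every tuning value to a support-based criterion, use a tuning value that recovers the oracle (the paper simply postulates this as Condition \ref{as:gic_oracle_path}, as you anticipated), and split competitors into underfitted and overfitted supports. However, two steps fail as written.

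\textbf{The structural fact is false, but easily repaired.} For a generic $\lambda\in\mathcal P_{e,n}$ the two-step LLA estimate is a weighted-$\ell_1$ fit whose weights on selected coordinates need not vanish. So it is not the unpenalized (constrained) minimizer over its selected coordinates. Nor can the strong oracle property be invoked along the whole tuning path; if it could, every $\lambda$ would select $\Sset$. What you need is only a one-sided bound. Let $\widehat L_{e,A}(\tau)$ denote the minimum of $\mc Q_{\tau,n}$ over vectors supported on $\{0\}\cup\Mset\cup A$ (satisfying the constraint when $e=0$). Feasibility gives $\mc Q_{\tau,n}\{\widehat\bvec^{(2)}_e(\tau;\lambda)\}\ge\widehat L_{e,\widehat\Sset_e(\lambda)}(\tau)$. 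Hence $\operatorname{GIC}_{e;\mc T}(\lambda)\ge\operatorname{GIC}^{\circ}_e\{\widehat\Sset_e(\lambda)\}$, with equality at the oracle-recovering value. The theorem then reduces to showing $\operatorname{GIC}^{\circ}_e(A)>\operatorname{GIC}^{\circ}_e(\Sset)$ uniformly over $A\neq\Sset$ with $|A|\le K_{\mathrm G,n}$, which is exactly the paper's bridge.

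\textbf{The genuine gap is in the overfitting step.} Your method does not deliver the bound $C(|\mc B|-s)\log p/n$ you state. For a single spurious variable the penalty increment is only $\phi_n\asymp\log(\log n)\log p/n$. So the loss reduction must scale with the number $|A|-s$ of extra variables, not with the total dimension $|D_A|=m+1+|A|$. You propose to localize and expand $\widehat L_{e,A}$ and $\widehat L_{e,\Sset}$ separately, with remainders of order $K\log p/n$. Even the sharper $\sqrt{|D_A|}\,L_{\mathrm G,n}/n$, which is what a local quadratic error of size $\sqrt{L_{\mathrm G,n}}\,J_\tau(\bm h,\bm h)^{1/2}$ gives, leaves an error in the difference that is not $o(\phi_n)$ once $m+s$ diverges.

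The missing idea is to compare the two fits in a common parametrization in which their difference lives in $|A|-s$ dimensions. The paper proceeds as follows.
\begin{enumerate}
\item Split $\mathcal H_A=\mathcal H_{\Sset}\oplus_{J_\tau}\mathcal K_A(\tau)$, with $\dim\mathcal K_A(\tau)=|A|-s$.
\item Write the local minimizer as $\widehat{\bm h}_A=\bm v_e+\bm u+\bm w$, where $\bm v_e+\bm u$ is feasible for $\Sset$ and $\bm w\in\mathcal K_A(\tau)$.
\item Apply Knight's identity along the segment from $\bm v_e+\bm u$ to $\widehat{\bm h}_A$:
\[
n\{\widehat L_{e,\Sset}(\tau)-\widehat L_{e,A}(\tau)\}\le G_n(\bm w,\tau)-\int_0^1 Z_A\{\bm v_e(\tau)+\bm u+t\bm w,\tau;\bm w\}\,dt ,
\]
where $G_n$ is the score process and $Z_A$ the indicator-increment process.
\end{enumerate}
Set $z=J_\tau(\bm w,\bm w)^{1/2}$. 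The score is then needed only on $\mathcal K_A(\tau)$, giving $C\sqrt{(|A|-s)L_{\mathrm G,n}}\,z$. By $J_\tau$-orthogonality the linearized mean is $\int_0^1J_\tau(\bm v_e+\bm u+t\bm w,\bm w)\,dt=J_\tau(\bm v_e,\bm w)+z^2/2$. The remainder $\int_0^1(Z_A-J_\tau)\,dt$ is at most $2\sqrt{L_{\mathrm G,n}}\,z$, linear in the extra component alone. Maximizing over $z$ yields
\[
n\{\widehat L_{e,\Sset}-\widehat L_{e,A}\}\le C\Bigl[(|A|-s)L_{\mathrm G,n}+\sup_t\|\bm v_e(t)\|_2^2\Bigr].
\]
This is also where \eqref{eqn:gic_local_alt_rate} really comes from. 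The shift $(\bm v_0)_{\Mset}=-\Cmat^T(\Cmat\Cmat^T)^{-1}\bm\eta$ enters through the cross term $J_\tau(\bm v_0,\bm w)$, uniformly over all supersets. It does not come from the reduced-oracle Bahadur expansion at the single support $\Sset$.

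\textbf{A smaller point on underfitting.} Condition \ref{as:strength of signal} only gives $\sup_\tau|\beta_j^*(\tau)|>(d+1)\lambda\to0$, and the Lipschitz constant in \ref{as:bvec} grows like $\sqrt{m+s}$. So no fixed-length interval follows. The paper instead does three things:
\begin{enumerate}
\item It states its rates directly through $\mu_{\mathrm G,n}=\min_{j\in\Sset}\int_{\mc T}|\beta_j^*|$, requiring $a_{\mathrm G,n}=o(\mu_{\mathrm G,n})$ and $s\phi_n=o(\mu_{\mathrm G,n}^2)$.
\item It reduces any underfitted $A$ to $A\cup(\Sset\setminus\{j\})$ by monotonicity of $\widehat L_{e,\cdot}$.
\item It converts loss gaps to log gaps with a one-sided inequality. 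Lipschitz continuity of $\log$ alone gives no lower bound, and the pointwise gap can be slightly negative where $\beta_j^*(\tau)\approx0$.
\end{enumerate}
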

	
	\subsection{Estimating the process via a c\`adl\`ag grid approximation}\label{sec:grid_interpolation}
	
	Computing the complete two-step LLA coefficient path over
	$\mc T$ can be costly when $p$ is large.  Motivated by the
	deterministic-grid construction in Section 2.3 and Corollary B.1 of
	\citet{zheng2015globally}, we compute the reduced two-step LLA
	estimator only at finitely many quantile indices and extend the
	resulting knot estimates over $\mc T$.  
	Since the reduced estimator
	must satisfy a possibly nonconstant linear restriction, the ordinary
	step extension is corrected so that the restriction holds at every
	target quantile.
	
	For this subsection, let $\mc T=[\tau_L,\tau_U]\subsetneq(0,1)$ and let
	\[
	\mathcal G_n
	=
	\{\tau_{0,n},\tau_{1,n},\ldots,\tau_{N_{\mathcal G,n},n}\},
	\qquad
	\tau_{0,n}=\tau_L
	<
	\cdots
	<
	\tau_{N_{\mathcal G,n},n}=\tau_U,
	\]
	be a deterministic grid with mesh
	\[
	\Delta_n
	=
	\max_{1\leq k\leq N_{\mathcal G,n}}
	(\tau_{k,n}-\tau_{k-1,n}).
	\]
	We execute Algorithm \ref{algo:LLA for reduced estimator} only for
	$t\in\mathcal G_n$, replacing every supremum over $\mc T$ in its
	adaptive weights by the corresponding maximum over $\mathcal G_n$.
	Specifically,
	\[
	\widehat{\bvec}_{0,\mathcal G_n}^{(0)}(t)
	=
	\widehat{\bvec}_{0}^{(0)}(t),
	\qquad
	t\in\mathcal G_n,
	\]
	and the weights used in the $b^{\rm th}$ grid update are
	\[
	p_\lambda'\left\{
	\max_{u\in\mathcal G_n}
	\left|
	\widehat\beta_{0,\mathcal G_n,j}^{(b-1)}(u)
	\right|
	\right\},
	\qquad
	j\in \Mset^c.
	\]
	The corresponding constrained weighted quantile-regression problem in \eqref{algo:LLA update reduced} is then solved at each $t\in\mathcal G_n$, producing the knot estimate $\widehat{\bvec}_{0,\mathcal G_n}^{(b)}(t)$.
	
	Define $\pi_n(\tau)=\tau_{k-1,n}$ for
	$\tau\in[\tau_{k-1,n},\tau_{k,n})$, and set
	$\pi_n(\tau_U)=\tau_U$.  We extend the two-step knot estimates by
	\begin{align}
		\widehat{\bvec}_{0,\mathcal G_n,\Mset}^{(2)}(\tau)
		&=
		\widehat{\bvec}_{0,\mathcal G_n,\Mset}^{(2)}
		\{\pi_n(\tau)\}
		+
		\Cmat^T(\Cmat\Cmat^T)^{-1}
		\left[
		\bm\Psi(\tau)
		-
		\bm\Psi\{\pi_n(\tau)\}
		\right],
		\label{eqn:grid_interpolated_estimator}\\
		\widehat{\bvec}_{0,\mathcal G_n,\Mset^c}^{(2)}(\tau)
		&=
		\widehat{\bvec}_{0,\mathcal G_n,\Mset^c}^{(2)}
		\{\pi_n(\tau)\}.
		\nonumber
	\end{align}
	At every $t\in\mathcal G_n$, $\pi_n(t)=t$, so
	\eqref{eqn:grid_interpolated_estimator} agrees with the knot estimate.
	Moreover,
	\[
	\Cmat
	\widehat{\bvec}_{0,\mathcal G_n,\Mset}^{(2)}(\tau)
	=
	\bm\Psi(\tau),
	\qquad
	\tau\in\mc T.
	\]
	Under the conditions below,
	\eqref{eqn:grid_interpolated_estimator} is c\`adl\`ag.  When
	$\bm\Psi(\cdot)$ is constant, its correction term vanishes and the
	estimator reduces to the ordinary right-continuous grid step process.
	
	\begin{theorem}\label{thm:grid_interpolation}
		Suppose
		\ref{as:iid array}--\ref{as:new_re} and
		the additional conditions in Section S1.2 hold, together with
		$
		(m+s)\xi_n^2\log^2n=o(n).
		$
		Suppose also that $m-r+s\geq1$ and
		$
		\sqrt n\,\Delta_n=O(1).
		$
		Then the interpolated two-step LLA estimator recovers the reduced
		oracle at every knot:
		\[
		\Pr\left\{
		\max_{t\in\mathcal G_n}
		\left\|
		\widehat{\bvec}_{0,\mathcal G_n}^{(2)}(t)
		-
		\widehat{\bvec}_0^o(t)
		\right\|_2
		=0
		\right\}
		\longrightarrow1.
		\]
		Furthermore, for a finite constant $C_\star$ independent of $n$,
		\begin{equation}\label{eqn:grid_interpolation_bound}
			\sup_{\tau\in\mc T}
			\left\|
			\widehat{\bvec}_{0,\mathcal G_n}^{(2)}(\tau)
			-
			\bvec^*(\tau)
			\right\|_2
			\leq
			\max_{t\in\mathcal G_n}
			\left\|
			\widehat{\bvec}_{0,\mathcal G_n}^{(2)}(t)
			-
			\bvec^*(t)
			\right\|_2
			+
			C_\star\sqrt{m-r+s}\,\Delta_n.
		\end{equation}
		Consequently,
		\[
		\sup_{\tau\in\mc T}
		\left\|
		\widehat{\bvec}_{0,\mathcal G_n}^{(2)}(\tau)
		-
		\bvec^*(\tau)
		\right\|_2
		=
		O_p\left(
		\sqrt{\frac{m-r+s}{n}}
		\right).
		\]
	\end{theorem}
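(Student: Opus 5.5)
The plan has three parts, in order. First, I will show that the grid version of the global LLA recovers the reduced oracle at every knot with probability tending to one. Second, I will prove the deterministic interpolation bound \eqref{eqn:grid_interpolation_bound}. Third, I will combine the two with Theorem \ref{thm:oracle rate} to get the rate.

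For the knot recovery, I will rerun the arguments of Theorems \ref{thm:global33mod} and \ref{thm:global33modStep2} with $\mc T$ replaced by $\mathcal G_n$. Those proofs are deterministic and only use the suprema that appear in the weights, so every $\sup_{\tau\in\mc T}$ simply becomes $\max_{t\in\mathcal G_n}$. It therefore suffices to show that the grid analogues of $\mc E_{01}$, $\mc E_{02}$ and $\mc E_{03}$ hold with probability tending to one.
\begin{itemize}
\item $\mc E_{01}$ and $\mc E_{02}$ only weaken, because a maximum over $\mathcal G_n\subset\mc T$ is at most the supremum over $\mc T$. So Lemma \ref{lemma:global33modAssumptions} and Lemma \ref{lemma:global33modAssumptions2} apply directly.
\item $\mc E_{03}$ is the one that needs care, since a maximum over the grid could miss the supremum over $\mc T$. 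I will handle it as follows. Assumption \ref{as:strength of signal} gives a $\tau_j\in\mc T$ with $|\beta_j^*(\tau_j)|>(d+1)\lambda$. Assumption \ref{as:bvec} gives $|\beta^*_j(\pi_n(\tau_j))-\beta^*_j(\tau_j)|\le C\sqrt{m+s}\,\Delta_n$.
\item Because $\sqrt n\Delta_n=O(1)$, this term is $O(\sqrt{(m+s)/n})=o(\lambda)$ by \ref{as:lambda}.
\item Theorem \ref{thm:oracle rate} gives the uniform oracle error $O_p(\sqrt{(m-r+s)/n})=o_p(\lambda)$.
\item Together these yield $\max_{t\in\mathcal G_n}|\hat\beta^o_{0,j}(t)|>d\lambda$ for all $j\in\Sset$ with probability tending to one.
\end{itemize}
I expect this step to be the main technical point. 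It is mild, though, because the Lipschitz assumption \ref{as:bvec} together with the mesh condition absorbs the discretization error.

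For \eqref{eqn:grid_interpolation_bound}, I fix $\tau$ and write $t=\pi_n(\tau)$, so that $|\tau-t|\le\Delta_n$. The nuisance block satisfies
\[
\widehat{\bvec}_{0,\mathcal G_n,\Mset^c}^{(2)}(\tau)-\bvec^*_{\Mset^c}(\tau)=\bigl[\widehat{\bvec}_{0,\mathcal G_n,\Mset^c}^{(2)}(t)-\bvec^*_{\Mset^c}(t)\bigr]+\bigl[\bvec^*_{\Mset^c}(t)-\bvec^*_{\Mset^c}(\tau)\bigr].
\]
For the $\Mset$ block, the correction contributes $\Cmat^T(\Cmat\Cmat^T)^{-1}[\bm\Psi(\tau)-\bm\Psi(t)]$. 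By \ref{as:local alts}, $\bm\Psi(\tau)-\bm\Psi(t)=\Cmat(\bvec^*_\Mset(\tau)-\bvec^*_\Mset(t))-(\bm\eta(\tau)-\bm\eta(t))/\sqrt n$, which splits the correction into two pieces.
\begin{itemize}
\item The projection piece is $\Cmat^T(\Cmat\Cmat^T)^{-1}\Cmat\,\Delta\bvec^*_\Mset$, where $\Delta\bvec^*_\Mset=\bvec^*_\Mset(\tau)-\bvec^*_\Mset(t)$. Together with the difference $\bvec^*_\Mset(t)-\bvec^*_\Mset(\tau)$ from the decomposition, it leaves $-(\bm I-\Cmat^T(\Cmat\Cmat^T)^{-1}\Cmat)\,\Delta\bvec^*_\Mset$. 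This is bounded in norm by $\|\Delta\bvec^*_\Mset\|_2$, since the matrix is a projection.
\item The $\bm\eta$ piece has norm at most $\lambda_{\max}\{(\Cmat\Cmat^T)^{-1}\}^{1/2}\,C'\sqrt{m-r+s}\,\Delta_n$, using the Lipschitz bound on $\bm\eta$ in \ref{as:local alts}.
\end{itemize}
The differences of $\bvec^*$ are controlled by \ref{as:bvec}. This gives a bound of the form $C\sqrt{m+s}\,\Delta_n$; to obtain the stated $\sqrt{m-r+s}$ I will note $m+s\le r(m-r+s)+(m-r+s)$, and since $m-r+s\ge1$ and $r$ is bounded, the constant is absorbed into $C_\star$. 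Where $r$ diverges, I would instead apply the Lipschitz bound only to the $(m-r+s)$-dimensional free directions.

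Finally, the bound at the knots equals the oracle error there with probability tending to one, and this is $O_p(\sqrt{(m-r+s)/n})$ by Theorem \ref{thm:oracle rate}. The interpolation term is $O(\sqrt{m-r+s}/\sqrt n)$ because $\sqrt n\Delta_n=O(1)$. Adding the two gives the claimed uniform rate.
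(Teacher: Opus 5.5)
Your three-part structure and the algebra of the interpolation step match the paper's. That algebra is the projection residual $(\bm I-\Cmat^T(\Cmat\Cmat^T)^{-1}\Cmat)\{\bvec^*_{\Mset}(t)-\bvec^*_{\Mset}(\tau)\}$ plus the $\bm\eta$ piece. However, the bound \eqref{eqn:grid_interpolation_bound} is not obtained with a constant $C_\star$ independent of $n$. You control $\bvec^*(t)-\bvec^*(\tau)$ through \ref{as:bvec}, which only gives $C\sqrt{m+s}\,\Delta_n$. You then convert $\sqrt{m+s}$ to $\sqrt{m-r+s}$ via $m+s\le(r+1)(m-r+s)$, which requires $r$ bounded. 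The theorem assumes nothing of the kind. For example, $r=m-1\to\infty$ with $s$ fixed keeps $m-r+s=s+1$ bounded while $\sqrt{m+s}\to\infty$. Your fallback of applying the Lipschitz bound only to the free directions does not help. Assumption \ref{as:bvec} controls only the full norm $\|\bvec^*_{\MSset}(\tau)-\bvec^*_{\MSset}(t)\|_2$. It permits the entire increment to lie in the null space of $\Cmat$ and the $\Sset$ block, which are exactly the directions that survive the projection. The same defect reaches your final rate: the interpolation term would be $O(\sqrt{(m+s)/n})$ rather than $O(\sqrt{(m-r+s)/n})$.

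The missing ingredient is a dimension-free modulus for the coefficient process. The paper derives it in Lemma \ref{lem:grid_truth_modulus} from \ref{as:density conditions}(ii) and \ref{as:Gram lower bound}, not from \ref{as:bvec}. Because $f_{y|\xvec}\ge f_-$ along the quantile path, $|\xvec^T\{\bvec^*(\tau)-\bvec^*(t)\}|=|Q_y(\tau|\xvec)-Q_y(t|\xvec)|\lesssim|\tau-t|$ almost surely. Squaring, taking expectations, and using that $\lambda_{\min}(\bm\Sigma)$ is bounded away from zero gives $\|\bvec^*(\tau)-\bvec^*(t)\|_2\le C|\tau-t|$. The $\bvec^*$ contribution is then $C\Delta_n\le C\sqrt{m-r+s}\,\Delta_n$, which is where $m-r+s\ge1$ enters. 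The $\bm\eta$ piece is already of that order, so $C_\star$ does not depend on $r$.

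There is also a smaller omission in your knot-recovery step. Rerunning Theorem \ref{thm:global33mod} on $\mathcal G_n$ needs a grid analogue of \ref{as:strength of signal}, not only of $\mc E_{01}$, to make the first-step weights vanish on $\Sset$. Your discretization device for $\mc E_{03}$ handles this. But $d_0=1$ for SCAD and MCP, so the event $\mc E_{01}$ leaves no slack to absorb the $o(\lambda)$ discretization loss. You must instead use the $o_p(\lambda)$ sup-norm rate of the lasso established in the proof of Lemma \ref{lemma:global33modAssumptions}, as the paper does.
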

	\noindent We also establish the following result, which shows that the interpolated estimator admits the same uniform Bahadur representation as in Theorem \ref{thm:part pen bahadur}.
	
	\begin{corollary}\label{cor:grid_bahadur}
		Suppose \ref{as:iid array}--\ref{as:new_re} and
		the additional conditions in Section S1.2 hold, and that
		$
		(m+s)^3\xi_n^2=o(n^{1-\epsilon})
		$
		for some constant $\epsilon>0$.  If $m-r+s\geq1$ and
		$
		\sqrt{n(m+s)}\,\Delta_n=o(1),
		$
		then
		\begin{equation}\label{eqn:grid_exact_equivalence}
			\sup_{\tau\in\mc T}
			\sqrt n
			\left\|
			\widehat{\bvec}_{0,\mathcal G_n,\MSset}^{(2)}(\tau)
			-
			\widehat{\bvec}_{0,\MSset}^{(2)}(\tau)
			\right\|_2
			=o_p(1).
		\end{equation}
		Moreover, for every $\tau\in\mc T$,
		\begin{align}
			\sqrt n
			\bigg\{
			\widehat{\bvec}_{0,\mathcal G_n,\MSset}^{(2)}(\tau)
			-
			& \bvec_{\MSset}^*(\tau)
			\bigg\} \notag\\
			\notag
			&=
			[\bm J(\tau)]^{-1/2}
			\{\bm I-\bm P(\tau)\}
			[\bm J(\tau)]^{-1/2}
			\frac1{\sqrt n}
			\sum_{i\in[n]}
			\tilde\xvec_i
			\psi_\tau(y_i-\xvec_i^T\bvec^*(\tau))
			\notag\\
			&\quad-
			[\bm J(\tau)]^{-1}
			\widetilde{\bm C}^{T}
			\bm H^{-1}(\tau)
			\bm\eta(\tau)
			+
			\widetilde{\bm r}_{\mathcal G,n}(\tau),
			\label{eqn:RE.grid bahadur}
		\end{align}
		where $\sup_{\tau\in\mc T}
		\left\|
		\widetilde{\bm r}_{\mathcal G,n}(\tau)
		\right\|_2
		=o_p(1)$.
	\end{corollary}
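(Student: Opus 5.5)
The plan is to reduce the grid estimator to the reduced oracle process and then control the discretization error. First I would invoke Theorem \ref{thm:grid_interpolation}. With probability tending to one, $\widehat{\bvec}_{0,\mathcal G_n}^{(2)}(t)=\widehat{\bvec}_0^o(t)$ for all $t\in\mathcal G_n$. Theorem \ref{thm:strong oracle} gives, also with probability tending to one, $\widehat{\bvec}_0^{(2)}(\tau)=\widehat{\bvec}_0^o(\tau)$ for all $\tau\in\mc T$. On the intersection of these events, \eqref{eqn:grid_interpolated_estimator} lets us write, for $\tau\in[\tau_{k-1,n},\tau_{k,n})$,
\[
\widehat{\bvec}_{0,\mathcal G_n,\MSset}^{(2)}(\tau)-\widehat{\bvec}_{0,\MSset}^{(2)}(\tau)
=\widehat{\bvec}_{0,\MSset}^{o}(\pi_n(\tau))-\widehat{\bvec}_{0,\MSset}^{o}(\tau)
+\begin{bmatrix}\Cmat^T(\Cmat\Cmat^T)^{-1}[\bm\Psi(\tau)-\bm\Psi(\pi_n(\tau))]\\ \bm 0\end{bmatrix}.
\]
So \eqref{eqn:grid_exact_equivalence} reduces to a uniform modulus of continuity bound for $\sqrt n\,\widehat{\bvec}^o_{0,\MSset}(\cdot)$ over pairs with $|\tau-\pi_n(\tau)|\le\Delta_n$, together with a bound on the $\bm\Psi$ increment.

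For the oracle increment, I would use the Bahadur representation \eqref{eqn:RE.oracle bahadur} at both $\tau$ and $\pi_n(\tau)$. The remainders are uniformly $o_p(1)$ by Theorem \ref{thm:oracle bahadur}. What remains splits into three pieces.

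\textbf{(i) Truth increment.} By \ref{as:bvec}, $\sqrt n\|\bvec^*_{\MSset}(\tau)-\bvec^*_{\MSset}(\pi_n(\tau))\|_2\le C\sqrt{n(m+s)}\Delta_n=o(1)$.

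\textbf{(ii) Drift increment.} This is $\sqrt n$ times nothing; the drift term already appears at scale one. Its increment $[\bm J]^{-1}\widetilde\Cmat^T\bm H^{-1}\bm\eta$ is handled by Lipschitz continuity of $\tau\mapsto\bm J(\tau)$ in operator norm, with constant $O(\xi_n)$ or $O(\sqrt{m+s})$ from \ref{as:density conditions}(iii) and \ref{as:bvec}. Eigenvalue bounds from \ref{as:Gram lower bound} and \ref{as:density conditions}(ii) control the inverses, and the Lipschitz condition in \ref{as:local alts} gives $\|\bm\eta(\tau)-\bm\eta(\pi_n(\tau))\|_2\le C'\sqrt{(m-r+s)n}\,\Delta_n=o(1)$. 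The product terms are bounded by $\sup\|\bm\eta\|=O(\sqrt{m+s})$ times $O(\sqrt{m+s}\,\Delta_n)$, which is $o(1)$ since $\sqrt{n(m+s)}\Delta_n=o(1)$ and $m+s=o(\sqrt n)$.

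\textbf{(iii) Score increment.} The leading stochastic term is $\bm A(\tau)\bm U_n(\tau)$, with $\bm A=[\bm J]^{-1/2}(\bm I-\bm P)[\bm J]^{-1/2}$ and $\bm U_n(\tau)=n^{-1/2}\sum\tilde\xvec_i\psi_\tau(\cdot)$. I would split its increment as $(\bm A(\tau)-\bm A(\pi_n))\bm U_n(\tau)+\bm A(\pi_n)(\bm U_n(\tau)-\bm U_n(\pi_n))$. The first part is at most $O(\sqrt{m+s}\Delta_n)\cdot O_p(\sqrt{(m+s)\log n})=o_p(1)$.

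The second part is the main obstacle: a uniform oscillation bound for the multivariate score process over intervals of length $\Delta_n\le o(n^{-1/2}(m+s)^{-1/2})$. I would center it, since the mean increment is $\sqrt n\,\E[\tilde\xvec(F(\xvec^T\bvec^*(\tau))-F(\xvec^T\bvec^*(\pi_n)))]$, bounded by $\sqrt n\,\xi_n\sqrt{m+s}\,\Delta_n$ via \ref{as:bvec}. That needs care; more precisely it equals $\sqrt n\,|\tau-\pi_n|\,\|\E\tilde\xvec\|$, which is exactly zero in expectation because $\E[\psi_\tau\mid\xvec]=0$. For the centered part, I would apply the maximal inequality used to prove Theorem \ref{thm:oracle bahadur}: a VC class indexed by $(\tau,\tau')$ with variance proxy $\E\|\tilde\xvec\|^2\,|F_\tau-F_{\tau'}|\lesssim(m+s)\Delta_n$ and envelope $\xi_n$. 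This yields $O_p\bigl(\sqrt{(m+s)\Delta_n\log n}\cdot\sqrt{m+s}+\xi_n(m+s)\log n/\sqrt n\bigr)=o_p(1)$ under the rate assumptions; alternatively, a chaining bound over the $N_{\mathcal G,n}$ cells with a union bound gives the same result. The $\bm\Psi$ correction term is $\sqrt n\|\bm\Psi(\tau)-\bm\Psi(\pi_n)\|$. Under the null it equals $\Cmat\bvec^*_\Mset$ minus $\bm\eta/\sqrt n$, so it is bounded by pieces (i) and (ii) using $\lambda_{\max}((\Cmat\Cmat^T)^{-1})=O(1)$.

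Having established \eqref{eqn:grid_exact_equivalence}, \eqref{eqn:RE.grid bahadur} follows immediately. Substitute Theorem \ref{thm:part pen bahadur} \eqref{eqn:RE.part pen bahadur} and set $\widetilde{\bm r}_{\mathcal G,n}=\tilde{\bm r}_n+\sqrt n(\widehat{\bvec}^{(2)}_{0,\mathcal G_n,\MSset}-\widehat{\bvec}^{(2)}_{0,\MSset})$. Both pieces are uniformly $o_p(1)$.
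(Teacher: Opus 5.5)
Your proposal is correct and follows essentially the paper's argument. Strong-oracle recovery at the knots (Theorem~\ref{thm:grid_interpolation}) reduces everything to the oracle Bahadur representation at $\pi_n(\tau)$, and $\sqrt{n(m+s)}\,\Delta_n=o(1)$ then controls each increment: the score process, $\bm A(\cdot)$ and $\bm D(\cdot)\bm\eta(\cdot)$ (via Lipschitz continuity of $\bm J$), and $\bvec^*$ and $\bm\Psi$. The only difference is ordering: the paper proves \eqref{eqn:RE.grid bahadur} directly at $t=\pi_n(\tau)$ and obtains \eqref{eqn:grid_exact_equivalence} by subtracting \eqref{eqn:RE.part pen bahadur}, whereas you prove the equivalence first. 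Also, your centering aside is unnecessary, since $\bm U_n(u)$ has mean zero for every $u$ because $F_{y\mid\xvec}(\xvec^T\bvec^*(u))=u$.
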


	\section{Simulation Studies}\label{sec:sim study}
	
	\subsection{Estimation comparison}\label{sec: estimation comparison}
	To complement our theoretical results in Section \ref{sec:estimation theory}, we conduct a simulation study comparing the estimation and selection performance of our globally adaptive LLA estimator (denoted by ``LLA'' below) with the $\ell_1$ penalized quantile regression estimator \citep{BC.L1penalized} (``L1''), \citet{zheng2015globally}'s globally adaptive $\ell_1$ estimator (``Zheng''), and the oracle estimator (``Oracle'').
	We include LLA estimators for both the SCAD penalty \citep{Fan2001} and MCP \citep{Zhang2010}.
	In addition, we include locally adaptive versions of our LLA estimators and Zheng in the comparison, denoting the global and local versions by ``G-'' and ``L-'' respectively. 
	We initialize the LLA algorithms and Zheng with L1, selecting the penalty parameter $\lambda_{\rm lasso}$ using the procedure described in \citet{BC.L1penalized} with the $\alpha$ parameter tuned via 10-fold CV.
	We tune all estimators for the final comparison, including L1, using the integrated GIC \eqref{eqn:gic_lla}.
	
	We generate the response from a location-scale model 
	$$y_i = \beta_0^* + \xvec_i^T \bvec^* + ((x_{i1} + 4)/5)\eps_i \text{\; for \;} i= 1, \ldots, n,$$
	with $\beta_0^* = 1 $, $\bvec^* = (2, 0, -2, 0, 1, 0, -1, 0, \ldots)^T$, where $\xvec_i \overset{iid}{\sim} N(\bm 0, \bm \Sigma_{\xvec})$ with $ [\bm \Sigma_{\xvec}]_{ij} = 0.5^{|i-j|}$ for $i,j \in [p]$ and $\eps_i \overset{iid}{\sim} t_2$.
	In each replication we generate $n = 200$ observations with either $p = 100$ or $p = 400$ predictors.
	We fit the models at $80$ equally-spaced quantiles in $\mc T = [0.25, 0.75]$.
	
	We compute the mean $\ell_1$ estimation error, the mean $\ell_2$ estimation error, the mean number of false positive variable selections (FP), and the mean number of false negative variable selections (FN) across $\mc T$.
	The reported values are the means (and standard errors) of these metrics across $1000$ replications.
	The lowest two values for each metric in each setting (excepting the oracle estimator, which we include as a benchmark) are in \textbf{bold} .

	Table \ref{tbl:het est comp} reports the results from the estimation comparison.
	We see that the globally adaptive LLA estimators achieve the lowest mean $\ell_1$ and $\ell_2$ estimation errors and lowest mean number of false negative variable selections for both $p = 100$ and $p = 400$.
	The mean number of false positive variable selections is low for all methods, and the globally adaptive LLA estimators are at or near the lowest FP in both cases.
	Within each of the adaptive weighting schemes (LLA (SCAD), LLA (MCP), and Zheng), the globally adaptive weights consistently outperform the locally adaptive weights in terms of the $\ell_1$ and $\ell_2$ estimation errors and FN. 
	
	\begin{table}
		\centering
		\caption{Heteroskedastic model estimation comparison} \label{tbl:het est comp}
		\begin{tabular}[t]{l|lllll}
			& $p$ & $\ell_1$ error & $\ell_2$ error & FP & FN\\
			\hline
			G-LLA(SCAD) & 100 & \textbf{0.369 (0.004)} & \textbf{0.045 (0.001)} & \textbf{0.004 (0.001)} & \textbf{0 (0)}\\
			G-LLA(MCP) & 100 & \textbf{0.362 (0.003)} & \textbf{0.042 (0.001)} & \textbf{0.003 (0.002)} & \textbf{0 (0)}\\
			L-LLA(SCAD) & 100 & 0.559 (0.01) & 0.185 (0.007) & 0.038 (0.004) & 0.031 (0.004)\\
			L-LLA(MCP) & 100 & 0.465 (0.007) & 0.116 (0.004) & 0.017 (0.003) & 0.018 (0.002)\\
			G-Zheng & 100 & 0.562 (0.016) & 0.175 (0.012) & 0.05 (0.007) & 0.087 (0.009)\\
			L-Zheng & 100 & 1.104 (0.022) & 0.643 (0.019) & 0.016 (0.002) & 0.45 (0.015)\\
			L1 & 100 & 3.568 (0.022) & 3.033 (0.037) & 0.088 (0.005) & 0.504 (0.018)\\
			Oracle & 100 & 0.358 (0.003) & 0.041 (0.001) & 0 (0) & 0 (0)\\
			\hline
			G-LLA(SCAD) & 400 & \textbf{0.516 (0.012)} & \textbf{0.132 (0.01)} & 0.024 (0.003) & \textbf{0.04 (0.009)}\\
			G-LLA(MCP) & 400 & \textbf{0.417 (0.008)} & \textbf{0.072 (0.005)} & \textbf{0.015 (0.003)} & \textbf{0.011 (0.005)}\\
			L-LLA(SCAD) & 400 & 1.135 (0.022) & 0.686 (0.021) & 0.055 (0.004) & 0.294 (0.019)\\
			L-LLA(MCP) & 400 & 0.87 (0.017) & 0.444 (0.015) & 0.034 (0.003) & 0.171 (0.013)\\
			G-Zheng & 400 & 1.384 (0.026) & 0.859 (0.022) & 0.033 (0.006) & 0.632 (0.019)\\
			L-Zheng & 400 & 2.199 (0.021) & 1.676 (0.022) & \textbf{0.004 (0.001)} & 1.309 (0.016)\\
			L1 & 400 & 4.731 (0.019) & 5.161 (0.043) & 0.024 (0.003) & 1.359 (0.017)\\
			Oracle & 400 & 0.357 (0.003) & 0.041 (0.001) & 0 (0) & 0 (0)\\
			\hline
		\end{tabular}
	\end{table}
	
	\subsection{Test comparison}\label{sec:additional simulation}
	For our next set of simulations, we compare our partial penalized Wald and rank score tests evaluated at the two-step global LLA estimators (denoted by ``Q/Wald'' and ``Q/RS'' below) with their oracle counterparts (``O/Wald'' and ``O/RS''), the debiased test of \citet{bradic2017uniform} (``BK''), and the debiased test of \citet{yan2023confidence} (``Yan'') at global-over-$\mc T$ and pointwise-at-$\tau$ hypothesis testing. Note that \citet{yan2023confidence}'s test only applies pointwise-at-$\tau$ and therefore is excluded from the global test comparison.
	
	Throughout the simulations in this section, we generate the response from the following location-scale model:
	\begin{equation}	
		y_i= \beta_0^* +\xvec_i^T\bvec_h^*+
		\left(1+\frac{x_{i1}}4\right)\eps_i,
		\qquad \eps_i\overset{\mathrm{iid}}{\sim}t_5,
		\label{eqn:testing sim model}
	\end{equation}
	where $\beta_0^* = 1$ and
	$\bvec_h^*=(2,h,-2,0,1+h/2,0,-1+h/2,0,\ldots)^T$.
	In both the global and pointwise inference comparisons, we set $n=300$ and $p\in\{150,450\}$.
	Let $U_i$ be Uniform$(0,1)$ and let $\eta_{ij}$ be independent standard
	normal variables.  We generate
	$
	x_{i1}=\sqrt{3}(2U_i-1)$ and $x_{ij}=\frac12x_{i,j-1}+\frac{\sqrt3}{2}\eta_{ij}$ for $j\geq2$.
	Every coordinate has mean zero and variance one, with
	$\operatorname{Cov}(x_{ij},x_{ik})=2^{-|j-k|}$.  
	Because $x_{i1}\in[-\sqrt3,\sqrt3]$, the conditional scale lies in
	$[1-\sqrt3/4,1+\sqrt3/4]$ and is positive. 
	Under this model, the conditional-quantile intercept is
	$1+F_{t_5}^{-1}(\tau)$, the coefficient of $x_{i1}$ is $2+F_{t_5}^{-1}(\tau)/4$, and all other
	slopes equal their location coefficients.

	We consider tests of the following hypotheses:
	\begin{align*}
	  \text{H}_0^{(1)}:\quad & \beta_2^*(\tau)=0,\\
	  \text{H}_0^{(2)}:\quad &
	  \bigl(\beta_2^*(\tau),\beta_4^*(\tau),\beta_6^*(\tau)\bigr)^T=\bm 0,\\
	  \text{H}_0^{(3)}:\quad & \beta_5^*(\tau)+\beta_7^*(\tau)=0.
	\end{align*}
	These cases demonstrate the range of hypotheses our procedures can test, including a simultaneous test of multiple coefficients in H$_0^{(2)}$ and a test of contrasts in H$_0^{(3)}$. 
	The departures from these null hypotheses under \eqref{eqn:testing sim model} are $h$, $(h,0,0)^T$, and $h$, respectively.
	We vary $h\in\{0,0.1,0.2,0.3,0.4\}$ to include simulations under the null hypotheses and increasingly different alternatives.
	 
	\subsubsection{Global inference comparison}
	For global inference, we estimate the coefficient process at 81 equally-spaced knots in $\mc T = [0.25, 0.75]$ and use the interpolation described in Section \ref{sec:grid_interpolation}. 
	The QLLA estimators use the globally concerned LLA weights for the SCAD penalty, and $\lambda$ is selected by the integrated GIC \eqref{eqn:gic_lla}.  
	For the QLLA and oracle procedures, we estimate the Jacobian using \citet{Powell1984}'s estimator with the Hall--Sheather bandwidth \citep[Section~3.4]{koenker_quantile_2005}.
	
	Figures \ref{fig:add_hetero_p150_band_rates} and \ref{fig:add_hetero_p450_band_rates} show the rejection rates from the global inference comparison over 500 simulation replications when $p = 150$ and $p = 450$, respectively, with all tests being performed at level $\alpha = 0.05$.
	For both $p = 150$ and $p = 450$, we see that the rejection rates for the QLLA Wald and rank score tests track their oracle counterparts closely.
	The QLLA and oracle tests achieve rejection rates near their nominal size when $h = 0$ for all three tests while BK is notably anti-conservative for H$_{0}^{(3)}$.
	As $h$ increases and the truth moves farther from the null hypotheses, we see that the QLLA tests are consistently more powerful than BK and that this gap grows with $h$, as the theoretical power analysis in Section S2.5 of the supplement would suggest.
	Among the two QLLA tests, the rank score test outperforms the Wald test, nearly matching its oracle counterpart's performance and achieving greater power for all values of $h$.

	\begin{figure}[t]
		\centering
		\includegraphics[width=\textwidth]{"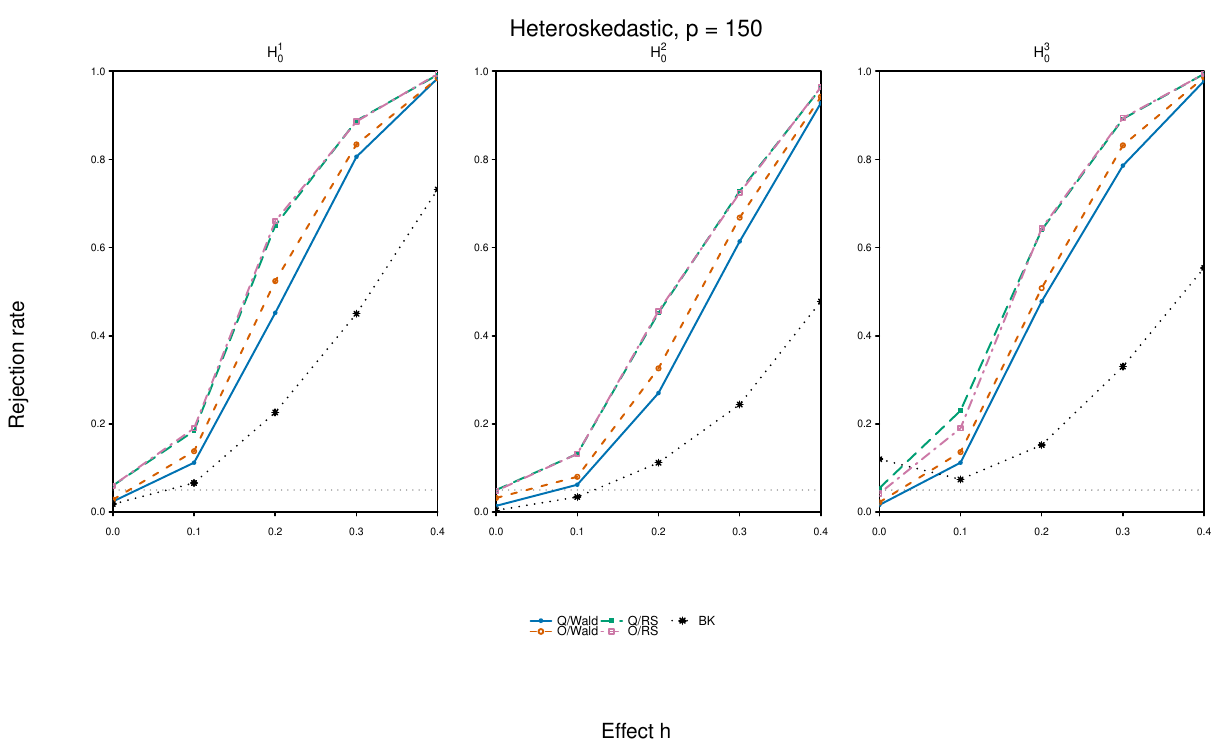"}
		\caption{Global rejection rates at level $0.05$ in the heteroskedastic $p=150$ design.}
		\label{fig:add_hetero_p150_band_rates}
	\end{figure}
	
	\begin{figure}[t]
		\centering
		\includegraphics[width=\textwidth]{"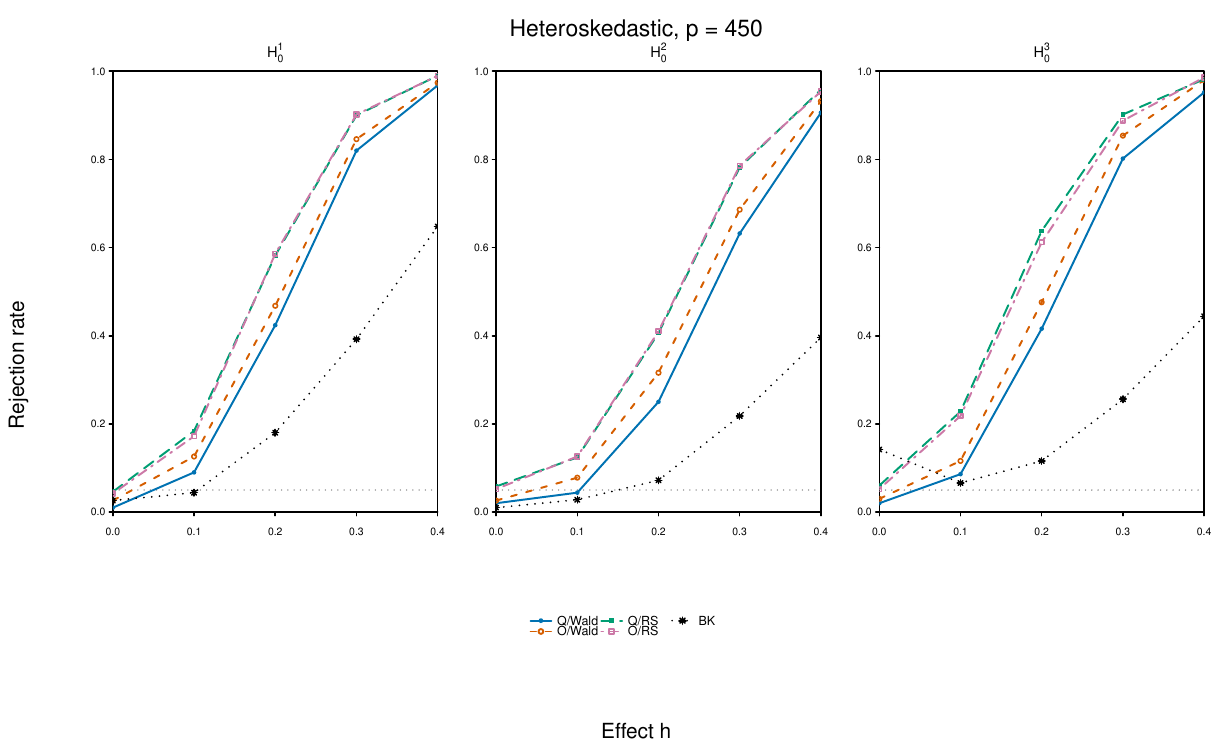"}
		\caption{Global rejection rates at level $0.05$ in the heteroskedastic $p=450$ design.}
		\label{fig:add_hetero_p450_band_rates}
	\end{figure}
	
	\subsubsection{Pointwise inference comparison}
	For pointwise inference, we examine the quantiles $\tau\in\{0.275,0.5,0.725\}$, which are knots of the grid from the global analysis. 
	In the results below, ``Q-G'' denotes the globally weighted QLLA estimator evaluated at a given quantile without refitting and ``Q-L'' denotes a separate QLLA fit at that quantile using the local weights	$p_\lambda'(|\widetilde\beta_j(\tau)|)$ and the pointwise analogue of \eqref{eqn:gic_lla}, obtained by replacing the integral and global selected support by the loss and support at that quantile.  The oracle and BK estimators are likewise evaluated from their global fits.  
	We also include Yan in the comparison for H$_0^{(1)}$ and H$_0^{(3)}$, though not H$_0^{(2)}$ as it structurally inapplicable there.  
	
	Figures \ref{fig:add_hetero_p150_point_rates} and \ref{fig:add_hetero_p450_point_rates} show the rejection rates from the pointwise inference comparison when $p = 150$ and $p = 450$, respectively; again the results are from 500 replications, and all tests are performed at level $\alpha = 0.05$.
	As in the global comparison, the QLLA tests track their oracle counterparts closely and achieve rejection rates near their nominal size under the null hypotheses.
	Notably, the tests using global QLLA weights achieve rejection rates near those of the tests using local weights. Given these results and the superior performance of the global weights in the estimation comparison in Section \ref{sec: estimation comparison}, our recommendation is to use the global weights in practice, even for pointwise analyses.
	Comparing the power with Yan and BK, we see that Yan achieves rejection rates closer to the QLLA rates than BK does, particularly with $\tau = 0.5$ and $p = 150$; for $\tau \in \{0.275,0.725\}$ and $p = 450$, however, the QLLA tests clearly attain higher rejection rates than Yan.
	
	\begin{figure}
	  \centering
	  \includegraphics[width=\textwidth]{"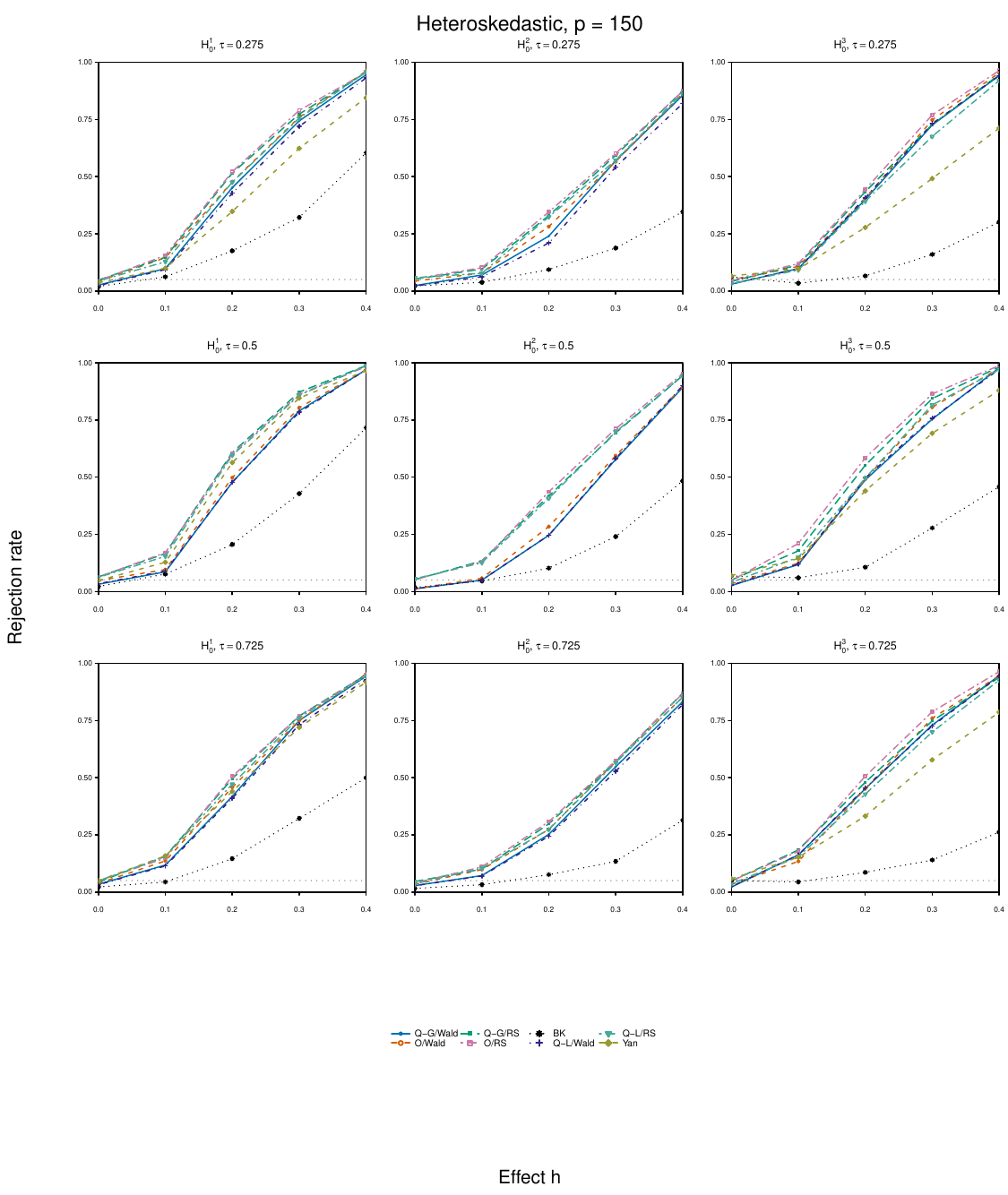"}
	  \caption{Pointwise rejection rates at level $0.05$ in the
	  heteroskedastic $p=150$ design.}
	  \label{fig:add_hetero_p150_point_rates}
	\end{figure}
	
	\begin{figure}
	  \centering
	  \includegraphics[width=\textwidth]{"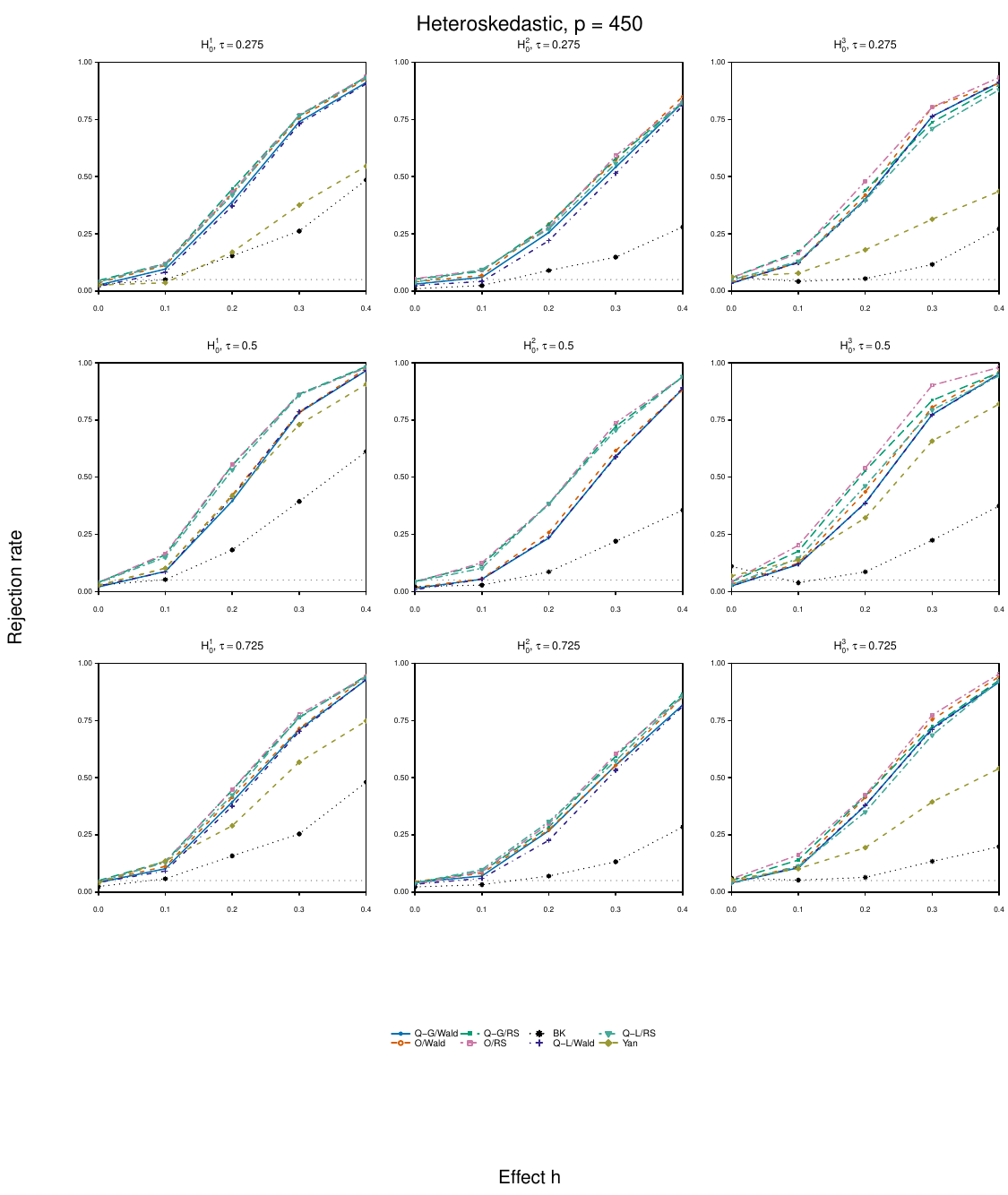"}
	  \caption{Pointwise rejection rates at level $0.05$ in the
	  heteroskedastic $p=450$ design.}
	  \label{fig:add_hetero_p450_point_rates}
	\end{figure}
	
	\subsubsection{Additional simulations}
	In addition to the power comparison results presented here, Section S3.1 of the supplement contains QQ-plots of the $p$-values under the null hypotheses for each of the tests performed in this section.
	Additional estimation and inference comparisons under a homoskedastic error model are presented in Sections S3.2 and S3.3, respectively.
	
	\section{Analysis of Lung Capacity Data}\label{sec:real data}
	To demonstrate the use of the proposed partial penalized tests, we analyze the lung capacity data set from \citet{pmid27609773}, regressing participants' FEV1\%---a measure of actual lung strength relative to predicted lung strength given one's demographics---on their lung disease status, smoking status, age, sex, and expression levels for 15,261 genes. The data come from 148 participants, including 72 with chronic obstructive lung disease, 59 with interstitial lung disease, and 17 with no lung disease and 115 smokers and 33 non-smokers.
	
	Let $\beta_{\rm CO}$, $\beta_{\rm I}$, and $\beta_{\rm SM}$ denote the coefficients for the indicators for  chronic obstructive lung disease, interstitial lung disease, and smoking, respectively.
	We test the following hypotheses:
	\begin{itemize}
		\item ``Disease vs. Smoking'' - ${\rm H}_0: (\beta_{\rm CO}^*, \beta_{\rm I}^* )^T = (\beta_{\rm SM}^*, \beta_{\rm SM}^*)^T $
		\item ``Comparing Diseases'' - ${\rm H}_0: \beta_{\rm CO}^* =  \beta_{\rm I}^* $
	\end{itemize}
	The first test compares the effects of the two lung diseases on FEV1\%  with the effect of smoking while the second compares the disease effects with each other.
	
	We examine three different intervals of quantiles $\mc T \in \{ [.1,.3], [.4,.6], [.5,.7]\}$, capturing the lower, middle, and upper quantiles of the conditional distribution of FEV1\%.
	The coefficients for age, sex, and the gene expression levels are nuisance parameters and comprise the elements of $\bvec_{\Mset^c}$. We standardized all continuous predictors to have mean 0 and standard deviation 1 prior to our analysis.
	As in our simulation studies, we use \citet{Powell1984}'s estimator of the Jacobian \eqref{eqn:Jacobian (part pen)}, with the Hall-Sheather bandwidth \citep[Section~3.4]{koenker_quantile_2005}.
	
	Table \ref{tab:empiricalResults} presents the test statistics and $p$-values for the partial penalized rank score and Wald tests of these hypotheses. Testing at the $0.05$ significance level, the rank score and Wald test results agree in every case but one---namely, testing the ``disease vs. smoking'' hypothesis at $\mc T = [0.1,0.3]$. In every other case, both tests reject the null hypothesis, leading us to conclude that the effects of the two diseases on FEV1\% differ from that of smoking over $\mc T = [0.4,0.6]$ and $\mc T = [0.7,0.9]$ and that the effects of the two diseases differ from each other over all three intervals.
	
	\begin{table}[ht]
		\centering
		\caption{Rank score and Wald test results regarding FEV1\%}
		\label{tab:empiricalResults}
		\begin{tabular}{llrrrr}
			\hline
			& & \multicolumn{2}{c}{Rank score test} & \multicolumn{2}{c}{Wald test} \\
			\cline{3-4} \cline{5-6}
			Test & $\mathcal{T}$ & Statistic & $p$-value & Statistic & $p$-value \\
			\hline
			Disease vs.\ Smoking
			& $[.1,.3]$    & 8.02  & 0.08 & 10.01 & 0.04 \\
			& $[.4,.6]$ & 9.99  & 0.03 & 16.95 & $<0.01$ \\
			& $[.7,.9]$   & 11.45 & 0.02 & 11.06 & 0.02 \\
			\hline
			Comparing Diseases
			& $[.1,.3]$    & 8.65  & 0.02 & 9.83 & 0.01 \\
			& $[.4,.6]$ & 12.38 & <0.01 & 10.96 & <0.01 \\
			& $[.7,.9]$   & 10.29 & <0.01 & 6.82 & 0.04 \\
			\hline
		\end{tabular}
	\end{table}
	
	\bibliographystyle{apalike}
	\bibliography{lla}
	
	\pagebreak
	
	\appendix
	
	\setcounter{section}{0}
	\setcounter{equation}{0}
	
	\def\theequation{S\arabic{equation}}
	\def\thesection{S\arabic{section}}
	\def\thefigure{S\arabic{figure}}
	\def\thetable{S\arabic{table}}
	
	\newcommand\myspacing{1}
	
	\def\spacingset#1{\renewcommand{\baselinestretch}%
		{#1}\small\normalsize} 
	

\section{Technical conditions}

\subsection{Equivalent assumptions under a location-scale model} \label{app:equivalent location-scale assumptions}
Studies of inference for the quantile regression process often assume the data are generated from a \emph{location-scale model}:
\begin{equation}
		y_i = \xvec_i^T\bvec^* + \xvec_i^T \bm \gamma^* \varepsilon_i \label{as:location-scale model}
	\end{equation}
where $\varepsilon_1, \ldots, \varepsilon_n$ are i.i.d. random errors with distribution function $F_{\eps}(\cdot)$ and density function $f_{\eps}(\cdot)$, and $\bvec^*, \bm \gamma^* \in \reals^{p+1}$ are vectors of unknown parameters.
To situate our assumptions within the quantile inference literature, we will briefly show that Conditions \ref{as:iid array} and \ref{as:density conditions} are satisfied if the data are generated from a location-scale model and some standard conditions hold.

We can rewrite \eqref{as:location-scale model} as
\begin{equation}
		Q_{y}(\tau|\xvec)
		= \xvec^T[ \bvec^* + \bm \gamma^* F_{\eps}^{-1}(\tau) ]
		= \xvec^T \bvec^*(\tau)
	\end{equation}
where $ \bvec^*(\tau) = \bvec^* + \bm \gamma^* F_{\eps}^{-1}(\tau) $, making the location-scale model a special case of \eqref{eqn:conditional quantile linear expr}. \ref{as:iid array} is clearly satisfied since the data are i.i.d.
Under \eqref{as:location-scale model} the following regularity conditions guarantee that \ref{as:density conditions} holds:
\begin{enumerate}[label=(LS\arabic*), start = 1 , leftmargin=2\parindent]
		\item $f_{\eps}$ is bounded from above uniformly over the domain of $\varepsilon$ and $n$. 
		
		\item $f_{\eps}(F_{\eps}^{-1}(\tau))$ is bounded away from $0$ uniformly over $\bar{\mc T}$ and $n$.
		
		\item $f(\varepsilon)$ is continuously differentiable and its derivative is bounded in absolute value from above uniformly over the domain of $\varepsilon$ and $n$.
		
		\item $\xvec^T\bm \gamma$ is bounded uniformly from above and away from $0$ over $\xvec \in \mc X$ and $n$.
	\end{enumerate}
These conditions are standard in studies of inferential procedures for the location-scale model (see, for example, \citet{koenker_goodness_1999}).
To establish the equivalence of these conditions to \ref{as:density conditions}, we note that under \eqref{as:location-scale model}, $f_{y|\xvec}(y) = f_{\eps}( (y - \xvec^T\bvec^*)/\xvec^T\bm\gamma^* ) \cdot |\xvec^T \bm\gamma^*|^{-1} $,
$ 
f_{y|\xvec}(Q_y(\tau|\xvec) ) 
= f_{\eps}(F_{\eps}^{-1}(\tau)) \cdot |\xvec^T \bm\gamma^*|^{-1}
$, and 
$\frac{d}{dy} f_{y|\xvec}(y) = f_{\epsilon}( (y - \xvec^T\bvec^*)/\xvec^T\bm\gamma^* ) \cdot |\xvec^T \bm\gamma^*|^{-2}$.

\subsection{Conditions for $\ell_1$-penalized estimator convergence} \label{app:BC l1 conditions}

The following additional conditions from \citet{BC.L1penalized} are required to prove Lemma \ref{lemma:global33modAssumptions}.
The first condition is similar to Assumption \ref{as:eigens}. Before stating this condition, let $c_0$ be a positive constant and define
\begin{equation*}
	D_\tau \defeq \left\{\bm \delta \in \reals^{p+1}:||\bm \delta_{\mc A^c_{\tau}}||_1 \leq c_0||\bm \delta_{\Sset_{\tau}}||_1, ||\bm \delta_{\Sset_{\tau}}^c||_0 \leq n \right\}.
\end{equation*} 
In addition, let
$\bar{\mc A}_{\tau}(\bm \delta, k)$ as the $k$ largest in absolute value elements of $\bm \delta$ in $\mc A^c_{\tau}$. 
\begin{enumerate}[label=(Z\arabic*), start = 1 , leftmargin=2\parindent]
	\item \label{as:l1re} For some positive $k$ and $c_0 \geq 9$, $\inf_{\tau \in \mc T}\inf_{\bm \delta \in D_\tau, ||\bm \delta||_2 \neq 0} \frac{\bm \delta^T E[\xvec_i\xvec_i^T]\bm \delta}{||\bm \delta_{\mc A_{\tau} \cup \bar{\mc A}_{\tau}(\bm \delta, k)||}}$ has a positive uniform lower bound. In addition, $\inf_{\tau \in \mc T}\inf_{\bm \delta \in D_{\tau}, ||\bm \delta||_2 \neq 0} \frac{E[|\xvec_i^T\bm \delta|^2]^{3/2}}{E|\xvec_i^T\bm \delta|^3}$ has a positive uniform lower bound. 
	\item \label{as:lambda lasso conditions}
	$\lambda_{\rm lasso} = O(\sqrt{n \log p})$ and $\lambda_{\rm lasso} \geq \lambda_{\rm lasso,0}$ where $\lambda_{\rm lasso,0} = \frac{c_0+3}{c_0-3} \Lambda(1-\alpha|X)$, where $1/p \leq \alpha$, and $\Lambda(1-\alpha|X)$ is the $1-\alpha$ quantile of $$\Lambda = n \sup_{\tau \in \mc T}\max_{j \in \{1,\ldots,p\}}\left|\frac{1}{n} \sum_{i=1}^n x_{ij}(\tau-I(u_i \leq \tau)) \hat{\sigma}_j^{-1}[\tau(1-\tau)]^{-1/2} \right|.$$
	\item \label{as:s_size} $s \leq n/\log(n \vee p)$. 
\end{enumerate}

\section{Technical proofs}

\subsection{Strong oracle proofs}

In this section we provide proofs for Theorems \ref{thm:global33mod} and \ref{thm:global33modStep2} and Lemmas \ref{lemma:global33modAssumptions} and \ref{lemma:global33modAssumptions2}. A proof of Theorem \ref{thm:oracle rate} is provided in the next section. Only the proofs for the case of $e=0$, the restricted case, are provided. This is because the proof for $e=a$ is a simpler version that does not have to deal with the case of the estimator being potentially biased by the restriction that $\Cmat \bvec_{\mc M}(\tau)=\bm \Psi (\tau)$. 

\begin{proof}[Proof of Theorem \ref{thm:global33mod}]
	
	Under \ref{as:strength of signal} and $\sup_{\tau \in \mc T }||\hat{\bvec}^{(0)}_{0,\Mset^c}(\tau)-\bvec^*_{\Mset^c}(\tau)||_{\infty} \leq d_0\lambda$, we have
	\begin{equation*}
		\inf_{j \in \Sset}\sup_{\tau \in \mc T}|\hat{\beta}_{0,j}^{(0)}(\tau)| \geq \inf_{j \in \Sset}\sup_{\tau \in \mc T} |\beta_j^*(\tau)| - \sup_{\tau \in \mc T }||\hat{\bvec}^{(0)}_{0,\Mset^c}(\tau)-\bvec^*_{\Mset^c}(\tau)||_{\infty} > (d+1)\lambda - d_0\lambda \geq 
		d\lambda. 
	\end{equation*}
	Therefore by the definition of $p_\lambda(\cdot)$, $p'_\lambda(\sup_{\tau \in \mc T}|\hat{\beta}_{0,j}^{(0)}(\tau)|)=0$ for all $j \in \Sset$ and $\tau \in \mc T$. Thus under the assumed conditions, $\hat{\bvec}^{(1)}_0(\tau)$ is equivalent to 
	\begin{equation}\label{eqn:reMod}
		\hat\bvec_{0}^{(1)} (\tau)= \argmin_{\bvec \in \reals^{p+1}} \frac{1}{n}\sum_{i=1}^n \rho_{\tau}[y_i-\bm x_i^T\bvec(\tau)]+\sum_{j \in \MSset^c} p'_{\lambda}(|\beta^{(0)}_j(\tau)|)|\beta_j| {\rm ~s.t.~} \Cmat \bvec_{\mc M} (\tau) =\bm \Psi (\tau) \text{~for~} \tau \in \mc T.
	\end{equation}
	
	By definition of a subdifferential and the definition of $\tilde{\bm w}(\tau)$ in Assumption \ref{as:oracleDefined}, for any $\bvec$ where $\bm C\bvec_{\mc M}=\bm \Psi(\tau)$,
	\begin{align}
		\mc Q_{\tau,n}(\bvec)- \mc Q_{\tau,n}[\hat{\bvec}^o_0(\tau)]  &\geq \tilde{\bm w}_{\MSset^c}^T(\tau)[\bvec_{\MSset^c} - \hat{\bvec}^o_{0,\MSset^c}(\tau)] + \bm d(\tau)^T \Cmat[\bvec_{(\MSset)} - \hat{\bvec}^o_{0,(\MSset)}(\tau)] \nonumber \\
		&= \tilde{\bm w}_{\MSset^c}^T(\tau)\bvec_{\MSset^c}. \label{objDiff}
	\end{align} 
	
	
	Under Assumption \ref{as:strength of signal} and $\sup_{\tau \in \mc T }||\hat{\bvec}^{(0)}_{0,\Mset^c}(\tau)-\bvec^*_{\Mset^c}(\tau)||_{\infty} \leq d_0\lambda$, for $j \in \MSset^c$,
	\begin{equation*}
		\sup_{\tau \in \mc T}|\hat{\beta}_{0,j}^0(\tau)| \leq \sup_{\tau \in \mc T}||\hat{\bvec}^{(0)}_{0,\Mset^c}||_{\infty} = \sup_{\tau \in \mc T}||\hat{\bvec}^{(0)}_{0,\Mset^c}-\bvec^*_{\Mset^c}(\tau)||_{\infty}\leq d_0\lambda \leq d_2\lambda. 
	\end{equation*}  
	Therefore $p'_\lambda[\hat{\beta}_j^{(0)}] \geq d_1\lambda$ for $j \in \MSset^c$. Combining this with $\sup_{\tau \in \mc T}||\tilde{\bm w}_{\MSset^c}(\tau)||_{\infty} < d_1\lambda$, and \eqref{objDiff} we have that for every $\tau \in \mc T$
	\begin{align*}
		\mc Q_{\tau,n}(\beta) + & \sum_{j \in \MSset^c} p'_{\lambda}(|\hat{\beta}^{(0)}_j(\tau)|)|\beta_j| - \left( \mc Q_{\tau,n}[\hat{\beta}^o(\tau)] +
		\sum_{j \in \MSset^c} p'_{\lambda}(|\hat{\beta}^{o}_j(\tau)|)|\hat{\beta}_j^o|\right) \\
		&= \mc Q_{\tau,n}(\beta) - \mc Q_{\tau,n}[\hat{\beta}^o(\tau)] + \sum_{j \in \MSset^c} p'_{\lambda}(|\hat{\beta}^{(0)}_j(\tau)|)|\beta_j| \\
		&\geq \tilde{\bm w}(\tau)_{\MSset^c}^T\bvec_{\MSset} + \sum_{j \in \MSset^c} p'_{\lambda}(|\hat{\beta}^{(0)}_j(\tau)|)|\beta_j|\\
		&= \sum_{j \in \MSset^c} p'_{\lambda}(|\hat{\beta}^{(0)}_j(\tau)|)|\beta_j|+\tilde{w}_j(\tau)\beta_j   \\
		&= \sum_{j \in \MSset^c} \left[p'_{\lambda}(|\hat{\beta}^{(0)}_j(\tau)|)+\tilde{w}_j(\tau)\rm sgn(\beta_j)\right]|\beta_j| \\
		&\geq \sum_{j \in \MSset^c} \left[d_1\lambda -|\tilde{w}_j(\tau)\rm sgn(\beta_j) | \right]|\beta_j| \geq 0.
	\end{align*}
	Equality will hold for the final lower bound only if $\beta_j=0$ for all $j \in \MSset^c$. Thus for all $\tau \in \mc T$, there does not exist a $\beta$ that will provide a lower value of the penalized objective function than $\hat{\bvec}^o_0(\tau)$ and thus for all $\tau \in \mc T$ we have $\hat{\bvec}^{(1)}_0(\tau)=\hat{\bvec}^o_0(\tau)$. 
\end{proof}

\begin{proof}[Proof of Theorem \ref{thm:global33modStep2}]
	Recall that we assume that the oracle estimator has been found in the previous iteration, that is $\hat{\bvec}^{(b-1)}_0(\tau)=\hat{\bvec}^{o}_0(\tau)$. Thus the proof can be completed by modifying the proof of Theorem \ref{thm:global33mod} with the oracle estimator as the initial estimator. 
	Under the event $\inf_{j \in \Sset} \sup_{\tau \in \mc T}|\hat{\beta}^o_{0,j}(\tau)| > d\lambda$, it follows that $p_\lambda'(\sup_{\tau \in \mc T}|\hat{\beta}^o_{0,j}(\tau)|)=0$ for all $j \in \Sset$. 
	In addition, since $\sup_{\tau \in \mc T, j \in \MSset^c} |\hat{\beta}^o_{0,j}(\tau)|=0$ it follows that $p'_{\lambda}(\sup_{\tau \in \mc T}|\hat{\beta}^o_{0,j}(\tau)|) = p'_\lambda(0) \geq d_1\lambda$ for all $j \in \MSset^c$. 
	Combining these two results with the event $\sup_{\tau \in \mc T}||\tilde{\bm w}_{\MSset^c}(\tau)||_{\infty} < d_1\lambda$, the proof follows the same structure as the proof of Theorem \ref{thm:global33mod}. 
\end{proof}

\begin{proof}[Proof of Lemma \ref{lemma:global33modAssumptions}]
	Theorem 2 of \citet{BC.L1penalized} provides that
	$$ \sup_{\tau \in \mc T} \norm{\hat{\bvec}^{\rm lasso}(\tau)-\bvec^*(\tau)}_2=O_p\left(\sqrt{\frac{(s+m)\log (p \vee n)}{n}}\right)$$ 
	under assumptions \ref{as:iid array}, \ref{as:density conditions}, \ref{as:xbounds}, and \ref{as:l1re} - \ref{as:s_size}.
	Therefore under \ref{as:lambda} 
	$$\sup_{\tau \in \mc T}\norm{ \hat{\bvec}^{\rm lasso}_{\Mset^c}(\tau)-\bvec^*_{\Mset^c}(\tau)}_{\infty} \leq \sup_{\tau \in \mc T} \norm{\hat{\bvec}^{\rm lasso}(\tau)-\bvec^*(\tau)}_2=o_p(\lambda),$$ 
	and thus $P(\sup_{\tau \in \mc T}\shortnorm{ \hat{\bvec}^{\rm lasso}_{\Mset^c}(\tau)-\bvec^*_{\Mset^c}(\tau)}_{\infty} \leq d_0\lambda) \rightarrow 1$. 
\end{proof}

\begin{proof}[Proof of Lemma \ref{lemma:global33modAssumptions2}]
	We begin by proving \eqref{eqn:thm33as2}.
	It suffices to show that for all $v_i \in [\tau-1,\tau]$,  
	\begin{equation}
		\label{eqn:as2Target}
		\Pr\left\{ \sup_{\tau \in \mc T, j \in \MSset^c} \left| n^{-1}\left[\sum_{i \in \mc I(\hat{\bvec}_0^{o}(\tau))^c}x_{ij} \psi_\tau(y_i-\xvec_i^T\hat{\bvec}^{o}(\tau)) + \sum_{i \in \mc I(\hat{\bvec}_0^{o}(\tau))} x_{ij}v_i(\tau)\right] \right| < d_1\lambda \right\} \rightarrow 1. 
	\end{equation}
	Note that $|\mc I[\hat{\bvec}_0^{o}(\tau)]|=m-r+s$ (to see this, one can express \eqref{eqn:RE.oracle} as a linear program and count the number of non-basic variables). Using assumptions \ref{as:x_max} and \ref{as:lambda} it follows that
	\begin{equation}
		\label{eqn:pfSum}
		\sup_{\tau \in \mc T,j \in \MSset^c} \left|n^{-1} \sum_{i \in \mc I[\hat{\bvec}_0^{o}(\tau)]} x_{ij}\tilde{v}_i(\tau)\right| \leq (m-r+s)n^{-1}\max_{i,j \in \MSset^c} |x_{ij}| = o_p(((m+s)n^{-1} \log (n \vee p))^{1/2}), 
	\end{equation}
	which is $o_p(\lambda)$.
	Therefore
	\begin{equation}
		\label{eqn:as2SimpleSum}
		\sup_{\tau \in \mc T,j \in \MSset^c} \left| n^{-1}\sum_{i \in \mc I(\hat{\bvec}_0^{o}(\tau))^c }x_{ij} \psi_\tau(y_i-\xvec_i^T\hat{\bvec}^{o}(\tau))\right| \leq \sup_{\tau \in \mc T,j \in \MSset^c} \left| n^{-1}\sum_{i =1}^nx_{ij} \psi_\tau(y_i-\xvec_i^T\hat{\bvec}^{o}(\tau))\right| + o_p(\lambda).
	\end{equation}
	
	Define the set $\mathcal{R}_{\MSset}(B) \equiv \left\{\bm \delta \in \reals^{p+1} : \bm \delta_{\MSset^c} = \bm 0, ||\bm \delta_{\MSset}||_2 \leq \sqrt{(m+s)\log^2 n/n}B\right\}$ for some $B>0$. By Theorem \ref{thm:oracle rate}, \eqref{eqn:pfSum}, \eqref{eqn:as2SimpleSum} and assumption \ref{as:lambda}, one can demonstrate that \eqref{eqn:as2Target} holds by proving that for some $B>0$
	\begin{equation}
		\label{eqn:as2Target2}
		\Pr\left( \sup_{\tau \in \mc T, j \in \MSset^c, \bm \delta \in \mathcal{R}_{\MSset}(B) } \left| \sum_{i=1}^nx_{ij} \psi_\tau[y_i-\xvec_i^T(\bvec^*(\tau)+\bm \delta)] \right| < \sqrt{n (s+m) \log(p \vee n)} d_1/2 \right) \rightarrow 1.
	\end{equation}
	Consider the design events $\Omega_1 \defeq \{\max_{j \in \{1,\ldots,p\}} |\hat{\sigma}_j-1| \leq 1/2\}$ and $\Omega_2 \defeq \{ \max_{i, j \in \MSset^c} |x_{ij}| < C n^{1/2}((s+m)^2\log (p \vee n))^{-1/2}) \}$ for some constant $C$, which hold with probability converging to $1$ by assumptions \ref{as:xbounds} and \ref{as:x_max}. 
	Define $\bm e_j$ as the standard basis vector for $j$ and $M_n(\tau,\bm \delta) \defeq n^{-1/2}\sum_{i=1}^n \bm x_i \psi_\tau( y_i - \bm x_i^T (\bvec^*(\tau) + \bm \delta))$. Notice, $n^{1/2} \bm e_j^T M_n(\tau,\bm \delta) = \sum_{i=1}^nx_{ij} \psi_\tau[y_i-\xvec_i^T(\bvec^*(\tau)+\bm \delta)]$ and
	\begin{eqnarray*}
		n^{1/2} \bm e_j^T M_n(\tau,\dvec) &=& n^{1/2} \bm e_j^T \left\{ M_n(\tau,\dvec)-E[M_n(\tau,\bm \delta)] - M_n(\tau, \bm 0) + E[M_n(\tau, \bm 0)] \right\} \\
		&& + n^{1/2} \bm e_j^T \left\{E[M_n(\tau,\bm \delta)] - E[M_n(\tau,\bm 0)] \right\} \\
		&& + n^{1/2} \bm e_j^T M_n(\tau,\bm 0) \\
		&\defeq& \Pi_1(\tau,\bm \delta) + \Pi_2(\tau,\bm \delta) + \Pi_3(\tau). 
	\end{eqnarray*}
	
	Lemma \ref{lemma:Pi1 prob bound} provides that for any constant $C_1 > 1$
	\begin{align*}
		\Pr\bigg(\sup_{\tau \in \mc T, \bm \delta \in \mathcal{R}_{\MSset}(B)} |\Pi_1(\tau,\bm \delta)| > & 4 C_1 \sqrt{n (m+s) \log (p \vee n)} | \Omega_2 \bigg) \\
		& \leq 4 \exp\left\{- \frac{1}{2} C_1 (m+s)^2 \log (p \vee n) + 4 (m+s) \log n \right\}.
	\end{align*}
	Moreover, it is shown in the proof of Theorem 3.3 from \citet{zheng2015globally} that 
	$$
		\sup_{\tau \in \mc T, \bm \delta \in \mathcal{R}_{\MSset}(B)} |\Pi_2(\tau, \bm \delta)| \leq \sqrt{n \log p} \leq \sqrt{ n (m+s) \log(p \vee n) }
	$$	
	and
	\begin{multline*}
	\Pr\left(\sup_{\tau \in \mc T, \bm \delta \in \mathcal{R}_{\MSset}(B)} | \Pi_3(\tau) | > 20\sqrt{n (m+s) \log (p \vee n)} | \Omega_1 \right) \\
	\leq \Pr\left(\sup_{\tau \in \mc T, \bm \delta \in \mathcal{R}_{\MSset}(B)} | \Pi_3(\tau) | > 20\sqrt{n \log p} \right) \leq 12p^{-2} .
	\end{multline*}
	Therefore by applying Bonferroni's inequality, we find that
	\begin{align*}
		\Pr\bigg( & \sup_{\tau \in \mc T, j \in \MSset^c, \bm \delta \in \mathcal{R}_{\MSset}(B) } |n^{1/2} \bm e_j^T M_n(\tau,\dvec)|  > 35\sqrt{n (m+s) \log (p \vee n)} | \Omega_1 \cap \Omega_2  \bigg)\\
		& \leq (p - m - s) \; \frac{12}{p^2} + (p-m-s) \; 4 \exp\left\{-7 (m+s)^2 \log (p \vee n) + 4 (m+s) \log n \right\}\\
		& \leq \frac{12}{p} + 4 \exp\left\{- 7 (m+s)^2 \log (p \vee n) + 4 (m+s) \log n +  \log p \right\}\\
		& \leq \frac{12}{p} + 4\exp\left\{[- 5 (m+s) + 4] (m+s) \log n + [- 2 (m+s)^2  + 1] \log p \right\}\\
		& \leq \frac{12}{p} + 4 \exp\left\{ - (m+s)\log n - \log p \right\} = \frac{4}{n^{m+s}} + \frac{16}{p}.
	\end{align*} 
	Since $\Pr(\Omega_1^c) \to 0$ and $\Pr(\Omega_2^c) \to 0$ by \ref{as:xbounds} and \ref{as:x_max}, this is sufficient to prove \eqref{eqn:as2Target2}.
	
	Next, we prove \eqref{eqn:thm33as3}.
	When \ref{as:iid array} - \ref{as:local alts} hold and $(m + s)\xi_n^2 \log^2 n = o(n)$ the conditions of Theorem \ref{thm:oracle rate} are satisfied, and $\sup_{\tau \in \mc T} \norm{\bvecOR_{0}(\tau) - \bvec^*(\tau) }_2 = O_p\left(\sqrt{\frac{m+s}{n}} \right)$. Combining this with assumption \ref{as:lambda} provides that $\sup_{\tau \in \mc T} \norm{\bvecOR_{0}(\tau) - \bvec^*(\tau) }_2=o(\lambda)$. Then 
	\begin{eqnarray*}
		\inf_{j \in \Sset}\sup_{\tau \in \mc T}|\hat{\beta}_{0,j}^{o}(\tau)| &\geq& \inf_{j \in \Sset}\sup_{\tau \in \mc T} |\beta_j^*(\tau)| - \sup_{\tau \in \mc T }||\hat{\beta}^{o}_{0,\Mset^c}(\tau)-\beta^*_{\Mset^c}(\tau)||_{\infty} \\
		&\geq& \inf_{j \in \Sset}\sup_{\tau \in \mc T} |\beta_j^*(\tau)| - \sup_{\tau \in \mc T }||\hat{\beta}^{o}_{0,\Mset^c}(\tau)-\beta^*_{\Mset^c}(\tau)||_{2} \\
		&\geq& (d+1)\lambda - \lambda \geq d\lambda,
	\end{eqnarray*}
	where the last inequality relies on assumption \ref{as:strength of signal}. 
\end{proof}

\begin{proof}[Proof of Theorem \ref{thm:strong oracle}]
	Lemmas \ref{lemma:global33modAssumptions} and \ref{lemma:global33modAssumptions2} establish that for $e \in \{0,a\}$ the events $\mc E_{e1}$, $\mc E_{e2}$, and $\mc E_{e3}$ from Theorems \ref{thm:global33mod} and \ref{thm:global33modStep2} hold with probability converging to $1$, completing the proof.
\end{proof}

\subsubsection{Supporting lemma}

\begin{lemma}\label{lemma:Pi1 prob bound}
	Under the same conditions as Lemma \ref{lemma:global33modAssumptions2}, we have that for any constant $C_1 > 1$ the event
	$$ \sup_{\tau \in \mc T, \bm \delta \in \mathcal{R}_{\MSset}(B)} |\Pi_1(\tau,\bm \delta)| \leq 4 C_1 \sqrt{n (m+s) \log (p \vee n)} $$
	holds with probability at least
	$1 - 4 \exp\left\{- \frac{1}{2} C_1 (m+s)^2 \log (p \vee n) + 4 (m+s) \log n \right\} - \Pr(\Omega_2^c)$.
\end{lemma}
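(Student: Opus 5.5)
We need to bound $\sup_{\tau,\dvec}|\Pi_1(\tau,\dvec)|$ for a fixed $j\in\MSset^c$, where
\[
\Pi_1(\tau,\dvec)=\sum_{i=1}^n x_{ij}\bigl\{\psi_\tau(y_i-\xvec_i^T(\bvec^*(\tau)+\dvec))-\psi_\tau(y_i-\xvec_i^T\bvec^*(\tau))\bigr\}-\text{(its mean)}.
\]
We work conditionally on $\Omega_2$ and combine a discretization (chaining-free net) argument with Bernstein's inequality, in the spirit of the proof of Theorem 3.3 of \citet{zheng2015globally}. First we discretize. Cover $\mc T$ by a grid $\mc T_N$ of mesh $n^{-2}$. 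Cover the ball $\mathcal R_{\MSset}(B)$ (dimension $m+s$, radius $\sqrt{(m+s)\log^2 n/n}\,B$) by an $\ell_2$-net $\mc N$ of mesh $n^{-2}$, with $|\mc N|\le (Cn^{2})^{m+s}$. Then $|\mc T_N||\mc N|\le \exp\{4(m+s)\log n\}$ for $n$ large, which is the source of the $4(m+s)\log n$ term in the exponent.

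Next we bound a fixed net point. For fixed $(\tau,\dvec)$, the summands
\[
Z_i=x_{ij}\bigl\{1(y_i<\xvec_i^T\bvec^*(\tau))-1(y_i<\xvec_i^T(\bvec^*(\tau)+\dvec))\bigr\}
\]
are independent and satisfy $|Z_i|\le \max|x_{ij}|\le Cn^{1/2}((m+s)^2\log(p\vee n))^{-1/2}$ on $\Omega_2$. By \ref{as:density conditions}(i), their variance is at most $E[x_{ij}^2 \bar f|\tilde\xvec_i^T\dvec|]$. Using Cauchy--Schwarz and \ref{as:new_re}, this is bounded by $\bar f\,(E[(x_{ij}\tilde\xvec^T\dvec)^2])^{1/2}$, which is $o(\sqrt{\log p/(s\log^2 n)})\,\|\dvec\|_2$ and hence $o(\sqrt{(m+s)\log p/(sn)})$ per observation. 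The total variance is therefore $o(n)$, and in any case at most of order $n(m+s)\log(p\vee n)/((m+s)^2)$ after adjusting constants. Apply Bernstein with threshold $t=C_1\sqrt{n(m+s)\log(p\vee n)}$. The variance term gives an exponent of order $t^2/n\gtrsim C_1^2(m+s)\log$, and the range term gives $t/\max|Z_i|\gtrsim C_1(m+s)^{2}\log(p\vee n)\cdot (m+s)^{-1/2}\cdots$. The aim is an exponent of at least $\tfrac12 C_1(m+s)^2\log(p\vee n)$, and the $(m+s)^2$ in $\Omega_2$ is precisely what makes the range term deliver this. The factor $4$ in front of the probability accounts for the two-sided bound on both the centered indicator difference and the subtracted mean.

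Finally we control the off-net error; this is the delicate step. For $(\tau,\dvec)$ within $n^{-2}$ of a net point $(\tau',\dvec')$, the indicator differences are not Lipschitz. We sandwich them using monotonicity. By \ref{as:bvec}, $|\xvec_i^T(\bvec^*(\tau)-\bvec^*(\tau'))|\le \xi_n C\sqrt{m+s}\,n^{-2}$, and similarly for $\dvec$. So each indicator lies between indicators at shifted thresholds $\xvec_i^T(\bvec^*(\tau')+\dvec')\pm \epsilon_n$, where $\epsilon_n=O(\xi_n\sqrt{m+s}\,n^{-2})=o(n^{-3/2})$ using $(m+s)\xi_n^2=o(n)$. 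We add the shifted brackets (a finite set of twice the size) to the net. The mean discrepancy is then at most $n\max|x_{ij}|\bar f\epsilon_n=o(1)$, and the bracket sums are handled by the same Bernstein bound. Absorbing these terms turns $C_1\sqrt{\cdot}$ into $4C_1\sqrt{\cdot}$. A union bound over the enlarged net and adding $\Pr(\Omega_2^c)$ then yields the stated probability.
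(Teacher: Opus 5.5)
Your architecture (a finite net in $(\tau,\dvec)$, an exponential inequality at each net point using the range bound from $\Omega_2$, and a monotone sandwich for off-net points) is the kind of argument the paper relies on when it defers to Lemma 7.11 of \citet{zheng2015globally}. Your off-net step is fine. The gap is at the net-point Bernstein step, which cannot produce the exponent $\frac12 C_1(m+s)^2\log(p\vee n)$ in the statement. Bernstein's exponent $t^2/\{2(V+Mt/3)\}$ never exceeds $3t/(2M)$, whatever the variance proxy $V$. Take $t\asymp C_1\sqrt{n(m+s)\log(p\vee n)}$ and the only range control you have, $M\le 2\max_i|x_{ij}|<2Cn^{1/2}(m+s)^{-1}\log^{-1/2}(p\vee n)$ on $\Omega_2$. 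Then the cap is of order $C_1C^{-1}(m+s)^{3/2}\log(p\vee n)$. That is exactly what your own partial display $C_1(m+s)^2\log(p\vee n)\cdot(m+s)^{-1/2}$ gives before the ``$\cdots$''.

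So the claim that ``the $(m+s)^2$ in $\Omega_2$ is precisely what makes the range term deliver this'' is false. Because $\Omega_2$ holds for any fixed $C$, the discrepancy is only a constant when $m+s$ is bounded, but not when $m+s\to\infty$. In that case you would need $\max_{i,j\in\MSset^c}|x_{ij}|\lesssim n^{1/2}(m+s)^{-3/2}\log^{-1/2}(p\vee n)$. Switching to Bennett's inequality only gains a factor $\log(1+Mt/V)=O(\log n)$. What your argument actually gives is a bound $4\exp\{-cC_1(m+s)^{3/2}\log(p\vee n)+4(m+s)\log n\}$.

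That weaker bound is all the Bonferroni step over $j\in\MSset^c$ in the proof of Lemma \ref{lemma:global33modAssumptions2} needs. There the exponent only has to dominate $\log p+4(m+s)\log n$. A large $C_1$ in the threshold is harmless because $\sqrt{n(m+s)\log(p\vee n)}=o(n\lambda)$ by \ref{as:lambda}. So either prove the weaker exponent or strengthen the design condition.

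A second, smaller problem is that you mix conditional and unconditional calculations. Using $\Omega_2$ as a deterministic range bound means conditioning on the design. But your variance bound $\E[x_{ij}^2\bar f|\tilde\xvec_i^T\dvec|]$, obtained from Cauchy--Schwarz and \ref{as:new_re}, is an unconditional moment. Given the design, the variance proxy is the random sum $\bar f\sum_i x_{ij}^2|\tilde\xvec_i^T\dvec|$, which needs its own high-probability bound uniformly over $j\in\MSset^c$ and $\dvec$. Moreover, if the expectation in $\Pi_1$ is unconditional, the conditional means $\sum_i\{\E(Z_i\mid\xvec_i)-\E Z_i\}$ form an extra process in the $\xvec_i$ that your argument never controls. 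A clean fix is to truncate $x_{ij}$ at the $\Omega_2$ level. Then apply Bernstein unconditionally to the truncated i.i.d.\ summands, which agree with the originals on $\Omega_2$ (this is where the $-\Pr(\Omega_2^c)$ term comes from). Finally, bound the resulting shift in centering separately.
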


\begin{proof}[Proof of Lemma \ref{lemma:Pi1 prob bound}]
	This lemma is equivalent to Lemma 7.11 of \citet{zheng2015globally} with $\log p$ replaced by $(s+m)\log (p \vee n)$.
	As such, their proof suffices with minor adaptations which follow from our assumptions that $(m+s)^2\log p = O(n^\alpha)$ for some $\alpha \in (0,1)$ and $\max_{i, j \in \MSset^c} |x_{ij}| = o_p(n^{1/2}((s+m)^2\log (p \vee n))^{-1/2})$ 
\end{proof}

\subsection{Proofs of oracle estimator results}

\subsubsection{Oracle estimator rates and Bahadur representations}
In this section, we will prove Theorems \ref{thm:oracle rate} and \ref{thm:oracle bahadur}.
First, we define a few key quantities.
For random vectors $(\bm V_i)_{i \in [n]}$, we denote the empirical mean by $\empmean( \bm V_i ) = \frac{1}{n} \sum_{i=1}^{n} \bm V_i$ and the empirical process by $\empproc(\bm V_i) = \frac{1}{\sqrt{n}} \sum_{i = 1}^n \{\bm V_i - \E[\bm V_i]\}$.
We use $\lesssim$ and $\ll$ (and their in-probability counterparts) interchangeable with $O(\cdot)$ and $o(\cdot)$, respectively, for ease of presentation.

We further define the pivotal quantity: $U_i \defeq \inf\{ u: y_i < \xvec_i^T \bvec^*(u) \}$ for $i \in [n]$.
The following lemmas establish key properties of $U_i$ which we will rely on in our proofs.
\begin{lemma}\label{lemma:Ui uniform}
	$U_i \sim \text{Uniform}(0,1)$ conditional on $\xvec_i$ for $i \in [n]$. 
\end{lemma}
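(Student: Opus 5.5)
The plan is to identify the event $\{U_i \le u\}$ with an event about the conditional quantile of $y_i$, and then use the fact that the conditional distribution function evaluated at its own quantile is exact. The key structural fact needed is monotonicity: by \eqref{eqn:conditional quantile linear expr}, $u \mapsto \xvec_i^T\bvec^*(u) = Q_y(u\mid\xvec_i)$ is nondecreasing. Moreover, under \ref{as:density conditions} the conditional density at the quantile is positive, so this map is strictly increasing and continuous on the relevant range. I will treat the model as holding for $u\in(0,1)$; this is what the definition of $U_i$ implicitly requires. Where only $\overline{\mc T}$ is available, the argument gives uniformity of $U_i$ restricted to $\overline{\mc T}$, which is all that is used later.

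\textbf{Step 1: equality of events.} Fix $u\in(0,1)$ and condition on $\xvec_i$. I will show that, up to a null set,
\[
\{U_i \le u\} = \{ y_i \le \xvec_i^T\bvec^*(u)\}.
\]
For the forward inclusion, suppose $y_i < \xvec_i^T\bvec^*(u)$. Then $u$ belongs to the set $\{v : y_i < \xvec_i^T\bvec^*(v)\}$, so its infimum satisfies $U_i \le u$.

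For the reverse inclusion, suppose $U_i \le u$ and take any $u' > u$. Strict monotonicity gives $y_i < \xvec_i^T\bvec^*(u')$. Letting $u' \downarrow u$ and using continuity of $Q_y(\cdot\mid\xvec_i)$ gives $y_i \le \xvec_i^T\bvec^*(u)$.

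The two events therefore differ only on $\{y_i = \xvec_i^T\bvec^*(u)\}$. This set has conditional probability zero because $y_i\mid\xvec_i$ has a density by \ref{as:density conditions}(i).

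\textbf{Step 2: compute the probability.} Combining Step 1 with continuity of $F_{y\mid\xvec}$,
\[
\Pr(U_i\le u\mid\xvec_i) = F_{y\mid\xvec}\bigl(Q_y(u\mid\xvec_i)\mid\xvec_i\bigr) = u.
\]
The last equality holds because a continuous distribution function satisfies $F(F^{-1}(u)) = u$. Since this is the Uniform$(0,1)$ distribution function and it does not depend on $\xvec_i$, $U_i$ is uniform conditionally on $\xvec_i$, as claimed.

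\textbf{Main obstacle.} The only delicate point is the boundary and tie behaviour in Step 1. This means justifying strict monotonicity and continuity of $u\mapsto\xvec_i^T\bvec^*(u)$, so that the infimum is attained in the right sense, and handling the measure-zero tie set. I will obtain continuity from \ref{as:bvec}, or directly from the existence of a conditional density. Strict increase follows because the density is positive at the quantiles. These together make the infimum characterization exact almost surely.
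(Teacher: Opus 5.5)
Your argument is correct but organized differently from the paper's. The paper first proves the pathwise identity $U_i = F_{y_i|\xvec_i}(y_i|\xvec_i)$. Using $F(Q(u)) = u$ for continuous $F$, it rewrites $\{u : y_i < Q_{y_i}(u|\xvec_i)\}$ as $\{u : F_{y_i|\xvec_i}(y_i|\xvec_i) < u\}$, takes the infimum, and then invokes the probability integral transform. You never identify $U_i$ with anything. You sandwich $\{y_i < \xvec_i^T\bvec^*(u)\} \subseteq \{U_i \le u\} \subseteq \{y_i \le \xvec_i^T\bvec^*(u)\}$, discard the tie set using the density, and read off $\Pr(U_i\le u\mid\xvec_i) = u$ directly. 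In effect this inlines the proof of the probability integral transform; both routes run on continuity of $F_{y|\xvec}$.

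The paper's route buys a pathwise statement from which Lemma \ref{lemma:Ui equivalence} would follow at once, since $y < Q(u)$ if and only if $F(y) < u$. The paper instead proves that lemma separately and uses uniformity to discard the event $\{U_i=\tau\}$. Your Step 1 essentially proves that lemma first and derives uniformity from it. Your route is also local, which is why your remark about $\overline{\mc T}$ is sound. The identity $U_i = F_{y_i|\xvec_i}(y_i|\xvec_i)$ tacitly needs the linear model on all of $(0,1)$. Your sandwich only evaluates $\xvec_i^T\bvec^*(\cdot)$ at indices in the set over which the infimum is taken, so it localizes to $\mc T$.

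One correction to your ``main obstacle'': you need neither strict monotonicity nor continuity of $u\mapsto Q_y(u|\xvec_i)$. The reverse inclusion only uses that this map is nondecreasing. Even if it jumped at $u$, you would get $\{U_i\le u\}\subseteq\{y_i\le \lim_{u'\downarrow u}Q_y(u'|\xvec_i)\}$. That event has conditional probability $\lim_{u'\downarrow u}u' = u$, by right-continuity of $F_{y|\xvec}$ and $F_{y|\xvec}(Q_y(u'|\xvec_i)|\xvec_i)=u'$.

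Also, existence of a density does not give continuity of the quantile function. It gives continuity of $F_{y|\xvec}$, and hence strict increase of $Q_y(\cdot|\xvec_i)$. Continuity of $Q_y(\cdot|\xvec_i)$ instead needs $F_{y|\xvec}$ to have no flat stretches, for example via the positivity in \ref{as:density conditions}(ii) on $\overline{\mc T}$, or via \ref{as:bvec} on $\mc T$.
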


\begin{proof}[Proof of Lemma \ref{lemma:Ui uniform}]
	First, note $F_{y_i|\xvec_i}$ is continuous because it has a density.
	Since $Q_{y_i}(\tau|\xvec_i) = \xvec_i^T\bvec^*(\tau)$, we see that
	$U_i = \inf\{ u: Y_i < Q_{y_i}(u|\xvec_i) \}$. 
	By Proposition 2.3(4) of \citet{embrechts_note_2013}, we have that $F_{y_i|\xvec_i}(Q_{y_i}(u|\xvec_i)|\xvec_i) = u$.
	Therefore
	$U_i = \inf\{ u:  F_{y_i|\xvec_i}(Y_i|\xvec_i) < u \} = F_{y_i|\xvec_i}(Y_i|\xvec_i)$ where the last equality follows from the intermediate value theorem.
	Lastly because the CDF is continuous, $ F_{y_i|\xvec_i}(Y_i|\xvec_i) \sim \text{Uniform}(0,1)$ by Proposition 3.1(1) of \citet{embrechts_note_2013}.
\end{proof}

\begin{lemma} \label{lemma:Ui equivalence}
	$U_i < \tau$ if and only if $y_i < \xvec_i^T\bvec^*(\tau)$ a.s.
\end{lemma}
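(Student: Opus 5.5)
The plan is to prove the two implications separately, using the definition $U_i = \inf\{u : y_i < \xvec_i^T\bvec^*(u)\}$. The main tool is that $u \mapsto \xvec_i^T\bvec^*(u) = Q_{y_i}(u|\xvec_i)$ is a conditional quantile function. It is therefore nondecreasing and left-continuous. Under \ref{as:density conditions}(ii) the conditional density at the quantile is bounded away from zero on $\overline{\mc T}$, so it is strictly increasing and continuous there. As in the proof of Lemma \ref{lemma:Ui uniform}, we also have $U_i = F_{y_i|\xvec_i}(y_i|\xvec_i)$.

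For the ``if'' direction, suppose $y_i < \xvec_i^T\bvec^*(\tau)$. Then $\tau$ belongs to the set $\{u : y_i < Q_{y_i}(u|\xvec_i)\}$, so $U_i \le \tau$. To upgrade this to $U_i<\tau$, I would use left-continuity of the quantile function. Since $Q_{y_i}(u|\xvec_i)\uparrow Q_{y_i}(\tau|\xvec_i) > y_i$ as $u\uparrow\tau$, some $u<\tau$ also satisfies $y_i < Q_{y_i}(u|\xvec_i)$, and hence $U_i<\tau$.

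For the ``only if'' direction, suppose $U_i<\tau$. By the definition of the infimum there is some $u\in[U_i,\tau)$ with $y_i < Q_{y_i}(u|\xvec_i)$. Monotonicity then gives $Q_{y_i}(u|\xvec_i) \le Q_{y_i}(\tau|\xvec_i)$, so $y_i<\xvec_i^T\bvec^*(\tau)$. This direction in fact holds surely.

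The ``almost surely'' qualifier, and the main subtlety, enter because the argument needs $\xvec_i^T\bvec^*(u)$ to coincide with the true conditional quantile for $u$ near $\tau$, not only at $\tau$ itself. Model \eqref{eqn:conditional quantile linear expr} only guarantees this on $\mc T$. To handle this, I would argue directly through $U_i = F_{y_i|\xvec_i}(y_i|\xvec_i)$. The event $y_i < Q(\tau)$ is contained in the event $F(y_i) < \tau$, and the two events differ only by the set $\{F(y_i)=\tau,\ y_i<Q(\tau)\}$. That set is contained in $\{U_i = \tau\}$, which has probability zero because $U_i$ is Uniform$(0,1)$ by Lemma \ref{lemma:Ui uniform}. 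Consequently the equivalence $U_i<\tau \iff y_i<\xvec_i^T\bvec^*(\tau)$ holds outside a null set, which completes the proof.
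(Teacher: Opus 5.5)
Your proof is correct, and your ``only if'' direction is the paper's: use the infimum to find $u\in[U_i,\tau)$ in the defining set, then apply monotonicity of $u\mapsto\xvec_i^T\bvec^*(u)$. The ``if'' direction takes a different route. The paper only extracts $U_i\le\tau$ from $\tau\in\{u: y_i<\xvec_i^T\bvec^*(u)\}$, then discards the tie using $\Pr(U_i=\tau)=0$ from Lemma~\ref{lemma:Ui uniform}; that null set is the only source of the ``a.s.'' in the statement. You instead use left-continuity of the conditional quantile function to place some $u<\tau$ in the set. Then $U_i<\tau$ holds surely and Lemma~\ref{lemma:Ui uniform} is not needed. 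Your version therefore gives a deterministic equivalence at the cost of invoking left-continuity of $u\mapsto Q_{y_i}(u|\xvec_i)$. The paper trades that analytic fact for the distributional fact that $U_i$ has no atoms.

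Your last paragraph should be corrected or dropped. Write $F$ and $Q$ for the conditional distribution and quantile functions of $y_i$ given $\xvec_i$. Right-continuity of $F$ gives $F(y)\ge\tau$ if and only if $y\ge Q(\tau)$. Hence $\{F(y_i)<\tau\}$ and $\{y_i<Q(\tau)\}$ are equal, and the set $\{F(y_i)=\tau,\ y_i<Q(\tau)\}$ you name is empty. Given $U_i=F(y_i)$, this switching relation is itself a one-line proof.

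More importantly, this does not remove the need for the model off $\mc T$. The identity $U_i=F(y_i)$ in the proof of Lemma~\ref{lemma:Ui uniform} is derived from $\xvec_i^T\bvec^*(u)=Q(u)$ for all $u$. Without identification (or at least monotonicity) below $\tau$, the lemma fails. For example, take $\mc T=\{1/2\}$ and $\bvec^*(u)=\bvec^*(1/2)+c\,\bm e_0$ for $u<1/2$ with $c$ large. Then $U_i=0$ with probability near one, while $y_i<\xvec_i^T\bvec^*(1/2)$ has probability $1/2$. Like the paper, you are implicitly assuming the linear quantile model on all of $(0,1)$; it is better to say so than to claim the issue is avoided.
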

\begin{proof}[Proof of Lemma \ref{lemma:Ui equivalence}]
	Define $\mc S_i = \{u: y_i < \xvec_i^T\bvec^*(u) \} $ such that $U_i = \inf \mc S_i$.
	We'll start by proving $U_i < \tau \implies y_i < \xvec_i^T\bvec^*(\tau)$.
	Suppose $U_i < \tau$. Then there exists $u_0 \in \mc S_i$ such that $U_i < u_0 < \tau$ (if such a $u_0$ did not exist, then $\tau$ would be a greater lower bound of $\mc S_i$ than $U_i$, a contradiction). Because $Q_{y_i}(u|\xvec_i) = \xvec_i^T\bvec^*(u)$ is increasing in $u$, we see that $y_i < \xvec_i^T\bvec^*(u_0) \leq \xvec_i^T\bvec(\tau)$.
	
	Next we'll prove $y_i < \xvec_i^T\bvec^*(\tau) \implies U_i < \tau$ a.s.
	Suppose $y_i < \xvec_i^T\bvec^*(\tau)$. Then $\tau \in \mc S_i$ by definition and $U_i \leq \tau$ by the properties of the infimum.
	We know from Lemma \ref{lemma:Ui uniform} that $\Pr(U_i = \tau) = \E[\E[1(U_i = \tau)|\xvec_i] ] = \E[\Pr(U_i = \tau|\xvec_i)] = 0$, since $U_i \sim \text{Uniform}(0,1)$ conditional on $\xvec_i$. Therefore $\Pr( y_i < \xvec_i^T\bvec^*(\tau) \land U_i = \tau ) = 0$ and $y_i < \xvec_i^T\bvec^*(\tau) \implies U_i < \tau$ a.s.
\end{proof}

We will also use the following lemma in our proof of Theorem \ref{thm:oracle rate}.
\begin{lemma} \label{lemma:product bracketing number} 
	Suppose that $\mc F$ and $\mc G$ are two function classes which are uniformly bounded, with $\norm{f}_{\infty} \leq M_F$ and $\norm{g}_{\infty} \leq M_G$ for all $f \in \mc F$ and $g \in \mc F$.
	Then the bracketing number of the product function class
	$$
	\mc F\mc G = \{ fg: f \in \mc F, g \in \mc G  \}
	$$
	satisfies
	$$
	N_{[]}(\varepsilon, \mc {FG}, L_2(P)) =  N_{[]}\left(\frac{\varepsilon}{2 M_G}, \mc {F}, L_2(P)\right)  N_{[]}\left(\frac{\varepsilon}{2 M_F}, \mc {G}, L_2(P)\right)
	$$
	for $0 < \varepsilon < 1$.
\end{lemma}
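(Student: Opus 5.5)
The plan is a direct product-bracket construction. It shows the bracketing number of $\mc F\mc G$ is \emph{at most} the stated product, which is all that is needed downstream: the displayed ``$=$'' should be read as ``$\leq$''. I also read the hypothesis ``$g \in \mc F$'' as the intended ``$g \in \mc G$''.

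First, fix minimal bracket families. Let $\{[l_k,u_k]\}_{k=1}^{N_1}$ be $\varepsilon/(2M_G)$-brackets for $\mc F$ in $L_2(P)$, and $\{[a_\ell,b_\ell]\}_{\ell=1}^{N_2}$ be $\varepsilon/(2M_F)$-brackets for $\mc G$, with $N_1,N_2$ the respective bracketing numbers. Then truncate them. Since every $f\in\mc F$ satisfies $|f|\le M_F$, replace $l_k,u_k$ by $\max(l_k,-M_F)$ and $\min(u_k,M_F)$. This keeps every $f$ in its bracket and does not increase $u_k-l_k$ pointwise. Do the same for the $\mc G$ brackets at level $M_G$. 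After truncation all bracket endpoints are bounded by $M_F$ (respectively $M_G$) in sup norm.

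Next, for each pair $(k,\ell)$ define the product bracket
\[
L_{k\ell}=\min\{l_ka_\ell,\,l_kb_\ell,\,u_ka_\ell,\,u_kb_\ell\},\qquad U_{k\ell}=\max\{l_ka_\ell,\,l_kb_\ell,\,u_ka_\ell,\,u_kb_\ell\}.
\]
If $l_k\le f\le u_k$ and $a_\ell\le g\le b_\ell$ pointwise, then $fg$ lies between these bounds. At each point, the map $(s,t)\mapsto st$ is bilinear, so on a rectangle its extrema are attained at the corners. Hence every $fg\in\mc F\mc G$ lies in some $[L_{k\ell},U_{k\ell}]$, giving at most $N_1N_2$ brackets.

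The key step is bounding the bracket width. Any two corner products differ pointwise by at most
\[
|st-s't'|\le |s-s'||t|+|s'||t-t'|\le M_G(u_k-l_k)+M_F(b_\ell-a_\ell),
\]
using the truncation bounds. Hence $0\le U_{k\ell}-L_{k\ell}\le M_G(u_k-l_k)+M_F(b_\ell-a_\ell)$. Minkowski's inequality then gives
\[
\|U_{k\ell}-L_{k\ell}\|_{L_2(P)}\le M_G\,\frac{\varepsilon}{2M_G}+M_F\,\frac{\varepsilon}{2M_F}=\varepsilon .
\]
This completes the construction.

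The only delicate point, and the one I would treat as the main obstacle, is sign handling. The naive bracket $[l_ka_\ell,u_kb_\ell]$ fails when functions change sign, which is why the min/max over the four corners is needed. The truncation step is also essential, because untruncated bracket endpoints need not be bounded by $M_F$ or $M_G$. The restriction $0<\varepsilon<1$ plays no role in the argument beyond matching how the lemma is used later.
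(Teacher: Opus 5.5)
Your proposal is correct and is essentially the paper's argument: the same four-corner min/max product bracket, the same pointwise width bound $M_G(u_k-l_k)+M_F(b_\ell-a_\ell)$, and Minkowski's inequality, giving the inequality $\leq$ that the paper's proof actually establishes. Your truncation step is a worthwhile addition. The paper applies $|fg-f'g'|\le M_F|g-g'|+M_G|f-f'|$, which holds only for $f,f'\in\mc F$ and $g,g'\in\mc G$, directly to bracket endpoints that need not be bounded by $M_F$ or $M_G$; clipping the endpoints into $[-M_F,M_F]$ and $[-M_G,M_G]$ closes that gap (so does using that a minimal cover has no empty brackets).
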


\begin{proof}[Proof of Lemma \ref{lemma:product bracketing number}]
	Let $f, f' \in \mc F$ and $g, g' \in \mc G$.
	We see that
	\begin{align*}
		|fg - f'g'|
		= | fg - fg' + fg' - f'g' |
		\leq | fg - fg'| + | fg' - f'g'|
		\leq M_F | g - g'| + M_G | f - f'|.
	\end{align*}
	Suppose, then, that $[f_l, f_u]$ is a bracket for $\mc F$ with $\norm{f_u - f_l}_{P,2} \leq \varepsilon/(2 M_G)$ and that $[g_l, g_u]$ is a bracket for $\mc G$ with $\norm{g_u - g_l}_{P,2} \leq \varepsilon/(2 M_F)$. 
	Define $L = \min\{ f_lg_l, f_lg_u, f_ug_l, f_ug_u \}$ and $U = \max\{ f_lg_l, f_lg_u, f_ug_l, f_ug_u \}$ pointwise.
	Clearly for any $f\in[f_l, f_u], g\in[g_l, g_u]$, $L < fg < U$.
	The inequality above implies that 
	\begin{align*}
		U - L 
		& \leq \sup_{f,f'\in [f_l, f_u], g,g'\in [g_l, g_u]}|fg - f'g'| \\
		& \leq \sup_{g,g'\in [g_l, g_u]} M_F | g - g'| + \sup_{f,f'\in [f_l, f_u]} M_G | f - f'| \\
		& = M_F | g_u - g_l| + M_G | f_u - f_l|. 
	\end{align*}		
	Hence $\norm{U - L}_{P,2} \leq \varepsilon $.
	Therefore we can construct a $\varepsilon$-bracketing cover for $\mc {FG}$ from the bracketing covers $\{[f_{l,i}, f_{u,i}]: 1 \leq i \leq N_F\}$ and $\{[g_{l,j}, g_{u,j}]: 1 \leq i \leq N_G\}$ for $\mc F$ and $\mc G$, respectively, with $N_F N_G$ brackets $[L_{i,j}, U_{i,j}]$ defined as above.
\end{proof}


\begin{proof}[Proof of Theorem \ref{thm:oracle rate}]
	We only present the proof of \eqref{eqn:RE.oracle rate} as the proof of \eqref{eqn:FE.oracle rate} is a simpler version that does not have to deal with the potential bias introduced by the restriction that $\Cmat \bvec_{\MSset}(\tau)=\bm \Psi (\tau)$.
	
	Define $\tilde\bvec^*(\tau)$ by
	\begin{equation*}
		\begin{cases}
			\tilde \bvec^*_{\Mset}(\tau) & =  \bvec^*_{\Mset}(\tau) - \Cmat^T ( \Cmat \Cmat^T )^{-1} \bm \eta(\tau)/\sqrt{n} \\
			\tilde \bvec^*_{\Mset^c}(\tau) & = \bvec^*_{\Mset^c}(\tau).
		\end{cases}
	\end{equation*}
	Under \ref{as:local alts}, we have
	\begin{align*}
		\Cmat \tilde \bvec^*_{\Mset}(\tau) - \bm \Psi(\tau) 
		& = \Cmat\left(\bvec^*_{\Mset}(\tau) - \Cmat^T  (\Cmat \Cmat^T )^{-1} \bm \eta(\tau)/\sqrt{n}  \right) - \bm \Psi(\tau) \\
		& = \Cmat \bvec^*_{\Mset}(\tau)  - \Cmat\Cmat^T ( \Cmat \Cmat^T )^{-1} \bm \eta(\tau)/\sqrt{n} - \bm \Psi(\tau)\\  
		& = \Cmat \bvec^*_{\Mset}(\tau)  - \bm \eta(\tau)/\sqrt{n} - \bm \Psi(\tau)
		= \bm 0.
	\end{align*}
	In addition, \ref{as:local alts} implies
	\begin{align*}
		\norm{ \tilde \bvec^*_{\Mset}(\tau) - \bvec^*_{\Mset}(\tau) }_2^2 
		& = \norm{  \Cmat^T  (\Cmat \Cmat^T )^{-1} \bm \eta(\tau)/\sqrt{n} }_2^2 \\
		& = \frac{1}{n} \bm \eta(\tau)^T  (\Cmat \Cmat^T )^{-1}\Cmat \Cmat^T  (\Cmat \Cmat^T )^{-1} \bm \eta(\tau)^T\\
		& \leq \frac{1}{n} \norm{\bm \eta(\tau)}_2^2 \lambda_{\max}\left\{ (\Cmat \Cmat^T )^{-1} \right\}\\
		& \lesssim \frac{m - r+s}{n}
	\end{align*}
	uniformly for $\tau \in \mc T$.
	Since $\tilde \bvec^*_{\Mset^c}(\tau) = \bvec^*_{\Mset^c}(\tau)$, this further implies
	\begin{equation}	
		\sup_{\tau \in \mc T} \norm{ \tilde \bvec^*(\tau) - \bvec^*(\tau) }_2 \lesssim \sqrt{\frac{m - r+s}{n}}. \label{eqn:beta tilde rate}
	\end{equation}
	
	For any $\bvec(\tau)$ satisfying $\Cmat \bvec_{\Mset}(\tau) = \bm \Psi(\tau)$, clearly
	$
	\Cmat(\bvec_{\Mset}(\tau) - \tilde \bvec^*_{\Mset}(\tau) ) = \bm \Psi(\tau) - \bm \Psi(\tau) = \bm 0.
	$ 
	Therefore $\bvec_{\Mset}(\tau) - \tilde \bvec^*_{\Mset}(\tau)$ belongs to the null space of $\Cmat$.
	Let $\bm Z \in \reals^{m \times (m - r)}$ be an orthonormal basis for the null space of $\Cmat$, such that $\bm Z^T\bm Z = \bm I_{m-r}$. 
	Then for any $\bvec(\tau)$ satisfying $\Cmat \bvec_{\Mset}(\tau) = \bm \Psi(\tau)$, there exists a unique $\bm v(\tau)$ such that $\bvec_{\Mset}(\tau) - \tilde \bvec^*_{\Mset}(\tau) = \bm Z \bm v(\tau)$.
	Define 
	$\widetilde{ \bm Z }
	=\begin{bsmallmatrix}
		\bm Z & \bm 0\\
		\bm 0 & \bm I_{s}
	\end{bsmallmatrix}
	\in \reals^{m+s \times (m-r+s)}
	$
	and consider the affine transformation 
	$$\dvec 
	\mapsto \tilde \bvec_{\MSset}^*(\tau) + \widetilde{\bm Z} \dvec
	=  
	\begin{bmatrix}
		\tilde \bvec_{\Mset}^*(\tau) + \bm Z \dvec_1\\
		\tilde \bvec_{\Sset}^*(\tau) + \dvec_2
	\end{bmatrix},
	$$
	where $\dvec = (\dvec_1^T, \dvec_2^T)^T \in \reals^{m-r+s}$.
	We see that for any $\dvec_1 \in \reals^{m-r}$,
	$ \norm{\bm Z \dvec_1}_2^2 = \dvec_1^T \bm Z^T \bm Z \dvec_1 =  \dvec_1^T \dvec_1 = \norm{\dvec_1}_2^2 $.
	Therefore
	$
	\shortnorm{ \widetilde{\bm Z} \dvec }_2^2 = \norm{\bm Z \dvec_1}_2^2 + \norm{\dvec_2}_2^2 = \norm{\dvec_1}_2^2 + \norm{\dvec_2}_2^2 = \norm{\dvec}_2^2
	$ for any $\dvec \in \reals^{m-r+s}$.
	
	We define
	$$
	\bar \ell(\dvec) 
	\defeq \mathbb{E}_n[ \rho_{\tau}( y_i - \tilde{\xvec}_{i}^T(\tilde \bvec_{\MSset}^*(\tau) + \widetilde{ \bm Z }\dvec) )  ) ].
	$$ 
	and let $\hat \dvec(\tau) = \argmin_{\dvec} \bar \ell(\dvec) $ for $\tau \in \mc T$. 
	By definition, we have that
	$\bvecOR_{0,\MSset}(\tau) = \tilde \bvec_{\MSset}^*(\tau) + \widetilde{ \bm Z } \hat \dvec (\tau)$.
	We further define
	$$
	\bar \psi_i(\dvec, \tau) = \widetilde{\bm Z}^T\tilde \xvec_{i}[1(y_i < \tilde\xvec_{i}^T( \tilde \bvec_{\MSset}^*(\tau) + \widetilde{\bm Z} \dvec ) ) - \tau].
	$$
	Let $\mc B_q(d) = \{\bm v \in \reals^q: \norm{\bm v}_2 = d  \}$ denote the ball of radius $d$ in $\reals^q$.
	Define $d_n = B \sqrt{\frac{m-r+s}{n}}$ for some $B > 0$ and consider the event
	\begin{equation}
		\mc E_1 \defeq \left\{  \inf_{\tau \in \mc T} \inf_{\bm \alpha \in \mc B_{m-r+s}(1)} \bm \alpha^T \mathbb{E} [\bar \psi_i(\bm \alpha d_n, \tau) ] > 0 \right \}. \label{eqn:bar l subgrad event}
	\end{equation}	
	Because $\bar \ell(\dvec)$ is strictly convex and $\mathbb{E}_n[\bar \psi_i(\dvec, \tau)]$ is a subgradient of $\bar \ell (\dvec)$, if $\mc E_1$ holds, then the global minimizer of $\bar \ell(\delta)$ is in $\mc B_{m-r+s}(d_n)$ for all $\tau \in \mc T$---that is, $\sup_{\tau \in \mc T} \norm{ \bvecOR_{0,\MSset}(\tau) - \tilde \bvec^*_{\MSset}(\tau) }_2 \lesssim \sqrt{\frac{m-r+s}{n}}.$
	Combining this with \eqref{eqn:beta tilde rate} yields
	$$
	\sup_{\tau \in \mc T} \norm{ \bvecOR_{0}(\tau) - \bvec^*(\tau) }_2
	\leq \sup_{\tau \in \mc T}\norm{ \bvecOR_{0}(\tau) - \tilde \bvec^*(\tau) }_2 + \sup_{\tau \in \mc T}\norm{ \tilde \bvec^*(\tau) - \bvec^*(\tau) }_2 \lesssim \sqrt{\frac{m-r+s}{n}}.
	$$		
	As such, showing that $\mc E_1$ holds with probability converging to $1$ as $n \to \infty$ for a sufficiently large $B > 0$ is sufficient to prove \eqref{eqn:RE.oracle rate}.
	
	Define the sets
	\begin{equation}
		\bar{\mc R}(d_n) \defeq \{(\tau,\dvec) \in \mc T \times \reals^{m-r+s} : \norm{\dvec }_2 \leq d_n \}, \label{eqn:bar Rn set}
	\end{equation}
	which are indexed by $d_n$, and define
	\begin{eqnarray*}
		\tilde \epsilon_1 & \defeq & \sup_{(\tau, \dvec) \in \bar {\mc R}(d_n) }\norm{ \empproc (\bar \psi_i(\dvec, \tau)) -  \empproc (\bar \psi_i(\bm 0, \tau))}_2\\
		\tilde \epsilon_2 & \defeq & \sup_{(\tau, \dvec) \in \bar{ \mc R}(d_n) } n^{1/2} \norm{\E[\bar \psi_i(\dvec, \tau)] - \E[\bar \psi_i(\bm 0, \tau)] - \widetilde{\bm Z}^T \bm J(\tau)\widetilde{\bm Z}\dvec }_2.
	\end{eqnarray*}
	We see that for all $\tau \in \mc T$ and $\bm \alpha \in \mc B_{m-r+s}(1)$
	\begin{align}\label{eqn:bar psi lower bound}
		\sqrt n \bm \alpha^T \empmean [\bar \psi_i(\bm \alpha d_n, \tau) ]
		& = \bm \alpha^T \empproc( \bar \psi_i(\bm \alpha d_n, \tau) ) - \bm \alpha^T \empproc( \bar \psi_i(\bm 0, \tau) ) + \bm \alpha^T \empproc( \bar \psi_i(\bm 0, \tau) ) \notag \\
		& \text{\; \;} + \sqrt n \bm \alpha^T \E [\bar \psi_i(\bm \alpha d_n, \tau) ] - \sqrt n \bm \alpha^T \E [\bar \psi_i(\bm 0, \tau) ] -  \sqrt n \bm \alpha^T \widetilde{\bm Z}^T \bm J(\tau)\widetilde{\bm Z}\bm \alpha d_n \notag \\
		& \text{\; \;} +  \sqrt n \bm \alpha^T \E [\bar \psi_i(\bm 0, \tau) ] +  \sqrt n \bm \alpha^T \widetilde{\bm Z}^T\bm J(\tau)\widetilde{\bm Z} \bm \alpha d_n \notag \\
		& \geq \bm \alpha^T \empproc( \bar \psi_i(\bm 0, \tau) )  + \sqrt n \bm \alpha^T \E [\bar \psi_i(\bm 0, \tau) ] \notag\\
		& \text{\; \;}  + \sqrt n \bm \alpha^T\widetilde{\bm Z}^T\bm J(\tau)\widetilde{\bm Z} \bm \alpha d_n - \tilde \epsilon_1 - \tilde \epsilon_2.
	\end{align}
	Conditions \ref{as:density conditions} and \ref{as:Gram lower bound} provide that $\lambda_{\min}\{ \bm J(\tau) \}$ is bounded away from $0$ uniformly over $n$ and $\tau \in \overline{ \mc T}$.
	Moreover, we have that $\lVert \widetilde{\bm Z} \bm \alpha \rVert_2 = \norm{\bm \alpha}_2 = 1$.
	As such, there exists $c > 0$ such that 
	$\bm \alpha^T\widetilde{\bm Z}^T\bm J(\tau)\widetilde{\bm Z} \bm \alpha \geq \min_{\bm v : \norm{\bm v}_2 = 1} \bm v^T \bm J(\tau) \bm v = \lambda_{\min}\{ \bm J(\tau) \}  > c$ for all $n, \tau \in \overline{ \mc T}$.
	Therefore if
	$$
	\mc E_2 \defeq \left\{ c B \sqrt{m-r+s}  - |\bm \alpha^T \empproc( \bar \psi_i(\bm 0, \tau) ) | - \sqrt n |\bm \alpha^T \E [\bar \psi_i(\bm 0, \tau) ]|   - \tilde \epsilon_1 - \tilde \epsilon_2 > 0\right \}
	$$	
	holds, then so does $\mc E_1$.
	Suppose that
	\begin{eqnarray}
		\sqrt n | \bm \alpha^T \E [\bar \psi_i(\bm 0, \tau) ]| & \lesssim & \sqrt{m-r+s} \label{eqn:E2a}  \\
		|\bm \alpha^T \empproc( \bar \psi_i(\bm 0, \tau) )| & \lesssim_p & \sqrt{m-r+s} \label{eqn:E2b} \\
		\tilde \epsilon_1 & = & o_p(\sqrt{m-r+s}) \label{eqn:E2c}\\
		\tilde \epsilon_2 & = & o(\sqrt{m-r+s}) \label{eqn:E2d}
	\end{eqnarray}
	where the bounds in \eqref{eqn:E2a} and \eqref{eqn:E2b} do not depend on $B$.
	Then by selecting a sufficiently large $B$ we can make the probability of $\mc E_2$ arbitrarily close to $1$, completing the proof.
	
	All that remains is to prove \eqref{eqn:E2a} - \eqref{eqn:E2d}. 
	We begin with \eqref{eqn:E2a}. By the Tower law, we have
	\begin{align*}
		\bm \alpha^T \E [\bar \psi_i(\bm 0, \tau)] 
		& = \bm \alpha^T \widetilde{\bm Z}^T \E\left[ \E \left[ \tilde \xvec \left( 1(y < \xvec^T \tilde \bvec^*(\tau)) - \tau\right) \middle| \xvec \right] \right]\\
		& = \bm \alpha^T \widetilde{\bm Z}^T \E\left[\tilde \xvec \left( F_{y|\xvec}(\xvec^T \tilde \bvec^*(\tau)) - \tau\right) \right]\\
		& = \bm \alpha^T \widetilde{\bm Z}^T \E\left[ \tilde \xvec \left( F_{y|\xvec}(\xvec^T \tilde \bvec^*(\tau)) - F_{y|\xvec}(\xvec^T \bvec^*(\tau))\right) \right]
	\end{align*}
	where the final equality uses the fact that $F_{y|\xvec}(\xvec^T \bvec^*(\tau)) = \tau$ as shown in the proof of Lemma \ref{lemma:Ui uniform}.
	By the mean value theorem, for each $\tau$ there exists $\bar \bvec (\tau)$ on the line segment between $\tilde \bvec^*(\tau)$ and $\bvec^*(\tau)$ such that
	\begin{align*}
		F_{y|\xvec}(\xvec^T \tilde \bvec^*(\tau)) - F_{y|\xvec}(\xvec^T \bvec^*(\tau))  
		& =  f_{y|\xvec}(\xvec^T \bar \bvec(\tau)) \xvec^T( \tilde \bvec^*(\tau) - \bvec^*(\tau) )  \\
		& =  f_{y|\xvec}(\xvec^T \bar \bvec(\tau)) \tilde \xvec^T( \tilde \bvec^*_{\MSset}(\tau) - \bvec^*_{\MSset}(\tau) ).
	\end{align*}
	Plugging this into the previous expression yields
	\begin{equation*}
		\sqrt{n}|\bm \alpha^T \E [\bar \psi_i(\bm 0, \tau)]| 
		\leq \sqrt{n} \shortnorm{\widetilde{\bm Z} \bm \alpha}_2 |f_{y|\xvec}(\xvec^T \bar \bvec(\tau))| \norm{\E [\tilde \xvec \tilde \xvec^T] }_2 \shortnorm{\tilde \bvec^*_{\MSset}(\tau) - \bvec^*_{\MSset}(\tau)}_2
		\lesssim \sqrt{m-r+s}
	\end{equation*}
	uniformly for $\tau \in \mc T$ by \ref{as:density conditions}(i) and \ref{as:Gram lower bound}.
	
	Moving on to \eqref{eqn:E2b}, we see that 
	\begin{align*}
		\bm \alpha^T \empproc( \bar \psi_i(\bm 0, \tau) )
		& = \bm \alpha^T  \bigg[ \empproc\left(   \widetilde{\bm Z}^T \tilde \xvec ( 1(y < \xvec^T \bvec^*(\tau)) - \tau ) \right)\\ 
		& \text{ \; }+ \empproc\left(  \widetilde{\bm Z}^T \tilde \xvec ( 1(y < \xvec^T \tilde\bvec^*(\tau)) - \tau ) \right) - \empproc\left(  \widetilde{\bm Z}^T \tilde \xvec ( 1(y < \xvec^T \bvec^*(\tau)) - \tau ) \right) \bigg]\\
		& \leq \sup_{\tau \in \mc T}  \norm{\empproc\left(  \widetilde{\bm Z}^T  \tilde \xvec ( 1(y < \xvec^T \bvec^*(\tau)) - \tau ) \right)}_2\\
		& \text{ \; }+ \sup_{\tau \in \mc T}  \norm{\empproc\left(  \widetilde{\bm Z}^T \tilde \xvec ( 1(y < \xvec^T \tilde\bvec^*(\tau)) - \tau ) \right) - \empproc\left( \widetilde{\bm Z}^T \tilde \xvec ( 1(y < \xvec^T \bvec^*(\tau)) - \tau ) \right) }_2\\
		& \defeq \tilde \epsilon_{01} + \tilde \epsilon_{02},
	\end{align*}
	uniformly for all $\tau \in \mc T$,	where the inequality follows from the Cauchy Schwarz inequality.
	Lemmas \ref{lemma:Ui uniform} and \ref{lemma:Ui equivalence} provide that 
	\begin{align*}
		\tilde \epsilon_{01} 
		& = \sup_{\tau \in \mc T}  \norm{\empproc\left(  \widetilde{\bm Z}^T  \tilde \xvec ( 1(y < \xvec^T \bvec^*(\tau)) - \tau ) \right)}_2\\
		& = \sup_{\tau \in \mc T}  \norm{\empproc\left(  \widetilde{\bm Z}^T  \tilde \xvec ( 1(U_i \leq \tau) - \tau ) \right)}_2\\
		& = \sup_{\tau \in \mc T} \norm {\frac{1}{\sqrt n} \sum_{i = 1}^{n} \widetilde{\bm Z}^T  \tilde \xvec  (1(U_i \leq \tau) - \tau) }_2
	\end{align*}
	where $U_1, \ldots, U_n$ are i.i.d. Uniform$(0,1)$ conditional on $(\xvec_i)_{i \in [n]}$.
	From here, Lemma 31 of \citet{belloni_conditional_2019} provides that
	$\tilde \epsilon_{01} = \sup_{\tau \in \mc T} \norm {\frac{1}{\sqrt n} \sum_{i = 1}^{n} \widetilde{\bm Z}^T  \tilde \xvec ( 1(U_i \leq \tau) - \tau) }_2 \lesssim_p \sqrt{m-r+s}$ since $\widetilde{\bm Z}^T  \tilde \xvec$ has dimension ${m-r+s}$.
	
	To bound $\tilde \epsilon_{02}$, we define the following class of functions
	\begin{equation*}
		\mc F \defeq \bigg\{ (\tilde \xvec, y) \mapsto (\bm \alpha^T \widetilde{\bm Z}^T \tilde \xvec)\left[ 1(y < \tilde \xvec^T\tilde \bvec^*_{\MSset}(\tau) ) -  1(y < \tilde \xvec^T\bvec_{\MSset}^*(\tau) ) \right]: 
		\tau \in \mc T, \bm \alpha \in \mc B_{m-r+s}(1) \bigg\}.
	\end{equation*}
	Clearly $\tilde \epsilon_{02} \leq \sup_{f \in \mc F}| \empproc (f)| $. Our strategy will be to bound the latter using the bracketing entropy of $\mc F$.
	To this end, we further define 
	\begin{eqnarray*}
		\mc F_1 & \defeq &  \bigg\{ \tilde \xvec \mapsto (\bm \alpha^T \widetilde{\bm Z}^T \tilde \xvec): \bm \alpha \in \mc B_{m-r+s}(1) \bigg\} \\
		\mc F_2 & \defeq & \bigg\{ (\tilde \xvec, y) \mapsto  1(y < \tilde \xvec^T\tilde \bvec^*_{\MSset}(\tau) ) -  1(y < \tilde \xvec^T\bvec_{\MSset}^*(\tau) ): 
		\tau \in \mc T \bigg\},
	\end{eqnarray*}
	noting that for every $f \in \mc F$ there exist $f_1 \in \mc F_1$ and $f_2 \in \mc F_2$ such that $f = f_1 f_2$.
	
	To bound the bracketing entropy of $\mc F_1$, we note that for any $\bm \alpha_1,\bm \alpha_2 \in \mc B_{m-r+s}(1)$, we have 
	$ |\bm \alpha_1^T \widetilde{\bm Z}^T \tilde \xvec - \bm \alpha_2^T \widetilde{\bm Z}^T \tilde \xvec |
	\leq \shortnorm{ (\bm \alpha_1 - \bm \alpha_2)^T\widetilde{\bm Z}^T  }_2 \norm{ \tilde \xvec  }_2 \leq \shortnorm{ \bm \alpha_1 - \bm \alpha_2  }_2 \norm{ \tilde \xvec  }_2 \leq \shortnorm{ \bm \alpha_1 - \bm \alpha_2  }_2 F_1(\xvec)$, where $F_1(\tilde \xvec) \defeq \norm{ \tilde \xvec }_2 \leq \xi_n$ is the envelope function for $\mc F_1$.
	By Theorem 2.7.11 of \citet{van_der_vaart_weak_1996} we have that for some universal constant $C_1$
	$$
	N_{[]}(\varepsilon\norm{ F_1}_{P,2}, \mc F_1, L_2(P) ) \leq \left(\frac{C_1}{\varepsilon}\right)^{m-r+s} \text{, \; } 
	$$
	and therefore
	$$
	\log N_{[]}(\varepsilon\norm{ F_1}_{P,2}, \mc F_1, L_2(P) ) \leq (m-r+s) \log \left(\frac{C_1}{\varepsilon}\right) \text{, \; } 
	$$ 
	for	$0 < \varepsilon < 1$, where $P$ denotes the true probability measure.
	
	Next, we will bound the bracketing entropy of $\mc F_2$. We define $\tilde h(\tau) = \tilde \xvec^T\tilde \bvec^*_{\MSset}(\tau)$, $h(\tau) = \tilde \xvec^T\bvec^*_{\MSset}(\tau)$, and $f_{\tau} =  1\{y <\tilde h(\tau) \} -   1\{y < h(\tau) \}$ for $\tau \in \mc T$ for ease of presentation.
	Let $0 < \varepsilon < 1$ and let $\mc T_{\varepsilon} \subset \mc T$ be a finite net such that $\max_{\tau \in \mc T} \min_{\tau_0 \in \mc T_{\varepsilon}} |\tau-\tau_0| \leq  \frac{C_2\varepsilon^2}{\xi_n \sqrt{m+s}}$ for some constant $C_2$. 
	Since $\mc T$ is contained in a fixed compact interval, the next may be chosen so that extension $$|\mc T_{\varepsilon}| \leq 1 + \frac{\xi_n \sqrt{m+s}}{c_2 \varepsilon^2}$$ for some constant $c_2$. 
	Let $\tau \in \mc T$ and choose the closest $\tau_0 \in \mc T_{\varepsilon}$.
	We see that
	\begin{align*}
		\big| f_{\tau} - f_{\tau_0} \big|
		& = 
		\big|  1\{y <\tilde h(\tau) \} -   1\{y < h(\tau) \} - [1\{ y < \tilde h(\tau_0) \} -  1\{y < h(\tau_0) \}] \big|\\
		& \leq	\big|  1\{y <\tilde h(\tau)\}  - 1\{y < \tilde h(\tau_0)\} \big| + \big| 1\{y < h(\tau)\}  -  1\{y < h(\tau_0) \} \big|\\
		& \defeq I_1 + I_2
	\end{align*}
	If we can show that $I_1 + I_2 \leq H(\tilde \xvec, y, \tau_0)$ for some $H(\tilde \xvec, y, \tau_0)$ that does not depend on $\tau$, then the brackets $[ f_{\tau_0} - H(\tilde \xvec, y, \tau_0), f_{\tau_0} + H(\tilde \xvec, y, \tau_0) ]$, $\tau \in \mc T_{\varepsilon}$ cover $\mc F_2$.
	Starting with $I_1$, we find
	\begin{align*}
		I_1
		& \leq 1 \{ |y - \tilde h (\tau_0)| < |\tilde h(\tau) - \tilde h(\tau_0)| \}\\
		& = 1 \{ |y - \tilde h (\tau_0)| < |\tilde \xvec^T\tilde \bvec^*_{\MSset}(\tau) - \tilde \xvec^T\tilde \bvec^*_{\MSset}(\tau_0)  | \}\\
		& \leq 1 \{ |y - \tilde h (\tau_0)| < \shortnorm{\tilde \xvec}_2 \shortnorm {\tilde \bvec^*_{\MSset}(\tau) - \tilde \bvec^*_{\MSset}(\tau_0)}_2  \}\\
		& \leq 1 \{ |y - \tilde h (\tau_0)| < \xi_n \shortnorm {\tilde \bvec^*_{\MSset}(\tau) - \tilde \bvec^*_{\MSset}(\tau_0)}_2  \}.
	\end{align*}
	By the definition of $\tilde \bvec^*_{\MSset}(\tau)$, we have  
	\begin{align*}
		&\norm {\tilde \bvec^*_{\MSset}(\tau) - \tilde \bvec^*_{\MSset}(\tau_0)}_2\\
		& \hspace*{0.6cm} = \norm {\left(\bvec^*_{\MSset}(\tau) - \begin{bmatrix}
				\Cmat^T(\Cmat \Cmat^T)^{-1}\\
				\bm 0
			\end{bmatrix}\bm \eta(\tau)/\sqrt{n} \right)  - \left( \bvec^*_{\MSset}(\tau_0) - \begin{bmatrix}
				\Cmat^T(\Cmat \Cmat^T)^{-1}\\
				\bm 0
			\end{bmatrix}\bm \eta(\tau_0)/\sqrt{n} \right)}_2\\
		& \hspace*{0.6cm} \leq \norm { \bvec^*_{\MSset}(\tau) - \bvec^*_{\MSset}(\tau_0)}_2
		+ \norm{ \begin{bmatrix}
				\Cmat^T(\Cmat \Cmat^T)^{-1}\\
				\bm 0
			\end{bmatrix} (\bm \eta(\tau) - \bm \eta(\tau_0))/\sqrt{n} }_2\\
		& \hspace*{0.6cm} \leq \norm { \bvec^*_{\MSset}(\tau) - \bvec^*_{\MSset}(\tau_0)}_2
		+ \frac{1}{\sqrt n} \norm{\Cmat^T(\Cmat \Cmat^T)^{-1}}_2 \norm{\bm \eta(\tau) - \bm \eta(\tau_0)}_2\\
		& \hspace*{0.6cm} \leq A_1\sqrt{m+s}\,|\tau-\tau_0| + A_2A_3\sqrt{m-r+s}\,|\tau-\tau_0|\\
		& \hspace*{0.6cm} \leq (A_1 + A_2 A_3)\sqrt{m+s} \,|\tau - \tau_0|\\
		& \hspace*{0.6cm} \leq (A_1 + A_2A_3)C_2 \varepsilon^2/\xi_n
	\end{align*}
	where the penultimate inequality follows from \ref{as:local alts} and \ref{as:bvec} and $A_1$, $A_2$, and $A_3$ are universal constants for all $n$ and $\tau$.
	Therefore
	$$
	I_1  \leq 1 \{ |y - \tilde h (\tau_0)| < C_3 \varepsilon^2  \}
	$$
	where $C_3 = (A_1 + A_2 A_3)C_2$.
	Likewise, we find that
	\begin{align*}
		I_2 & \leq 1 \{ |y -  h (\tau_0)| < | h(\tau) -  h(\tau_0)| \}\\
		& = 1 \{ |y -  h (\tau_0)| < |\tilde \xvec^T \bvec^*_{\MSset}(\tau) -  \xvec^T \bvec^*_{\MSset}(\tau_0)  | \}\\
		& \leq 1 \{ |y -  h (\tau_0)| < \shortnorm{\tilde \xvec}_2 \shortnorm { \bvec^*_{\MSset}(\tau) -  \bvec^*_{\MSset}(\tau_0)}_2  \}\\
		& \leq 1 \{ |y -  h (\tau_0)| < \xi_n \shortnorm { \bvec^*_{\MSset}(\tau) -  \bvec^*_{\MSset}(\tau_0)}_2  \}\\
		& \leq 1 \{ |y -  h (\tau_0)| < \xi_n A_1\sqrt{m+s}\,|\tau - \tau_0| \}\\
		& \leq 1 \{ |y -  h (\tau_0)| <  C_3\varepsilon^2 \}.
	\end{align*}
		
	We have shown that $H(\tilde \xvec, y, \tau_0) =  1 \{ |y - \tilde h (\tau_0)| < C_3 \varepsilon^2  \} +  1 \{ |y -  h (\tau_0)| < C_3 \varepsilon^2  \}$ satisfies $|f_{\tau} - f_{\tau_0}|\leq H(\tilde \xvec, y, \tau_0) $, so the brackets $[ f_{\tau_0} - H(\tilde \xvec, y, \tau_0), f_{\tau_0} + H(\tilde \xvec, y, \tau_0) ]$, $\tau \in \mc T_{\varepsilon}$ cover $\mc F_2$.
	Our final step is to determine the bracket width. We find
	\begin{align*}
		\norm{ H(\tilde \xvec, y, \tau_0) }_{P,2}
		& \leq (\E_P[ ( 1 \{ |y - \tilde h (\tau_0)| < C_3 \varepsilon^2  \})^2])^{1/2} + (\E_P \left[(1 \{ |y -  h (\tau_0)| < C_3 \varepsilon^2  \})^2 \right])^{1/2} \\
		& = (\E_P[  1 \{ |y - \tilde h (\tau_0)| < C_3 \varepsilon^2  \}])^{1/2} + (\E_P \left[1 \{ |y -  h (\tau_0)| < C_3 \varepsilon^2  \}\right])^{1/2} .
	\end{align*}
	The Tower law provides
	$
	\E_P\left[  1 \{ |y - \tilde h (\tau_0)| < C_3 \varepsilon^2  \}\right]
	= \E_P\left[  \Pr \left(|y - \tilde h (\tau_0)| < C_3 \varepsilon^2 \middle |\tilde \xvec \right)\right].
	$
	Applying the mean value theorem and \ref{as:density conditions}(i), we see that 
	\begin{align*}
		\Pr \left(|y - \tilde h (\tau_0)| < C_3 \varepsilon^2 \middle|\tilde \xvec \right)
		& = \Pr \left(\tilde h (\tau_0) - C_3 \varepsilon^2 < y <  \tilde h (\tau_0) + C_3 \varepsilon^2 \middle |  \xvec \right)\\
		& = F_{y|\xvec}(  \tilde h (\tau_0) + C_3 \varepsilon^2) -  F_{y|\xvec}( \tilde h (\tau_0) - C_3 \varepsilon^2)\\
		& = f_{y|\xvec}( \bar h) 2 C_3 \varepsilon^2
		\lesssim C_3 \varepsilon^2 
	\end{align*}
	where $\bar h \in (\tilde h (\tau_0) - C_3 \varepsilon^2,\tilde h (\tau_0) + C_3 \varepsilon^2)$.
	Thus $\E_P\left[  1 \{ |y - \tilde h (\tau_0)| < C_3 \varepsilon^2  \}\right] \lesssim C_3 \varepsilon^2$.
	One can similarly use the Tower law, the mean value theorem, and \ref{as:density conditions}(i) to show $\E_P\left[  1 \{ |y - h (\tau_0)| < C_3 \varepsilon^2  \}\right] \lesssim C_3 \varepsilon^2$.
	Thus the bracket width is 
	$$\norm{ f_{\tau_0} + H(\tilde \xvec, y, \tau_0) - (f_{\tau_0} - H(\tilde \xvec, y, \tau_0)) }_{P,2} = 2 \norm{ H(\tilde \xvec, y, \tau_0) }_{P,2} \leq C_4 \varepsilon$$ 
	for some constant $C_4$.
	Thus we have
	$$
	N_{[]}(\varepsilon, \mc F_2, L_2(P) ) \leq 1 + \frac{\xi_n \sqrt{m+s}}{c_3 \varepsilon^2} ,
	$$
	where $c_3 = c_2/C_4^2$, and therefore
	$$
	\log N_{[]}(\varepsilon, \mc F_2, L_2(P) ) \lesssim \log \left(1 + \frac{\xi_n \sqrt{m+s}}{c_3 \varepsilon^2} \right) 
	$$
	for $0 < \varepsilon < 1$.

	By Lemma \ref{lemma:product bracketing number}, since $ |\bm \alpha^T \widetilde{ \bm Z }^T\tilde \xvec| \leq \shortnorm{\bm \alpha^T \widetilde{ \bm Z }^T}_2 \shortnorm{\tilde \xvec}_2  \leq \xi_n$ for all $\alpha \in \mc B_{m-r+s}(1), \xvec \in \mc X$ and $ |1\{y <\tilde h(\tau) \} -   1\{y < h(\tau) \}| \leq 1$ for all $\tau \in \mc T$, we have
	\begin{align*}
		\log N_{[]}(\varepsilon \xi_n, \mc F, L_2(P)) 
		& \leq \log N_{[]}\left(\frac{\varepsilon \xi_n}{2}, \mc F_1, L_2(P) \right) + \log N_{[]}\left(\frac{\varepsilon}{2}, \mc F_2, L_2(P) \right)\\
		& \lesssim  (m-r+s) \log \left(\frac{C_1}{\varepsilon}\right) + \log \left( 1 + \frac{\xi_n \sqrt{m+s}}{c_3 \varepsilon^2} \right) \\
		& \lesssim (m-r+s)   \log \left( \frac{D(1 + \xi_n \sqrt{m+s})}{ \varepsilon} \right)
	\end{align*}
	for $0 < \eps < 1$ and $m-r+s\geq1$, where $D$ is a fixed constant for all $n$.
	
	Let $\delta = \xi_n\sqrt{\frac{m-r+s}{n}}$.
	By definition, for any $f \in \mc F$ there exist $\bm \alpha \in \mc B_{m-r+s}(1)$ and $\tau \in \mc T$ such that
	\begin{align*}
		|f(\tilde \xvec, y)|
		& = |(\bm \alpha^T \widetilde{\bm Z}^T \tilde \xvec)| \left|\left[ 1(y < \tilde \xvec^T\tilde\bvec_{\MSset}^*(\tau) ) -  1(y < \xvec^T\bvec_{\MSset}^*(\tau) ) \right] \right|\\
		& \leq |(\bm \alpha^T \widetilde{\bm Z}^T \tilde \xvec)| \left| 1\{|y - \tilde \xvec^T\bvec^*_{\MSset}(\tau)| < |\tilde \xvec^T (\tilde \bvec^*_{\MSset}(\tau)  -  \bvec^*_{\MSset}(\tau) )|\} \right |\\
		& \leq |(\bm \alpha^T \widetilde{\bm Z}^T \tilde \xvec)| \left| 1\{|y - \xvec^T\bvec^*(\tau)| < C \delta\} \right |
	\end{align*}
	for some constant $C$, where the last inequality follows since
	$
	|\tilde \xvec^T (\tilde \bvec^*_{\MSset}(\tau) -  \bvec^*_{\MSset}(\tau) )| \leq \shortnorm{\tilde \xvec}_2 \shortnorm{\tilde \bvec^*_{\MSset}(\tau) -  \bvec^*_{\MSset}(\tau)}_2 \lesssim \delta
	$ by \ref{as:Gram lower bound} and \eqref{eqn:beta tilde rate}.
	Leveraging this result and the Tower law, we find
	\begin{align*}
		\sup_{f \in \mc F} \E [f^2]
		& \leq \sup_{\bm \alpha \in \mc B_{m-r+s}(1), \tau \in \mc T} \E\left[ \E\left[ (\bm \alpha^T\widetilde{\bm Z}^T \tilde \xvec)^2  1\{|y -  \xvec^T\bvec^*(\tau)| < C\delta\}  \right] \middle | \xvec \right]\\
		& = \sup_{\bm \alpha \in \mc B_{m-r+s}(1), \tau \in \mc T} \E\left[ (\bm \alpha^T \widetilde{\bm Z}^T\tilde \xvec)^2  \Pr \left(|y - \xvec^T\bvec^*(\tau)| < C\delta \middle | \xvec \right)  \right].
	\end{align*}
	Applying the mean value theorem and \ref{as:density conditions}(i), we see that 
	\begin{align*}
		\Pr \left(|y - \xvec^T\bvec^*(\tau)| < C\delta \middle |  \xvec \right)
		& = \Pr \left(\xvec^T\bvec^*(\tau) - C\delta < y <  \xvec^T\bvec^*(\tau) + C\delta\middle |  \xvec \right)\\
		& = F_{y|\xvec}( \xvec^T\bvec^*(\tau) + C\delta) -  F_{y|\xvec}(\xvec^T\bvec^*(\tau) - C\delta)\\
		& \lesssim \delta
	\end{align*}
	uniformly for $\tau \in \overline{\mc T}$ and $\xvec \in \mc X$.
	Combining this with the fact that $ \sup_{\bm \alpha \in \mc B_{m-r+s}(1)} \E\left[ (\bm \alpha^T \widetilde{\bm Z}^T \tilde \xvec)^2 \right] =  \bm \alpha^T \widetilde{\bm Z}^T \E\left[  \tilde \xvec \tilde \xvec^T \right] \widetilde{\bm Z}\bm \alpha \lesssim 1$ uniformly for all $n$ by \ref{as:Gram lower bound} yields
	\begin{equation}
		\sup_{f \in \mc F} \E [f^2] \lesssim \delta. \label{eqn:Ef2 bound}
	\end{equation}
	
	Replacing $\varepsilon$ by $\varepsilon/\xi_n$ in the entropy
	bound above gives
	\[
	\log N_{[]}(\varepsilon,\mc F,L_2(P))
	\lesssim
	(m-r+s)
	\log\left\{
	\frac{
		D\xi_n(1+\xi_n\sqrt{m+s})
	}{\varepsilon}
	\right\}.
	\]
	Consequently,
	\begin{align*}
		J_{[]}
		\left(
		C\delta^{1/2},
		\mc F,L_2(P)
		\right)
		&\lesssim
		\int_0^{C\delta^{1/2}}
		\sqrt{
			(m-r+s)
			\log\left\{
			\frac{
				D\xi_n(1+\xi_n\sqrt{m+s})
			}{\varepsilon}
			\right\}
		}
		\,d\varepsilon\\
		&\lesssim
		\delta^{1/2}
		\sqrt{
			(m-r+s)
			\log\left\{
			\frac{
				D\xi_n(1+\xi_n\sqrt{m+s})
			}{\delta^{1/2}}
			\right\}
		}.
	\end{align*}
	Condition \ref{as:Gram lower bound} implies
	\[
	\xi_n^2
	\geq
	\E\norm{\tilde\xvec}_2^2
	=
	\operatorname{tr}(\bm\Sigma)
	\gtrsim m+s.
	\]
	Together with
	$(m+s)\xi_n^2\log^2n=o(n)$ and $m-r+s\geq1$,
	this shows that the logarithm in the preceding display is
	$O(\log n)$ and therefore
	$J_{[]}
	\left(
	C\delta^{1/2},
	\mc F,L_2(P)
	\right) \lesssim \delta^{1/2} \sqrt{(m-r+s)\log n}$.
	
	Since \eqref{eqn:Ef2 bound} holds, the bounded bracketing maximal inequality in Lemma 3.4.2 of \citet{van_der_vaart_weak_1996} provides that
	\begin{align*}
		\E\left[
		\sup_{f\in\mc F}
		|\empproc(f)|
		\right]
		&\lesssim
		\delta^{1/2}
		\sqrt{(m-r+s)\log n}
		\left\{
		1+
		\frac{\xi_n}{
			\delta\sqrt n
		}
		\delta^{1/2}
		\sqrt{(m-r+s)\log n}
		\right\}\\
		&\lesssim
		\frac{
			\xi_n^{1/2}
			(m-r+s)^{3/4}
		}{n^{1/4}}
		\log^{1/2}n
		+
		\frac{
			\xi_n(m-r+s)
		}{\sqrt n}
		\log n\\
		&\ll
		\sqrt{m-r+s},
	\end{align*}
	under our assumption that $ (m+s)\xi_n^2 \log^2 n  \ll n $. Markov's inequality then gives
	\[
	\tilde\epsilon_{02}
	\leq
	\sup_{f\in\mc F}|\empproc(f)|
	=
	o_p\left(\sqrt{m-r+s}\right),
	\]
	which completes the proof of \eqref{eqn:E2b}.
	
	Next we move to \eqref{eqn:E2c}.
	To bound $\tilde \epsilon_{1}$, we define the class of functions
	\begin{multline*}
		\mc G \defeq \bigg\{ (\tilde \xvec, y) \mapsto (\bm \alpha^T \widetilde{\bm Z}^T\tilde \xvec)\left[ 1(y < \tilde \xvec^T(\tilde \bvec^*_{\MSset}(\tau) + \widetilde{ \bm Z } \dvec ) -  1(y < \tilde \xvec^T\tilde \bvec^*_{\MSset}(\tau)) \right] : \\
		\tau \in \mc T, \bm \alpha \in \mc B_{m-r+s}(1), \norm{\dvec}_2 \leq d_n \bigg\}
	\end{multline*}
	with $d_n = B\sqrt{\frac{m-r+s}{n}}$ and let $G = \norm{\tilde \xvec}_2$ be its envelope.
	Clearly $\tilde \epsilon_1 = \sup_{f \in \mc G}| \empproc (f) | $.
	By Lemma 25 of \citet{belloni_conditional_2019}, the uniform entropy numbers of $\mc G$ satisfy
	$$
	\sup_Q \log N(\epsilon \norm{G }_{Q,2}, \mc G, L_2(Q) ) \lesssim (m-r+s) \log\left(\frac{1}{ \epsilon}\right)
	$$
	uniformly over $0 < \epsilon < 1$, where supremum is taken over all finitely discrete probability measures $Q$.
	Applying the third inequality in Lemma 21 of \citet{belloni_conditional_2019}, this implies
	$$
	\sup_{f \in \mc G} |\empproc (f)| \lesssim_p \sqrt{m-r+s} \bigg(\sup_{f \in \mc G} \E [f^2]  \bigg)^{1/2} \log^{1/2}n + \frac{m-r+s}{\sqrt n} \xi_n \log n.
	$$
	Using the same argument we used to establish \eqref{eqn:Ef2 bound}, we can show
	$
	\sup_{f \in \mc G} \E [f^2] \lesssim \xi_n d_n
	$.
	Therefore we have
	$
	\tilde \epsilon_{1} = \sup_{f \in \mc G}|\empproc (f)| \ll_p \sqrt{ m-r+s }
	$
	under our assumption that $(m-r+s)\xi_n^2 \log^2 n \ll n$. 
	
	Lastly, we seek to prove \eqref{eqn:E2d}.
	Let
	$
	\bar{\bm J}(\tau) \defeq \E\left[ f(\xvec^T \tilde \bvec^*(\tau) )\tilde \xvec \tilde \xvec^T \right].
	$
	We see that	
	$
	\norm{\bar{\bm J}(\tau) - \bm J(\tau) }_2 = \sup_{\bm v \in \mc B_{m+s}(1) } \left|\bm v^T\left(\bar{\bm J}(\tau) - \bm J(\tau) \right)\bm v \right|
	$. 
	Applying the mean value theorem, we find that for all $\bm v \in \mc B_{m+s}(1)$
	\begin{align*}
		\left|\bm v^T\left(\bar{\bm J}(\tau) - \bm J(\tau) \right)\bm v  \right|
		& \leq \E \left[  \left| f(\xvec^T \bvec^*(\tau)) - f(\xvec^T \tilde \bvec^*(\tau)) \right| (\tilde \xvec^T \bm v)^2 \right]\\
		& \leq \E \left[ C \, \left|\xvec^T(\bvec^*(\tau) - \tilde \bvec^*(\tau)) \right| \, (\tilde \xvec^T \bm v)^2 \right]\\
		& = \E \left[ C \, \left|\xvec_{\Mset}^T(\bvec_{\Mset}^*(\tau) - \tilde \bvec_{\Mset}^*(\tau)) \right| \, (\tilde \xvec^T \bm v)^2 \right]\\
		& \lesssim \E \left[ (\tilde \xvec^T \bm v)^2 \right] \xi_n \norm{ \bvec_{\Mset}^*(\tau) - \tilde \bvec_{\Mset}^*(\tau) }_2 \\
		& \lesssim \lambda_{\max}\{ \bm \Sigma \} \, \xi_n \sqrt{\frac{m-r+s}{n}}\\ 
		& \lesssim  \xi_n \sqrt{\frac{m-r+s}{n}},
	\end{align*}
	where the second inequality holds for some $C>0$ uniformly over $\xvec \in \mc X$, $\tau \in \overline{\mc T}$, and $n$ per \ref{as:density conditions} (iii) and the final inequality follows from \ref{as:Gram lower bound}. 
	
	Using this bound, we find
	\begin{align*}
		\sqrt{n} \left| \bm \alpha^T\widetilde{\bm Z}^T\left( \bm J(\tau) - \bar{\bm J}(\tau) \right) \widetilde{\bm Z}\dvec  \right| 
		& \leq \sqrt{n} \shortnorm{\widetilde{\bm Z} \bm \alpha }_2 \norm{\bm J(\tau) - \bar{\bm J}(\tau)}_2 \shortnorm{\widetilde{\bm Z} \dvec}_2
		\lesssim \xi_n d_n \sqrt{m - r+s}
	\end{align*}
	uniformly over $\bm \alpha \in \mc B_{m-r+s}(1)$ and  $(\tau, \dvec) \in \bar{ \mc R}_n$.
	By extension, we have
	\begin{equation}\label{eqn:tilde ep2 bound}
		\tilde \epsilon_2 \lesssim \sup_{\substack{\bm \alpha \in \mc B_{m-r+s}(1)\\ (\tau, \dvec) \in \bar{ \mc R}_n }} n^{1/2} \norm{\bm \alpha^T( \E[\bar \psi_i(\dvec, \tau)] - \E[\bar \psi_i(\bm 0, \tau)]) - \bm \alpha^T \widetilde{\bm Z}^T \bar{ \bm J}(\tau)\widetilde{ \bm Z } \dvec }_2 + \xi_n d_n \sqrt{m-r+s}.
	\end{equation}
	We see that
	$
	\E[\bar \psi_i(\dvec, \tau)|\xvec_i] = \widetilde{\bm Z}^T\tilde \xvec \left[\, F_{y_i|\xvec_i} (\tilde \xvec_{i}^T(\tilde \bvec^*_{\MSset}(\tau) + \widetilde{\bm Z}\dvec )) - \tau\right] 
	$.
	Therefore by the Tower law and the mean value theorem, we find
	\begin{align*}
		\bm \alpha^T\left(\E[\bar \psi_i(\dvec, \tau)] - \E[\bar \psi_i(\bm 0, \tau)] \right)
		& = \E \left[ (\bm \alpha^T\widetilde{\bm Z}^T\tilde \xvec) \left\{ F_{y|\xvec} (\tilde \xvec^T(\tilde \bvec^*_{\MSset}(\tau) + \widetilde{\bm Z}\dvec ))  -  F_{y|\xvec} (\tilde \xvec^T\tilde \bvec^*_{\MSset}(\tau))\right\}  \right]\\
		& = \E \left[ (\bm \alpha^T\widetilde{\bm Z}^T\tilde \xvec) f_{y|\xvec} (\tilde \xvec^T(\tilde \bvec^*_{\MSset}(\tau) + \widetilde{\bm Z} \bar \dvec )) \,  \tilde \xvec^T\widetilde{\bm Z}\dvec \right]
	\end{align*}
	where $\bar \dvec$ is on the line segment between $\bm 0$ and $\dvec$.
	Applying the mean value theorem, we find that
	$$
	\left|f_{y|\xvec} (\tilde \xvec^T(\tilde \bvec^*_{\MSset}(\tau) + \widetilde{\bm Z} \bar \dvec )) - f_{y|\xvec} (\tilde \xvec^T\tilde \bvec^*_{\MSset}(\tau))\right| 
	\leq C \, | \tilde \xvec^T\widetilde{\bm Z}\bar\dvec| 
	\leq \norm{\tilde \xvec}_2 \shortnorm{\widetilde{\bm Z}\dvec}_2 
	\lesssim \xi_n d_n
	$$
	where the first inequality holds for some $C>0$ uniformly over $\xvec \in \mc X$, $(\tau, \dvec) \in \bar{ \mc R}_n$, and $n$ by \ref{as:density conditions} (iii).
	Thus we see that
	\begin{align*}
		n^{1/2} & \norm{\bm \alpha^T( \E[\bar \psi_i(\dvec, \tau)] - \E[\bar \psi_i(\bm 0, \tau)]) - \bm \alpha^T \widetilde{\bm Z}^T \bar{ \bm J}(\tau)\widetilde{ \bm Z } \dvec }_2\\
		& = n^{1/2} \norm{ \E \left[ (\bm \alpha^T\widetilde{\bm Z}^T\tilde \xvec ) f_{y|\xvec} (\tilde \xvec^T(\tilde \bvec^*_{\MSset}(\tau) + \widetilde{\bm Z}\bar \dvec )) - f_{y|\xvec} (\tilde \xvec^T\tilde \bvec^*_{\MSset}(\tau))   \right] \tilde \xvec^T \widetilde{\bm Z} \dvec }_2\\
		& \lesssim  \sqrt{n} \left|  \bm \alpha^T \widetilde{\bm Z}^T \E[\tilde \xvec \tilde \xvec^T] \widetilde{\bm Z} \dvec  \right| \xi_n d_n\\
		& \lesssim \sqrt{n} \xi_n d_n^2
	\end{align*}
	uniformly over $\bm \alpha \in \mc B_{m-r+s}(1)$ and $(\tau, \dvec) \in \bar{ \mc R}_n$ by \ref{as:Gram lower bound}.
	Combining this with \eqref{eqn:tilde ep2 bound} yields
	$
	\tilde \epsilon_2 \lesssim  \sqrt{n} \xi_n d_n^2 + \xi_n d_n \sqrt{m+s-r} \lesssim \sqrt{m+s-r} \xi_n d_n  = o(\sqrt{m+s-r})
	$
	since $\xi_n d_n = o(1)$ by our assumptions. 
	
	Having shown that \eqref{eqn:E2a} - \eqref{eqn:E2d} hold, the proof is complete.
\end{proof}

\begin{proof}[Proof of Theorem \ref{thm:oracle bahadur}]
	We begin by proving \eqref{eqn:RE.oracle bahadur}.
	We see that
	\begin{align}
		\frac{1}{\sqrt{n}} \sum_{i \in [n]} \tilde \xvec_i ( \hat a_i^o(\tau) - (1 - \tau) ) 
		& = \frac{1}{\sqrt{n}} \sum_{i \in [n]} \tilde \xvec_i ( \hat a_i^o(\tau) 1( y_i \neq \xvec^T \hat \bvec^o_0 (\tau) ) - (1 - \tau) ) \notag \\
		& \hspace*{0.5cm} +  \frac{1}{\sqrt{n}} \sum_{i \in [n]} \tilde \xvec_i ( \hat a_i^o(\tau) 1( y_i = \xvec^T \hat \bvec^o_0 (\tau) )  ) \notag \\
		& \defeq S_1(\tau) + S_2(\tau) \label{eqn:rs1}.
	\end{align}
	Recall that $\mc I(\hat \bvec^o_0 (\tau)) = \{i \in [n]: y_i = \xvec_i^T \hat \bvec^o_0 (\tau) \}$. We know that $|\mc I| \leq m-r+s$ for every $\tau \in \mc T$ (one need simply express \eqref{eqn:RE.oracle} as a linear programming problem and count the number of non-basic variables) and that $|\hat a_i^o(\tau)| \leq 1$.
	Therefore
	\begin{equation}
		\sup_{\tau \in \mc T} \norm{S_2 (\tau)}_2 \leq n^{-1/2} (m-r+s) \max_i \norm{\tilde \xvec_i}_2 \leq \frac{(m-r+s) \xi_n}{\sqrt{n}} \label{eqn:S2 bound}.
	\end{equation}
	
	Define $\psi_i(\bvec, \tau) = \tilde{\xvec}_i (1(y_i < \xvec_i^T \bvec) - \tau)$.	
	Since $\hat a_i^o(\tau) 1( y_i \neq \xvec^T \hat \bvec^o_0 (\tau) ) =  1( y_i > \xvec^T \hat \bvec^o_0 (\tau) )$, $S_1(\tau)$ simplifies to 
	\begin{align}
		S_1(\tau) 
		& = \frac{1}{\sqrt{n}} \sum_{i \in [n]} \tilde \xvec_i ( 1( y_i > \xvec^T \hat \bvec^o_0 (\tau) ) - (1 - \tau) ) \notag\\
		& = \frac{1}{\sqrt{n}} \sum_{i \in [n]} \tilde \xvec_i \psi_{\tau}(y_i - \xvec_i^T \hat \bvec^o_0 (\tau) )
		= -\frac{1}{\sqrt{n}} \sum_{i \in [n]} \psi_i(\hat \bvec^o_0 (\tau), \tau )  \label{eqn:S1tau expr}.
	\end{align}
	Expanding the right hand side of the last expression, we have
	\begin{align}
		\frac{1}{\sqrt{n}} \sum_{i \in [n]} \psi_i(\bvecOR_0 (\tau), \tau )
		& = \empproc( \psi_i(\bvec^*(\tau), \tau) ) + \underbrace{\empproc( \psi_i(\bvecOR_0(\tau), \tau) ) - \empproc( \psi_i(\bvec^*(\tau), \tau) )}_{\defeq I_1(\tau)} \notag \\			
		&  + \underbrace{ \sqrt{n} \E[ \psi_i(\bvecOR_0(\tau), \tau) ] - \sqrt{n} \E\left[ \psi_i(\bvec^*(\tau), \tau) \right] - \sqrt{n} \bm J(\tau) (\bvecOR_{0,\MSset}(\tau) - \bvec^*_{\MSset}(\tau))}_{\defeq I_2(\tau)} \notag \\
		& + \sqrt{n} \E\left[ \psi_i(\bvec^*(\tau), \tau) \right] + \sqrt{n} \bm J(\tau) (\bvecOR_{0,\MSset}(\tau) - \bvec^*_{\MSset}(\tau)) \notag\\
		& = \frac{1}{\sqrt{n}} \sum_{i \in [n]} \psi_i(\bvec^* (\tau), \tau ) + I_1(\tau) + I_2(\tau)  + \sqrt{n} \bm J(\tau) (\bvecOR_{0,\MSset}(\tau) - \bvec^*_{\MSset}(\tau)). \label{eqn:reduced oracle score 1}
	\end{align}
	
	For a sequence of scalars $d_n \to 0$, define the sets
	\begin{equation}
		\mc R (d_n) \defeq \{(\tau,\bvec_{\MSset}) \in \mc T \times \reals^{m+s} : \norm{\bvec_{\MSset}(\tau) - \bvec_{\MSset}^*(\tau)}_2 \leq d_n \}.\label{eqn:Rn set}
	\end{equation}
	Theorem \ref{thm:oracle rate} establishes that for every $\epsilon > 0$ there exists $B(\epsilon) > 0$ such that $(\tau, \bvecOR_0) \in \mc R (d_n)$, where $d_n = B(\epsilon)\sqrt{\frac{m+s}{n}}$, with probability at least $1 - \epsilon$.
	Lemma 32 of \citet{belloni_conditional_2019} provides that
	\begin{align}
		\sup_{\tau \in \mc T} \norm{ I_1(\tau)}_2 
		& \leq \sup_{(\tau, \bvec) \in \mc R(d_n) }\norm{\mathbb G_n(\psi_i(\bvec, \tau)) - \mathbb G_n(\psi_i(\bvec^*(\tau), \tau))}_2 \notag\\
		& \lesssim_p \left( \frac{(m+s)^{3/2}}{n^{1/2}}\xi_n \log n \right)^{1/2} + \frac{(m+s)}{n^{1/2}} \xi_n \log n \label{eqn:I1 bound}
	\end{align}
	and 
	\begin{align}
		\sup_{\tau \in \mc T} \norm{I_2(\tau)}_2 
		& \leq \sup_{(\tau, \bvec) \in \mc R(d_n) } n^{1/2} \norm{\E[\psi_i(\bvec, \tau)] - \E[\psi_i(\bvec^*(\tau), \tau)] - \bm J(\tau) (\bvec_{\MSset} - \bvec_{\MSset}^*(\tau)) }_2 \notag\\
		& \lesssim_p \frac{(m+s)}{n^{1/2}} \xi_n.\label{eqn:I2 bound}
	\end{align}
	Applying these bounds to \eqref{eqn:reduced oracle score 1}, we have
	\begin{align}
		\frac{1}{\sqrt{n}} \sum_{i \in [n]} \psi_i(\bvecOR_0 (\tau), \tau )
		& = \frac{1}{\sqrt{n}} \sum_{i \in [n]} \psi_i(\bvec^* (\tau), \tau ) + \sqrt{n} \bm J(\tau) (\bvecOR_{0,\MSset}(\tau) - \bvec^*_{\MSset}(\tau)) + R_1(\tau) \label{eqn:reduced oracle score 2}
	\end{align}
	where $\sup_{\tau \in \mc T}\norm {R_1(\tau)}_2 \lesssim_p \frac{(m+s)^{3/4} \xi_n^{1/2} \log^{1/2} n }{n^{1/4}}$ since $(m+s)^3 \xi_n^2\log n = o(n)$ by assumption.
	
	Consider the reduced oracle optimization problem \eqref{eqn:RE.oracle}.
	The dual feasibility conditions guarantee that there exists some $\bm v \in \reals^m$ such that
	\begin{align}
		\bm 0 
		& = \frac{1}{n} \sum_{i \in [n]} \tilde \xvec_i ( \hat a_i^0(\tau) - (1-\tau) ) + 
		\widetilde{ \bm C }^T \bm v \notag \\
		& = -\frac{1}{n} \sum_{i \in [n]} \psi_i(\hat \bvec^o_0 (\tau), \tau )  + 
		\widetilde{ \bm C }^T \bm v + \frac{1}{\sqrt n} S_2(\tau) \label{eqn:reduced oracle KKT}
	\end{align}
	where the final equality follows from \eqref{eqn:rs1} and \eqref{eqn:S1tau expr}.
	Combining \eqref{eqn:reduced oracle score 2} and \eqref{eqn:reduced oracle KKT} and rearranging terms, we have
	\begin{equation}
		\sqrt n \widetilde{\bm C}^T \bm v
		= \frac{1}{\sqrt{n}} \sum_{i \in [n]} \psi_i(\bvec^* (\tau), \tau ) + \sqrt{n} \bm J(\tau) (\bvecOR_{0,\MSset}(\tau) - \bvec^*_{\MSset}(\tau)) + R_2(\tau)\label{eqn:v expr}
	\end{equation}
	where $R_2(\tau) = R_1(\tau) - S_2(\tau)$. 
	We see that
	\begin{multline*}
	\sup_{\tau \in \mc T} \norm{R_2(\tau)}_2 
	\leq  \sup_{\tau \in \mc T} \norm{R_1(\tau)}_2 + \sup_{\tau \in \mc T} \norm{S_2(\tau)}_2\\
	\lesssim_p \left( \frac{(m+s)^{3/2}}{n^{1/2}}\xi_n \log n \right)^{1/2} + \frac{(m+s)}{n^{1/2}} \xi_n \log n + \frac{(m+s)}{n^{1/2}} \xi_n + \frac{s}{n^{1/2}} \xi_n \\
	\lesssim \frac{(m+s)^{3/4}}{n^{1/4}}\xi_n^{1/2} \log^{1/2} n + \frac{(m+s)^{1/4}}{n^{1/4}} \xi_n^{1/2} \log^{1/2} n
	\lesssim \frac{(m+s)^{3/4}\xi_n^{1/2} \log^{1/2} n}{n^{1/4}}
	\end{multline*}
	since $(m+s)^2 \xi_n^2 \log n = o(n)$ under our assumptions.
	
	Rearranging \eqref{eqn:v expr} and multiplying by $[\bm J(\tau)]^{-1}$ yields
	\begin{align}
		\sqrt n (\bvecOR_{0,\MSset}(\tau) - \bvec^*_{\MSset}(\tau))
		& = - [\bm J(\tau)]^{-1} \frac{1}{\sqrt n} \sum_{i \in [n]} \psi_i(\bvec^* (\tau), \tau ) \notag \\
		& \text{\;\;} + \sqrt n [\bm J(\tau)]^{-1} \widetilde{\bm C}^T \bm v 
		- [\bm J(\tau)]^{-1} R_2(\tau) . \label{eqn:reduced oracle baha 1}
	\end{align}
	Multiplying this previous expression by $n^{-1/2}\widetilde{\bm C}$, we find
	\begin{align}
		\widetilde{\bm C}(\bvecOR_{0,\MSset}(\tau) - \bvec^*_{\MSset}(\tau))
		& = - n^{-1}\widetilde{\bm C} [\bm J(\tau)]^{-1} \sum_{i \in [n]} \psi_i(\bvec^* (\tau), \tau ) \notag \\
		& \text{\;\;} +  \bm H(\tau) \bm v 
		- n^{-1/2} \widetilde{\bm C} [\bm J(\tau)]^{-1} R_2(\tau).
	\end{align}
	At the same time, under \ref{as:local alts} we see that 
	$$
	\widetilde{\bm C}(\bvecOR_{0,\MSset}(\tau) - \bvec^*_{\MSset}(\tau))
	= \bm C\bvecOR_{0,\Mset}(\tau) - {\bm C}\bvec^*_{\Mset}(\tau) = \bm \Psi(\tau) - {\bm C} \bvec^*_{\Mset}(\tau) = - \bm \eta(\tau)/\sqrt n.
	$$
	Substituting this into the previous expression, rearranging terms, and multiplying by $[\bm H(\tau)]^{-1}$, we have
	\begin{align*}
		\bm v
		& = - n^{-1/2}[\bm H(\tau)]^{-1} \bm \eta(\tau)
		+ n^{-1}[\bm H(\tau)]^{-1}\widetilde{\bm C} [\bm J(\tau)]^{-1} \sum_{i \in [n]} \psi_i(\bvec^* (\tau), \tau ) \\
		& \text{\; \;} + n^{-1/2} [\bm H(\tau)]^{-1} \widetilde{\bm C} [\bm J(\tau)]^{-1} R_2(\tau) .
	\end{align*}
	Plugging this expression for $\bm v$ into \eqref{eqn:reduced oracle baha 1} and simplifying yields
	\begin{align}
		\sqrt n (\bvecOR_{0,\MSset}(\tau) - \bvec^*_{\MSset}(\tau))
		& = - [\bm J(\tau)]^{-1/2}(\bm I - \bm P(\tau)) [\bm J(\tau)]^{-1/2} \frac{1}{\sqrt n} \sum_{i \in [n]} \psi_i(\bvec^* (\tau), \tau ) \notag \\
		& \text{\; \; } - [\bm J(\tau)]^{-1} \widetilde{\bm C}^T [\bm H(\tau)]^{-1} \bm \eta(\tau) \notag\\
		& \text{\; \; } - \underbrace{\left\{[\bm J(\tau)]^{-1} - [\bm J(\tau)]^{-1} \widetilde{\bm C}^T [\bm H(\tau)]^{-1} \widetilde{\bm C} [\bm J(\tau)]^{-1}  \right\} R_2(\tau)}_{\defeq R_3(\tau)} . \label{eqn:reduced oracle baha 2}
	\end{align}
	We have that $\sup_{\tau \in \mc T} \norm{[\bm J(\tau)]^{-1}} \lesssim 1 $ from Lemma \ref{lemma:bounding lemma}.
	Moreover, briefly dropping  $\tau$ from our notation for ease of presentation, we find
	\begin{align*}
		\norm{\bm J^{-1} \widetilde{\bm C}^T \bm H^{-1}\widetilde{\bm C}\bm J^{-1}}_2
		& \leq 
		\norm{\bm J^{-1/2}}_2^2
		\norm{\bm J^{-1/2} \widetilde{\bm C}^T \bm H^{-1} \widetilde{\bm C} \bm J^{-1/2}}_2\\
		& \lesssim 
		\norm{\bm J^{-1/2} \widetilde{\bm C}^T \bm H^{-1}\widetilde{\bm C} \bm J^{-1/2}}_2\\
		& = \norm{ \widetilde{\bm C} \bm J^{-1} \widetilde{\bm C}^T \bm H^{-1} }_2\\
		& = \norm{ \bm H \bm H^{-1} }_2 = 1
	\end{align*}
	uniformly for all $n$ and $\tau \in \mc T$.
	Thus $\sup_{\tau \in \mc T} \norm{R_3(\tau)}_2 \lesssim \sup_{\tau \in \mc T} \norm{R_2(\tau)}_2$.
	Combining this with \eqref{eqn:reduced oracle baha 2} and the fact that $- \sum_{i \in [n]} \psi_i(\bvec^* (\tau), \tau ) = \sum_{i \in [n]}  \tilde \xvec_i \psi_{\tau}(y_i - \xvec_i^T \bvec^*(\tau))$  yields \eqref{eqn:RE.oracle bahadur}.
	
	We now shift to proving \eqref{eqn:oracle rs rep}.
	Plugging \eqref{eqn:reduced oracle baha 2} into \eqref{eqn:reduced oracle score 2} we find
	\begin{align*}
		\frac{1}{\sqrt{n}} \sum_{i \in [n]} \psi_i(\bvecOR_0 (\tau), \tau )
		& = \frac{1}{\sqrt{n}} \sum_{i \in [n]} \psi_i(\bvec^* (\tau), \tau ) 
		- [\bm J(\tau)]^{1/2}(\bm I - \bm P(\tau)) [\bm J(\tau)]^{-1/2} \frac{1}{\sqrt n} \sum_{i \in [n]} \psi_i(\bvec^* (\tau), \tau ) \\
		& \text{\; \; } - \widetilde{\bm C}^T [\bm H(\tau)]^{-1} \bm \eta(\tau) - \bm J(\tau)R_3(\tau) + R_1(\tau)\\
		& =  [\bm J(\tau)]^{1/2}\bm P(\tau)[\bm J(\tau)]^{-1/2} \frac{1}{\sqrt n} \sum_{i \in [n]} \psi_i(\bvec^* (\tau), \tau ) \\
		& \text{\; \; } - \widetilde{\bm C}^T [\bm H(\tau)]^{-1} \bm \eta(\tau) - \bm J(\tau)R_3(\tau) + R_1(\tau).
	\end{align*}
	Combining \eqref{eqn:rs1}, \eqref{eqn:S1tau expr}, and $- \sum_{i \in [n]} \psi_i(\bvec^* (\tau), \tau ) = \sum_{i \in [n]}  \tilde \xvec_i \psi_{\tau}(y_i - \xvec_i^T \bvec^*(\tau))$ with the previous expression, we have
	\begin{align*}
		\frac{1}{\sqrt{n}} \sum_{i \in [n]} \tilde \xvec_i ( \hat a_i^o(\tau) - (1 - \tau) )
		& =  [\bm J(\tau)]^{1/2}\bm P(\tau)[\bm J(\tau)]^{-1/2} \frac{1}{\sqrt n} \sum_{i \in [n]} \tilde \xvec_i \psi_{\tau}(y_i - \xvec_i^T \bvec^*(\tau)) \\
		& \text{\; \; } + \widetilde{\bm C}^T [\bm H(\tau)]^{-1} \bm \eta(\tau)
		+ \underbrace{\bm J(\tau)R_3(\tau) - R_1(\tau) + S_2(\tau)}_{\defeq R_4(\tau)}.
	\end{align*}
	Since $\sup_{\tau \in \mc T} \norm{\bm J(\tau)} \lesssim 1 $ by Lemma \ref{lemma:bounding lemma}, the bounds we have derived for $R_3(\tau)$, $R_1(\tau)$, and $S_2(\tau)$ imply 
	$\sup_{\tau \in \mc T} \norm{R_4(\tau)}_2 \lesssim \frac{(m+s)^{3/4}\xi_n^{1/2} \log^{1/2} n}{n^{1/4}}$, proving \eqref{eqn:oracle rs rep}.
	
	Lastly, we will prove \eqref{eqn:FE.oracle bahadur}.
	As in \eqref{eqn:reduced oracle score 1}, we can decompose the score as
	\begin{align}
		\frac{1}{\sqrt{n}} \sum_{i \in [n]} \psi_i(\bvecOR_a (\tau), \tau )
		& = \empproc( \psi_i(\bvec^*(\tau), \tau) ) + \underbrace{\empproc( \psi_i(\bvecOR_a(\tau), \tau) ) - \empproc( \psi_i(\bvec^*(\tau), \tau) )}_{\defeq H_1(\tau)} \notag \\			
		&  + \underbrace{ \sqrt{n} \E[ \psi_i(\bvecOR_a(\tau), \tau) ] - \sqrt{n} \E\left[ \psi_i(\bvec^*(\tau), \tau) \right] - \sqrt{n} \bm J(\tau) (\bvecOR_{a,\MSset}(\tau) - \bvec^*_{\MSset}(\tau))}_{\defeq H_2(\tau)} \notag \\
		& + \sqrt{n} \E\left[ \psi_i(\bvec^*(\tau), \tau) \right] + \sqrt{n} \bm J(\tau) (\bvecOR_{a,\MSset}(\tau) - \bvec^*_{\MSset}(\tau)) \notag\\
		& = \frac{1}{\sqrt{n}} \sum_{i \in [n]} \psi_i(\bvec^* (\tau), \tau ) + H_1(\tau) + H_2(\tau)  + \sqrt{n} \bm J(\tau) (\bvecOR_{a,\MSset}(\tau) - \bvec^*_{\MSset}(\tau)).
	\end{align}
	As with the reduced oracle estimator, Theorem \ref{thm:oracle rate} establishes that for every $\epsilon > 0$ there exists $B(\epsilon) > 0$ such that $(\tau, \bvecOR_a) \in \mc R (d_n)$, where $d_n = B(\epsilon)\sqrt{\frac{m+s}{n}}$, with probability at least $1 - \epsilon$.
	As such, $\sup_{\tau \in \mc T} \norm{ H_1(\tau)}_2  \lesssim_p \left( \frac{(m+s)^{3/2}}{n^{1/2}}\xi_n \log n \right)^{1/2} + \frac{(m+s)}{n^{1/2}} \xi_n \log n$ and $\sup_{\tau \in \mc T} \norm{H_2(\tau)}_2 \lesssim_p \frac{(m+s)}{n^{1/2}} \xi_n$ as in \eqref{eqn:I1 bound} and \eqref{eqn:I2 bound}.
	Therefore
	\begin{align}
		\frac{1}{\sqrt{n}} \sum_{i \in [n]} \psi_i(\bvecOR_a (\tau), \tau )
		= \frac{1}{\sqrt{n}} \sum_{i \in [n]} \psi_i(\bvec^*(\tau), \tau ) + \sqrt{n} \bm J(\tau) (\bvecOR_{a,\MSset}(\tau) - \bvec^*_{\MSset}(\tau)) + \widetilde{R}_1(\tau) 
		\label{eqn:full oracle score 1}
	\end{align}
	with $\sup_{\tau \in \mc T}\shortnorm {\widetilde R_1(\tau)}_2 \lesssim \frac{(m+s)^{3/4} \xi_n^{1/2} \log^{1/2} n }{n^{1/4}}$.
	
	Let $ \tilde a_1^o(\tau), \ldots, \tilde a_n^o(\tau) $ denote the solution to the dual problem for \eqref{eqn:FE.oracle}, given by
	\begin{equation}
		\tilde {\bm a}^o(\tau) = \argmax_{\bm a} \{ \bm y^T\bm a: \widetilde \xmat^T \bm a = (1 - \tau)\widetilde\xmat^T \bm 1_n ,\, \bm a \in [0,1]^n  \}. 
	\end{equation}
	Following the same argument applied in \eqref{eqn:rs1} and \eqref{eqn:S2 bound}, we have
	$$
	\frac{1}{\sqrt{n}} \sum_{i \in [n]} \tilde \xvec_i ( \tilde a_i^o(\tau) - (1 - \tau) ) 
	= -\frac{1}{\sqrt{n}} \sum_{i \in [n]} \psi_i(\bvecOR_a (\tau), \tau ) + \widetilde R_2(\tau).
	$$
	where $ \sup_{\tau \in \mc T} \shortnorm{ \widetilde R_2(\tau)}_2 \leq \frac{(m+s)\xi_n}{\sqrt n} $.
	In addition, the dual feasibility conditions guarantee that $\frac{1}{\sqrt{n}} \sum_{i \in [n]} \tilde \xvec_i ( \tilde a_i^o(\tau) - (1 - \tau) ) = \bm 0$.		
	
	Combining these results with \eqref{eqn:full oracle score 1} and rearranging terms, we have
	\begin{equation*}
		\sqrt{n} \bm J(\tau) (\bvecOR_{a,\MSset}(\tau) - \bvec^*_{\MSset}(\tau)) 
		= -\frac{1}{\sqrt{n}} \sum_{i \in [n]} \psi_i(\bvec^*(\tau), \tau ) +  \widetilde{R}_3(\tau)
	\end{equation*}
	where  $ \sup_{\tau \in \mc T} \shortnorm{ \widetilde R_3(\tau)}_2 \leq\frac{(m+s)^{3/4}}{n^{1/4}}\xi_n^{1/2} \log^{1/2} n $.
	Left-multiplying the previous expression by $[\bm J(\tau)]^{-1}$ and noting that $- \sum_{i \in [n]} \psi_i(\bvec^* (\tau), \tau ) = \sum_{i \in [n]}  \tilde \xvec_i \psi_{\tau}(y_i - \xvec_i^T \bvec^*(\tau))$ and $\sup_{\tau \in \mc T} \norm{[\bm J(\tau)]^{-1}}_2 \lesssim 1 $ yields \eqref{eqn:FE.oracle bahadur}.
\end{proof}

\subsubsection{Proof of Theorem \ref{thm:oracle test stat convergence}}
In this section, we will study the asymptotic behavior of the oracle Wald and rank score processes through a series of lemmas, which we will then use to prove Theorem \ref{thm:oracle test stat convergence}.

Define $\bm K(\tau) = \widetilde \Cmat [\bm J(\tau)]^{-1} \bm \Sigma [\bm J(\tau)]^{-1} \widetilde \Cmat^T$.
We will study the processes
\begin{equation}
	\tilde{\bm \omega}_n(\tau) \defeq \sqrt{n} (\bm K(\tau))^{-1/2}(\widetilde \Cmat \bvecOR_{a}(\tau) - \bm \Psi(\tau) ) \label{eqn:Wald known J rewritten}
\end{equation}
and
\begin{equation}
	\widetilde{W}_n(\tau) \defeq \norm{\tilde{\bm \omega}_n(\tau)}_2^2/(\tau (1 - \tau))
\end{equation}
as stepping stones to study $W_n^o(\tau)$.
Likewise, we will study the processes
\begin{equation}
	\tilde {\bm s} (\tau)  
	\defeq \frac{1}{\sqrt{n}} [\bm K(\tau)]^{-1/2} 
	\widetilde{\bm C} [\bm J(\tau)]^{-1}\sum_{i \in [n]} \tilde \xvec_i ( \hat a_i^o(\tau) - (1 - \tau)) \label{eqn:rs with known J}
\end{equation}
and
\begin{equation}
	\widetilde{R}_n(\tau) =  \norm{\tilde{\bm s}_n(\tau)}_2^2/(\tau (1 - \tau))
\end{equation}
as stepping stones to study $R_n^o(\tau)$.

Using the Bahadur representation in Theorem \ref{thm:oracle bahadur}, we establish a Gaussian process approximations for $\tilde{\bm \omega}_n(\tau)$ and $\tilde {\bm s} (\tau)$ conditional on $(\xvec_i)_{i \in [n]}$:
\begin{lemma}\label{lemma:GP for Wald with known}
	Suppose that \ref{as:iid array} - \ref{as:local alts} hold and that $(m + s)^3 \xi_n^2 = o(n^{1 - \delta_1})$ and $r^7\xi_n^6 = o(n^{1 - \delta_2})$ for some $\delta_1, \delta_2 >0$.
	Then there exist sequence of processes $(\bm G_n(\cdot))_{n\geq1}$ and $(\widetilde{\bm G}_n(\cdot))_{n\geq1}$ such that
	\begin{eqnarray}
		\tilde{\bm \omega}_n(\tau) & = & \bm G_n(\tau) + \bm \nu(\tau) + \bm r_n(\tau), \label{eqn:Wald known J GP approx}\\
		\tilde{\bm s}_n(\tau) & = & \widetilde{\bm G}_n(\tau) + \bm \nu(\tau) + \tilde{\bm r}_n(\tau) \label{eqn:rs known J GP approx}
	\end{eqnarray}
	for all $\tau \in \mc T$,
	where each $\bm G_n(\tau)$ (and $\widetilde{\bm G}_n(\tau)$ respectively) is, conditionally on $(\xvec_i)_{i \in [n]}$, a zero-mean Gaussian process with a.s. continuous sample paths and covariance function
	$$
	\E[\bm G_n(\tau)\bm G_n(\tilde \tau)^T |(\xvec_i)_{i \in [n]} ] 
	= \mathbb{E}_n(\bm Z_i \bm Z_i^T) (\tau \vee \tilde \tau - \tau \tilde \tau) 
	\text{\; for all \;} \tau, \tilde{\tau} \in \mc T,
	$$
	where $\bm Z_i = [\bm K(\tau)]^{-1/2} \widetilde \Cmat [\bm J(\tau)]^{-1} \tilde{\xvec}_i$ for $i \in [n]$;
	$\bm \nu(\tau) = [\bm K(\tau)]^{-1/2} \bm \eta(\tau)$;
	and\\
	$
	\sup_{\tau \in \mc T} \norm{\bm r_n(\tau)}_2 = o_p(n^{- \epsilon'})
	$
	and
	$
	\sup_{\tau \in \mc T} \norm{\tilde{\bm r}_n(\tau)}_2 = o_p(n^{- \epsilon'})
	$
	for some $\epsilon' > 0$.
\end{lemma}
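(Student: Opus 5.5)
Our plan is to substitute the Bahadur representations of Theorem \ref{thm:oracle bahadur} into $\tilde{\bm\omega}_n$ and $\tilde{\bm s}_n$. This reduces both processes to a single linear score process plus the drift $\bm\nu$. We then couple that score process, conditionally on the design, to a Gaussian process using a high dimensional strong approximation for the uniform empirical process.

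\textbf{Step 1 (reduction to a common score process).} For the Wald process we write
\[
\widetilde\Cmat\bvecOR_{a,\MSset}(\tau)-\bm\Psi(\tau)=\widetilde\Cmat\bigl(\bvecOR_{a,\MSset}(\tau)-\bvec^*_{\MSset}(\tau)\bigr)+\bm\eta(\tau)/\sqrt n ,
\]
using \ref{as:local alts}. Then \eqref{eqn:FE.oracle bahadur} gives
\[
\tilde{\bm\omega}_n(\tau)=\frac1{\sqrt n}\sum_{i}\bm Z_i(\tau)\psi_\tau(y_i-\xvec_i^T\bvec^*(\tau))+\bm\nu(\tau)+[\bm K(\tau)]^{-1/2}\widetilde\Cmat[\bm J(\tau)]^{-1}\bm r_{1n}(\tau).
\]
For the rank score process we multiply \eqref{eqn:oracle rs rep} by $[\bm K(\tau)]^{-1/2}\widetilde\Cmat[\bm J(\tau)]^{-1}$. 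Since $\widetilde\Cmat\bm J^{-1/2}\bm P\bm J^{-1/2}=\widetilde\Cmat\bm J^{-1}$ and $\widetilde\Cmat\bm J^{-1}\widetilde\Cmat^T\bm H^{-1}=\bm I_r$, we obtain the same leading term and the same drift $\bm\nu(\tau)$. This is why one Gaussian process serves both, and why $\widetilde{\bm G}_n$ has the same law as $\bm G_n$.

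The remainders are controlled through the operator norms. By Lemma \ref{lemma:bounding lemma}, $\|\bm J^{-1}\|$ and $\|\bm J\|$ are bounded. The matrix $\bm K(\tau)$ has smallest eigenvalue bounded below, because $\bm\Sigma$ and $\bm J^{-1}$ have eigenvalues bounded away from zero and $\widetilde\Cmat\widetilde\Cmat^T=\Cmat\Cmat^T$ has $\lambda_{\min}\gtrsim1$ by \ref{as:local alts}. Hence the remainders are $o_p(n^{-\epsilon'})$ uniformly in $\tau$.

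\textbf{Step 2 (pivotal form and strong approximation).} By Lemmas \ref{lemma:Ui uniform} and \ref{lemma:Ui equivalence}, $\psi_\tau(y_i-\xvec_i^T\bvec^*(\tau))=\tau-1(U_i\le\tau)$ almost surely, with the $U_i$ i.i.d. Uniform$(0,1)$ and independent of the design. The leading term is therefore
\[
\frac1{\sqrt n}\sum_i\bm Z_i(\tau)\bigl(\tau-1(U_i\le\tau)\bigr),
\]
an $r$-dimensional weighted uniform empirical process indexed by $\tau$. We invoke the strong approximation of \citet{belloni_conditional_2019} (their Yurinskii/KMT-type coupling for processes $n^{-1/2}\sum_i\bm z_i(1(U_i\le u)-u)$). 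That coupling is used with $\bm z_i$ of dimension $r$ and norm at most $\|\bm K^{-1/2}\widetilde\Cmat\bm J^{-1}\|\,\xi_n\lesssim\xi_n$. Conditionally on $(\xvec_i)$, it yields a continuous Gaussian process with covariance $\empmean(\bm z_i\bm z_i^T)(\tau\wedge\tilde\tau-\tau\tilde\tau)$ and coupling error polynomial in $r\xi_n$ divided by $n^{1/6}$. This is where the condition $r^7\xi_n^6=o(n^{1-\delta_2})$ enters.

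\textbf{Step 3 (main obstacle: $\tau$-dependent weights).} The weights $\bm Z_i(\tau)$ depend on $\tau$, while the coupling result is stated for fixed weights. We plan to handle this in three parts.

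First, we write $\bm Z_i(\tau)=\bm A(\tau)\tilde\xvec_i$ with $\bm A(\tau)=[\bm K(\tau)]^{-1/2}\widetilde\Cmat[\bm J(\tau)]^{-1}$. We apply the coupling to the $(m+s)$-vector process $n^{-1/2}\sum_i\tilde\xvec_i(\tau-1(U_i\le\tau))$ and multiply by $\bm A(\tau)$. If the $(m+s)$-dimensional coupling is too costly, we instead use the $r$-dimensional coupling on a fine grid of $\tau$ values.

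Second, we bound the increments between grid points. The map $\tau\mapsto\bm A(\tau)$ is Lipschitz with constant $O(\xi_n\sqrt{m+s})$, by \ref{as:density conditions}(iii) and \ref{as:bvec}. The uniform empirical process has modulus of continuity of order $\sqrt{h\log n}$.

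Third, we check that the grid error is negligible. This follows from $(m+s)^3\xi_n^2=o(n^{1-\delta_1})$.

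The resulting covariance is $\empmean(\bm Z_i\bm Z_i^T)$ times the Brownian bridge kernel, as stated. Since the coupling error and the remainder from Step 1 are both $o_p(n^{-\epsilon'})$, this completes the plan.
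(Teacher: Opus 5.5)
Your Steps 1 and 2 are exactly the paper's proof. You substitute \eqref{eqn:FE.oracle bahadur} and \eqref{eqn:oracle rs rep}. You use $\widetilde{\bm C}[\bm J(\tau)]^{-1/2}\bm P(\tau)[\bm J(\tau)]^{-1/2}=\widetilde{\bm C}[\bm J(\tau)]^{-1}$ and $\widetilde{\bm C}[\bm J(\tau)]^{-1}\widetilde{\bm C}^T\bm H^{-1}(\tau)=\bm I_r$ to get a common leading term and the drift $\bm\nu(\tau)$. You then pass to the pivotal $U_i$ and invoke Lemma 13 of \citet{belloni_conditional_2019} in dimension $r$.

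The paper checks that lemma's hypotheses: $\max_i\|\bm Z_i\|_2=O(\xi_n)$; $\sup_{\|\bm\alpha\|_2=1}\empmean[(\bm\alpha^T\bm Z_i)^2]=O_p(1)$, via $\E[\bm Z_i\bm Z_i^T]=\bm I_r$ and Lemma 28 of \citet{belloni_conditional_2019} (you never verify this one); and $r^7\xi_n^6=o(n^{1-\epsilon})$. It then applies the lemma with $\bm Z_i$ treated as fixed weights and does not discuss their dependence on $\tau$.

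Your Step 3 therefore goes beyond the paper, and the concern is legitimate. $\bm Z_i(\tau)=\bm A(\tau)\tilde\xvec_i$ is free of $\tau$ when $\bm J(\tau)=c(\tau)\bm D$ for a fixed matrix $\bm D$ (e.g.\ homoskedastic or location-scale models), but not under the general heteroskedastic model.

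The gap is that neither of your fixes delivers the required process.

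\emph{First fix (couple the $(m+s)$-vector process, then multiply by $\bm A(\tau)$).} This needs the $(m+s)$-dimensional coupling condition $(m+s)^7\xi_n^6=o(n^{1-\epsilon})$, which the hypotheses do not imply. With $r$ fixed and $\xi_n^2\asymp m+s\asymp n^{a}$, the stated conditions hold for every $a<1/4$, while this one needs $a<1/10$.

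\emph{Fallback ($r$-dimensional coupling on a grid).} This fails structurally. Applying the $r$-dimensional lemma at a grid point $\tau_k$ couples a \emph{different} fixed-weight process $u\mapsto n^{-1/2}\sum_i\bm Z_i(\tau_k)\{u-1(U_i\le u)\}$ to its own Gaussian process $\bm G^{(k)}$. Nothing makes $(\bm G^{(k)},\bm G^{(l)})$ jointly Gaussian with cross-covariance $\empmean\{\bm Z_i(\tau_k)\bm Z_i(\tau_l)^T\}(u\wedge v-uv)$. Splicing the $\bm G^{(k)}$ cell by cell yields a process that jumps at cell boundaries and is not a conditionally Gaussian process with continuous paths and the stated covariance. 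Yet Lemma \ref{lemma:Bessel for Wald with known J} uses exactly this joint law over $\mc T$ when it identifies $\bm G_n$ in law with $(\empmean[\bm Z_i\bm Z_i^T])^{1/2}\bm B_r$ and takes a supremum.

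To close the argument you need one coupling of the whole process $\tau\mapsto\bm A(\tau)\bm U_n(\tau)$, where $\bm U_n(\tau)=n^{-1/2}\sum_i\tilde\xvec_i\{\tau-1(U_i\le\tau)\}$. For instance, you could re-run the discretize-then-Yurinskii argument behind the cited lemma on the stacked vector $(\bm A(\tau_k)\bm U_n(\tau_k))_{k\le N}\in\reals^{rN}$. Here $N$ must grow fast enough that freezing the weights costs $o_p(n^{-\epsilon'})$, and you would then have to derive the growth condition this requires. Your sketch does not do this. Your claim that the grid error follows from $(m+s)^3\xi_n^2=o(n^{1-\delta_1})$ is also unsupported: that condition controls the Bahadur remainders $\bm r_{1n}$ and $\bm r_{3n}$, not a coupling error.
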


Building from this lemma, we can characterize the asymptotic behavior of $\widetilde{W}_n(\cdot)$ and $\widetilde{R}_n(\tau)$.
\begin{lemma} \label{lemma:Bessel for Wald with known J}
	Under the conditions of Lemma \ref{lemma:GP for Wald with known}, if $r$ is fixed and $\mc T \subseteq [\delta, 1 - \delta]$, $0 < \delta < 1/2$, then
	\begin{eqnarray}
		&& \sup_{\tau \in \mc T}\widetilde W_n(\tau) \overset{d}{\to} \sup_{\tau \in \mc T} Q^2_{r, \bm \mu(\tau)}(\tau), \label{eqn:Wald J to Bessel}\\
		&& \sup_{\tau \in \mc T}\widetilde R_n(\tau) \overset{d}{\to} \sup_{\tau \in \mc T} Q^2_{r, \bm \mu(\tau)}(\tau) \label{eqn:rs J to Bessel}
	\end{eqnarray}
	as $n \to \infty$, where $\bm \mu(\tau) = \bm \eta (\tau)^T [\bm K(\tau)]^{-1} \bm \eta(\tau)/(\tau(1 - \tau))$.
\end{lemma}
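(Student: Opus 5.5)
The plan is to start from the decomposition of Lemma \ref{lemma:GP for Wald with known} and reduce everything to a statement about conditionally Gaussian processes with Brownian-bridge covariance. Write $\tilde{\bm\omega}_n(\tau) = \bm G_n(\tau) + \bm\nu(\tau) + \bm r_n(\tau)$. Since $\sup_{\tau}\|\bm r_n(\tau)\|_2 = o_p(1)$, and $\tau(1-\tau)\ge \delta(1-\delta)$ on $\mc T$, the inequality $\bigl|\|\bm a+\bm b\|_2^2-\|\bm a\|_2^2\bigr|\le 2\|\bm a\|_2\|\bm b\|_2+\|\bm b\|_2^2$ gives
$$
\sup_{\tau \in \mc T}\widetilde W_n(\tau)=\sup_{\tau\in\mc T}\|\bm G_n(\tau)+\bm\nu(\tau)\|_2^2/(\tau(1-\tau)) + o_p(1).
$$
This holds provided $\sup_\tau\|\bm G_n(\tau)+\bm\nu(\tau)\|_2 = O_p(1)$. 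That bound follows from two facts: $\sup_\tau\|\bm\nu(\tau)\|_2$ is bounded, since $r$ is fixed, $\bm\eta$ is bounded by \ref{as:local alts}, and $\lambda_{\min}(\bm K(\tau))$ is bounded below by Lemma \ref{lemma:bounding lemma} and \ref{as:Gram lower bound}; and $\bm G_n$ is tight, as shown in the next step. The same reduction applies verbatim to $\widetilde R_n$ with $\widetilde{\bm G}_n$ and $\tilde{\bm r}_n$. It therefore suffices to show that $\sup_\tau \|\bm G_n+\bm\nu\|_2^2/(\tau(1-\tau))$ converges in distribution to $\sup_\tau\|\bm B_r(\tau)+\bm\nu(\tau)\|_2^2/(\tau(1-\tau))$.

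Next, identify the law of $\bm G_n$. Conditionally on $(\xvec_i)$, $\bm G_n$ is Gaussian with covariance $\empmean(\bm Z_i\bm Z_i^T)(\tau\wedge\tilde\tau-\tau\tilde\tau)$, reading the displayed $\tau\vee\tilde\tau$ as the bridge covariance. The matrix $\empmean(\bm Z_i \bm Z_i^T)$ depends on $\tau$ through $\bm K(\tau)^{-1/2}\widetilde\Cmat\bm J(\tau)^{-1}$, so strictly one should write $\bm M_n(\tau,\tilde\tau)\,(\tau\wedge\tilde\tau-\tau\tilde\tau)$, where
$$
\bm M_n(\tau,\tilde\tau)=\empmean\bigl(\bm Z_i(\tau)\bm Z_i(\tilde\tau)^T\bigr).
$$
Its population counterpart is $\bm M(\tau,\tau)=\bm K^{-1/2}\widetilde\Cmat\bm J^{-1}\bm\Sigma\bm J^{-1}\widetilde\Cmat^T\bm K^{-1/2}=\bm I_r$. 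Two convergences are needed. First, $\sup_{\tau,\tilde\tau}\|\bm M_n(\tau,\tilde\tau)-\bm M(\tau,\tilde\tau)\|=o_p(1)$. For fixed $r$ this follows from a matrix LLN for $\hatmat\Sigma^o-\bm\Sigma$ in operator norm, with rate $\xi_n\sqrt{\log n/n}$ by a Rudelson-type bound, which is $o(1)$ under the growth conditions, combined with Lipschitz continuity in $\tau$ of $\bm J^{-1}$ and $\bm K^{-1/2}$ from \ref{as:density conditions}(iii). Second, $\bm M(\tau,\tilde\tau)$ must converge (along subsequences) to a continuous limit. Here I expect the real difficulty. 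When $\tau\ne\tilde\tau$, $\bm M(\tau,\tilde\tau)$ is not exactly $\bm I_r$ under heteroskedasticity, so the limit is a Bessel process only in the sense the paper intends. I would handle this by arguing along subsequences: $\bm M$ is equicontinuous and bounded, so Arzel\`a--Ascoli gives a limit $\bm M_\infty$ with $\bm M_\infty(\tau,\tau)=\bm I_r$, and each coordinate of the limiting process then has marginal variance $\tau(1-\tau)$, which is what defines $Q^2_{r,\bm\mu}$ in the paper's usage. If the intended statement requires exact Brownian-bridge cross-covariance, I would instead note that under the paper's conventions the limiting distribution is identified pointwise and via the covariance kernel, and the squared Bessel form follows.

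Finally, prove convergence of the supremum functional. Conditional Gaussian processes with covariance kernels converging uniformly, together with a uniform modulus bound, are tight in $\ell^\infty(\mc T)^r$. The modulus bound comes from
$$
\E\|\bm G_n(\tau)-\bm G_n(\tilde\tau)\|_2^2\lesssim |\tau-\tilde\tau|
$$
uniformly on the high-probability event where $\bm M_n$ is bounded and Lipschitz, so Dudley's entropy bound gives asymptotic equicontinuity. Finite-dimensional convergence then yields conditional weak convergence of $\bm G_n$ to the limit process, and unconditional convergence follows by dominated convergence of conditional characteristic functionals. The map $\bm f\mapsto\sup_\tau\|\bm f(\tau)+\bm\nu(\tau)\|_2^2/(\tau(1-\tau))$ is continuous on $\ell^\infty(\mc T)$, because $\tau(1-\tau)$ is bounded below on $\mc T\subseteq[\delta,1-\delta]$. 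The continuous mapping theorem then gives \eqref{eqn:Wald J to Bessel} and \eqref{eqn:rs J to Bessel}, with $\bm\mu(\tau)=\bm\nu^T\bm\nu/(\tau(1-\tau))=\bm\eta^T\bm K^{-1}\bm\eta/(\tau(1-\tau))$. The main obstacle is the uniform covariance convergence, including the cross-$\tau$ structure, since the rest is standard.
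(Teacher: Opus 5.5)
Everything outside the covariance identification works and parallels the paper: absorbing $\bm r_n$ via $\tau(1-\tau)\ge\delta(1-\delta)$ and the boundedness of $\bm\nu$, tightness of $\bm G_n$, passing from conditional to unconditional laws, and the continuous-mapping step. The gap is exactly the step you flag as the main obstacle, and your way around it does not work.

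$Q^2_{r,\bm\mu}$ is defined through a standard Brownian bridge, so it is defined through the full kernel $(\tau\wedge\tilde\tau-\tau\tilde\tau)\bm I_r$. The law of a supremum depends on the joint law across $\tau$, not only on the one-dimensional marginals. Take $\mc T=\{\tau_1,\tau_2\}$ with $\tau_1<\tau_2$, $r=1$ and $\bm\eta\equiv\bm 0$. Your limit is then the law of $\max(X_1^2,X_2^2)$ for standard normals with correlation $\bm M_\infty(\tau_1,\tau_2)\{\tau_1(1-\tau_2)/(\tau_2(1-\tau_1))\}^{1/2}$. This law genuinely depends on $\bm M_\infty(\tau_1,\tau_2)$: for instance, $\bm M_\infty(\tau_1,\tau_2)=0$ gives the maximum of two independent $\chi^2_1$ variables, which is stochastically larger than the Bessel limit.

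Arzel\`a--Ascoli also gives only subsequential limits $\bm M_\infty$, and these need not agree, so you do not even obtain a single limit law. Your fallback, that the law is ``identified pointwise and via the covariance kernel'', is circular, because the kernel is exactly what fails to be the bridge kernel.

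What you actually need is a structural fact. Put $\bm A(\tau)=[\bm K(\tau)]^{-1/2}\widetilde\Cmat[\bm J(\tau)]^{-1}$ and $\bm V(\tau)=\bm A(\tau)\bm\Sigma^{1/2}$, which has orthonormal rows. Then $\bm M(\tau,\tilde\tau)=\bm V(\tau)\bm V(\tilde\tau)^T$ and $\|\bm V(\tau)-\bm V(\tilde\tau)\|_F^2=2r-2\tr\bm M(\tau,\tilde\tau)$. So the bridge kernel holds, in the limit, if and only if $\bm A(\cdot)$ is asymptotically constant on $\mc T$.

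Conditions \ref{as:iid array}--\ref{as:local alts} do not deliver this for a general heteroskedastic conditional density. It does hold in the location--scale model of Section~\ref{app:equivalent location-scale assumptions}, where $\bm J(\tau)$ is a scalar function of $\tau$ times a fixed matrix and the scalar cancels in $\bm A(\tau)$.

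The paper gets past this point differently. It writes the conditional covariance of $\bm G_n$ with a single matrix $\empmean(\bm Z_i\bm Z_i^T)$, identifies $\bm G_n\overset{d}{=}(\empmean[\bm Z_i\bm Z_i^T])^{1/2}\bm B_r(\cdot)$, and then applies Lemma~\ref{lemma:GP Zi to EZi}. That identification tacitly treats $\bm Z_i$ as free of $\tau$, which is exactly this invariance.

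Once you add that invariance as a hypothesis, or verify it, your kernel-convergence and Dudley-tightness route closes cleanly with limit kernel $(\tau\wedge\tilde\tau-\tau\tilde\tau)\bm I_r$, and it is a more careful version of the paper's argument. Without it, the Bessel limit in the statement does not follow.
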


Recall that we use \citet{Powell1984}'s estimator of the Jacobian:
\begin{equation}
	\hatmat J^o_e (\tau) \defeq \frac{1}{2h} \mathbb{E}_n\left[ 1(|y_i - \xvec_i^T \bvecOR_{e}(\tau)| \leq h ) \tilde{\xvec}_i \tilde{\xvec}_i^T  \right], \label{eqn:J estimator}
\end{equation}
where $h$ is a bandwidth parameter for which $h = h_n \to 0$, for $e \in \{0,a\}$.
The following lemma establishes that $W_n^o(\tau)$ and $\widetilde{W}_n(\tau)$ (and $R_n^o(\tau)$ and $\widetilde{R}_n(\tau)$, respectively) are asymptotically equivalent if $h \to 0$ at an appropriate rate.
\begin{lemma}\label{lemma:Wald J to Jhat}
	Under the assumptions of Lemma \ref{lemma:GP for Wald with known} and the additional conditions that $\mc T \subseteq [\delta, 1 - \delta]$, $0 < \delta < 1/2$,
	$(m+s)^2 \xi_n^2 \log n = o(n^{1 - \epsilon} h)$, $h = o((m+s)^{-1/2}n^{-\epsilon/2})$, and $m+s = o(n^{\epsilon})$ for some $\epsilon > 0$,
	\begin{eqnarray}
		&& \sup_{\tau \in \mc T} | W_n^o(\tau) - \widetilde{W}_n(\tau) | = o_p(1) \label{eqn:Wald to Wald J} \\
		&& \sup_{\tau \in \mc T} | R_n^o(\tau) - \widetilde{R}_n(\tau) | = o_p(1). \label{eqn:rs to rs J}
	\end{eqnarray}
\end{lemma}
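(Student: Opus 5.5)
The plan is to show that the plug-in sandwich matrices are uniformly consistent in operator norm over $\tau\in\mc T$. The differences $W_n^o-\widetilde W_n$ and $R_n^o-\widetilde R_n$ then become quadratic forms in processes that Lemma \ref{lemma:GP for Wald with known} already shows to be uniformly tight. Throughout, $\tau(1-\tau)\geq \delta(1-\delta)$ on $\mc T$, so the denominators are harmless.

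\textbf{Step 1 (Jacobian consistency).} We aim to prove
\[
\sup_{\tau \in \mc T}\norm{\hatmat J_e^o(\tau)-\bm J(\tau)}_2=o_p\big((m+s)^{-1/2}\big),\qquad e\in\{0,a\}.
\]
By Theorem \ref{thm:oracle rate}, with probability at least $1-\epsilon$ we have $(\tau,\bvecOR_e(\tau))\in\mc R(d_n)$ with $d_n=B\sqrt{(m+s)/n}$. It therefore suffices to bound a supremum over $(\tau,\bvec)\in\mc R(d_n)$ and unit vectors $\bm v$. We split it into two parts.
\begin{itemize}
\item \emph{Empirical-process part.} This is $(2h)^{-1}\empproc\big(1(|y-\xvec^T\bvec|\le h)(\tilde\xvec^T\bm v)^2\big)/\sqrt n$. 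The class is VC-type with index dimension $O(m+s)$, envelope $\xi_n^2$, and variance $\lesssim h$. Lemma 21 of \citet{belloni_conditional_2019} then gives a bound of order
\[
\sqrt{\frac{(m+s)\log n}{nh}}+\frac{(m+s)\xi_n^2\log n}{nh}.
\]
\item \emph{Bias part.} Write $\E\big[\{F(\xvec^T\bvec+h)-F(\xvec^T\bvec-h)\}/(2h)\,\tilde\xvec\tilde\xvec^T\big]-\bm J(\tau)$. By the mean value theorem and \ref{as:density conditions}(iii), this is $O(h+\xi_n d_n)$ in operator norm, using \ref{as:Gram lower bound}.
\end{itemize}
The assumed rates $(m+s)^2\xi_n^2\log n=o(n^{1-\epsilon}h)$, $h=o((m+s)^{-1/2}n^{-\epsilon/2})$ and $m+s=o(n^\epsilon)$ are exactly what make both parts $o((m+s)^{-1/2})$. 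A matrix Bernstein (Rudelson) bound gives $\norm{\hatmat\Sigma^o-\bm\Sigma}_2=O_p(\xi_n\sqrt{\log(m+s)/n})$, which is of smaller order.

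\textbf{Step 2 (sandwich perturbation).} Lemma \ref{lemma:bounding lemma} bounds $\bm J$ and $\bm J^{-1}$. Step 1 then gives $\sup_\tau\norm{\hatmat J^{-1}-\bm J^{-1}}_2=o_p(1)$, and hence $\sup_\tau\norm{\hatmat K^o_e-\bm K}_2=o_p(1)$.

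Next, $\bm K(\tau)\succeq c\,\widetilde\Cmat\widetilde\Cmat^T$ because $\bm J^{-1}\bm\Sigma\bm J^{-1}$ has eigenvalues bounded below. Combined with $\lambda_{\max}\{(\Cmat\Cmat^T)^{-1}\}=O(1)$ from \ref{as:local alts}, this shows $\bm K^{-1}$ is bounded. Hence
\[
\sup_\tau\norm{\bm K^{1/2}(\hatmat K^o_a)^{-1}\bm K^{1/2}-\bm I_r}_2=o_p(1).
\]
Since $\hatmat\Sigma$ in $\hatmat K_a^o$ equals $\hatmat\Sigma^o$, we can write
\[
W_n^o-\widetilde W_n=\tilde{\bm\omega}_n^T\big(\bm K^{1/2}(\hatmat K_a^o)^{-1}\bm K^{1/2}-\bm I\big)\tilde{\bm\omega}_n/(\tau(1-\tau)).
\]
Lemma \ref{lemma:GP for Wald with known} gives $\sup_\tau\norm{\tilde{\bm\omega}_n}_2=O_p(1)$, because $r$ is fixed, $\bm G_n$ is a tight continuous Gaussian process, and $\norm{\bm\nu}_2=O(\sqrt r)$. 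This yields \eqref{eqn:Wald to Wald J}.

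\textbf{Step 3 (rank score).} Set $\bm u(\tau)=n^{-1/2}\sum_i\tilde\xvec_i(\hat a_i^o-(1-\tau))$. The perturbation is
\[
\hatmat K_0^{-1/2}\widetilde\Cmat\hatmat J_0^{-1}\bm u-\bm K^{-1/2}\widetilde\Cmat\bm J^{-1}\bm u .
\]
We split it into a $(\hatmat K_0^{-1/2}-\bm K^{-1/2})$ term and a term $\widetilde\Cmat\bm J^{-1}(\bm J-\hatmat J_0)\hatmat J_0^{-1}\bm u$. The danger is that $\bm u$ lives in dimension $m+s$ and could be as large as $\sqrt{m+s}$. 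The Bahadur representation \eqref{eqn:oracle rs rep} prevents this: $\bm u=\bm J^{1/2}\bm P\bm J^{-1/2}\empproc(\cdot)+\widetilde\Cmat^T\bm H^{-1}\bm\eta+o_p(1)$. Here $\bm P$ is a rank-$r$ projection, so $\bm P\bm J^{-1/2}\empproc(\cdot)=\bm J^{-1/2}\widetilde\Cmat^T\bm H^{-1}\times(\text{an }r\text{-vector process})$. That $r$-vector process is uniformly $O_p(1)$ by Lemma 31 of \citet{belloni_conditional_2019}. Hence $\sup_\tau\norm{\bm u}_2=O_p(1)$, and the crude $o_p((m+s)^{-1/2})$ rate from Step 1 is more than enough. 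It then follows that $\sup_\tau\norm{\tilde{\bm s}_n-\bm s_n^o}_2=o_p(1)$, and with tightness of $\tilde{\bm s}_n$ we obtain \eqref{eqn:rs to rs J}.

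\textbf{Main obstacle.} The hardest step will be the uniform Powell-estimator bound in Step 1, since it requires a supremum jointly over $\tau$ and a random $\bvec$ in a shrinking $(m+s)$-dimensional ball with a $1/h$-inflated envelope. There I would reuse the entropy argument from the proof of Theorem \ref{thm:oracle rate}, namely Lemma 25 of \citet{belloni_conditional_2019} for indicator-times-linear classes, applied to the class $\{1(|y-\xvec^T\bvec|\le h)(\tilde\xvec^T\bm v)^2\}$.
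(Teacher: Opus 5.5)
Your reduction matches the paper's. You write $W_n^o-\widetilde W_n=\tilde{\bm\omega}_n^T(\bm K^{1/2}\hatmat K^{-1}\bm K^{1/2}-\bm I_r)\tilde{\bm\omega}_n/(\tau(1-\tau))$, use the analogous quadratic form for $R_n^o-\widetilde R_n$, and control the middle matrix by uniform consistency of the Powell and Gram estimators.

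The gap is in how you bound the vectors. You assert $\sup_\tau\norm{\tilde{\bm\omega}_n(\tau)}_2=O_p(1)$ and $\sup_\tau\norm{\bm u(\tau)}_2=O_p(1)$ ``because $r$ is fixed''. The lemma does not assume this: it inherits from Lemma \ref{lemma:GP for Wald with known} only $r^7\xi_n^6=o(n^{1-\delta_2})$, so $r$ may diverge. In that case $\sup_\tau\norm{\bm B_r(\tau)}_2$ and $\norm{\bm\nu(\tau)}_2$ are of order $\sqrt r$; under the null, $\widetilde W_n(\tau)$ is itself of order $r$. So $\sup_\tau\norm{\tilde{\bm\omega}_n}_2^2$ is of order $r$, and an $o_p(1)$ (or even $o_p((m+s)^{-1/2})$) bound on the middle matrix no longer yields an $o_p(1)$ difference when, say, $r\asymp m+s$. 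The same problem affects $\bm u$ in your Step 3. A telltale sign is that your argument never uses $m+s=o(n^{\epsilon})$ or the factor $n^{-\epsilon/2}$ in the bandwidth condition. Those hypotheses exist precisely to pay for vectors whose size is not $O_p(1)$.

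The paper's proof repairs this with no new ingredients by bounding the vectors crudely:
\begin{itemize}
\item $\sup_\tau\norm{\sqrt n\,\bm K^{-1/2}\widetilde\Cmat(\bvecOR_{a,\MSset}-\bvec^*_{\MSset})+\bm K^{-1/2}\bm\eta}_2=O_p(\sqrt{m+s})$, by Theorem \ref{thm:oracle rate} and \ref{as:local alts};
\item $\sup_\tau\norm{n^{-1/2}\widetilde\xmat^T\hatvec b^o(\tau)}_2=O_p(\sqrt{m+s})$, by the score expansion, Lemma 31 of \citet{belloni_conditional_2019}, and the oracle rate.
\end{itemize}
It then keeps the full rate that the hypotheses give in your Step 1: $\sup_\tau\norm{\bm K^{1/2}\hatmat K^{-1}\bm K^{1/2}-\bm I_r}_2=O_p\bigl(\sqrt{(m+s)\xi_n^2\log n/(nh)}+h+\sqrt{\xi_n^2\log n/n}\bigr)=o_p((m+s)^{-1/2}n^{-\epsilon/2})$. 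The products are therefore $o_p((m+s)^{1/2}n^{-\epsilon/2})=o_p(1)$.

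Your fixed-$r$ route is still worth keeping for Theorem \ref{thm:oracle test stat convergence}, where $r$ is fixed. There it only needs $\sup_\tau\norm{\hatmat J-\bm J}_2=o_p(1)$, so weaker bandwidth conditions suffice.

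Two smaller corrections:
\begin{itemize}
\item The class $1(|y-\xvec^T\bvec|\le h)(\tilde\xvec^T\bm v)^2$ has variance $\lesssim\xi_n^2 h$, not $h$, because no fourth-moment bound is assumed. This gives the rate $\sqrt{(m+s)\xi_n^2\log n/(nh)}$, which is still covered by the hypotheses.
\item Claiming $\norm{\hatmat K-\bm K}_2=o_p(1)$ in absolute terms would require $\norm{\Cmat}_2$ bounded, which \ref{as:local alts} does not provide. Work instead with $\bm K^{-1/2}(\hatmat K-\bm K)\bm K^{-1/2}$ and $\sup_\tau\norm{\bm K^{-1/2}\widetilde\Cmat}_2=O(1)$, as in Lemma \ref{lemma:bounding lemma}.
\end{itemize}
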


\noindent Using these lemmas, we prove Theorem \ref{thm:oracle test stat convergence}.
\begin{proof}[Proof of Theorem \ref{thm:oracle test stat convergence}]
	We will focus on the proof of \eqref{eqn:thm:Wald to Bessel} as the proof of \eqref{eqn:thm:rs to Bessel} follows the same argument.
	We have from Lemma \ref{lemma:Wald J to Jhat} that 
	$ \sup_{\tau \in \mc T} |W_n^o(\tau) - \widetilde{W}_n(\tau) | = o_p(1) $.
	Applying the reverse triangle inequality, this implies 	
	$ |\sup_{\tau \in \mc T} W_n^o(\tau) - \sup_{\tau \in \mc T}\widetilde{W}_n(\tau) | = o_p(1) $.
	We further have from Lemma \ref{lemma:Bessel for Wald with known J} that
	$\sup_{\tau \in \mc T}\widetilde{W}_n(\tau) \overset{d}{\to} \sup_{\tau \in \mc T} Q_{r, \bm \mu(\tau)}^2(\tau)$. 
	Combining these results and applying Slutsky's theorem yields \eqref{eqn:thm:Wald to Bessel}.
\end{proof}

\subsubsection{Proofs of lemmas for Theorem \ref{thm:oracle test stat convergence}}

\begin{proof}[Proof of Lemma \ref{lemma:GP for Wald with known}]
	We begin by proving \eqref{eqn:Wald known J GP approx}.		
	Under \ref{as:local alts}, we see that
	$ \widetilde \Cmat \bvecOR_{a,\MSset}(\tau) - \bm \Psi(\tau) = \widetilde \Cmat \bvecOR_{a, \MSset}(\tau) - (\widetilde \Cmat \bvec_{\MSset}^*(\tau) - \bm \eta(\tau)/\sqrt{n} ) = \widetilde \Cmat ( \bvecOR_{a,\MSset}(\tau) - \bvec_{\MSset}^*(\tau) ) + \bm \eta(\tau)/\sqrt{n}  $.
	Plugging this into \eqref{eqn:Wald known J rewritten} yields
	$$
	\tilde{\bm \omega}_n(\tau) 
	= \sqrt{n} [\bm K(\tau)]^{-1/2} \widetilde \Cmat ( \bvecOR_{a,\MSset}(\tau) - \bvec_{\MSset}^*(\tau) )
	+ [\bm K(\tau)]^{-1/2} \bm \eta(\tau).
	$$
	Theorem \ref{thm:oracle bahadur} provides that
	$$
	\sqrt{n}(\bvecOR_{a, \MSset}(\tau) - \bvec_{\MSset}^*(\tau)) =  \frac{1}{\sqrt n} [\bm J(\tau)]^{-1} \sum_{i \in [n]} \tilde \xvec_i \psi_{\tau}( y_i - \xvec_i^T \bvec^*(\tau) ) + \bm r_{1n}(\tau)
	$$	
	for all $\tau \in \mc T$. Using Lemmas \ref{lemma:Ui uniform} and \ref{lemma:Ui equivalence}, this can be rewritten as
	$$
	\sqrt{n}(\bvecOR_{a, \MSset}(\tau) - \bvec_{\MSset}^*(\tau)) =  \frac{1}{\sqrt n} [\bm J(\tau)]^{-1} \sum_{i \in [n]} \tilde \xvec_i \psi_{\tau}(U_i - \tau) + \bm r_{1n}(\tau)
	$$
	where $U_1, \ldots, U_n$ are i.i.d. Uniform$(0,1)$ conditional on $(\xvec_i)_{i \in [n]}$.
	
	Plugging this back into the previous expression for $\tilde{\bm \omega}_n(\tau)$, we find
	\begin{align}
		\tilde{\bm \omega}_n(\tau) 
		& =  \frac{1}{\sqrt n}  \sum_{i \in [n]} \bm Z_i \psi_{\tau}(U_i - \tau) +  [\bm K(\tau)]^{-1/2} \widetilde \Cmat \bm r_{1n}(\tau)	+ \bm \nu(\tau) \label{eqn:Wald process in terms of Ui}
	\end{align}
	where $\bm Z_i = [\bm K(\tau)]^{-1/2} \widetilde \Cmat \bm J(\tau)^{-1} \tilde{\xvec}_i$ for all $i$ and $\bm \nu(\tau) = [\bm K(\tau)]^{-1/2} \bm \eta(\tau)$.
	The rest of our proof of \eqref{eqn:Wald known J GP approx} will proceed in two stages. First, we will establish that $\frac{1}{\sqrt n}  \sum_{i \in [n]} \bm Z_i \psi_{\tau}(U_i - \tau)$ can be approximated by a Gaussian process. Then, we will derive a rate for $\sup_{\tau \in \mc T}\shortnorm{[\bm K(\tau)]^{-1/2} \widetilde \Cmat \bm r_{1n}(\tau)}_2$.
	
	For $d \in \mathbb Z^+$, let $\mc B_d(1) = \{\bm a \in \reals^d : \norm{\bm a}_2 = 1\}$ denote the unit sphere in $\reals^d$.
	Lemma 13 of \citet{belloni_conditional_2019} provides that if $(\bm Z_i)_{i \in [n]}$ satisfy
	\begin{equation}
		\sup_{\bm \alpha \in S^{m+s} } \mathbb{E}_n[ (\bm \alpha^T\bm Z_i)^2 ] = O(1) \text{, }
		\max_{i \in [n]} \norm{\bm Z_i}_2 = O(\xi_n) \text{, and }
		r^7\xi_n^6 = o(n^{1 - \epsilon}) \label{eqn:Lemma 14 Z conditions}
	\end{equation}
	where $\xi _n$ satisfies $1/\xi_n = O(1)$ and $\epsilon >0$, is a constant, then there exists a sequence of processes $(\bm G_n(\cdot))_{n\geq1}$ such that
	\begin{equation}
		\sup_{\tau \in \mc T} \norm{ \frac{1}{\sqrt n}  \sum_{i \in [n]} \bm Z_i \psi_{\tau}(U_i - \tau) - \bm G_n(\tau) }_2 = o_p(n^{-\epsilon_2'}) \label{eqn:Lemma 14 GP approx}
	\end{equation}
	for some $\epsilon_2' > 0$,
	where each $\bm G_n(\tau)$ is, conditionally on $(\bm Z_i)_{i \in [n]}$, a zero-mean Gaussian process with a.s. continuous sample paths and covariance functions
	\begin{equation}
		\E[\bm G_n(\tau)\bm G_n(\tilde \tau)^T |(\bm Z_i)_{i \in [n]} ] 
		= \mathbb{E}_n(\bm Z_i \bm Z_i^T) (\tau \vee \tilde \tau - \tau \tilde \tau) 
		\text{ for all } \tau, \tilde{\tau} \in \mc T. \label{eqn:Lemma 14 GP variance}
	\end{equation}
	
	We will show that the conditions in \eqref{eqn:Lemma 14 Z conditions} hold with probability converging to $1$ as $n \to \infty$.
	We start by bounding $\max_{i \in [n]} \norm{\bm Z_i}_2$.
	We see that for all $i \in [n]$,
	\begin{align*}
		\norm{\bm Z_i}_2^2
		& = \tr(\bm Z_i^T \bm Z_i  ) = \tr(\bm Z_i \bm Z_i^T )\\
		& = \tr\{   [\bm K(\tau)]^{-1/2} \widetilde \Cmat [\bm J(\tau)]^{-1} \tilde \xvec_i \tilde \xvec_i^T [\bm J(\tau)]^{-1}  \widetilde \Cmat^T [\bm K(\tau)]^{-1/2} \}\\
		& = \tr\{   [\bm J(\tau)]^{-1}  \widetilde \Cmat^T [\bm K(\tau)]^{-1} \widetilde \Cmat [\bm J(\tau)]^{-1} \tilde \xvec_i \tilde \xvec_i^T \}\\
		& \leq \sum_{j = 1}^{m+s}  \lambda_j\{ [\bm J(\tau)]^{-1}  \widetilde \Cmat^T [\bm K(\tau)]^{-1} \widetilde \Cmat [\bm J(\tau)]^{-1} \}\lambda_j\{ \tilde \xvec_i \tilde \xvec_i^T \}\\
		& \leq \lambda_{\max}\{ [\bm J(\tau)]^{-1}  \widetilde \Cmat^T [\bm K(\tau)]^{-1} \widetilde \Cmat [\bm J(\tau)]^{-1} \}\tr \{ \tilde \xvec_i \tilde \xvec_i^T \}
	\end{align*}
	where $\lambda_1(\cdot) \geq \cdots \geq \lambda_{m+s}(\cdot)$ denote the eigenvalues of the matrices and the first inequality follows from von Neumann's trace inequality (which holds since the matrices broken apart in the product are both positive semi-definite).
	The latter term satisfies
	$
	\tr \{ \tilde \xvec_i \tilde \xvec_i^T \} = \norm{\tilde \xvec_i}_2^2 \leq \xi_n^2
	$
	for all $i$.
	Turning to the first term, we find
	\begin{align*}
		\lambda_{\max}\{ [\bm J(\tau)]^{-1} \widetilde \Cmat^T [\bm K(\tau)]^{-1} & \widetilde \Cmat [\bm J(\tau)]^{-1} \}\\
		& = \lambda_{\max}\{ [\bm J(\tau)]^{-1}  \widetilde \Cmat^T [\bm K(\tau)]^{-1} \widetilde \Cmat [\bm J(\tau)]^{-1} \bm \Sigma \bm \Sigma^{-1}\}\\
		& \leq \lambda_{\max} ( \bm \Sigma^{-1} )\lambda_{\max}\{ [\bm J(\tau)]^{-1}  \widetilde \Cmat^T [\bm K(\tau)]^{-1} \widetilde \Cmat [\bm J(\tau)]^{-1} \bm \Sigma \}\\
		& = \lambda_{\max} ( \bm \Sigma^{-1} )\lambda_{\max}\{ [\bm K(\tau)]^{-1} \widetilde \Cmat [\bm J(\tau)]^{-1} \bm \Sigma [\bm J(\tau)]^{-1}  \widetilde \Cmat^T \}\\
		& = \lambda_{\max} ( \bm \Sigma^{-1} )\lambda_{\max}\{\bm I_{m+s}\}\\
		& = \lambda_{\max} ( \bm \Sigma^{-1} ).
	\end{align*}
	By \ref{as:Gram lower bound}, we have that $\bm \Sigma$ is positive definite and
	$\lambda_{\max} ( \bm \Sigma^{-1} ) = [\lambda_{\min}(\bm \Sigma) ]^{-1} = O(1)$.
	Combining this with the previous bounds gives us that 
	$\max_{i \in [n]} \norm{\bm Z_i}_2 = O(\xi_n)$.

	Next we will prove that $\sup_{\bm \alpha \in S^{m+s} } \mathbb{E}_n[ (\bm \alpha^T\bm Z_i)^2 ] = O(1)$ holds with probability converging to $1$.
	We see that 
	\begin{align*}
		\E[ \bm Z_i \bm Z_i^T ]
		& = [\bm K(\tau)]^{-1/2} \widetilde \Cmat [\bm J(\tau)]^{-1} \E[ \tilde \xvec_i \tilde \xvec_i^T ] [\bm J(\tau)]^{-1}  \widetilde \Cmat^T [\bm K(\tau)]^{-1/2}\\
		& =  [\bm K(\tau)]^{-1/2} \bm K(\tau) [\bm K(\tau)]^{-1/2}
		= \bm I_r.
	\end{align*}
	Therefore $\E[\bm \alpha^T \bm Z_i \bm Z_i^T \bm \alpha ] = \bm \alpha^T \bm \alpha = 1$ for any $\bm \alpha \in S^{m+s}$
	and
	$ \sup_{\bm \alpha \in S^{m+s} } \E[ (\bm \alpha^T\bm Z_i)^2 ] = 1 $.
	Define $\delta^2 = \frac{\log n}{n} \cdot \frac{\E[ \max_{i \in [n]}  \norm{\bm Z_i}_2^2  ]}{  \sup_{\bm \alpha \in S^{m+s} } \E[ (\bm \alpha^T\bm Z_i)^2 ]  }$ as in Lemma 28 of \citet{belloni_conditional_2019}.
	By our assumption that $(m+s)^3 \xi_n^2 = o(n^{1 - \delta_1})$ and the fact that $\E[ \max_{i \in [n]}  \norm{\bm Z_i}_2^2] = O(\xi_n^2)$, we see that 
	$ \delta^2 = O( \frac{\log n}{n} \xi_n^2 ) = o(1) $.
	Therefore for sufficiently large $n$, $\delta < 1$ and Lemma 28 of \citet{belloni_conditional_2019} provides that
	$$
	\E\left[ \sup_{\bm \alpha \in S^{m+s}} \left| \mathbb{E}_n[(\bm Z_i^T \bm \alpha)^2] -  \E[(\bm Z_i^T \bm \alpha)^2] \right|  \right] \leq 2 \delta.
	$$
	Be extension, Markov's inequality implies
	$$
	\Pr \left( \sup_{\bm \alpha \in S^{m+s}} \left| \mathbb{E}_n[(\bm Z_i^T \bm \alpha)^2] -  \E[(\bm Z_i^T \bm \alpha)^2] \right| > 1 \right) \leq 2 \delta = o(1).
	$$
	Therefore 
	\begin{align*}
		\sup_{\bm \alpha \in S^{m+s}} \mathbb{E}_n[(\bm Z_i^T \bm \alpha)^2] 
		& \leq \sup_{\bm \alpha \in S^{m+s}} \left| \mathbb{E}_n[(\bm Z_i^T \bm \alpha)^2] -  \E[(\bm Z_i^T \bm \alpha)^2] \right| + \sup_{\bm \alpha \in S^{m+s}}  \E[(\bm Z_i^T \bm \alpha)^2] \\
		& = O_p(1) + 1 = O_p(1).
	\end{align*}
	
	Lastly, we note that $r^7\xi_n^6 = o(n^{1 - \epsilon})$ holds by our assumptions. Therefore the conditions in \eqref{eqn:Lemma 14 Z conditions} hold with probability converging to $1$ as $n \to \infty$ and applying Lemma 13 of \citet{belloni_conditional_2019} yields \eqref{eqn:Lemma 14 GP approx} conditionally on $(\xvec_i)_{i \in [n]}$, where each $\bm G_n(\cdot)$ is a zero-mean Gaussian process with a.s. continuous sample paths and covariance functions given by \eqref{eqn:Lemma 14 GP variance}. 
	
	Next we will bound $\sup_{\tau \in \mc T}\norm{[\bm K(\tau)]^{-1/2} \widetilde \Cmat \bm r_{1n}(\tau)}_2$.
	We see that
	$$	
	\sup_{\tau \in \mc T}\norm{[\bm K(\tau)]^{-1/2} \widetilde \Cmat \bm r_{1n}(\tau)}_2 \leq \sup_{\tau \in \mc T}\norm{[\bm K(\tau)]^{-1/2} \widetilde \Cmat }_2\sup_{\tau \in \mc T}\norm{\bm r_{1n}(\tau)}_2.
	$$
	Theorem \ref{thm:oracle bahadur} provides that $\sup_{\tau \in \mc T}\norm{\bm r_{1n}(\tau)}_2 = o_p(n^{-\epsilon_1'})$ for some $\epsilon_1' > 0$.
	Combining this with \eqref{eqn:lemma:KC bound} in Lemma \ref{lemma:bounding lemma}, we see that $\lVert [\bm K(\tau)]^{-1/2}  \widetilde \Cmat \bm r_{1n}(\tau) \rVert_2 = o_p(n^{-\epsilon_1'})$. 
	Applying this result and \eqref{eqn:Lemma 14 GP approx} to our expression for $\tilde{\bm \omega}_n(\tau)$ in \eqref{eqn:Wald process in terms of Ui} gives us \eqref{eqn:Wald known J GP approx}.\\
	
	We now shift to proving \eqref{eqn:rs known J GP approx}.
	To simplify expressions, we define
	$ \bm A (\tau) =  [\bm K(\tau)]^{-1/2} 
	\widetilde{\Cmat} [\bm J(\tau)]^{-1} $.
	We have from Theorem \ref{thm:oracle bahadur} that
	\begin{align*}
		\tilde{\bm s}(\tau)
		& = \bm A(\tau)[\bm J(\tau)]^{1/2} \bm P(\tau) [\bm J(\tau)]^{-1/2}\frac{1}{\sqrt n} \sum_{i \in [n]} \tilde \xvec_i \psi_{\tau}(y_i - \xvec_i^T \bvec^*(\tau)) \notag \\ 
		& \text{\; \;} + \bm A(\tau) \widetilde{\Cmat}^T \bm H^{-1}(\tau) \bm \eta(\tau) +  \bm A(\tau) \bm r_{3n}(\tau)\\
		& \defeq I_1(\tau) + I_2(\tau) + I_3(\tau).
	\end{align*}
	We will parse each of these terms in turn.
	
	We see that
	\begin{align*}
		\bm A(\tau)[\bm J(\tau)]^{1/2} \bm P(\tau) [\bm J(\tau)]^{-1/2}
		& =  [\bm K(\tau)]^{-1/2} 
		\widetilde{\Cmat} [\bm J(\tau)]^{-1/2} \bm P(\tau) [\bm J(\tau)]^{-1/2}\\
		& = [\bm K(\tau)]^{-1/2} 
		\widetilde{\Cmat} [\bm J(\tau)]^{-1}
		\widetilde{\Cmat}^T
		[\bm H(\tau)]^{-1}
		\widetilde{\Cmat} [\bm J(\tau)]^{-1}\\
		& = [\bm K(\tau)]^{-1/2} 
		\bm H(\tau)
		[\bm H(\tau)]^{-1}
		\widetilde{\Cmat} [\bm J(\tau)]^{-1}\\
		& = [\bm K(\tau)]^{-1/2} 
		\widetilde{\Cmat}[\bm J(\tau)]^{-1}.
	\end{align*}
	Therefore
	\begin{align*}
		I_1(\tau) = [\bm K(\tau)]^{-1/2} 
		\widetilde{\Cmat} [\bm J(\tau)]^{-1} \frac{1}{\sqrt n} \sum_{i \in [n]} \tilde \xvec_i \psi_{\tau}(y_i - \xvec_i^T \bvec^*(\tau))
		= \frac{1}{\sqrt n} \sum_{i \in [n]} \bm Z_i \psi_{\tau}(U_i - \tau)
	\end{align*}
	where $U_i$ is defined as in Lemmas \ref{lemma:Ui uniform} and \ref{lemma:Ui equivalence}.
	As we showed above, the conditions for \eqref{eqn:Lemma 14 GP approx} to hold are satisfied and we have
	$$
	\sup_{\tau \in \mc T} \norm {I_1(\tau) - \widetilde{\bm G}_n(\tau) }_2 = o_p(n^{-\epsilon'_1})
	$$
	for some $\epsilon'_1 > 0$.
	
	Moving on to $I_2(\tau)$, we see that
	\begin{align*}
		I_2(\tau) 
		& = [\bm K(\tau)]^{-1/2} 
		\widetilde{\Cmat} [\bm J(\tau)]^{-1} \widetilde{\Cmat}^T \bm H^{-1}(\tau) \bm \eta(\tau) 
		= [\bm K(\tau)]^{-1/2} \bm H(\tau) \bm H^{-1}(\tau) \bm \eta(\tau)\\
		& = [\bm K(\tau)]^{-1/2}  \bm \eta(\tau)
		= \bm \nu (\tau).
	\end{align*}
	Lastly, we note that
	$$
	\norm{ I_3(\tau) }_2 \leq \norm{ [\bm K(\tau)]^{-1/2} \widetilde{\Cmat} }_2 \norm{ [\bm J(\tau)]^{-1} }_2 \norm{\bm r_{3n} (\tau)}_2.
	$$
	Theorem \ref{thm:oracle bahadur}  provides that $\sup_{\tau \in \mc T} \norm{\bm r_{3n} (\tau)}_2 = o_p(n^{-\epsilon'_2})$ for some $\epsilon'_2 > 0$ while \eqref{eqn:lemma:Jinv bound}  and \eqref{eqn:lemma:KC bound} from Lemma \ref{lemma:bounding lemma} bound the other terms in the above expression uniformly over $\tau \in \mc T$, so that
	$\sup_{\tau \in \mc T}  \norm{I_3(\tau) }_2 = o_p(n^{-\epsilon'_2}) $.
	Combining these results for $I_1(\tau)$, $I_2(\tau)$, and $I_3(\tau)$ gives us \eqref{eqn:rs known J GP approx}.
\end{proof} 

\begin{proof}[Proof of Lemma \ref{lemma:Bessel for Wald with known J}]
	We focus here on proving \eqref{eqn:Wald J to Bessel} as an equivalent argument can be used to prove \eqref{eqn:rs J to Bessel}.

	Let $\bm \nu (\tau) = [\bm K(\tau)]^{-1/2} \bm \eta(\tau)$.		
	Lemma \ref{lemma:GP Zi to EZi} provides that
	\begin{align*}
		\sup_{\tau \in \mc T} \norm{ \frac{(\mathbb{E}_n[\bm Z_i \bm Z_i^T] )^{1/2} \bm B_r(\tau) + \bm \nu(\tau)}{\sqrt{\tau (1 - \tau)} } - Q_{r,\bm \mu(\tau)}(\tau)  }_2
		& \lesssim
		\sup_{\tau \in \mc T} \norm{ (\mathbb{E}_n[\bm Z_i \bm Z_i^T] )^{1/2} \bm B_r(\tau) - \bm B_r(\tau) }_2\\
		& = o_p(1)
	\end{align*}
	under our assumptions on $r$ and $\xi_n$, with the first inequality following from the fact that $\mc T \subseteq [\delta, 1 -\delta]$.
	The reverse triangle inequality then yields
	\begin{equation*}
		\left|\sup_{\tau \in \mc T} \frac{\norm{ (\mathbb{E}_n[\bm Z_i \bm Z_i^T] )^{1/2} \bm B_r(\tau) + \bm \nu(\tau)}_2}{\sqrt{\tau (1 - \tau)}} - \sup_{\tau \in \mc T} Q_{r,\bm \mu(\tau)}(\tau) \right| = o_p(1), \label{eqn:sup diff Zi to EZi}
	\end{equation*}
	which further implies
	\begin{equation}
		\sup_{\tau \in \mc T} \frac{\norm{ (\mathbb{E}_n[\bm Z_i \bm Z_i^T] )^{1/2} \bm B_r(\tau) + \bm \nu(\tau)}_2}{\sqrt{\tau (1 - \tau)}} \overset{d}{\to} \sup_{\tau \in \mc T} Q_{r,\bm \mu(\tau)}(\tau).\label{eqn:EZi conv in dist}
	\end{equation}
	
	Now consider $\bm G_n(\cdot)$ from Lemma \ref{lemma:GP for Wald with known} and $(\mathbb{E}_n[\bm Z_i \bm Z_i^T ])^{1/2}\bm B_r(\cdot)$.
	Since they are both zero-mean Gaussian processes with the same covariance function and their sample paths are a.s. continuous, we have
	$ \bm G_n(\cdot) \overset{d}{=} (\mathbb{E}_n[\bm Z_i \bm Z_i^T ])^{1/2}\bm B_r(\cdot) $, conditionally on $(\xvec_i)_{i \in [n]}$. 
	where for two processes $A(\cdot)$ and $B(\cdot)$, $A(\cdot) \overset{d}{=} B(\cdot)$ means that for any finite set $0 < \tau_1 < \ldots < \tau_n < 1$, 
	$(A(\tau_1) \cdots A(\tau_n))^T \overset{d}{=} (B(\tau_1) \cdots B(\tau_n))^T $.
	This further implies $ \bm G_n(\cdot) + \bm \nu(\cdot) \overset{d}{=} (\mathbb{E}_n[\bm Z_i \bm Z_i^T ])^{1/2}\bm B_r(\cdot) + \bm \nu(\cdot) $, conditionally on $(\xvec_i)_{i \in [n]}$.
	Therefore for $z \in \reals$
	$$
	\Pr\left(\sup_{\tau \in \mc T} \frac{\norm{\bm G_n(\tau) + \bm \nu (\tau)}_2}{\sqrt{\tau(1- \tau)}} \leq z \middle| (\xvec_i)_{i \in [n]}\hspace{-0.5mm} \right) = \Pr\left(\sup_{\tau \in \mc T} \frac{\norm{(\mathbb{E}_n[\bm Z_i \bm Z_i^T ])^{1/2}\bm B_r(\tau) + \bm \nu(\tau)}_2}{\sqrt{\tau(1- \tau)}} \leq z \middle| (\xvec_i)_{i \in [n]} \hspace{-0.5mm} \right)
	$$
	which by the Tower law implies
	$$
	\Pr\left(\sup_{\tau \in \mc T} \frac{\norm{\bm G_n(\tau) + \bm \nu (\tau)}_2}{\sqrt{\tau(1- \tau)}} \leq z \right) = \Pr\left(\sup_{\tau \in \mc T} \frac{\norm{(\mathbb{E}_n[\bm Z_i \bm Z_i^T ])^{1/2}\bm B_r(\tau) + \bm \nu(\tau)}_2}{\sqrt{\tau(1- \tau)}} \leq z \right).
	$$
	Combining this with \eqref{eqn:EZi conv in dist} yields
	\begin{equation}
		\sup_{\tau \in \mc T} \frac{\norm{\bm G_n(\tau) + \bm \nu (\tau)}_2}{\sqrt{\tau(1- \tau)}} \overset{d}{\to} \sup_{\tau \in \mc T} Q_{r, \bm \mu(\tau)} (\tau) . \label{eqn:G conv in dist}
	\end{equation}
	
	Lemma \ref{lemma:GP for Wald with known} implies that
	\begin{align*}
		\sup_{\tau \in \mc T} \norm{ \frac{\tilde{\bm \omega}(\tau) }{\sqrt{\tau(1 - \tau)}} -  \frac{\bm G_n(\tau) + \bm \nu(\tau)}{\sqrt{\tau(1 - \tau)}}  }_2
		& = \sup_{\tau \in \mc T} \frac{1}{\sqrt{\tau(1 - \tau)}} \norm{ \tilde{\bm \omega}(\tau) -  \bm G_n(\tau) - \bm \nu(\tau) }_2\\
		& \lesssim \sup_{\tau \in \mc T} \norm{ \tilde{\bm \omega}(\tau) -  \bm G_n(\tau) - \bm \nu(\tau) }_2\\
		& = o_p(n^{-\epsilon'})
	\end{align*}
	conditionally on $(\xvec_i)_{i \in [n]}$, where the first inequality holds since $\mc T \subseteq [\delta, 1 - \delta]$ and the final line follows from the lemma.
	Applying the reverse triangle inequality, we then have
	\begin{align*}
		\left|\sup_{\tau \in \mc T} \norm{ \frac{\tilde{\bm \omega}(\tau) }{\sqrt{\tau(1 - \tau)}} }_2 - \sup_{\tau \in \mc T} \norm{ \frac{\bm G_n(\tau) + \bm \nu(\tau)}{\sqrt{\tau(1 - \tau)}}  }_2\right|
		& = o_p(n^{-\epsilon'}) = o(1)
	\end{align*}
	conditionally on $(\xvec_i)_{i \in [n]}$---that is, for every $\eps > 0$
	$$
		\Pr\left(  \left|\sup_{\tau \in \mc T} \norm{ \frac{\tilde{\bm \omega}(\tau) }{\sqrt{\tau(1 - \tau)}} }_2 - \sup_{\tau \in \mc T} \norm{ \frac{\bm G_n(\tau) + \bm \nu(\tau)}{\sqrt{\tau(1 - \tau)}}  }_2\right| > \eps \middle| (\xvec_i)_{i \in [n]} \right) \to 0.
	$$
	By the Tower Law and the Bounded Convergence Theorem, this implies that the unconditional probability satisfies
	$$
	\Pr\left(  \left|\sup_{\tau \in \mc T} \norm{ \frac{\tilde{\bm \omega}(\tau) }{\sqrt{\tau(1 - \tau)}} }_2 - \sup_{\tau \in \mc T} \norm{ \frac{\bm G_n(\tau) + \bm \nu(\tau)}{\sqrt{\tau(1 - \tau)}}  }_2\right| > \eps \right) \to 0.
	$$
	Combining this result with \eqref{eqn:G conv in dist} and applying Slutsky's Theorem, we find
	$$  \sup_{\tau \in \mc T} \norm{ \frac{\tilde{\bm \omega}(\tau) }{\sqrt{\tau(1 - \tau)}} }_2 \overset{d}{\to} \sup_{\tau \in \mc T} Q_{r, \bm \mu(\tau)}(\tau) .$$
	From here, the Continuous Mapping Theorem yields \eqref{eqn:Wald J to Bessel}.
\end{proof}

\begin{proof}[Proof of Lemma \ref{lemma:Wald J to Jhat}]
	We begin with the proof of \eqref{eqn:Wald to Wald J}.
	As we established in the proof of Lemma \ref{lemma:GP for Wald with known},
	$ \widetilde \Cmat \bvecOR_{a,\MSset}(\tau) - \bm \Psi(\tau) = \widetilde \Cmat ( \bvecOR_{a,\MSset}(\tau) - \bvec_{\MSset}^*(\tau) ) + \bm \eta(\tau)/\sqrt{n}  $.
	From here, we see that
	\begin{align*}
		\sup_{\tau \in \mc T}| W_n^o&(\tau)  - \widetilde{W}_n(\tau) | \\
		& = \sup_{\tau \in \mc T}\frac{1}{\tau (1 - \tau)}|n(\widetilde \Cmat \bvecOR_{a}(\tau) - \bm \Psi(\tau) )^T\left( [\hatmat{K}(\tau)]^{-1} - [\bm K(\tau)]^{-1}  \right)(\widetilde \Cmat \bvecOR_{a}(\tau) - \bm \Psi(\tau) ) |\\
		& \leq \frac{1}{\delta (1 - \delta)}\sup_{\tau \in \mc T}|n(\widetilde \Cmat \bvecOR_{a}(\tau) - \bm \Psi(\tau) )^T\left( [\hatmat{K}(\tau)]^{-1} - [\bm K(\tau)]^{-1}  \right)(\widetilde \Cmat \bvecOR_{a}(\tau) - \bm \Psi(\tau) ) |\\
		& \lesssim \sup_{\tau \in \mc T}|n (\widetilde \Cmat \bvecOR_{a}(\tau) - \bm \Psi(\tau) )^T [\bm K(\tau)]^{-1/2}\\
		& \hspace{0.4cm} \times \left( [\bm K(\tau)]^{1/2}[\hatmat{K}(\tau)]^{-1} [\bm K(\tau)]^{1/2} - \bm I_r  \right) [\bm K(\tau)]^{-1/2}(\widetilde \Cmat \bvecOR_{a}(\tau) - \bm \Psi(\tau) ) |\\
		& \leq \sup_{\tau \in \mc T}  \norm{ [\bm K(\tau)]^{1/2} [\hatmat K(\tau)]^{-1}  [\bm K(\tau)]^{1/2} - \bm I_r}_2\\
		& \text{\; \; \;} \times \sup_{\tau \in \mc T}\norm{\sqrt{n} [\bm K(\tau)]^{-1/2}\widetilde \Cmat ( \bvecOR_{a,\MSset}(\tau) - \bvec_{\MSset}^*(\tau) ) +  [\bm K(\tau)]^{-1/2} \bm \eta(\tau)  }_2^2,
	\end{align*}
	where the first inequality follows since $\mc T \subseteq [\delta, 1 - \delta] $.
	
	Starting with the second term, we see that
	\begin{align*}
		\sup_{\tau \in \mc T}\lVert \sqrt{n} [\bm K(\tau)]^{-1/2}& \widetilde \Cmat ( \bvecOR_{a,\MSset}(\tau) - \bvec_{\MSset}^*(\tau) ) \rVert_2\\
		& \leq \sup_{\tau \in \mc T} \norm{[\bm K(\tau)]^{-1/2}\widetilde \Cmat}_2 \sup_{\tau \in \mc T} \norm{\sqrt{n} [\bm ( \bvecOR_{a,\MSset}(\tau) - \bvec_{\MSset}^*(\tau) ) }_2\\
		& = O_p(\sqrt{m+s})
	\end{align*}
	by \eqref{eqn:lemma:KC bound} and Theorem \ref{thm:oracle rate}.
	Likewise, we find 
	\begin{align*}
		\sup_{\tau \in \mc T}\norm{ [\bm K(\tau)]^{-1/2} \bm \eta(\tau)  }_2
		& \leq  \sup_{\tau \in \mc T}\norm{ [\bm K(\tau)]^{-1/2} }_2\sup_{\tau \in \mc T}\norm{ \bm \eta(\tau)  }_2\\
		& \leq  \sup_{\tau \in \mc T}( \norm{ [\bm K(\tau)]^{-1} }_2)^{1/2} \sup_{\tau \in \mc T}\norm{ \bm \eta(\tau)  }_2\\
		& 
		= O\left( \sqrt{m+s } \right)
	\end{align*}
	by \eqref{eqn:lemma:Kinv bound} and \ref{as:local alts}.
	
	Under our assumptions that $(m+s)^2 \xi_n^2 \log n = o(n^{1 - \epsilon} h)$ and $h = o((m+s)^{-1/2}n^{-\epsilon/2})$, \eqref{eqn:lemma K - I bound} in Lemma \ref{lemma:bounding lemma} implies
	\begin{align*}
		\sup_{\tau \in \mc T} \norm{ [\bm K(\tau)]^{1/2} [\hatmat K(\tau)]^{-1}  [\bm K(\tau)]^{1/2} - \bm I_r}_2 
		& \lesssim_p \sqrt{\frac{(m+s)\xi_n^2 \log n}{n h}} + h  + \sqrt{\frac{\xi_n^2 \log n}{n}}\\
		& = o((m+s)^{-1/2}n^{-\epsilon/2}).
	\end{align*}
	
	Combining these bounds, we have
	$$
	\sup_{\tau \in \mc T}| W_n^o(\tau)  - \widetilde{W}_n(\tau) | 
	= o_p((m+s)^{1/2}n^{-\epsilon/2}) = o(1),
	$$
	where the final equality follows from our assumption that $m+s = o(n^\epsilon)$, giving us \eqref{eqn:Wald to Wald J}.
	
	We now shift to proving \eqref{eqn:rs to rs J}.
	For ease of notation, we define $\hatvec b^o(\tau) = (\hat a^0_1(\tau), \ldots, \hat a^0_n(\tau))^T - \mathbbm{1}_n (1 - \tau) $ so that 
	$
	\sum_{i \in [n]} \tilde \xvec_i (\hat a^o_i(\tau) - (1 - \tau)) = \widetilde{\xmat}^T \hatvec b^o(\tau)
	$. We drop $\tau$ from our notation for the $\bm J$ and $\bm K$ matrices and their estimators simplify expressions.
	
	Using this notation, we have that
	\begin{align}
		\sup_{\tau \in \mc T}| R_n^o&(\tau) - \widetilde{R}_n(\tau) | \notag \\
		& =  \sup_{\tau \in \mc T} \frac{1} {\tau(1 - \tau)} \left| \frac{1}{n} \hatvec b^o(\tau)^T\widetilde \xmat \left( \hatmat J^{-1} \widetilde{\Cmat}^T \hatmat K^{-1} \widetilde{\Cmat} \hatmat J^{-1} - \bm J^{-1} \widetilde{\Cmat}^T \bm K^{-1} \widetilde{\Cmat} \bm J^{-1} \right) \widetilde{\xmat}^T \hatvec b^o(\tau)  \right | \notag \\
		& \lesssim  \sup_{\tau \in \mc T} \norm{ \hatmat J^{-1} \widetilde{\Cmat}^T \hatmat K^{-1} \widetilde{\Cmat} \hatmat J^{-1} - \bm J^{-1} \widetilde{\Cmat}^T \bm K^{-1} \widetilde{\Cmat} \bm J^{-1} }_2 \norm{\frac{1}{\sqrt{n}}\widetilde{\xmat}^T \hatvec b^o(\tau) }_2^2, \label{eqn:rs J vs Jhat diff}
	\end{align}
	again using the fact that $\mc T \subseteq [\delta, 1 - \delta]$.
	Focusing on the first term in \eqref{eqn:rs J vs Jhat diff}, we see that
	\begin{multline*}
		\norm{ \hatmat J^{-1} \widetilde{\Cmat}^T \hatmat K^{-1} \widetilde{\Cmat} \hatmat J^{-1} - \bm J^{-1} \widetilde{\Cmat}^T \bm K^{-1} \widetilde{\Cmat} \bm J^{-1} }_2 \\
		\leq \underbrace{\norm{ \hatmat J^{-1} \widetilde{\Cmat}^T \hatmat K^{-1} \widetilde{\Cmat} \hatmat J^{-1} - \hatmat J^{-1} \widetilde{\Cmat}^T \bm K^{-1} \widetilde{\Cmat} \hatmat J^{-1} }_2}_{\defeq I_1(\tau)}
		+ \underbrace{\norm{ \hatmat J^{-1} \widetilde{\Cmat}^T \bm K^{-1} \widetilde{\Cmat} \hatmat J^{-1} - \bm J^{-1} \widetilde{\Cmat}^T \bm K^{-1} \widetilde{\Cmat} \bm J^{-1} }_2}_{\defeq I_2(\tau)} .
	\end{multline*}
	Applying \eqref{eqn:lemma:hatJinv bound}, \eqref{eqn:lemma:KC bound}, and \eqref{eqn:lemma K - I bound} from Lemma \ref{lemma:bounding lemma}, we then find 
	\begin{align*}
		I_1(\tau)
		& \leq \norm{ \widetilde \Cmat^T \hatmat K^{-1} \widetilde{\Cmat} - \widetilde \Cmat^T \bm K^{-1} \widetilde{\Cmat} }_2 \norm{\hatmat J^{-1}}_2^2\\
		& \lesssim_p \norm{ \widetilde \Cmat^T \bm K^{-1/2} \left( \bm K^{1/2} \hatmat K^{-1} \bm K^{1/2} - \bm I  \right) \bm K^{-1/2}\widetilde{\Cmat} }_2\\
		& \leq \norm{  \bm K^{1/2} \hatmat K^{-1} \bm K^{1/2} - \bm I }_2 \norm{\bm K^{-1/2}\widetilde{\Cmat}}_2^2\\
		& \lesssim_p \sqrt{\frac{(m+s)\xi_n^2 \log n}{n h}} + h  + \sqrt{\frac{\xi_n^2 \log n}{n}}\\
		& \ll (m+s)^{-1/2} n^{-\epsilon/2}
	\end{align*}
	uniformly over $\tau \in \mc T$ where the final inequality follows from our assumptions that $(m+s)^2 \xi_n^2 \log n = o(n^{1 - \epsilon} h)$ and $h = o((m+s)^{-1/2}n^{-\epsilon/2})$.
	Likewise, we apply \eqref{eqn:lemma:Jinv bound}, \eqref{eqn:lemma:J bound}, \eqref{eqn:lemma:KC bound}, and \eqref{eqn:lemma:hatJinv - Jinv bound} from Lemma \ref{lemma:bounding lemma} to find
	\begin{align*}
		I_2(\tau)
		& \leq \norm{ \hatmat J^{-1} \widetilde{\Cmat}^T \bm K^{-1} \widetilde{\Cmat} \hatmat J^{-1} - \bm J^{-1} \widetilde{\Cmat}^T \bm K^{-1} \widetilde{\Cmat} \hatmat J^{-1} }_2 
		+ \norm{ \bm J^{-1} \widetilde{\Cmat}^T \bm K^{-1} \widetilde{\Cmat} \hatmat J^{-1} - \bm J^{-1} \widetilde{\Cmat}^T \bm K^{-1} \widetilde{\Cmat} \bm J^{-1} }_2\\
		& \leq \norm{ \hatmat J^{-1} - \bm J^{-1} }_2 \norm{\widetilde{\Cmat}^T \bm K^{-1}\widetilde{\Cmat}}_2 \norm{ \hatmat J^{-1} }_2 +  \norm{ \bm J^{-1} }_2\norm{\widetilde{\Cmat}^T \bm K^{-1}\widetilde{\Cmat}}_2\norm{ \hatmat J^{-1} - \bm J^{-1} }_2 \\
		& \lesssim_p \norm{\widetilde{\Cmat}^T \bm K^{-1}\widetilde{\Cmat}}_2\norm{ \hatmat J^{-1} - \bm J^{-1}}_2
		= \norm{ \bm K^{-1/2}\widetilde{\Cmat}}_2^2 \norm{ \hatmat J^{-1} - \bm J^{-1}}_2\\
		& \lesssim_p \sqrt{\frac{(m+s)\xi_n^2 \log n}{n h}} + h 
		\ll (m+s)^{-1/2} n^{-\epsilon/2}
	\end{align*}
	uniformly for all $\tau \in \mc T$.
	
	Moving on to the second term in \eqref{eqn:rs J vs Jhat diff}, we have from \eqref{eqn:rs1}-\eqref{eqn:reduced oracle score 2} that
	\begin{align*}
		\frac{1}{\sqrt n} \widetilde \xmat^T \hatvec b^o(\tau)
		= \frac{1}{\sqrt n} \sum_{i \in [n]} \tilde \xvec_i \psi_{\tau}(y_i - \xvec_i^T \bvec^*(\tau))
		- \sqrt{n} \bm J(\tau) (\bvecOR_{0,\MSset}(\tau) - \bvec^*_{\MSset}(\tau)) + \tilde{\bm r}(\tau)
	\end{align*}
	where $\sup_{\tau \in \mc T} \norm{\tilde{\bm r}(\tau)}_2 \lesssim_p \frac{(m+s)^{3/4} \xi_n^{1/2} \log^{1/2}n }{n^{1/4}} \lesssim \sqrt{m+s}$.
	
	Lemmas  \ref{lemma:Ui uniform}  and \ref{lemma:Ui equivalence} imply that $\frac{1}{\sqrt n} \sum_{i \in [n]} \tilde \xvec_i \psi_{\tau}(y_i - \xvec_i^T \bvec^*(\tau)) = \frac{1}{\sqrt n} \sum_{i \in [n]} \tilde \xvec_i \psi_{\tau}(U_i - \tau)$,
	where $U_1, \ldots, U_n$ are i.i.d. Uniform$(0,1)$ conditional on $(\xvec_i)_{i \in [n]}$.
	Lemma 31 of \citet{belloni_conditional_2019} provides
	$$\sup_{\tau \in \mc T} \norm{\frac{1}{\sqrt n} \sum_{i \in [n]} \tilde \xvec_i \psi_{\tau}(U_i - \tau)}_2 \lesssim_p \sqrt{m+s}.$$
	In addition, we find that
	\begin{align*}
		\sup_{\tau \in \mc T} \norm{\sqrt{n} \bm J(\tau) (\bvecOR_{0,\MSset}(\tau) - \bvec^*_{\MSset}(\tau))}_2
		& \leq \sup_{\tau \in \mc T} \norm{ \bm J(\tau) } \norm{\sqrt{n} (\bvecOR_{0,\MSset}(\tau) - \bvec^*_{\MSset}(\tau))}_2 \lesssim_p \sqrt{m+s}
	\end{align*}
	by \eqref{eqn:lemma:J bound} and Theorem \ref{thm:oracle rate}.
	
	Applying these bounds to \eqref{eqn:rs J vs Jhat diff}, we have
	$$
	\sup_{\tau \in \mc T}  | R_n^o(\tau) - \widetilde{R}_n(\tau)  |
	= o_p ((m+s)^{1/2}n^{-\epsilon/2}) = o(1)
	$$
	where the final equality follows from our assumption that $m+s \ll n^{\epsilon}$, finishing the proof		
\end{proof}

\subsubsection{Supporting lemmas}

\begin{lemma} \label{lemma:bounding lemma}
	Under \ref{as:iid array}-\ref{as:local alts}, the following hold:
	\begin{eqnarray}
		\exists C_1 \ni \lambda_{\max} \{ [\bm J(\tau) ]^{-1} \} & \leq & C_1  \text{\; \;}\forall_{n,\tau \in \overline{\mc T}}  \label{eqn:lemma:Jinv bound}\\
		\exists C_2 \ni \lambda_{\max} \{ \bm J(\tau) \} & \leq & C_2  \text{\; \;}\forall_{n,\tau \in \overline{\mc T}} \label{eqn:lemma:J bound}\\
		\sup_{\tau \in \mc T} \norm{ [\bm K(\tau)]^{-1/2} \widetilde{\Cmat} }_2 & = & O(1) \label{eqn:lemma:KC bound}\\
		\sup_{\tau \in \mc T} \norm{ [\bm K(\tau)]^{-1} }_2 & = & O(1). \label{eqn:lemma:Kinv bound}
	\end{eqnarray}
	If in addition $h = o(1)$ and $ (m + s) \xi_n^2 \log^2 n = o(nh) $, then	
	\begin{eqnarray}	
		\sup_{\tau \in \mc T} \norm{ [\hatmat J(\tau)]^{-1} }_2 & = & O_p(1)\label{eqn:lemma:hatJinv bound}\\
		\sup_{\tau \in \mc T} \norm{ [\hatmat J(\tau)]^{-1} - [\bm J(\tau)]^{-1}  }_2 & = & O_p\left( \sqrt{\frac{(m+s)\xi_n^2 \log n}{n h}} + h \right) \text{\; \;}\label{eqn:lemma:hatJinv - Jinv bound} \\
		\sup_{\tau \in \mc T} \norm{ \bm K^{1/2} \hatmat K^{-1}  \bm K^{1/2} - \bm I_r}_2 & = & O_p\left( \sqrt{\frac{(m+s)\xi_n^2 \log n}{n h}} + h  + \sqrt{\frac{\xi_n^2 \log n}{n}} \right) \text{ \; \;} \label{eqn:lemma K - I bound} 
	\end{eqnarray}
	where $\bm K = \bm K(\tau)$ and $\hatmat K = \hatmat K(\tau)$.
\end{lemma}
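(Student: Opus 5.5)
The deterministic bounds \eqref{eqn:lemma:Jinv bound}--\eqref{eqn:lemma:Kinv bound} come first, from the definitions and \ref{as:density conditions}, \ref{as:Gram lower bound}, \ref{as:local alts}. By \ref{as:density conditions}(ii) there is $\underline f>0$ with $f_{y|\xvec}(\xvec^T\bvec^*(\tau))\ge \underline f$ uniformly over $\tau\in\overline{\mc T}$, and by \ref{as:density conditions}(i) the same density is at most $\bar f$. This gives $\underline f\,\bm\Sigma\preceq \bm J(\tau)\preceq \bar f\,\bm\Sigma$, so \ref{as:Gram lower bound} yields both \eqref{eqn:lemma:Jinv bound} and \eqref{eqn:lemma:J bound}. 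For $\bm K(\tau)=\widetilde\Cmat\bm J^{-1}\bm\Sigma\bm J^{-1}\widetilde\Cmat^T$ we have the bound $\bm K(\tau)\succeq \lambda_{\min}(\bm\Sigma)\lambda_{\max}(\bm J)^{-2}\,\widetilde\Cmat\widetilde\Cmat^T=c\,\Cmat\Cmat^T$. Since $\lambda_{\max}\{(\Cmat\Cmat^T)^{-1}\}=O(1)$ by \ref{as:local alts}, this gives \eqref{eqn:lemma:Kinv bound}. For \eqref{eqn:lemma:KC bound}, I would write $\|\bm K^{-1/2}\widetilde\Cmat\|_2^2=\lambda_{\max}(\widetilde\Cmat^T\bm K^{-1}\widetilde\Cmat)$. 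Using $\bm K\succeq c\,\Cmat\Cmat^T$, this is at most $c^{-1}\lambda_{\max}(\Cmat^T(\Cmat\Cmat^T)^{-1}\Cmat)=c^{-1}$, because the latter matrix is a projection. Hence \eqref{eqn:lemma:KC bound} holds without needing $\|\Cmat\|_2$ to be bounded.

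The stochastic part rests on a uniform bound for $\sup_{\tau}\|\hatmat J(\tau)-\bm J(\tau)\|_2$. I would split this into three pieces. The first is an empirical process term, $\hatmat J(\tau)-\E_{\bvec}[\cdot]|_{\bvec=\hat\bvec(\tau)}$, taken over the class $\{(2h)^{-1}(\bm\alpha^T\tilde\xvec)^21(|y-\xvec^T\bvec|\le h):\tau\in\mc T,\ \|\bvec_{\MSset}-\bvec^*_{\MSset}(\tau)\|_2\le d_n,\ \|\bm\alpha\|_2=1\}$, with $d_n\asymp\sqrt{(m+s)/n}$ by Theorem \ref{thm:oracle rate}. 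This class is VC-type with index $O(m+s)$, as in Lemma 25 of \citet{belloni_conditional_2019}. Its envelope is $\xi_n^2/(2h)$ and its variance is $\lesssim 1/h$, since the conditional density is bounded, and after a net argument over $\bm\alpha$ the variance is $\lesssim\xi_n^2/h$. The maximal inequality (Lemma 21 of \citet{belloni_conditional_2019}) then gives the rate $\sqrt{(m+s)\xi_n^2\log n/(nh)}+(m+s)\xi_n^2\log n/(nh)$. The second term is dominated by the first under $(m+s)\xi_n^2\log^2 n=o(nh)$. The second piece is the smoothing bias: by the mean value theorem and \ref{as:density conditions}(iii), $(2h)^{-1}\{F(\xvec^T\bvec+h)-F(\xvec^T\bvec-h)\}=f(\xvec^T\bvec)+O(h)$. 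The third piece is the plug-in error: $|f(\xvec^T\hat\bvec)-f(\xvec^T\bvec^*)|\lesssim\xi_n d_n$, which is $O_p(\sqrt{(m+s)\xi_n^2/n})$ and so absorbed into the first rate. Multiplying by $\lambda_{\max}(\bm\Sigma)$ gives $\sup_\tau\|\hatmat J-\bm J\|_2=O_p(\sqrt{(m+s)\xi_n^2\log n/(nh)}+h)=o_p(1)$. Weyl's inequality with \eqref{eqn:lemma:Jinv bound} then gives \eqref{eqn:lemma:hatJinv bound}, and the identity $\hatmat J^{-1}-\bm J^{-1}=\hatmat J^{-1}(\bm J-\hatmat J)\bm J^{-1}$ gives \eqref{eqn:lemma:hatJinv - Jinv bound}.

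For \eqref{eqn:lemma K - I bound}, I also need $\|\hatmat\Sigma-\bm\Sigma\|_2=O_p(\sqrt{\xi_n^2\log n/n})$. This follows from Rudelson's inequality, that is, Lemma 28 of \citet{belloni_conditional_2019} as used in the proof of Lemma \ref{lemma:GP for Wald with known}. Set $\bm A=\bm K^{-1/2}\widetilde\Cmat$, which has $\|\bm A\|_2=O(1)$. Then $\bm K^{-1/2}(\hatmat K-\bm K)\bm K^{-1/2}$ expands by telescoping into terms of the form $\bm A(\hatmat J^{-1}-\bm J^{-1})\hatmat\Sigma\hatmat J^{-1}\bm A^T$, $\bm A\bm J^{-1}(\hatmat\Sigma-\bm\Sigma)\hatmat J^{-1}\bm A^T$, and so on. Each such term is bounded by the rates above, so $\|\bm K^{-1/2}\hatmat K\bm K^{-1/2}-\bm I_r\|_2=O_p(\text{rate})=o_p(1)$. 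Inverting, $\bm K^{1/2}\hatmat K^{-1}\bm K^{1/2}=(\bm I+\bm E)^{-1}$, and $\|(\bm I+\bm E)^{-1}-\bm I\|_2\le\|\bm E\|_2/(1-\|\bm E\|_2)$.

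I expect the main obstacle to be the empirical process step. There the supremum is jointly over $\tau$, over the random argument $\hat\bvec(\tau)$ (handled through the localized set $\mc R(d_n)$), and over unit vectors $\bm\alpha$. Controlling the variance $\sup\E f^2$ in order to get the $1/\sqrt{nh}$ scaling, rather than the crude envelope bound, is the delicate point.
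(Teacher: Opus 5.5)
Your proposal is correct and matches the paper's proof in structure. The deterministic bounds \eqref{eqn:lemma:Jinv bound}, \eqref{eqn:lemma:J bound} and \eqref{eqn:lemma:Kinv bound} come from the same eigenvalue sandwiches. The stochastic part uses the same chain: a uniform bound on $\hatmat J-\bm J$, Weyl's inequality, the identity $\hatmat J^{-1}-\bm J^{-1}=\hatmat J^{-1}(\bm J-\hatmat J)\bm J^{-1}$, telescoping of $\hatmat J^{-1}\hatmat\Sigma\hatmat J^{-1}-\bm J^{-1}\bm\Sigma\bm J^{-1}$, and inversion of $\bm K^{-1/2}\hatmat K\bm K^{-1/2}$.

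The differences are minor. You get \eqref{eqn:lemma:KC bound} from $\bm K\succeq c\,\Cmat\Cmat^T$ and the projection $\widetilde\Cmat^T(\Cmat\Cmat^T)^{-1}\widetilde\Cmat$, whereas the paper factors through $(\bm J^{-1}\bm\Sigma\bm J^{-1})^{1/2}$. You also re-derive the rate for $\sup_\tau\|\hatmat J-\bm J\|_2$ (localization, VC maximal inequality, $O(h)$ bias, absorbed plug-in error), while the paper simply cites Lemma 29 of \citet{belloni_conditional_2019}. In that derivation, the variance bound $\xi_n^2/h$ follows directly from $(\bm\alpha^T\tilde\xvec)^4\le\xi_n^2(\bm\alpha^T\tilde\xvec)^2$, with no net argument needed. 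Your intermediate claim of a $1/h$ variance would require a fourth-moment bound on $\bm\alpha^T\tilde\xvec$ that is not assumed, though it does not affect the final rate you use.
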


\begin{proof}[Proof of Lemma \ref{lemma:bounding lemma}]
	Throughout this proof, we will drop the $\tau$ argument from $\bm J(\tau), \bm K(\tau)$, etc. where the dependence on $\tau$ is clear from context.
	
	We see that \eqref{eqn:lemma:Jinv bound} and \eqref{eqn:lemma:J bound} both follow immediately from \ref{as:density conditions} and \ref{as:Gram lower bound}.
	Next we will prove \eqref{eqn:lemma:KC bound}.
	Conditions \ref{as:density conditions} and \ref{as:Gram lower bound} imply that both $\bm J$ and $\bm \Sigma$ have their eigenvalues bounded from above and away from zero uniformly over $n$. Therefore both of these matrices and, by extension, $\bm J^{-1}\bm \Sigma \bm J^{-1}$ are positive definite.
	Using this fact, we find that for all $\tau$
	\begin{align}
		\big \lVert &\bm K^{-1/2}  \widetilde \Cmat \big \rVert_2 \notag \\
		& = \norm{\bm K^{-1/2} \widetilde \Cmat (\bm J^{-1}\bm \Sigma \bm J^{-1})^{1/2} (\bm J^{-1}\bm \Sigma \bm J^{-1})^{-1/2}}_2 \notag \\
		& \leq \norm{\bm K^{-1/2} \widetilde \Cmat (\bm J^{-1}\bm \Sigma \bm J^{-1})^{1/2}}_2 \norm{(\bm J^{-1}\bm \Sigma \bm J^{-1})^{-1/2}}_2. \label{eqn:KC prod bound}
	\end{align}
	The first term in this bound simplifies to
	\begin{align*}
		\lVert \bm K^{-1/2} \widetilde \Cmat (\bm J^{-1}\bm \Sigma & \bm J^{-1})^{1/2} \rVert_2\\
		& = \lambda_{\max}^{1/2} \{\bm K^{-1/2} \widetilde \Cmat \bm J^{-1}\bm \Sigma \bm J^{-1} \widetilde \Cmat^T  \bm K^{-1/2} \}\\
		& = \lambda_{\max}^{1/2} \{\bm K^{-1/2} \bm K  \bm K^{-1/2} \}\\
		& = \lambda_{\max}^{1/2}\{ \bm I_r \} = 1.
	\end{align*}
	Turning to the second term, we find
	\begin{align*}
		\norm{(\bm J^{-1}\bm \Sigma \bm J^{-1})^{-1/2}}_2
		& = \lambda_{\max}^{1/2}\{ (\bm J^{-1}\bm \Sigma \bm J^{-1})^{-1}  \} \\
		& = \lambda_{\max}^{1/2}\{ \bm J\bm \Sigma^{-1} \bm J \} \\
		& \leq \lambda_{\max} \{\bm J\} \lambda_{\max}^{1/2} \{ \bm \Sigma^{-1} \}\\
		& = O(1)
	\end{align*}
	for all $\tau \in \overline{\mc T}$ by \eqref{eqn:lemma:J bound} and \ref{as:Gram lower bound}.
	Applying these to \eqref{eqn:KC prod bound} yields \eqref{eqn:lemma:KC bound}.

	Moving on to \eqref{eqn:lemma:Kinv bound}, \ref{as:Gram lower bound} implies that there exists a constant $C_1$ such that $\lambda_{\min}\{\bm \Sigma\} > C_1$ for all $n$.
	Moreover, \eqref{eqn:lemma:J bound} provides that there exists a constant $C_2$ such that
	$\lambda_{\min} \{ \bm J^{-2} \} = (\lambda_{\max}\{ \bm J  \} )^{-2} > C_2$ for all $n, \tau \in \overline{\mc T}$.
	Since 
	$$
	\lambda_{\min} \{ \bm J^{-2} \}  = \min_{\bm a \neq \bm 0} \frac{\bm a^T \bm J^{-2} \bm a}{\norm{\bm a}_2^2} = \min_{\bm a \neq \bm 0} \frac{ \norm{\bm J^{-1} \bm a}_2^2 }{\norm{\bm a}_2^2},
	$$
	this implies $\norm{\bm J^{-1} \bm a}_2^2 > C_2 \norm{\bm a}_2^2$ for all $\bm a, n, \bm \tau \in \overline{\mc T}$.
	From here, we find
	\begin{align*}
		\lambda_{\min}\{ \bm J^{-1} \bm \Sigma \bm J^{-1} \}
		& = \min_{\bm a \neq \bm 0} \frac{\bm a^T\bm J^{-1} \bm \Sigma \bm J^{-1} \bm a}{\norm{\bm a}_2^2}\\
		& \geq C_2\min_{\bm a \neq \bm 0} \frac{\bm a^T\bm J^{-1} \bm \Sigma \bm J^{-1} \bm a}{\norm{\bm J^{-1} \bm a}_2^2}\\
		& \geq C_2\min_{\bm a \neq \bm 0} \frac{\bm a^T \bm \Sigma \bm a}{\norm{\bm a}_2^2}\\
		& = C_2 \lambda_{\min}\{ \bm \Sigma \} > C_1 C_2
	\end{align*}
	for all $n, \tau \in \overline{\mc T}$.
	We know from \ref{as:local alts} that there exists a constant $C_3$ such that $\lambda_{\min}\{\widetilde{\Cmat} \widetilde{\Cmat}^T  \} = \lambda_{\min}\{ \Cmat \Cmat^T \} = (\lambda_{\max} \{ (\Cmat \Cmat^T)^{-1} \} )^{-1} > C_3$ for all $n$.
	By a similar argument to the one above, $\norm{\widetilde{\Cmat}^T \bm a}_2 \geq C_3 \norm{\bm a}_2^2 $ for all $\bm a$ and therefore
	\begin{align*}
		\lambda_{\min}\{ \bm K\}
		& = \min_{\bm a \neq \bm 0}\frac{ \bm a^T \widetilde{\Cmat}  \bm J^{-1} \bm \Sigma \bm J^{-1} \widetilde{\Cmat}^T \bm a }{\norm{\bm a}_2^2} \\
		& = C_3 \min_{\bm a \neq \bm 0}\frac{ \bm a^T  \bm J^{-1} \bm \Sigma \bm J^{-1} \bm a }{\norm{\bm a}_2^2}
		> C_1C_2C_3
	\end{align*}
	for all $n, \tau \in \overline{\mc T}$.
	Thus 
	$$
	\sup_{\tau \in \mc T}\norm{\bm K^{-1}}_2 = \sup_{\tau \in \mc T} \lambda_{\max}\{  \bm K^{-1} \}
	=  \sup_{\tau \in \mc T} (\lambda_{\min}\{ \bm K \})^{-1} < (C_1C_2C_3)^{-1}
	$$
	for all $n$, establishing \eqref{eqn:lemma:Kinv bound}.
	
	Moving on to \eqref{eqn:lemma:hatJinv bound}, we know that
	$ \sup_{\tau \in \mc T} \norm{\hatmat J^{-1} }  = \sup_{\tau \in \mc T} \lambda_{\max}\{\hatmat J^{-1} \}$.
	We find that
	\begin{align}
		\lambda_{\min} \{ \hatmat J \} 
		& = \min_{ \norm{\bm v}_2 = 1 } \bm v^T \hatmat{J} \bm v \notag\\
		& \geq \min_{ \norm{\bm v}_2 = 1 } \bm v^T \bm{J} \bm v - \sup_{ \norm{\bm v}_2 = 1} |\bm v^T(\bm J - \hatmat J)\bm v| \notag \\
		& = \lambda_{\min}\{ \bm J \} - \norm{\bm J - \hatmat J}_2 \label{eqn:J eig lower bound}
	\end{align}
	where the inequality follows from Lemma S.5 of \citet{shi2019linear}.
	We know from \eqref{eqn:lemma:Jinv bound} that there exists $C$ such that $\lambda_{\min}\{ \bm J \} > C$ for all $n, \tau \in \overline{\mc T}$.
	Moreover, Lemma 29 of \citet{belloni_conditional_2019} provides that
	\begin{equation}
		\sup_{\tau \in \mc T}  \norm{\hatmat J - \bm J}_2 =  O_p\left( \sqrt{\frac{(m+s)\xi_n^2 \log n}{n h}} + h \right) = o(1). \label{eqn:Bet lemma 30 J}
	\end{equation}
	when $h = o(1)$ and $ (m + s) \xi_n^2 \log^2 n = o(nh)$.
	Therefore if $n$ is sufficiently large, then $\norm{\hatmat J - \bm J}_2 < C/2$ and $\lambda_{\min} \{ \hatmat J \} > C/2 $ for all $\tau \in \mc T$ with probability converging to $1$.
	By extension, we have
	$ \sup_{\tau \in \mc T} \lambda_{\max}\{\hatmat J^{-1} \} 
	= \sup_{\tau \in \mc T}( \lambda_{\min}\{\hatmat J \} )^{-1} < 2/C$
	with probability converging to $1$, giving us \eqref{eqn:lemma:hatJinv bound}.
	
	To establish \eqref{eqn:lemma:hatJinv - Jinv bound}, we note that
	\begin{align*}
		\sup_{\tau \in \mc T} \norm{ \hatmat J^{-1} - \bm J^{-1} }_2
		& = \sup_{\tau \in \mc T} \norm{ \hatmat J^{-1}(\bm J - \hatmat J ) \bm J^{-1} }_2\\
		& \leq \sup_{\tau \in \mc T} \norm{ \hatmat J^{-1} }_2 \sup_{\tau \in \mc T} \norm{ \bm J - \hatmat J }_2 \sup_{\tau \in \mc T} \norm{  \bm J^{-1} }_2\\
		& = O_p\left( \sqrt{\frac{(m+s)\xi_n^2 \log n}{n h}} + h \right) 
	\end{align*}
	by \eqref{eqn:lemma:Jinv bound}, \eqref{eqn:lemma:hatJinv bound}, and \eqref{eqn:Bet lemma 30 J}.
	
	Finally we will prove \eqref{eqn:lemma K - I bound}.
	We see that
	\begin{align}
		& \sup_{\tau \in \mc T} \lVert \bm K^{1/2}  \hatmat K^{-1} \bm K^{1/2} - \bm I_r \rVert_2 \notag \\
		& \hspace{0.2cm} = \sup_{\tau \in \mc T} \norm{ \left(\bm K^{-1/2} \hatmat K  \bm K^{-1/2}\right)^{-1} - \bm I_r}_2 \notag \\
		& \hspace{0.2cm} \leq \sup_{\tau \in \mc T} \norm{ \left(\bm K^{-1/2} \hatmat K  \bm K^{-1/2}\right)^{-1} }_2 \sup_{\tau \in \mc T} \norm{ \bm K^{-1/2} \hatmat K  \bm K^{-1/2}- \bm I_r}_2 \notag \\
		& \defeq R_1 \times R_2 \label{eqn:K-I = R1 R2}
	\end{align}
	We will bound each term in turn, starting with $R_2$.
	We find
	\begin{align*}
		R_2
		& = \sup_{\tau \in \mc T} \norm{ \bm K^{-1/2} \left( \hatmat K  - \bm K \right) \bm K^{-1/2} }_2\\
		& = \sup_{\tau \in \mc T} \norm{ \bm K^{-1/2} \widetilde{\Cmat}\left( \hatmat J^{-1} \hatmat \Sigma \hatmat J^{-1} -  \bm J^{-1} \bm \Sigma  \bm J^{-1}  \right) \widetilde{\Cmat}^T \bm K^{-1/2} }_2\\
		& \leq \sup_{\tau \in \mc T} \norm{\bm K^{-1/2} \widetilde{\Cmat}}_2^2 \sup_{\tau \in \mc T} \norm{\hatmat J^{-1} \hatmat \Sigma  \hatmat J^{-1} -  \bm J^{-1} \bm \Sigma  \bm J^{-1}}_2. 
	\end{align*}
	The first term is $O(1)$ by \eqref{eqn:lemma:KC bound}.
	To bound the second term, we note that
	\begin{align*}
		\sup_{\tau \in \mc T} \lVert \hatmat J^{-1} & \hatmat \Sigma  \hatmat J^{-1}  -  \bm J^{-1} \bm \Sigma  \bm J^{-1} \rVert_2 \\
		& = \sup_{\tau \in \mc T} \norm{\hatmat J^{-1} \hatmat \Sigma  \hatmat J^{-1} - \hatmat J^{-1} \bm \Sigma  \hatmat J^{-1} + \hatmat J^{-1} \bm \Sigma  \hatmat J^{-1} -  \bm J^{-1} \bm \Sigma  \bm J^{-1}}_2 \\
		& \leq \sup_{\tau \in \mc T} \norm{\hatmat J^{-1} \hatmat \Sigma  \hatmat J^{-1} - \hatmat J^{-1} \bm \Sigma  \hatmat J^{-1}}_2 + \sup_{\tau \in \mc T} \norm{\hatmat J^{-1} \bm \Sigma  \hatmat J^{-1} -  \bm J^{-1} \bm \Sigma  \bm J^{-1}}_2.
	\end{align*}
	We see that the first term in this sum satisfies
	\begin{align*}
		\sup_{\tau \in \mc T} \norm{\hatmat J^{-1} \hatmat \Sigma  \hatmat J^{-1} - \hatmat J^{-1} \bm \Sigma  \hatmat J^{-1}}_2 
		& = \sup_{\tau \in \mc T} \norm{\hatmat J^{-1} \left( \hatmat \Sigma  - \bm \Sigma \right) \hatmat J^{-1}}_2 \\
		& \leq \sup_{\tau \in \mc T} \norm{\hatmat J^{-1} }_2^2  \norm{\hatmat \Sigma  - \bm \Sigma}_2\\
		& = O_p\left( \sqrt{\frac{\xi_n^2 \log n}{n}} \right) 
	\end{align*}
	by \eqref{eqn:lemma:hatJinv bound} and Lemma 29 of \citet{belloni_conditional_2019}.
	Moving on to the second term in the sum, we find
	\begin{align*}
		\sup_{\tau \in \mc T} \lVert & \hatmat J^{-1}  \bm \Sigma  \hatmat J^{-1}  -  \bm J^{-1} \bm \Sigma  \bm J^{-1} \rVert_2\\
		& = \sup_{\tau \in \mc T} \norm{\hatmat J^{-1} \bm \Sigma  \hatmat J^{-1} - \bm J^{-1} \bm \Sigma  \hatmat J^{-1} + \bm J^{-1} \bm \Sigma  \hatmat J^{-1} -  \bm J^{-1} \bm \Sigma  \bm J^{-1}}_2 \\
		& \leq \sup_{\tau \in \mc T} \norm{( \hatmat J^{-1} - \bm J^{-1}) \bm \Sigma  \hatmat J^{-1}}_2  + \norm{\bm J^{-1} \bm \Sigma (\hatmat J^{-1} -  \bm J^{-1})}_2\\
		& \leq \sup_{\tau \in \mc T} \norm{\hatmat J^{-1} - \bm J^{-1}}_2 \norm{\bm \Sigma }_2 \sup_{\tau \in \mc T} \norm{\hatmat J^{-1}}_2 
		+ \sup_{\tau \in \mc T} \norm{\bm J^{-1}}_2 \norm{\bm \Sigma}_2 \sup_{\tau \in \mc T} \norm{\hatmat J^{-1} -  \bm J^{-1}}_2 \\
		& =  O_p\left( \sqrt{\frac{(m+s)\xi_n^2 \log n}{n h}} + h \right) 
	\end{align*}
	by \eqref{eqn:lemma:Jinv bound}, \eqref{eqn:lemma:hatJinv bound}, \eqref{eqn:lemma:hatJinv - Jinv bound}, and \ref{as:Gram lower bound}.
	Combining these results, we have
	$$
	R_2 =  O_p\left( \sqrt{\frac{(m+s)\xi_n^2 \log n}{n h}} + h + \sqrt{\frac{\xi_n^2 \log n}{n}}\right)
	$$
	
	Moving on to $R_1$ in \eqref{eqn:K-I = R1 R2}, we note that
	$$
	R_1 = \lambda_{\max} \left \{\left(\bm K^{-1/2} \hatmat K  \bm K^{-1/2}\right)^{-1} \right \}.
	$$
	By the same argument used in \eqref{eqn:J eig lower bound}, we see that for all $\tau$
	$$
	\lambda_{\min} \left \{ \bm K^{-1/2} \hatmat K  \bm K^{-1/2} \right \}
	\geq \lambda_{\min}\{ \bm I_r \} - \norm{ \bm I_r - \bm K^{-1/2} \hatmat K  \bm K^{-1/2} }_2.
	$$
	We recognize the latter term as $R_2$ and note that under the rate conditions  $h = o(1)$ and $ (m + s) \xi_n^2 \log^2 n = o(nh) $, $R_2 = o_p(1)$. Therefore for sufficiently large $n$, $R_2 < 1/2$ and, by extension, $\lambda_{\min} \left \{ \bm K^{-1/2} \hatmat K  \bm K^{-1/2} \right \} > 1/2$ for all $\tau \in \overline{ \mc T }$ with probability converging to $1$.
	This further implies
	$$
	\lambda_{\max} \left \{\left(\bm K^{-1/2} \hatmat K  \bm K^{-1/2}\right)^{-1} \right \} = \left( \lambda_{\min} \left \{ \bm K^{-1/2} \hatmat K  \bm K^{-1/2} \right \} \right)^{-1} < 2
	$$
	for all $\tau \in \overline{ \mc T }$ with probability converging to $1$, meaning that $R_1 = O_p(1)$.
	Applying our bounds for $R_1$ and $R_2$ to \eqref{eqn:K-I = R1 R2} yields \eqref{eqn:lemma K - I bound}.
\end{proof}

\begin{lemma} \label{lemma:GP Zi to EZi}
	Define $\bm Z_i$ as in the proof of Lemma \ref{lemma:GP for Wald with known} and let $\bm B_r(\cdot)$ denote an $r$-dimensional standard Brownian bridge.
	If $\xi_n^2 \log n = o(n)$, then
	\begin{equation*}
		\sup_{\tau \in \mc T} \norm{ (\mathbb{E}_n[\bm Z_i \bm Z_i^T] )^{1/2}  \bm B_r(\tau) - \bm B_r(\tau) }_2
		=  O_p\left( \sqrt{\frac{ r^3 \xi_n^2 \log n }{n}} \right).
	\end{equation*}
\end{lemma}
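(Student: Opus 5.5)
The approach is to factor the difference as a fixed random matrix applied to a single Brownian bridge, so that the supremum over $\mc T$ separates from a purely deterministic-design matrix deviation. Write $\hatmat S = \mathbb{E}_n[\bm Z_i \bm Z_i^T]$. Then
$$
\sup_{\tau \in \mc T} \norm{ \hatmat S^{1/2} \bm B_r(\tau) - \bm B_r(\tau) }_2 \leq \norm{\hatmat S^{1/2} - \bm I_r}_2 \sup_{\tau \in \mc T}\norm{\bm B_r(\tau)}_2 .
$$
The second factor is $O_p(\sqrt r)$. Each coordinate of $\bm B_r$ is a standard Brownian bridge, so $\E \sup_\tau \norm{\bm B_r(\tau)}_2^2 \le \sum_{k=1}^r \E \sup_\tau B_k(\tau)^2 \lesssim r$, and Markov's inequality applies. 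This is in fact $O_p(1)$ when $r$ is fixed, but I will keep track of $r$ explicitly.

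For the first factor, use the fact that $\hatmat S$ is positive semidefinite. The inequality $\norm{\bm A^{1/2} - \bm I}_2 \le \norm{\bm A - \bm I}_2$ holds for PSD $\bm A$, since eigenvalue-wise $|\sqrt{\lambda}-1| = |\lambda - 1|/(\sqrt\lambda + 1) \le |\lambda-1|$. It therefore suffices to bound $\norm{\hatmat S - \bm I_r}_2$. The proof of Lemma \ref{lemma:GP for Wald with known} already gives $\E[\bm Z_i\bm Z_i^T] = \bm I_r$ and $\max_i \norm{\bm Z_i}_2 \lesssim \xi_n$, using \ref{as:Gram lower bound} and the trace argument there. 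A matrix concentration bound, namely Lemma 28 of \citet{belloni_conditional_2019} (the Rudelson-type inequality already used in that proof), then gives
$$
\E \norm{\hatmat S - \bm I_r}_2 \lesssim \delta + \delta^2, \qquad \delta^2 = \frac{\log n}{n}\,\E\max_i \norm{\bm Z_i}_2^2 \lesssim \frac{\xi_n^2\log n}{n} ,
$$
which is $o(1)$ because $\xi_n^2 \log n = o(n)$. Markov's inequality then yields $\norm{\hatmat S - \bm I_r}_2 = O_p(\sqrt{\xi_n^2 \log n/n})$.

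The step requiring care, and the one I expect to be the main obstacle, is that $\bm Z_i$ depends on $\tau$ through $\bm K(\tau)$ and $\bm J(\tau)$. The paper's notation suppresses this, so $\hatmat S$ is really $\hatmat S(\tau)$. If the dependence is retained, I would replace $\norm{\hatmat S - \bm I_r}_2$ by $\sup_{\tau}\norm{\hatmat S(\tau) - \bm I_r}_2$. I would control this supremum by writing
$$
\hatmat S(\tau) - \bm I_r = \bm A(\tau)(\hatmat\Sigma - \bm \Sigma)\bm A(\tau)^T, \qquad \bm A(\tau) = [\bm K(\tau)]^{-1/2}\widetilde\Cmat[\bm J(\tau)]^{-1},
$$
which bounds it by $\sup_\tau \norm{\bm A(\tau)}_2^2 \,\norm{\hatmat\Sigma - \bm\Sigma}_2$. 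The prefactor is $O(1)$ by \eqref{eqn:lemma:KC bound} and \eqref{eqn:lemma:Jinv bound}. The matrix $\hatmat\Sigma - \bm\Sigma$ does not depend on $\tau$, and $\norm{\hatmat\Sigma - \bm\Sigma}_2 = O_p(\sqrt{\xi_n^2\log n/n})$ by Lemma 29 of \citet{belloni_conditional_2019}. This reduction avoids any chaining over $\tau$.

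Combining the two factors gives the rate $\sqrt r\,\sqrt{\xi_n^2\log n/n}$. This is already at most the stated $O_p(\sqrt{r^3\xi_n^2\log n/n})$, and the extra powers of $r$ leave slack in case the crude bounds above (for example, passing from coordinatewise to $\ell_2$ suprema of $\bm B_r$) cost additional factors of $r$.
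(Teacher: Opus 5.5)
Your argument is correct, and its skeleton matches the paper's. You pull out $\sup_{\tau}\|\bm B_r(\tau)\|_2=O_p(\sqrt r)$ by Jensen and Markov. You then write $\hatmat S-\bm I_r=\bm A(\tau)(\hatmat\Sigma-\bm\Sigma)\bm A(\tau)^T$ with $\sup_\tau\|\bm A(\tau)\|_2^2=O(1)$ and invoke Lemma 29 of \citet{belloni_conditional_2019}. This is exactly how the paper obtains uniformity in $\tau$, so your preliminary fixed-$\tau$ appeal to Lemma 28 can simply be dropped.

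The one real difference is the square-root perturbation step. The paper bounds $\|\hatmat S^{1/2}-\bm I_r\|_2$ through the Powers--St{\o}rmer inequality, followed by operator, Frobenius and nuclear norm conversions. Those conversions are the source of the extra factor $r$, and hence of the $r^3$ under the root. You instead use that the population matrix is exactly $\bm I_r$. Then $\hatmat S^{1/2}-\bm I_r$ and $\hatmat S-\bm I_r$ share eigenvectors, and $|\sqrt\lambda-1|\le|\lambda-1|$ gives $\|\hatmat S^{1/2}-\bm I_r\|_2\le\|\hatmat S-\bm I_r\|_2$ with no loss in dimension.

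Your route buys the sharper rate $\sqrt{r\xi_n^2\log n/n}$, which implies the stated one. It is also the cleaner justification. Powers--St{\o}rmer controls $\|\bm A^{1/2}-\bm B^{1/2}\|_F^{2}$, not $\|\bm A^{1/2}-\bm B^{1/2}\|_F$, by $\|\bm A-\bm B\|_*$, so on its own it yields only a square-root bound. The linear inequality the paper's chain needs therefore rests on the eigenvalue-wise fact you use, which is available because the reference matrix is $\bm I_r$.
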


\begin{proof}[Proof of Lemma \ref{lemma:GP Zi to EZi}]
	Define $\bm r_n(\tau) = (\mathbb{E}_n[\bm Z_i \bm Z_i^T] )^{1/2} \bm B_r(\tau) - (\E[\bm Z_i \bm Z_i^T] )^{1/2} \bm B_r(\tau)$.
	We see that
	\begin{align}
		\sup_{\tau \in \mc T} \norm{\bm r_n(\tau)}_2
		& \leq \sup_{\tau \in \mc T} \norm{(\mathbb{E}_n[\bm Z_i \bm Z_i^T] )^{1/2}  - (\E[\bm Z_i \bm Z_i^T] )^{1/2} }_2 \sup_{\tau \in \mc T} \norm{\bm B_r(\tau)}_2.
	 \label{eqn:GP diff bound}
	\end{align}
	Focusing on the second term in the expression above, we see that
	\begin{align}
		\E[ \sup_{\tau \in \mc T} \norm{ \bm B_r(\tau) }_2 ]
		& = \E \bigg[ \sup_{\tau \in \mc T} \bigg( \sum_{j = 1}^r \bm B_{r,j}^2(\tau) \bigg)^{1/2} \bigg]
		\leq  \E \bigg[ \bigg( \sum_{j = 1}^r  \sup_{\tau \in \mc T} \bm B_{r,j}^2(\tau) \bigg)^{1/2} \bigg] \notag\\
		& \leq  \bigg( \sum_{j = 1}^r  \E \bigg[  \sup_{\tau \in \mc T} \bm B_{r,j}^2(\tau) \bigg] \bigg)^{1/2}
		= O(\sqrt{r}) \label{eqn:expectation of Bessel sup bound}
	\end{align}
	where the second inequality follows from Jensen's inequality.
	Markov's inequality then implies that $\sup_{\tau \in \mc T} \norm{\bm B_r(\tau)}_2 = O_p(\sqrt{r}) $.
	
	We now turn our focus to the first term	in \eqref{eqn:GP diff bound}.
	We define $\bm A = \bm A(\tau) = [\bm K(\tau)]^{-1/2} \widetilde \Cmat [\bm J(\tau)]^{-1}$, temporarily dropping $\tau$ from our notation for ease of presentation.
	By definition, we see that
	$ \mathbb{E}[ \bm Z_i \bm Z_i^T] = \bm A \hatmat \Sigma \bm A^T $ and 
	$ \E[ \bm Z_i \bm Z_i^T] = \bm A \bm \Sigma \bm A^T $.
	Applying the Powers-Stormer inequality \citep{powers_free_1970} and relations between matrix norms, we have
	\begin{align*}
		\norm{ (\bm A \hatmat \Sigma \bm A^T)^{1/2} - (\bm A \bm \Sigma \bm A^T)^{1/2} }_2
		& \leq \norm{ (\bm A \hatmat \Sigma \bm A^T)^{1/2} - (\bm A \bm \Sigma \bm A^T)^{1/2} }_F\\
		& \leq  \norm{ \bm A \hatmat \Sigma \bm A^T - \bm A \bm \Sigma \bm A^T }_*\\
		& \leq  \sqrt{r} \norm{ \bm A \hatmat \Sigma \bm A^T - \bm A \bm \Sigma \bm A^T }_F \\
		& \leq  r \norm{ \bm A \hatmat \Sigma \bm A^T - \bm A \bm \Sigma \bm A^T }_2
	\end{align*}
	for all $\tau$,	where $\norm{\cdot}_F$ and $\norm{\cdot}_*$ denote the Frobenius and nuclear norm, respectively.
	From here, we find
	\begin{align*}
		\sup_{\tau \in \mc T} \norm{ \bm A \hatmat \Sigma \bm A^T - \bm A \bm \Sigma \bm A^T }_2
		& = \sup_{\tau \in \mc T} \norm{ \bm A (\hatmat \Sigma - \bm \Sigma )\bm A^T }_2 
		\leq \sup_{\tau \in \mc T} \norm{\bm A}_2^2 \norm{ \hatmat \Sigma - \bm \Sigma }_2.
	\end{align*}
	We see that 
	\begin{align*}
		\norm{\bm A}_2^2
		& = \norm{\bm A^T \bm A}_2
		= \norm{ \bm J^{-1} \widetilde{\Cmat}^T \bm K^{-1} \widetilde{\Cmat} \bm J^{-1} }_2\\
		& = \norm{ \bm J^{-1} \widetilde{\Cmat}^T \bm K^{-1} \widetilde{\Cmat} \bm J^{-1} \bm \Sigma \bm \Sigma^{-1}}_2\\
		& \leq \norm{ \bm J^{-1} \widetilde{\Cmat}^T \bm K^{-1} \widetilde{\Cmat} \bm J^{-1} \bm \Sigma}_2 \norm{\bm \Sigma^{-1}}_2\\
		& = \lambda_{\max} \{ \bm J^{-1} \widetilde{\Cmat}^T \bm K^{-1} \widetilde{\Cmat} \bm J^{-1} \bm \Sigma\} \norm{\bm \Sigma^{-1}}_2\\
		& = \lambda_{\max} \{ \bm K^{-1} \bm K\} \norm{\bm \Sigma^{-1}}_2
		= \lambda_{\max}\{ \bm I_r \} \norm{\bm \Sigma^{-1}}_2\\
		& = \norm{\bm \Sigma^{-1}}_2
	\end{align*}
	for all $\tau$.
	Since $\xi_n^2 \log n = o(n)$, Lemma 29 of \citet{belloni_conditional_2019} establishes that $\norm{ \hatmat \Sigma - \bm \Sigma }_2 = O_p\left( \sqrt{\frac{\xi_n^2 \log n}{n}}\right)$ and \ref{as:Gram lower bound} provides that $\norm{\bm \Sigma^{-1}}_2 = O(1)$.
	Applying these bounds to \eqref{eqn:GP diff bound} yields
	\begin{align}
		\sup_{\tau \in \mc T} \norm{\bm r_n(\tau)}_2
		& = O_p\left( \sqrt{\frac{ r^3 \xi_n^2 \log n }{n}} \right).
	\end{align}
	The fact that $\E[\bm Z_i \bm Z_i^T] = \bm I_r$, as we showed in the proof of Lemma \ref{lemma:GP for Wald with known}, completes the proof.
\end{proof}

\subsection{Proofs of partial penalized estimator and test results}

\begin{proof}[Proof of Theorem \ref{thm:part pen rates}]
	The same argument suffices to prove \eqref{eqn:FE.part pen rate} and \eqref{eqn:RE.part pen rate} so we will present only the former.
	Let $\epsilon > 0$ and define $q_a = m+s$.
	Theorems \ref{thm:oracle rate} and \ref{thm:strong oracle} provide that there exist $N_1, N_2, M \geq 0$ such that
	$$\Pr \left(  \sup_{\tau \in \mc T} \norm{ \bvecOR_a (\tau) - \bvec^* (\tau) }_2 > M \sqrt{\frac{q_a}{n}}\right) < \epsilon/2$$ for all $n > N_1$ and $$\Pr\left(\sup_{\tau \in \mc T} \norm{\hat \bvec^{(2)}_a (\tau) - \bvecOR_a (\tau)}_2 \neq 0 \right) < \epsilon/2 $$ for all $n > N_2$.
	Therefore
	\begin{align*}
		\Pr \Bigg(  \sup_{\tau \in \mc T} & \norm{ \hat \bvec^{(2)}_a (\tau) - \bvec^* (\tau) }_2  \leq M \sqrt{\frac{q_a}{n}}\Bigg)\\
		& \geq \Pr\left(\left\{\sup_{\tau \in \mc T} \norm{\hat \bvec^{(2)}_a (\tau) - \bvecOR_a (\tau)}_2 = 0 \right\} \bigcap \left\{ \sup_{\tau \in \mc T} \norm{ \bvecOR_a (\tau) - \bvec^* (\tau) }_2 \leq M \sqrt{\frac{q_a}{n}} \right\}\right)\\
		& \geq 1 - \Pr\left(\sup_{\tau \in \mc T} \norm{\hat \bvec^{(2)}_a (\tau) - \bvecOR_a (\tau)}_2 \neq 0 \right) - \Pr \left(  \sup_{\tau \in \mc T} \norm{ \bvecOR_a (\tau) - \bvec^* (\tau) }_2 > M \sqrt{\frac{q_a}{n}}\right)\\
		& \geq 1 - \epsilon
	\end{align*}
	for all $n > \max\{N_1, N_2\}$ and \eqref{eqn:FE.part pen rate} holds by definition.
	The proof of \eqref{eqn:RE.part pen rate} is the same, except that $q_0 = m-r+s$ takes the place of $q_a$.
\end{proof}

\begin{proof}[Proof of Theorem \ref{thm:part pen bahadur}]
	The same argument suffices to prove \eqref{eqn:FE.part pen bahadur} and \eqref{eqn:RE.part pen bahadur}, so we will focus on the former.
	We have from Theorem \ref{thm:oracle bahadur} that
	$$
	\bm r_n(\tau) = \sqrt{n}(\bvecOR_{a, \MSset}(\tau) - \bvec_{\MSset}^*(\tau)) - \frac{1}{\sqrt n} [\bm J(\tau)]^{-1} \sum_{i \in [n]} \tilde \xvec_i \psi_{\tau}(y_i - \xvec_i^T \bvec^*(\tau)) 
	$$
	satisfies 
	$
	\sup_{\tau \in \mc T} \norm{\bm r_n(\tau)}_2 = O_p\left(\frac{(m + s)^{3/4}\xi_n^{1/2} \log^{1/2}n}{n^{1/4}}\right)
	$.
	Define
	$$
	\bar {\bm r}_n(\tau) = \sqrt{n}(\hat \bvec_{a, \MSset}^{(2)}(\tau) - \bvec_{\MSset}^*(\tau)) - \frac{1}{\sqrt n} [\bm J(\tau)]^{-1} \sum_{i \in [n]} \tilde \xvec_i \psi_{\tau}(y_i - \xvec_i^T \bvec^*(\tau)). 
	$$
	Then 
	$
	\bar {\bm r}_n(\tau) = \bm r_n(\tau) + \sqrt{n}(\hat \bvec_{a, \MSset}^{(2)}(\tau) - \bvecOR_{a, \MSset}(\tau))
	$.
	Following that same argument as in the proof of Theorem \ref{thm:part pen rates} and leveraging Theorem \ref{thm:strong oracle}, we can show that for every $\epsilon > 0$ there exist $M,N \geq 0$, such that
	$$
	\Pr\left( \sup_{\tau \in \mc T} \norm{\bar {\bm r}_n(\tau)}_2 \leq M\frac{(m + s)^{3/4}\xi_n^{1/2} \log^{1/2}n}{n^{1/4}} \right) \geq 1 - \epsilon
	$$
	for all $n > N$. 
	This completes the proof.
\end{proof}

\begin{proof}[Proof of Theorem \ref{thm:part pen bessel}]
	We start by proving \eqref{eqn:thm:pp Wald to Bessel}.
	Clearly if $\sup_{\tau \in \mc T} \norm {\hat \bvec^{(2)}_a(\tau) - \bvecOR_a(\tau)  }_2 = 0$, then 
	$\sup_{\tau \in \mc T} |W_n^{(2)}(\tau) - W_n^o(\tau)  | = 0 .$	
	Let $\epsilon > 0$. Theorem \ref{thm:strong oracle} provides that there exists $N \geq 0$ such that $\Pr\left( \sup_{\tau \in \mc T} \shortnorm {\hat \bvec^{(2)}_a(\tau) - \bvecOR_a(\tau)  }_2  = 0 \right) > 1 - \epsilon$ for all $n > N$.
	As such, we find that for all $n > N$
	\begin{align*}
		\Pr\left(\sup_{\tau \in \mc T} |W_n^{(2)}(\tau) - W_n^o(\tau)  | = 0\right)
		& \geq  \Pr\left( \sup_{\tau \in \mc T} \norm {\hat \bvec^{(2)}_a(\tau) - \bvecOR_a(\tau)  }_2  = 0 \right)
		> 1 - \epsilon,
	\end{align*}
	which implies $\sup_{\tau \in \mc T} | W_n^{(2)}(\tau) - W_n^o(\tau) | = o_p(1)$.
	Combining this result with Lemma \ref{lemma:Wald J to Jhat} and applying the triangle inequality yields
	$\sup_{\tau \in \mc T} |W_n^{(2)}(\tau) - \widetilde {W}_n(\tau) | = o_p(1)$.
	Taking this result together with Lemma \ref{lemma:Bessel for Wald with known J} and applying Slutsky's Theorem yields \eqref{eqn:thm:pp Wald to Bessel}.
	
	Moving to the rank statistic, we aim to show that the partial penalized dual problem \eqref{eqn:rank scores} and oracle dual problem \eqref{eqn:oracle rank scores} are equivalent with probability converging to $1$.
	Since the objective function is the same in both cases, we just need to show that the dual feasibility conditions are equivalent. Indexing by $j \in \{0,\ldots,p\}$, 
	the partial penalized feasibility conditions can be expressed as
	$$
	\xmat_{j}^T \bm a = (1 - \tau)\xmat_{j}^T \bm 1_n +  \Cmat_j^T\bm \lambda \text{ for } j \in \Mset, \; \;|\xmat_{j}^T (\bm a - (1-\tau) \bm 1_n ) | \leq \hat w_j^{(B-1)} \text{ for } j \in \Mset^c
	$$
	and the oracle dual feasibility conditions can be expressed as
	$$
	\xmat_{j}^T \bm a = (1 - \tau)\xmat_{j}^T \bm 1_n +  \Cmat_j^T\bm \lambda \text{ for } j \in \Mset, \; \; \xmat_{j}^T (\bm a - (1-\tau) \bm 1_n ) = 0 \text{ for } j \in \Sset.
	$$
	We have three cases to check:
	\begin{itemize}
		\item \textit{Case 1:} $j \in \Mset$ --- The condition is the same in both dual problems.
		\item \textit{Case 2:} $j \in \Sset$ --- Lemma \ref{lemma:global33modAssumptions2} guarantees that $\inf_{j \in \Sset}  \sup_{\tau \in \mc T}|\hat{\beta}^o_{0,j}(\tau)| > d\lambda$ with probability converging to $1$. 
		When $\sup_{\tau \in \mc T}|\hat{\beta}^o_{0,j}(\tau)| > d\lambda$, we have $\sup_{\tau \in \mc T} \hat w_j^{(B-1)}(\tau) = 0$. By extension, the condition in the partial penalized dual problem simplifies to $\xmat_{j}^T \bm a = (1-\tau) \xmat_{j}^T\bm 1_n $ uniformly over $\tau$, matching the conditions in the oracle dual problem.
		\item \textit{Case 3:} $j \in \MSset^c$ --- Lemma \ref{lemma:global33modAssumptions2} provides that $	\sup_{\tau \in \mc T, \bm w(\tau) \in \partial_{{\MSset^c}} \mc Q_{\tau,n}(\hat{\bvec}_e^{o}(\tau))} ||\bm w(\tau)||_{\infty} < d_1\lambda$ with probability converging to $1$. At the same time, $\hat \beta_{0,j}^o = 0$ by definition and, consequently, $\hat w_j^{(B-1)}(\tau) = p_{\lambda}'(| \hat \beta_{0,j}^o |) \geq d_1 \lambda$.
		Therefore we have $|\xmat_{j}^T (\bm a - (1-\tau) \bm 1_n ) | \leq \hat w_j^{(B-1)}$ for all $\bm a \in [0,1]^n$, meaning that the constraint in \eqref{eqn:rank scores} holds trivially, with probability converging to $1$.
	\end{itemize}

	Thus the  the partial penalized dual problem \eqref{eqn:rank scores} and oracle dual problem \eqref{eqn:oracle rank scores} are equivalent with probability converging to $1$.
	As an immediate consequence, $\sup_{\tau} \norm{ \hat \avec^o(\tau) - \hat \avec(\tau) }_2 = 0$ with probability converging to $1$.
	When $\sup_{\tau \in \mc T} \shortnorm {\hat \bvec^{(2)}_0(\tau) - \bvecOR_0(\tau)  }_2 = 0$ and $\sup_{\tau} \shortnorm{ \hat \avec^o(\tau) - \hat \avec(\tau) }_2 = 0$, clearly $\sup_{\tau \in \mc T}| R_n^{(2)}(\tau) - R_n^{o}(\tau)| = 0 $. From here
	an equivalent argument to the one used above to prove \eqref{eqn:thm:pp Wald to Bessel} proves \eqref{eqn:thm:pp RS to Bessel}.
\end{proof}

\subsection{GIC model-selection consistency}\label{app:gic_selection}

\subsubsection{Conditions}\label{app:gic_notation}

We use $\mc Q_{\tau,n}$ from \eqref{eqn:qndefine}, the empirical mean
$\empmean$, and the sets $\Mset$ and $\Sset$ defined in the main text.
Throughout this appendix, Conditions \ref{as:iid array} and
\ref{as:density conditions} are imposed, and $\mc T$ is the interval
specified in the subsection containing Theorem \ref{thm:GIC.consistency}.
For $e=0$, either \eqref{hyp:linear hypothesis} or Condition
\ref{as:local alts} is imposed. 
Note that, under \eqref{hyp:linear hypothesis} $\bm\eta(\tau)=\bm0$.
Let $f_-$, $f_+$, and $L_f$ denote, respectively, the positive lower
bound, the upper bound, and the derivative bound in Condition
\ref{as:density conditions}.

For $A\subset\Mset^c\setminus\{0\}$, let
\[
D_A=\{0\}\cup\Mset\cup A.
\]
The intercept is unpenalized in Algorithms
\ref{algo:LLA for full estimator}--\ref{algo:LLA for reduced estimator};
accordingly, the support restrictions in \eqref{eqn:FE.oracle} and
\eqref{eqn:RE.oracle} include coordinate $0$.
Define
\begin{align}
\widehat L_{a,A}(\tau)
&=\inf_{\bm b_{D_A^c}=\bm0}\mc Q_{\tau,n}(\bm b),
\label{eqn:gic_aux_loss_full}\\
\widehat L_{0,A}(\tau)
&=\inf_{\substack{\bm b_{D_A^c}=\bm0\\
		\Cmat\bm b_{\Mset}=\bm\Psi(\tau)}}\mc Q_{\tau,n}(\bm b),
\label{eqn:gic_aux_loss_reduced}\\
\operatorname{GIC}^{\circ}_e(A)
&=\int_{\mc T}\log\widehat L_{e,A}(\tau)\,d\tau+|A|\phi_n.
\label{eqn:gic_auxiliary}
\end{align}
For each fixed $e$, replacing $|A|$ in \eqref{eqn:gic_auxiliary} by
$|D_A|-r1(e=0)$ adds the constant $(m+1-r1(e=0))\phi_n$ for every $A$.

The sets of candidates used in the proof are
\begin{align*}
\mathcal U_n&=\{A\subset\Mset^c\setminus\{0\}:|A|\le K_{\mathrm G,n},\ \Sset\nsubseteq A\},\\
\mathcal O_n&=\{A\subset\Mset^c\setminus\{0\}:|A|\le K_{\mathrm G,n},\ \Sset\subsetneq A\},\\
\mathcal A_n&=\{A\subset\Mset^c\setminus\{0\}:|A\setminus\Sset|\le K_{\mathrm G,n}\},\\
\mathcal B_n&=\{\Sset\}\cup\{A\cup(\Sset\setminus\{j\}):
A\in\mathcal U_n,\ j\in\Sset\setminus A\}.
\end{align*}
For use in the inclusion comparisons, write
$A^{+j}=A\cup(\Sset\setminus\{j\})$.
Define the excess loss by
\begin{equation}\label{eqn:gic_excess_loss}
\ell_{\tau,\bm b}
=\rho_\tau(Y-\xvec^T\bm b)
-\rho_\tau\{Y-\xvec^T\bvec^*(\tau)\}.
\end{equation}
Define
\begin{equation}\label{eqn:gic_dimensions}
d_{\mathrm G,n}=m+1+s+K_{\mathrm G,n}.
\end{equation}
\begin{equation}\label{eqn:gic_sparse_envelope}
\xi_{\mathrm G,n}
=\sup_{\substack{\|\bm v\|_2=1\\
		|\operatorname{supp}(\bm v)|\le d_{\mathrm G,n}}}
\operatorname*{ess\,sup}|\xvec^T\bm v|.
\end{equation}
\begin{equation}\label{eqn:gic_log_factor}
L_{\mathrm G,n}=\log\bigl(p\vee n\vee d_{\mathrm G,n}
\vee\lceil\xi_{\mathrm G,n}\rceil\vee3\bigr).
\end{equation}
By the definition of $\xi_n$,
\eqref{eqn:gic_sparse_envelope} gives $\xi_n\le\xi_{\mathrm G,n}$. For a sufficiently large fixed constant $A_0$, let
\begin{equation}\label{eqn:gic_an}
a_{\mathrm G,n}=A_0\left\{
\sqrt{\frac{d_{\mathrm G,n}L_{\mathrm G,n}}n}
+\frac{\xi_{\mathrm G,n}d_{\mathrm G,n}L_{\mathrm G,n}}n\right\},
\end{equation}
\begin{equation}\label{eqn:gic_mu}
\mu_{\mathrm G,n}=
\begin{cases}
	\displaystyle\min_{j\in\Sset}\int_{\mc T}|\beta_j^*(\tau)|\,d\tau,&s\ge1,\\
	\infty,&s=0.
\end{cases}
\end{equation}

We assume the following additional conditions in Theorem \ref{thm:GIC.consistency}:
\begin{enumerate}[label=(G\arabic*),ref=(G\arabic*),leftmargin=2.5\parindent]
\item\label{as:gic_design}
There are constants $K_x,\kappa_-,\kappa_+,q_0>0$ such that
\[
\|\xvec\|_\infty\le K_x
\quad\text{almost surely}.
\]
For every $\bm v$ with $\|\bm v\|_2=1$ and $|\operatorname{supp}(\bm v)|\le d_{\mathrm G,n}$,
\[
\kappa_-
\le
\E(\xvec^T\bm v)^2
\le
\kappa_+,
\qquad
\frac{\{\E(\xvec^T\bm v)^2\}^{3/2}}
{\E|\xvec^T\bm v|^3}
\ge q_0.
\]

\item\label{as:gic_moments}
There are constants $B_\beta<\infty$, $B_Y<\infty$, and $\tau_0\in\mc T$ such that
\[
\sup_{\tau\in\mc T}\max_{j\in D_{\Sset}}
|\beta_j^*(\tau)|\le B_\beta,
\qquad
\sup_{n\ge1}\E|Y-\xvec^T\bvec^*(\tau_0)|\le B_Y.
\]

\item\label{as:gic_underfit_rates}
The sequence in \eqref{eqn:gic_an} satisfies $a_{\mathrm G,n}\to0$.
If $s\ge1$, then
\[
\phi_n\to0,
\qquad
a_{\mathrm G,n}=o(\mu_{\mathrm G,n}),
\qquad
s\phi_n=o(\mu_{\mathrm G,n}^2).
\]

\item\label{as:gic_overfit_rates}
The sequences in \eqref{eqn:gic_dimensions} and \eqref{eqn:gic_sparse_envelope} satisfy
\begin{equation}\label{eqn:gic_sharp_growth}
	\frac{(\xi_{\mathrm G,n})^2d_{\mathrm G,n}^3L_{\mathrm G,n}^2}{n}\to0,
\end{equation}
and the penalty satisfies
\begin{equation}\label{eqn:gic_penalty_growth}
	\frac{L_{\mathrm G,n}}{n}=o(\phi_n).
\end{equation}

\item\label{as:gic_oracle_path}
For each $e\in\{0,a\}$, the finite tuning set $\mathcal P_{e,n}$ in \eqref{eqn:gic_lambda_selector} contains a deterministic sequence $\lambda_{e,n}^{\circ}$ such that
\begin{equation}\label{eqn:gic_oracle_path_assumption}
	P\left(
	\left\{
	\sup_{\tau\in\mc T}
	\left\|\widehat{\bvec}_e^{(2)}(\tau;\lambda_{e,n}^{\circ})-\bvecOR_e(\tau)\right\|_2>0
	\right\}
	\cup
	\left\{\widehat{\Sset}_e(\lambda_{e,n}^{\circ})\ne\Sset\right\}
	\right)
	\to0.
\end{equation}
\end{enumerate}

Condition \ref{as:gic_oracle_path} holds for the global LLA estimators by Theorem \ref{thm:strong oracle}.
The growth requirement \eqref{eqn:gic_sharp_growth} in Condition \ref{as:gic_overfit_rates} implies
\begin{equation}\label{eqn:gic_sharp_consequences}
\xi_{\mathrm G,n}\sqrt{\frac{d_{\mathrm G,n}L_{\mathrm G,n}}{n}}\to0,
\qquad
\xi_{\mathrm G,n}\frac{d_{\mathrm G,n}L_{\mathrm G,n}}{\sqrt n}=o(L_{\mathrm G,n}^{1/2}),
\qquad
\xi_{\mathrm G,n}\frac{d_{\mathrm G,n}^{3/2}L_{\mathrm G,n}}{\sqrt n}=o(1).
\end{equation}
Condition \ref{as:gic_design} gives
\[
\xi_{\mathrm G,n}^2\le K_x^2d_{\mathrm G,n},
\qquad
\frac{\xi_{\mathrm G,n}^2d_{\mathrm G,n}^3L_{\mathrm G,n}^2}{n}
\le K_x^2\frac{d_{\mathrm G,n}^4L_{\mathrm G,n}^2}{n}.
\]
Thus $d_{\mathrm G,n}^4L_{\mathrm G,n}^2=o(n)$ suffices for
\eqref{eqn:gic_sharp_growth}.

For the reduced estimator, use $\widetilde{\bvec}^*$ from the proof of
Theorem \ref{thm:oracle rate}, with the bound \eqref{eqn:beta tilde rate}.
To express coefficients after multiplication by $\sqrt n$, write
\begin{equation}\label{eqn:gic_feasible_shift}
\bm v_a(\tau)=\bm0,
\qquad
\bm v_0(\tau)=\sqrt n\{\widetilde{\bvec}^*(\tau)-\bvec^*(\tau)\}.
\end{equation}
The definition of $\widetilde{\bvec}^*$ gives
\[
(\bm v_0)_{\Mset}(\tau)=-\Cmat^T(\Cmat\Cmat^T)^{-1}\bm\eta(\tau),
\qquad
(\bm v_0)_{\Mset^c}(\tau)=\bm0,
\]
and Condition \ref{as:local alts} gives
\begin{equation}\label{eqn:gic_feasible_shift_bound}
\begin{aligned}
	\|\bm v_0(\tau)\|_2^2
	&=\bm\eta(\tau)^T(\Cmat\Cmat^T)^{-1}\bm\eta(\tau)
	\le C\|\bm\eta(\tau)\|_2^2,\\
	\sup_{\tau\in\mc T}\|\bm v_0(\tau)\|_2^2
	&=O(m-r+s).
\end{aligned}
\end{equation}

\subsubsection{Underfitted candidates}\label{app:gic_underfit}

\begin{lemma}\label{lem:gic_quantile_path}
For all $\tau,t\in\mc T$,
\[
|\xvec^T\{\bvec^*(\tau)-\bvec^*(t)\}|
\le
f_-^{-1}|\tau-t|
\quad\text{almost surely}.
\]
\end{lemma}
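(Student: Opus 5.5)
Since $\xvec^T\bvec^*(u)=Q_y(u\mid\xvec)$ for $u\in\mc T$ by \eqref{eqn:conditional quantile linear expr}, the plan is to treat $u\mapsto \xvec^T\bvec^*(u)$ as the conditional quantile function and bound its slope with the inverse-function rule. Fix $\xvec\in\mc X$ and write $F=F_{y|\xvec}$, $f=f_{y|\xvec}$. In the proof of Lemma \ref{lemma:Ui uniform} we saw that $F(Q_y(u\mid\xvec)\mid\xvec)=u$. Condition \ref{as:density conditions}(ii) gives $f(Q_y(u\mid\xvec))\ge f_->0$ for every $u\in\overline{\mc T}$, and \ref{as:density conditions}(iii) makes $f$ continuous. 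Hence $F$ is continuously differentiable with positive derivative at every point $Q_y(u\mid\xvec)$, $u\in\overline{\mc T}$.

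The key step is to differentiate the identity $F(Q_y(u\mid\xvec))=u$ on the convex hull $\overline{\mc T}$. By the inverse function theorem, $u\mapsto Q_y(u\mid\xvec)$ is differentiable there with
\[
\frac{d}{du}Q_y(u\mid\xvec)=\frac{1}{f\{Q_y(u\mid\xvec)\}}\le f_-^{-1}.
\]
The mean value theorem on the segment between $t$ and $\tau$ (which lies in $\overline{\mc T}$) then gives $|Q_y(\tau\mid\xvec)-Q_y(t\mid\xvec)|\le f_-^{-1}|\tau-t|$.

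The main obstacle is that the linear representation $Q_y(u\mid\xvec)=\xvec^T\bvec^*(u)$ is only assumed for $u\in\mc T$, whereas the mean value step runs over $\overline{\mc T}$, where $Q_y(\cdot\mid\xvec)$ is simply the conditional quantile function. This causes no difficulty, because the argument uses only the quantile function on $\overline{\mc T}$ and its values at the endpoints $\tau,t\in\mc T$, where it coincides with $\xvec^T\bvec^*(\cdot)$.

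There is a second subtlety: differentiability of $Q_y(\cdot\mid\xvec)$ requires $F$ to be strictly increasing near the quantile. This follows from continuity of $f$ together with $f\ge f_->0$ at every quantile point in $\overline{\mc T}$, since then $F$ has positive derivative along the whole image of $\overline{\mc T}$. If one wants to avoid the inverse function theorem, a direct argument also works. For $t<\tau$, write
\[
\tau-t=\int_{Q(t)}^{Q(\tau)}f(y)\,dy\ge f_-\{Q(\tau)-Q(t)\},
\]
noting that every $y$ in $[Q(t),Q(\tau)]$ equals $Q(u)$ for some $u\in[t,\tau]$ by continuity of $Q$, so $f(y)\ge f_-$ there. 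Finally, since $\xvec$ was an arbitrary point in the support $\mc X$, the inequality holds almost surely.
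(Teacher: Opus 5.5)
Your proof is correct. The paper's proof is one line, following Lemma 19 of \citet{belloni_conditional_2019}: for $\tau<t$, $t-\tau=\int_{\xvec^T\bvec^*(\tau)}^{\xvec^T\bvec^*(t)}f_{y\mid\xvec}(y\mid\xvec)\,dy\ge f_-\,\xvec^T\{\bvec^*(t)-\bvec^*(\tau)\}$. That is exactly your fallback ``direct argument''. Your main route is different: you differentiate $F\{Q_y(u\mid\xvec)\}=u$ via the inverse function theorem and then apply the mean value theorem on the segment inside $\overline{\mc T}$.

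The integral route is more elementary. It needs only continuity of $Q_y(\cdot\mid\xvec)$ on $[t,\tau]$, not local invertibility of $F$ or the identification of the local inverse with the generalized inverse. Your route gives a by-product: $Q_y(\cdot\mid\xvec)$ is differentiable on $\overline{\mc T}$ with derivative $1/f\{Q_y(u\mid\xvec)\}$.

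The more useful difference is that you make explicit a step the paper skips. Condition \ref{as:density conditions}(ii) bounds the density only at quantile points. So the bound $\int f\ge f_-\cdot(\text{length of the interval})$ needs every point of $[Q_y(t\mid\xvec),Q_y(\tau\mid\xvec)]$ to be a quantile point, i.e.\ $F$ must have no flat stretch there. Your observation supplies exactly this: continuity of $f$ together with $f\{Q_y(u\mid\xvec)\}\ge f_->0$ forces $Q_y(\cdot\mid\xvec)$ to be continuous on $\overline{\mc T}$. Some such continuity is genuinely needed: a conditional density supported on two disjoint intervals would make the quantile function jump and violate the bound.
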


\begin{proof}
The argument used in the proof of Lemma 19 of \citet{belloni_conditional_2019} gives the required bound. For $\tau<t$, \eqref{eqn:conditional quantile linear expr} and Condition \ref{as:density conditions}(ii) imply, almost surely,
\begin{align}
	t-\tau
	=\int_{\xvec^T\bvec^*(\tau)}^{\xvec^T\bvec^*(t)}
	f_{y\mid\xvec}(y\mid\xvec)\,dy\geq f_-\xvec^T\{\bvec^*(t)-\bvec^*(\tau)\}\ge0.
	\label{eqn:gic_quantile_path_bound}
\end{align}
For $t<\tau$, apply \eqref{eqn:gic_quantile_path_bound} to $(t,\tau)$.
For $t=\tau$, the difference is zero.
\end{proof}

\begin{lemma}\label{lem:gic_loss_scale}
There is a constant $c_\rho>0$ such that
\[
0<c_\rho
\le
\inf_{\tau\in\mc T}\E \rho_\tau\{Y-\xvec^T\bvec^*(\tau)\}
\le
\sup_{\tau\in\mc T}\E \rho_\tau\{Y-\xvec^T\bvec^*(\tau)\}
\le
B_Y+f_-^{-1},
\]
Moreover,
\[
\sup_{\tau\in\mc T}\mc Q_{\tau,n}\{\bvec^*(\tau)\}=O_p(1),
\qquad
P\left\{\inf_{\tau\in\mc T}\mc Q_{\tau,n}\{\bvec^*(\tau)\}<c_\rho\right\}\to0.
\]
\end{lemma}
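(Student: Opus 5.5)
Write $\ell_\tau(\xvec,Y)=\rho_\tau\{Y-\xvec^T\bvec^*(\tau)\}$. The plan treats the three claims in turn: the lower bound, the upper bound, and the two empirical statements.

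\emph{Lower bound.} Conditional on $\xvec$, the minimizer of $\E[\rho_\tau(Y-a)\mid\xvec]$ over $a$ is $a=\xvec^T\bvec^*(\tau)$, and the conditional density near that quantile is bounded below by $f_-$. Using Knight's identity, or directly, we get
\[
\E[\ell_\tau\mid\xvec]=\int_0^\infty \bigl[\tau\{1-F(q+u)\}+(1-\tau)F(q-u)\bigr]du ,
\]
with $q=\xvec^T\bvec^*(\tau)$ and $F=F_{y\mid\xvec}$. Integrand terms with $u\le\epsilon$ are at least $\tau(1-\tau-f_+\epsilon)$-type quantities. More cleanly, $1-F(q+u)\ge 1-\tau-f_+u$ and $F(q-u)\ge \tau-f_+u$ by Condition \ref{as:density conditions}(i). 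So for $u\le u_0=\delta/(2f_+)$ the integrand is at least $\tau(1-\tau)-f_+u\ge \delta(1-\delta)-\delta/2\cdot$, which is bounded below by a positive constant after shrinking $u_0$ appropriately. Since $\mc T\subseteq[\delta,1-\delta]$, integrating over $[0,u_0]$ gives $c_\rho>0$ uniformly in $\xvec$, $\tau$ and $n$. Taking expectations finishes this part.

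\emph{Upper bound.} Because $\rho_\tau$ is $1$-Lipschitz,
\[
\rho_\tau\{Y-\xvec^T\bvec^*(\tau)\}\le |Y-\xvec^T\bvec^*(\tau_0)|+|\xvec^T\{\bvec^*(\tau)-\bvec^*(\tau_0)\}| .
\]
Lemma \ref{lem:gic_quantile_path} bounds the second term by $f_-^{-1}|\tau-\tau_0|\le f_-^{-1}$, and Condition \ref{as:gic_moments} bounds the expectation of the first by $B_Y$.

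\emph{$O_p(1)$ claim.} The same pointwise bound gives
\[
\sup_\tau \mc Q_{\tau,n}\{\bvec^*(\tau)\}\le \empmean|Y_i-\xvec_i^T\bvec^*(\tau_0)|+f_-^{-1}.
\]
The right-hand side has expectation at most $B_Y+f_-^{-1}$, so Markov's inequality gives $O_p(1)$.

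\emph{Uniform lower bound in probability.} This is the main obstacle, because it requires a uniform law of large numbers over $\tau$ with only a first-moment bound. Strictly, the claim $P\{\inf_\tau\mc Q_{\tau,n}<c_\rho\}\to0$ needs some slack, so I would define $c_\rho$ as half the population infimum $2c_\rho$.

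The plan is a bracketing argument. Take a grid $\tau_0<\dots<\tau_K$ on $\mc T$ with mesh $\epsilon$. For $\tau\in[\tau_{k-1},\tau_k]$, Lipschitz continuity of $\rho$ in both arguments gives
\[
|\rho_\tau(u)-\rho_{\tau_k}(u')|\le |\tau-\tau_k|\,|u|+|u-u'| .
\]
Combined with Lemma \ref{lem:gic_quantile_path}, this yields
\[
\mc Q_{\tau,n}\{\bvec^*(\tau)\}\ge \mc Q_{\tau_k,n}\{\bvec^*(\tau_k)\}-\epsilon\,\empmean|Y_i-\xvec_i^T\bvec^*(\tau_0)|-(1+\epsilon)f_-^{-1}\epsilon .
\]
At each of the finitely many grid points, the weak law of large numbers for the i.i.d.\ triangular array applies. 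It is justified by uniform integrability, which I would obtain by truncating at level $M$ and using that $\E|Y-\xvec^T\bvec^*(\tau_0)|\le B_Y$; alternatively, a one-sided bound suffices. For a lower bound one can truncate: $\rho_{\tau_k}\ge\min(\rho_{\tau_k},M)$. The truncated variables are bounded, so Chebyshev's inequality applies. Moreover, the truncated means stay above $2c_\rho-o_M(1)$, because the computation in the lower-bound step only uses the integrand over $u\le u_0$, which already yields a bounded lower contribution.

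Choosing $\epsilon$ small so that $\epsilon(\text{an }O_p(1)\text{ term})+2f_-^{-1}\epsilon<c_\rho/2$ with high probability, and taking a union bound over the $K+1$ grid points, finishes the proof.
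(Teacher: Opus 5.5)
Your proof is correct. The upper bound and the $O_p(1)$ claim match the paper's argument: the Lipschitz bound through Lemma \ref{lem:gic_quantile_path}, Condition \ref{as:gic_moments}, and Markov's inequality.

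For the two lower bounds you take a genuinely different route. The paper uses a single $\tau$-free minorant. Set $a_0=(1-\tau_U)/(2f_+)$ and $A_i=\{Y_i\ge\xvec_i^T\bvec^*(\tau_U)+a_0\}$. Then $\Pr(A_i\mid\xvec_i)\ge p_0=(1-\tau_U)/2$. Monotonicity of the quantile curves gives $\rho_\tau\{Y_i-\xvec_i^T\bvec^*(\tau)\}\ge\tau_La_0 1(A_i)$ simultaneously for every $\tau\in\mc T$. Hence $\inf_{\tau}\mc Q_{\tau,n}\{\bvec^*(\tau)\}\ge\tau_La_0\empmean 1(A_i)$. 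Hoeffding's inequality for the i.i.d. indicators then finishes the proof with $c_\rho=\tau_La_0p_0/2$ and an exponential bound $e^{-np_0^2/2}$. The same display also gives the population lower bound.

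This trick removes any need for a uniform law over $\tau$, for truncation, or for the moment condition in the lower-bound part. Your route also works: the integral representation, then a grid with a Lipschitz-in-$\tau$ discretization error, one-sided truncation, Chebyshev at finitely many points, and Markov on $\empmean|Y_i-\xvec_i^T\bvec^*(\tau_0)|$. What makes the one-sided truncation valid is your observation that $\E\min(\rho_{\tau_k},M)$ still dominates the $[0,u_0]$ contribution once $M\ge u_0$. Your argument is longer, uses Condition \ref{as:gic_moments} and Lemma \ref{lem:gic_quantile_path} to control the discretization, and gives no explicit rate. Its advantage is that it does not depend on the monotone structure, so it would carry over to other losses.

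Three small corrections:
\begin{itemize}
\item Your aside that the first-moment bound $\E|Y-\xvec^T\bvec^*(\tau_0)|\le B_Y$ yields uniform integrability is not right; a uniform first-moment bound does not give uniform integrability along a triangular array. Rely only on the one-sided truncation you then describe.
\item The discretization error is $\epsilon\,\empmean|Y_i-\xvec_i^T\bvec^*(\tau_0)|+2f_-^{-1}\epsilon$, not $(1+\epsilon)f_-^{-1}\epsilon$. This is the form you actually use at the end.
\item Avoid naming the first grid point $\tau_0$, since it clashes with the $\tau_0$ of Condition \ref{as:gic_moments}.
\end{itemize}
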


\begin{proof}
By \eqref{eqn:gic_quantile_path_bound}, Condition \ref{as:gic_moments}, and $\rho_\tau(u)\le |u|$,
\begin{equation}\label{eqn:gic_loss_upper}
	\sup_{\tau\in\mc T}\rho_\tau\{Y-\xvec^T\bvec^*(\tau)\}
	\le |Y-\xvec^T\bvec^*(\tau_0)|+f_-^{-1},
	\qquad
	\E \sup_{\tau\in\mc T}\rho_\tau\{Y-\xvec^T\bvec^*(\tau)\}\le B_Y+f_-^{-1}.
\end{equation}
Consequently, for every $M>0$, Markov's inequality gives
\[
P\left\{\sup_{\tau\in\mc T}\mc Q_{\tau,n}\{\bvec^*(\tau)\}>M\right\}
\le \frac{B_Y+f_-^{-1}}{M}.
\]
For the lower bound, let
\[
a_0=\frac{1-\tau_U}{2f_+},\qquad
p_0=\frac{1-\tau_U}{2},\qquad
A_i=\{Y_i\ge\xvec_i^T\bvec^*(\tau_U)+a_0\}.
\]
Condition \ref{as:density conditions}(i) yields
\begin{align*}
	\Pr(A_i\mid\xvec_i)
	&=1-\tau_U-
	\int_{\xvec_i^T\bvec^*(\tau_U)}^{\xvec_i^T\bvec^*(\tau_U)+a_0}
	f_{y\mid\xvec}(y\mid\xvec_i)\,dy\\
	&\ge 1-\tau_U-f_+a_0=p_0.
\end{align*}
For every $\tau\in\mc T$, the monotonicity of conditional quantiles implies
\[
\rho_\tau\{Y_i-\xvec_i^T\bvec^*(\tau)\}\ge\tau_La_0\,1(A_i),\qquad
\inf_{\tau\in\mc T}\E \rho_\tau\{Y-\xvec^T\bvec^*(\tau)\}\ge\tau_La_0p_0.
\]
Set $c_\rho=\tau_La_0p_0/2$. The variables $1(A_i)$ are independent and lie in $[0,1]$. Hoeffding's inequality therefore gives
\begin{equation}\label{eqn:gic_loss_lower_probability}
	\begin{aligned}
		P\left\{\inf_{\tau\in\mc T}\mc Q_{\tau,n}\{\bvec^*(\tau)\}<c_\rho\right\}
		&\le \Pr\left\{\frac1n\sum_{i=1}^n1(A_i)<\frac{p_0}{2}\right\}\\
		&\le \exp(-np_0^2/2)\longrightarrow0.
	\end{aligned}
\end{equation}
\end{proof}

\begin{lemma}\label{lem:gic_population_smallball}
Let
\[
\pi_0=\frac{\kappa_-^3q_0^2}{8\kappa_+^3}.
\]
Every $\bm v$ with $\|\bm v\|_2=1$ and $|\operatorname{supp}(\bm v)|\le d_{\mathrm G,n}$ satisfies
\[
|\xvec^T\bm v|\le\xi_{\mathrm G,n}
\quad\text{almost surely},
\qquad
\Pr\{|\xvec^T\bm v|\ge(\kappa_-/2)^{1/2}\}\ge\pi_0.
\]
\end{lemma}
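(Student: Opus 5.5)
The first claim is immediate from the definition \eqref{eqn:gic_sparse_envelope}: any $\bm v$ with $\|\bm v\|_2=1$ and $|\operatorname{supp}(\bm v)|\le d_{\mathrm G,n}$ is admissible in the supremum defining $\xi_{\mathrm G,n}$. Hence $|\xvec^T\bm v|\le \operatorname*{ess\,sup}|\xvec^T\bm v|\le\xi_{\mathrm G,n}$ almost surely, and nothing further is needed.

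For the small-ball bound I will use a Paley--Zygmund argument driven by the moment conditions in \ref{as:gic_design}. Write $W=|\xvec^T\bm v|$ and $\sigma^2=\E W^2\in[\kappa_-,\kappa_+]$. By \ref{as:gic_design}, $\E W^3\le \sigma^3/q_0$. Set $t=(\kappa_-/2)^{1/2}$, so that $t^2\le\sigma^2/2$.

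First I split the second moment at $t$:
\[
\sigma^2=\E[W^2 1(W<t)]+\E[W^2 1(W\ge t)]\le t^2+\E[W^2 1(W\ge t)].
\]
This gives $\E[W^2 1(W\ge t)]\ge\sigma^2-\kappa_-/2\ge\sigma^2/2$. Next I apply H\"older's inequality with exponents $3/2$ and $3$:
\[
\E[W^2 1(W\ge t)]\le(\E W^3)^{2/3}\Pr(W\ge t)^{1/3}.
\]
Cubing and rearranging yields
\[
\Pr(W\ge t)\ge\frac{(\sigma^2/2)^3}{(\E W^3)^2}\ge\frac{\sigma^6/8}{\sigma^6/q_0^2}=\frac{q_0^2}{8}.
\]
Since $\kappa_-\le\kappa_+$, we have $q_0^2/8\ge \kappa_-^3q_0^2/(8\kappa_+^3)=\pi_0$, which is the claimed bound.

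There is no real obstacle here. The only point to watch is that the third-moment ratio in \ref{as:gic_design} is taken over the same sparse unit vectors as in the statement, so the bound $\E W^3\le\sigma^3/q_0$ holds for every admissible $\bm v$. Carrying $\sigma^2\in[\kappa_-,\kappa_+]$ explicitly rather than cancelling it reproduces the stated constant $\pi_0$ exactly.
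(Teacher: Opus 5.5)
Your proof is correct and follows the paper's route: split $\E W^2$ at $t=(\kappa_-/2)^{1/2}$, then apply H\"older with exponents $3/2$ and $3$ to the tail piece. The only difference is that you keep $\sigma^2=\E W^2$ and let it cancel, which gives the slightly stronger bound $q_0^2/8\ge\pi_0$; the paper instead bounds $\E Z^2\ge\kappa_-$ and $\E|Z|^3\le\kappa_+^{3/2}/q_0$ separately, which is exactly where the extra factor $(\kappa_-/\kappa_+)^3$ in $\pi_0$ comes from.
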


\begin{proof}
The bound $|\xvec^T\bm v|\le\xi_{\mathrm G,n}$ follows from \eqref{eqn:gic_sparse_envelope}. For the probability bound, set $Z=\xvec^T\bm v$. Condition \ref{as:gic_design} gives
\[
\E Z^2\ge\kappa_-,\qquad
\E|Z|^3\le\kappa_+^{3/2}/q_0.
\]
H\"older's inequality implies
\begin{align*}
	\kappa_-
	&\le\E\left[Z^21\{|Z|<(\kappa_-/2)^{1/2}\}\right]
	+\E\left[Z^21\{|Z|\ge(\kappa_-/2)^{1/2}\}\right]\\
	&\le\frac{\kappa_-}{2}
	+(\E|Z|^3)^{2/3}
	\Pr\{|Z|\ge(\kappa_-/2)^{1/2}\}^{1/3}\\
	&\le\frac{\kappa_-}{2}
	+\kappa_+q_0^{-2/3}
	\Pr\{|Z|\ge(\kappa_-/2)^{1/2}\}^{1/3}.
\end{align*}
Thus
\[
\Pr\{|Z|\ge(\kappa_-/2)^{1/2}\}
\ge\left(\frac{\kappa_-q_0^{2/3}}{2\kappa_+}\right)^3
=\pi_0.
\]
\end{proof}

\begin{lemma}\label{lem:gic_excess_identity}
For all $\tau,t\in\mc T$ and $\bm b,\bm c\in\reals^{p+1}$,
\[
\ell_{\tau,\bm b}
=
\int_{\xvec^T\bvec^*(\tau)}^{\xvec^T\bm b}
\{1(Y\le u)-\tau\}\,du,
\]
\[
|\ell_{\tau,\bm b}|
\le
|\xvec^T\{\bm b-\bvec^*(\tau)\}|,
\]
and
\begin{align*}
	|\ell_{\tau,\bm b}-\ell_{t,\bm c}|
	&\le
	|\xvec^T\{\bm b-\bvec^*(\tau)-\bm c+\bvec^*(t)\}|\\
	&\quad+2f_-^{-1}|\tau-t|
	+|\tau-t|\,|\xvec^T\{\bm c-\bvec^*(t)\}|.
\end{align*}
\end{lemma}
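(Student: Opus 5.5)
The plan is to derive everything from Knight's identity for the check loss. For real $u$ and $v$,
\[
\rho_\tau(u - v) - \rho_\tau(u) = \int_0^{v}\{1(u\le w)-\tau\}\,dw ,
\]
which holds almost everywhere in $u$, and in fact exactly, since $\rho_\tau$ is piecewise linear and absolutely continuous.

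\emph{Integral representation.} Set $a=\xvec^T\bvec^*(\tau)$ and $b=\xvec^T\bm b$. The map $g(u)=\rho_\tau(Y-u)$ is Lipschitz in $u$. Its derivative is $-\psi_\tau(Y-u)$, which equals $1(Y<u)-\tau$ for every $u\neq Y$. By the fundamental theorem of calculus for absolutely continuous functions,
\[
\ell_{\tau,\bm b}=g(b)-g(a)=\int_a^b\{1(Y\le u)-\tau\}\,du .
\]
Replacing $1(Y<u)$ with $1(Y\le u)$ changes the integrand only at the single point $u=Y$, so the integral is unaffected. The integral is read as oriented, so that $\int_a^b=-\int_b^a$ when $b<a$.

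\emph{First bound.} The integrand lies in $[-\tau,1-\tau]\subset[-1,1]$. Therefore $|\ell_{\tau,\bm b}|\le |b-a|$, which is the stated inequality.

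\emph{Third inequality.} Write $a'=\xvec^T\bvec^*(t)$ and $c=\xvec^T\bm c$, and decompose
\[
\ell_{\tau,\bm b}-\ell_{t,\bm c}
=\bigl[\ell_{\tau,\bm b}-\ell_{\tau,\bm c'}\bigr]+\bigl[\ell_{\tau,\bm c'}-\ell_{t,\bm c}\bigr],
\]
where $\bm c'=\bm c-\bvec^*(t)+\bvec^*(\tau)$. This choice makes the increment $\xvec^T\bm c'-a$ equal to $c-a'$.

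The first bracket is $\int_{\xvec^T\bm c'}^{b}\{1(Y\le u)-\tau\}\,du$. Its absolute value is at most $|b-\xvec^T\bm c'|$, and
\[
b-\xvec^T\bm c'=\xvec^T\{\bm b-\bvec^*(\tau)-\bm c+\bvec^*(t)\},
\]
which gives the first term of the bound.

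For the second bracket, substitute $u=a+w$ in the first integral and $u=a'+w$ in the second, with $\delta=c-a'$. This gives
\[
\int_0^{\delta}\bigl[\{1(Y\le a+w)-\tau\}-\{1(Y\le a'+w)-t\}\bigr]dw .
\]
Split the integrand into $(t-\tau)$ and $1(Y\le a+w)-1(Y\le a'+w)$.
\begin{itemize}
\item[(i)] The $(t-\tau)$ part contributes at most $|\tau-t|\,|\delta|$, which is the last term $|\tau-t|\,|\xvec^T\{\bm c-\bvec^*(t)\}|$.
\item[(ii)] The indicator difference is nonzero only when $Y-w$ lies between $a$ and $a'$. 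Bounding it naively by $\min(|\delta|,|a-a'|)$ and then applying Lemma \ref{lem:gic_quantile_path} to $|a-a'|$ gives $f_-^{-1}|\tau-t|$.
\end{itemize}

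The main obstacle is that (ii) does not simply give $|a-a'|$ when the window $[0,\delta]$ interacts badly with the location of $Y$. The plan handles this by shifting. For fixed $Y$, the function $w\mapsto 1(Y\le a+w)-1(Y\le a'+w)$ is supported on an interval of length $|a-a'|$ and takes values in $\{0,\pm1\}$. Its integral over any interval is therefore at most $|a-a'|\le f_-^{-1}|\tau-t|$ in absolute value. The factor $2$ in the statement leaves slack, which covers a cruder route: bound the indicator contributions at the two endpoints $a$ and $a'$ separately, each by $f_-^{-1}|\tau-t|$ via Lemma \ref{lem:gic_quantile_path}.

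Summing the three contributions gives the claimed inequality almost surely, since Lemma \ref{lem:gic_quantile_path} holds almost surely.
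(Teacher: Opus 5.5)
Your proof is correct. The first two claims match the paper's: the same integral representation via absolute continuity of $z\mapsto\rho_\tau(Y-z)$, and the same bound on the integrand. For the third inequality you use a different decomposition.

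The paper subtracts the two integral representations directly, writing $\ell_{\tau,\bm b}-\ell_{t,\bm c}=\int_{\xvec^T\bm c}^{\xvec^T\bm b}\{1(Y\le u)-\tau\}\,du+\int_{\xvec^T\bvec^*(\tau)}^{\xvec^T\bvec^*(t)}\{1(Y\le u)-\tau\}\,du+(t-\tau)\xvec^T\{\bm c-\bvec^*(t)\}$. It then pays $f_-^{-1}|\tau-t|$ twice, each time via Lemma \ref{lem:gic_quantile_path}. The first payment bounds the second integral. The second payment converts $|\xvec^T(\bm b-\bm c)|$ into the centered difference $|\xvec^T\{\bm b-\bvec^*(\tau)-\bm c+\bvec^*(t)\}|$. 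This is where the factor $2$ comes from.

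You instead re-center to $\bm c'=\bm c-\bvec^*(t)+\bvec^*(\tau)$. With this choice the first bracket yields the centered difference exactly. The second bracket then compares integrals over two translated windows of the same length $\delta$. There the indicator difference $w\mapsto 1(Y\le a+w)-1(Y\le a'+w)$ is $\pm$ the indicator of an interval of length $|a-a'|$.

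Your route gives the sharper bound with $f_-^{-1}|\tau-t|$ (indeed $\min\{|\delta|,|a-a'|\}$) in place of $2f_-^{-1}|\tau-t|$, which trivially implies the stated inequality. The paper's version is a one-line subtraction. Yours needs one short support argument and gains a constant factor, which is immaterial downstream because Lemma \ref{lem:gic_empirical_process} only needs some fixed constant.

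Two small remarks on presentation. The ``main obstacle'' you flag in (ii) is not a real obstacle: your naive bound $\min(|\delta|,|a-a'|)$ is already valid, for exactly the support reason you give afterwards. The ``cruder route'' aside is therefore unnecessary.
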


\begin{proof}
For fixed $Y$, the function $z\mapsto\rho_\tau(Y-z)$ is absolutely continuous and its derivative equals $1(Y\le z)-\tau$ for Lebesgue-almost every $z$. Hence
\begin{equation}\label{eqn:gic_excess_integral}
	\ell_{\tau,\bm b}
	=\int_{\xvec^T\bvec^*(\tau)}^{\xvec^T\bm b}
	\{1(Y\le u)-\tau\}\,du,
	\qquad
	|\ell_{\tau,\bm b}|
	\le |\xvec^T\{\bm b-\bvec^*(\tau)\}|.
\end{equation}
Subtracting the two integral representations gives
\begin{align*}
	\ell_{\tau,\bm b}-\ell_{t,\bm c}
	=\int_{\xvec^T\bm c}^{\xvec^T\bm b}\{1(Y\le u)-\tau\}\,du +\int_{\xvec^T\bvec^*(\tau)}^{\xvec^T\bvec^*(t)}
	\{1(Y\le u)-\tau\}\,du
	+(t-\tau)\xvec^T\{\bm c-\bvec^*(t)\}.
\end{align*}
By \eqref{eqn:gic_quantile_path_bound}, we get
\begin{align*}
	|\xvec^T(\bm b-\bm c)|
	&\le|\xvec^T\{\bm b-\bvec^*(\tau)-\bm c+\bvec^*(t)\}|
	+f_-^{-1}|\tau-t|,\\
	|\ell_{\tau,\bm b}-\ell_{t,\bm c}|
	&\le|\xvec^T\{\bm b-\bvec^*(\tau)-\bm c+\bvec^*(t)\}|+2f_-^{-1}|\tau-t|
	+|\tau-t|\,|\xvec^T\{\bm c-\bvec^*(t)\}|.
\end{align*}
\end{proof}

\begin{lemma}\label{lem:gic_population_curvature}
There are constants $c_1,c_2>0$ such that, for every $A\in\mathcal A_n$, $\tau\in\mc T$, and $\bm b$ satisfying $\bm b_{D_A^c}=\bm0$,
\[
\E \ell_{\tau,\bm b}
\ge
c_1\{\|\bm b-\bvec^*(\tau)\|_2^2\wedge \|\bm b-\bvec^*(\tau)\|_2\},
\]
and
\[
\E \ell_{\tau,\bm b}
\ge
c_2\|\bm b-\bvec^*(\tau)\|_2^2
\quad\text{if }\|\bm b-\bvec^*(\tau)\|_2\le c_2.
\]
\end{lemma}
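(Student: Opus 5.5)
We prove the population curvature bound via Knight's identity (the integral representation of $\ell_{\tau,\bm b}$ in Lemma \ref{lem:gic_excess_identity}), conditioning on $\xvec$. Write $\bm v=\bm b-\bvec^*(\tau)$ and $Z=\xvec^T\bm v$. Since $\bm b_{D_A^c}=\bm 0$ and $\bvec^*(\tau)$ is supported on $\{0\}\cup\Mset\cup\Sset$, the vector $\bm v$ is supported on $D_A\cup\Sset$. For $A\in\mathcal A_n$ this set has size at most $m+1+s+K_{\mathrm G,n}=d_{\mathrm G,n}$, so Condition \ref{as:gic_design} and Lemma \ref{lem:gic_population_smallball} apply to $\bm v/\|\bm v\|_2$.

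Taking conditional expectations in the integral representation and using $F_{y|\xvec}(\xvec^T\bvec^*(\tau))=\tau$ gives
\[
\E(\ell_{\tau,\bm b}\mid\xvec)=\int_0^{Z}\{F_{y|\xvec}(\xvec^T\bvec^*(\tau)+u\mid\xvec)-\tau\}\,du .
\]
The integrand has the sign of $u$, so the integral is nonnegative. To bound it below, I would use Condition \ref{as:density conditions}(ii)--(iii). The density at the quantile is at least $f_-$, and its derivative is bounded by $L_f$, so $f_{y|\xvec}\ge f_-/2$ on the window $|u|\le f_-/(2L_f)=:\bar u$. For $|u|\le\bar u$ this gives $|F(\cdot+u)-\tau|\ge f_-|u|/2$, and for $|u|>\bar u$ monotonicity gives $|F(\cdot+u)-\tau|\ge f_-\bar u/2$. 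Integrating,
\[
\E(\ell_{\tau,\bm b}\mid\xvec)\ge c\,(Z^2\wedge \bar u|Z|)\ge c'\,(Z^2\wedge |Z|).
\]

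Next I apply the small-ball bound. Put $t=\|\bm v\|_2$ and $\kappa=(\kappa_-/2)^{1/2}$. Lemma \ref{lem:gic_population_smallball} gives $|Z|\ge\kappa t$ with probability at least $\pi_0$. The map $z\mapsto z^2\wedge z$ is increasing on $[0,\infty)$, so
\[
\E\ell_{\tau,\bm b}\ge c'\pi_0\{(\kappa t)^2\wedge \kappa t\}\ge c_1(t^2\wedge t)
\]
with $c_1=c'\pi_0\min(\kappa^2,\kappa,1)$. This is the first claim.

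The second claim is immediate from the first. Choose $c_2\le\min(1,c_1)$. If $t\le c_2\le1$ then $t^2\wedge t=t^2$, hence $\E\ell_{\tau,\bm b}\ge c_1t^2\ge c_2t^2$.

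The step needing most care is the truncation. The constant $\bar u$ and the small-ball level must be chosen uniformly in $n$, $A$ and $\tau$. This works because the constants $f_-,L_f,\kappa_\pm,q_0$ are uniform and $\tau\in\mc T\subset\overline{\mc T}$. The only structural input is the support count, which is what keeps $\bm v$ within the $d_{\mathrm G,n}$-sparse class.
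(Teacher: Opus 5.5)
Your proof is correct, but it reaches the bound by a different route than the paper.

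\textbf{The paper's route.} It starts from the second-order representation
$\E\ell_{\tau,\bm b}=\E[(\xvec^T\bm\delta)^2\int_0^1(1-u)f_{y\mid\xvec}\{\xvec^T\bvec^*(\tau)+u\,\xvec^T\bm\delta\mid\xvec\}\,du]$.
It lower-bounds the density by $f_--L_f|u\,\xvec^T\bm\delta|$, which gives
$\E\ell_{\tau,\bm b}\ge \frac{f_-\kappa_-}{2}r^2-\frac{L_f\kappa_+^{3/2}}{6q_0}r^3$.
So the moment-ratio part of \ref{as:gic_design} directly controls the cubic remainder and yields the quadratic bound for $r\le c_2$. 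The linear regime $r>c_2$ then follows from convexity of $\bm b\mapsto\E\ell_{\tau,\bm b}$ and $\ell_{\tau,\bvec^*(\tau)}=0$, by rescaling $\bm\delta$ to length $c_2$.

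\textbf{Your route.} You prove a Huber-type pointwise bound $\E(\ell_{\tau,\bm b}\mid\xvec)\ge c'(Z^2\wedge|Z|)$, using the density only on a fixed window around the conditional quantile. You then pass from $Z$ to $\|\bm v\|_2$ with the small-ball Lemma~\ref{lem:gic_population_smallball}. That lemma precedes this one, so there is no circularity, and it is where the moment ratio enters, through H\"older.

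\textbf{Comparison.} Your argument treats both regimes in one step, obtains the local claim as a corollary of the global one, and needs no convexity. The paper's argument keeps the local curvature constant at the natural order $f_-\kappa_-$. Yours picks up the much smaller factor $\pi_0\kappa_-$, which involves $\kappa_-^3q_0^2/\kappa_+^3$. This is harmless here, because the downstream lemmas only use the existence of such constants. Both arguments rely on the same implicit requirement: the bound $f_{y\mid\xvec}\ge f_--L_f|w|$ must be valid on the relevant neighborhood of the conditional quantile.
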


\begin{proof}
We use the population loss expansion in Lemma 7.1 of \citet{zheng2015globally}. Write $\bm\delta=\bm b-\bvec^*(\tau)$ and $r=\|\bm\delta\|_2$. The support of $\bm\delta$ is contained in $D_A\cup\Sset$, whose cardinality is at most $d_{\mathrm G,n}$. Conditional expectation in \eqref{eqn:gic_excess_integral} and the equality
$\Pr\{Y\le\xvec^T\bvec^*(\tau)\mid\xvec\}=\tau$ give
\begin{equation}\label{eqn:gic_population_expansion}
	\E \ell_{\tau,\bm b}
	=\E\left[
	(\xvec^T\bm\delta)^2\int_0^1(1-u)
	f_{y\mid\xvec}\{\xvec^T\bvec^*(\tau)+u\xvec^T\bm\delta\mid\xvec\}\,du
	\right].
\end{equation}
Conditions \ref{as:density conditions}(ii)--(iii) and \ref{as:gic_design} imply
\begin{align}
	\E \ell_{\tau,\bm b}
	&\ge \frac{f_-}{2}\E(\xvec^T\bm\delta)^2
	-\frac{L_f}{6}\E|\xvec^T\bm\delta|^3\notag\\
	&\ge\frac{f_-\kappa_-}{2}r^2
	-\frac{L_f\kappa_+^{3/2}}{6q_0}r^3.
	\label{eqn:gic_population_cubic}
\end{align}
Choose
\[
c_2=\min\left\{1,\frac{f_-\kappa_-}{4},
\frac{3f_-\kappa_-q_0}{2(1+L_f)\kappa_+^{3/2}}\right\},
\qquad c_1=\frac{f_-\kappa_-c_2}{4}.
\]
For $r\le c_2$, \eqref{eqn:gic_population_cubic} yields
\begin{equation}\label{eqn:gic_population_local_lower}
	\E \ell_{\tau,\bm b}\ge\frac{f_-\kappa_-}{4}r^2\ge c_2r^2.
\end{equation}
For $r>c_2$, convexity of the check loss and $\ell_{\tau,\bvec^*(\tau)}=0$ imply
\[
\E \ell_{\tau,\bvec^*(\tau)+(c_2/r)\bm\delta}
\le\frac{c_2}{r}\E \ell_{\tau,\bm b}.
\]
The vector $(c_2/r)\bm\delta$ has the same support as $\bm\delta$. Applying \eqref{eqn:gic_population_local_lower} to this vector gives
\begin{equation}\label{eqn:gic_population_global_lower}
	\E \ell_{\tau,\bm b}
	\ge\frac{r}{c_2}\frac{f_-\kappa_-}{4}c_2^2
	=c_1r.
\end{equation}
Since $c_2\le1$ and $c_1\le f_-\kappa_-/4$, equations
\eqref{eqn:gic_population_local_lower}--\eqref{eqn:gic_population_global_lower} give
$\E \ell_{\tau,\bm b}\ge c_1(r^2\wedge r)$ for every $r\ge0$.
\end{proof}

\begin{lemma}\label{lem:gic_empirical_smallball}
Let
\[
c_{\mathrm{sb}}
=
\frac{(\kappa_-/2)^{1/2}\pi_0}{4}.
\]
\[
P\left\{
\inf_{\substack{\|\bm v\|_2=1\\
		|\operatorname{supp}(\bm v)|\le d_{\mathrm G,n}}}
\empmean |\xvec^T\bm v|
<
c_{\mathrm{sb}}
\right\}\to0.
\]
\end{lemma}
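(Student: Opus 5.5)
The plan is to reduce the empirical small-ball bound to the population bound in Lemma \ref{lem:gic_population_smallball} via a truncated indicator and a uniform deviation argument over the sparse sphere. Write $\Theta_n=\{\bm v:\|\bm v\|_2=1,\ |\operatorname{supp}(\bm v)|\le d_{\mathrm G,n}\}$ and $t_0=(\kappa_-/2)^{1/2}$. For every $\bm v$ we have the pointwise inequality $|\xvec^T\bm v|\ge t_0\,1\{|\xvec^T\bm v|\ge t_0\}$. Replacing the hard indicator by the Lipschitz ramp $g(u)=\min\{1,(|u|-t_0/2)_+/(t_0/2)\}$ keeps this usable:
\[
\empmean|\xvec^T\bm v|\ \ge\ \tfrac{t_0}{2}\,\empmean g(\xvec^T\bm v),
\qquad
\E g(\xvec^T\bm v)\ \ge\ \Pr\{|\xvec^T\bm v|\ge t_0\}\ \ge\ \pi_0 .
\]
It therefore suffices to show $\sup_{\bm v\in\Theta_n}|\empmean g(\xvec^T\bm v)-\E g(\xvec^T\bm v)|\le \pi_0/2$ with probability tending to one. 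On that event, $\empmean|\xvec^T\bm v|\ge (t_0/2)(\pi_0/2)=c_{\mathrm{sb}}$ uniformly over $\Theta_n$, which is exactly the claim.

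For the uniform deviation I would first apply symmetrization, then the contraction principle, since $g$ is $(2/t_0)$-Lipschitz, bounded in $[0,1]$, and satisfies $g(0)=0$. This gives
\[
\E\sup_{\Theta_n}\left|(\empmean-\E)g(\xvec^T\bm v)\right|\ \lesssim\ \frac{1}{t_0}\,\E\sup_{\Theta_n}\left|\frac1n\sum_{i=1}^n\epsilon_i\xvec_i^T\bm v\right| .
\]
Next, decompose $\Theta_n$ by support $S$ with $|S|\le d_{\mathrm G,n}$. For fixed $S$ the supremum equals $\|n^{-1}\sum_i\epsilon_i\xvec_{i,S}\|_2$, whose expectation is bounded by $\sqrt{d_{\mathrm G,n}\kappa_+/n}$. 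To handle the union over the $\binom{p+1}{d_{\mathrm G,n}}$ supports, I would not bound the expectation directly. Instead I would use concentration: each such supremum is a bounded-difference function of the data, with increments of order $\xi_{\mathrm G,n}/n$, or $K_x\sqrt{d_{\mathrm G,n}}/n$ by \ref{as:gic_design}. A union bound then gives an overall deviation of order
\[
\sqrt{\frac{d_{\mathrm G,n}L_{\mathrm G,n}}{n}}+\frac{\xi_{\mathrm G,n}d_{\mathrm G,n}L_{\mathrm G,n}}{n}\ \lesssim\ a_{\mathrm G,n},
\]
where $L_{\mathrm G,n}$ absorbs the $d_{\mathrm G,n}\log p$ entropy. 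Alternatively, Talagrand's inequality applied directly to the bounded class $\{g(\xvec^T\bm v)\}$ with its sparse-VC entropy yields the same rate.

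Finally, \ref{as:gic_underfit_rates} gives $a_{\mathrm G,n}\to0$, and \eqref{eqn:gic_sharp_consequences} independently gives $\xi_{\mathrm G,n}\sqrt{d_{\mathrm G,n}L_{\mathrm G,n}/n}\to0$. Either way the uniform deviation is $o_p(1)$, hence below $\pi_0/2$ with probability tending to one, which completes the argument.

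The main obstacle is the union over supports. One must check that applying the contraction principle to the ramp $g$ and then taking the union over supports does not lose more than a $\log\binom{p}{d}\lesssim d_{\mathrm G,n}L_{\mathrm G,n}$ factor. This is why I would rather apply Talagrand's inequality directly to the bounded class, which has VC-type entropy $\lesssim d_{\mathrm G,n}L_{\mathrm G,n}$, than pass through expectations of Rademacher suprema.
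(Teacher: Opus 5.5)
Your argument is correct, but it takes a different route from the paper. The paper neither smooths the indicator nor symmetrizes. On each support $D$ with $|D|\le d_{\mathrm G,n}$, it places an $\varepsilon$-net of the unit sphere with $\varepsilon=(\kappa_-/2)^{1/2}/(2\xi_{\mathrm G,n})$. It then applies Hoeffding's inequality to the hard indicator $\empmean 1\{|\xvec^T\bm u|\ge(\kappa_-/2)^{1/2}\}$ at each of the at most $\exp(Cd_{\mathrm G,n}L_{\mathrm G,n})$ net points and takes a union bound. Finally it transfers to an arbitrary $\bm v$ deterministically through $|\xvec^T(\bm u-\bm v)|\le\xi_{\mathrm G,n}\varepsilon=(\kappa_-/2)^{1/2}/2$. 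This halves the threshold exactly as your ramp does, so both arguments arrive at the constant $c_{\mathrm{sb}}$.

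The two routes trade off differently. The paper's proof is elementary (Hoeffding plus a union bound) and uses the envelope only logarithmically, through $L_{\mathrm G,n}$. It needs only $d_{\mathrm G,n}L_{\mathrm G,n}/n\to0$, which follows from $a_{\mathrm G,n}\to0$ in \ref{as:gic_underfit_rates}. Your preferred version applies Talagrand's inequality or a VC-type maximal inequality directly to the ramp class. That class has envelope $1$ and uniform entropy $\lesssim d_{\mathrm G,n}\log p+d_{\mathrm G,n}\log(1/\epsilon)$, so it needs the same rate condition. It does not use $\xi_{\mathrm G,n}$ at all, at the cost of heavier empirical-process machinery.

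One correction concerns your contraction variant. McDiarmid's inequality with increments of order $\xi_{\mathrm G,n}/n$ gives a deviation of order $\xi_{\mathrm G,n}\sqrt{d_{\mathrm G,n}L_{\mathrm G,n}/n}$, not $\sqrt{d_{\mathrm G,n}L_{\mathrm G,n}/n}+\xi_{\mathrm G,n}d_{\mathrm G,n}L_{\mathrm G,n}/n$. The latter rate requires a variance-sensitive bound such as Bousquet's inequality. Falling back on \eqref{eqn:gic_sharp_consequences} does rescue the argument, but it imports \ref{as:gic_overfit_rates}. The lemma, however, is invoked in Theorem \ref{thm:gic_underfit}, which assumes only \ref{as:gic_design}--\ref{as:gic_underfit_rates}. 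Under \ref{as:gic_design} one has $\xi_{\mathrm G,n}^2\ge d_{\mathrm G,n}\kappa_-$, so the McDiarmid rate needs roughly $d_{\mathrm G,n}^2L_{\mathrm G,n}/n\to0$, which $a_{\mathrm G,n}\to0$ does not imply. The direct bounded-class route you favor avoids this issue.
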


\begin{proof}
Let $\varepsilon=(\kappa_-/2)^{1/2}/(2\xi_{\mathrm G,n})$. For every $D\subset\{0,\ldots,p\}$ with $1\le|D|\le d_{\mathrm G,n}$, choose a finite set $\mathcal N_D$ such that
\begin{equation}\label{eqn:gic_smallball_cover}
	\begin{gathered}
		\mathcal N_D\subset\{\bm u:\operatorname{supp}(\bm u)\subset D,\ \|\bm u\|_2=1\},
		\qquad |\mathcal N_D|\le(1+2/\varepsilon)^{|D|},\\
		\sup_{\substack{\operatorname{supp}(\bm v)\subset D\\\|\bm v\|_2=1}}
		\min_{\bm u\in\mathcal N_D}\|\bm v-\bm u\|_2\le\varepsilon.
	\end{gathered}
\end{equation}
The definition of $L_{\mathrm G,n}$ in \eqref{eqn:gic_log_factor} implies
\begin{align}
	\sum_{\substack{D\subset\{0,\ldots,p\}\\1\le|D|\le d_{\mathrm G,n}}}|\mathcal N_D|
	&\le\sum_{k=1}^{d_{\mathrm G,n}}\binom{p+1}{k}
	\{1+4\xi_{\mathrm G,n}(2/\kappa_-)^{1/2}\}^k\notag\\
	&\le d_{\mathrm G,n}\{C(p+1)(1+\xi_{\mathrm G,n})\}^{d_{\mathrm G,n}}
	\le\exp(Cd_{\mathrm G,n}L_{\mathrm G,n}).
	\label{eqn:gic_smallball_cardinality}
\end{align}
Lemma \ref{lem:gic_population_smallball} and Hoeffding's inequality give, for every $\bm u\in\mathcal N_D$,
\[
\Pr\left\{\empmean 1\{|\xvec^T\bm u|\ge(\kappa_-/2)^{1/2}\}<\frac{\pi_0}{2}\right\}
\le e^{-n\pi_0^2/2}.
\]
By \eqref{eqn:gic_an} and Condition \ref{as:gic_underfit_rates}, we obtain
\[
\frac{d_{\mathrm G,n}L_{\mathrm G,n}}{n}
\le\frac{a_{\mathrm G,n}^2}{A_0^2}\longrightarrow0.
\]
A union bound using \eqref{eqn:gic_smallball_cardinality} therefore yields
\begin{equation}\label{eqn:gic_smallball_probability}
	\Pr\left\{
	\min_D\min_{\bm u\in\mathcal N_D}
	\empmean 1\{|\xvec^T\bm u|\ge(\kappa_-/2)^{1/2}\}<\frac{\pi_0}{2}
	\right\}
	\le\exp\{Cd_{\mathrm G,n}L_{\mathrm G,n}-n\pi_0^2/2\}\longrightarrow0.
\end{equation}
For every admissible $\bm v$, choose $\bm u\in\mathcal N_{\operatorname{supp}(\bm v)}$ as in \eqref{eqn:gic_smallball_cover}. By \eqref{eqn:gic_sparse_envelope},
\[
|\xvec^T(\bm u-\bm v)|\le\xi_{\mathrm G,n}\varepsilon
=\frac{(\kappa_-/2)^{1/2}}{2}.
\]
Thus, simultaneously for all such $\bm v$, we have
\begin{align*}
	\empmean |\xvec^T\bm v|
	&\ge\frac{(\kappa_-/2)^{1/2}}{2}
	\empmean 1\{|\xvec^T\bm v|\ge(\kappa_-/2)^{1/2}/2\}\\
	&\ge\frac{(\kappa_-/2)^{1/2}}{2}
	\empmean 1\{|\xvec^T\bm u|\ge(\kappa_-/2)^{1/2}\}
	\ge\frac{(\kappa_-/2)^{1/2}\pi_0}{4}
\end{align*}
on the complement of the event in \eqref{eqn:gic_smallball_probability}.
\end{proof}

\begin{lemma}\label{lem:gic_localization}
Except on a set of outer probability tending to zero, for every $B\in\mathcal B_n$ and every $\tau\in\mc T$,
\begin{align*}
	\widehat L_{a,B}(\tau)
	&=
	\inf_{\substack{\bm b_{D_B^c}=\bm0\\
			\|\bm b-\bvec^*(\tau)\|_2\le n}}
	\mc Q_{\tau,n}(\bm b),\\
	\widehat L_{0,B}(\tau)
	&=
	\inf_{\substack{\bm b_{D_B^c}=\bm0,\ 
			\Cmat\bm b_{\Mset}=\bm\Psi(\tau)\\
			\|\bm b-\bvec^*(\tau)\|_2\le n}}
	\mc Q_{\tau,n}(\bm b).
\end{align*}
\end{lemma}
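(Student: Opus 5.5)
The idea is to show that, with probability tending to one, every point that is feasible but far from $\bvec^*(\tau)$ has strictly larger empirical loss than some feasible point near $\bvec^*(\tau)$. Since $\mc Q_{\tau,n}$ is convex and the feasible sets are convex (a linear subspace, or an affine subspace for the reduced problem), it is enough to make this comparison on the sphere $\|\bm b-\bvec^*(\tau)\|_2=n$ intersected with the feasible set.

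\emph{Step 1: an anchor point inside the ball.} Fix a feasible anchor. For $e=a$ take $\bvec^*(\tau)$. For $e=0$ take $\widetilde{\bvec}^*(\tau)$, whose distance to $\bvec^*(\tau)$ is $O(\sqrt{(m-r+s)/n})$ by \eqref{eqn:beta tilde rate}, hence far less than $n$. Every $B\in\mathcal B_n$ satisfies $\Sset\setminus\{j\}\subset B$, but need not contain all of $\Sset$. The anchor must have support in $D_B$, so for $B=A^{+j}$ I set its $j$th coordinate to zero. This costs at most $|\beta_j^*(\tau)|\le B_\beta$ by \ref{as:gic_moments}, which is still $\ll n$.

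By Lemma \ref{lem:gic_excess_identity}, the anchor $\bm c$ satisfies $\mc Q_{\tau,n}(\bm c)\le \mc Q_{\tau,n}(\bvec^*(\tau))+\empmean|\xvec^T(\bm c-\bvec^*(\tau))|$. Using $|\xvec^T\bm v|\le \xi_{\mathrm G,n}\|\bm v\|_2$ for sparse $\bm v$, this gives the uniform bound
\[
\sup_\tau\mc Q_{\tau,n}(\bm c)\le \sup_\tau\mc Q_{\tau,n}\{\bvec^*(\tau)\}+\xi_{\mathrm G,n}(B_\beta+1),
\]
whose first term is $O_p(1)$ by Lemma \ref{lem:gic_loss_scale}.

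\emph{Step 2: the loss is large on the sphere.} For $\bm b$ with $\bm\delta=\bm b-\bvec^*(\tau)$, $\|\bm\delta\|_2=n$ and support in $D_B\cup\Sset$ (size at most $d_{\mathrm G,n}$), use $\rho_\tau(u)\ge \min(\tau,1-\tau)|u|$ and the triangle inequality:
\[
\mc Q_{\tau,n}(\bm b)\ge \tau_L'\bigl(n\,\empmean|\xvec^T\bm\delta/n|-\empmean|Y-\xvec^T\bvec^*(\tau)|\bigr),\qquad \tau_L'=\min(\tau_L,1-\tau_U).
\]
By Lemma \ref{lem:gic_empirical_smallball}, $\empmean|\xvec^T\bm v|\ge c_{\mathrm{sb}}$ uniformly over unit sparse $\bm v$ with probability tending to one, so the first term is at least $n c_{\mathrm{sb}}$. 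For the second term, \eqref{eqn:gic_loss_upper}-type bounds give $\empmean\sup_\tau|Y-\xvec^T\bvec^*(\tau)|\le \empmean|Y-\xvec^T\bvec^*(\tau_0)|+f_-^{-1}=O_p(1)$. Hence, uniformly in $\tau$ and $B$, the loss on the sphere is at least $\tau_L' c_{\mathrm{sb}}n-O_p(1)$.

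\emph{Step 3: compare and conclude.} I expect this comparison to be the main obstacle. Step 2 gives order $n$ on the sphere, while Step 1 gives $O_p(1)+O(\xi_{\mathrm G,n})$ at the anchor. Condition \eqref{eqn:gic_sharp_growth} implies $\xi_{\mathrm G,n}=o(\sqrt n)$, so the sphere values dominate. A convexity argument then finishes: any feasible point outside the ball lies on a segment from the anchor whose intersection with the sphere already has larger loss than the anchor. So the loss outside the ball exceeds the anchor's loss, and the infimum is attained within the ball. All bounds are uniform in $B$ and $\tau$, and only finitely many random quantities (the small-ball event and two $O_p(1)$ averages) are involved, which is why the exceptional set has outer probability tending to zero.

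One delicate point is the use of Lemma \ref{lem:gic_empirical_smallball}. It requires $\bm\delta$ to be supported on at most $d_{\mathrm G,n}$ coordinates. This holds since $|D_B\cup\Sset|\le m+1+s+K_{\mathrm G,n}$, using $|A|\le K_{\mathrm G,n}$ for $A\in\mathcal U_n$.
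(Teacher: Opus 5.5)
Your proof is correct and is essentially the paper's. It uses the same feasible competitor $\bvec^*(\tau)+\bm v_e(\tau)/\sqrt n-\sum_{j\in\Sset\setminus B}\beta_j^*(\tau)\bm e_j$ (your $\widetilde{\bvec}^*(\tau)$ with the missing coordinate zeroed) and the same lower bound $\mc Q_{\tau,n}(\bm b)\ge \tau_L'c_{\mathrm{sb}}\|\bm b-\bvec^*(\tau)\|_2-\tau_L'\empmean\{|Y-\xvec^T\bvec^*(\tau_0)|+f_-^{-1}\}$ via Lemma \ref{lem:gic_empirical_smallball}; the only difference is that the paper applies this linear bound directly to every feasible point at distance greater than $n$, so your sphere-plus-convexity step is harmless but unnecessary.

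One small fix: get $\xi_{\mathrm G,n}=o(n)$ from $\xi_{\mathrm G,n}\le K_x\sqrt{d_{\mathrm G,n}}$ (Condition \ref{as:gic_design}) and $d_{\mathrm G,n}/n\le a_{\mathrm G,n}^2/A_0^2\to0$ (Condition \ref{as:gic_underfit_rates}), as the paper does, rather than from \eqref{eqn:gic_sharp_growth}. This lemma feeds Theorem \ref{thm:gic_underfit}, which assumes only \ref{as:gic_design}--\ref{as:gic_underfit_rates}.
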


\begin{proof}
Let $c_{\mc T}=\tau_L\wedge(1-\tau_U)$ and
$F=|Y-\xvec^T\bvec^*(\tau_0)|+f_-^{-1}$. By \eqref{eqn:gic_loss_upper},
$\empmean F=O_p(1)$. Lemma \ref{lem:gic_empirical_smallball} gives an event whose probability tends to one on which, for every $B\in\mathcal B_n$, $\tau\in\mc T$, and $\bm b$ with $\bm b_{D_B^c}=\bm0$,
\begin{align}
	\mc Q_{\tau,n}(\bm b)
	&\ge c_{\mc T}\empmean |Y-\xvec^T\bm b|\notag\\
	&\ge c_{\mc T}\empmean |\xvec^T\{\bm b-\bvec^*(\tau)\}|-c_{\mc T}\empmean F\notag\\
	&\ge c_{\mc T}c_{\mathrm{sb}}\|\bm b-\bvec^*(\tau)\|_2-c_{\mc T}\empmean F.
	\label{eqn:gic_localization_lower}
\end{align}
For either $e\in\{0,a\}$, consider
\[
\bm b^0_{e,B}(\tau)
=\bvec^*(\tau)+\frac{\bm v_e(\tau)}{\sqrt n}
-\sum_{j\in\Sset\setminus B}\beta_j^*(\tau)\bm e_j.
\]
By \eqref{eqn:gic_feasible_shift}, one has
\begin{equation}\label{eqn:gic_underfit_competitor_feasible}
	(\bm b^0_{e,B})_{D_B^c}=\bm0,\qquad
	\Cmat(\bm b^0_{0,B})_{\Mset}
	=\Cmat\bvec^*_{\Mset}-\frac{\bm\eta}{\sqrt n}=\bm\Psi.
\end{equation}
Here $\Sset\setminus B$ is empty or a singleton. Equations \eqref{eqn:gic_feasible_shift_bound} and \eqref{eqn:gic_sparse_envelope}, together with Condition \ref{as:gic_moments}, imply
\begin{align}
	\sup_{B,\tau}\|\bm b^0_{e,B}(\tau)-\bvec^*(\tau)\|_2
	&\le B_\beta+C\sqrt{\frac{d_{\mathrm G,n}}{n}}<n,\notag\\
	\sup_{B,\tau}\mc Q_{\tau,n}\{\bm b^0_{e,B}(\tau)\}
	&\le \empmean F+K_xB_\beta
	+C\xi_{\mathrm G,n}\sqrt{\frac{d_{\mathrm G,n}}{n}}.
	\label{eqn:gic_underfit_competitor_upper}
\end{align}
Condition \ref{as:gic_design} gives $\xi_{\mathrm G,n}\le K_x\sqrt{d_{\mathrm G,n}}$. Hence \eqref{eqn:gic_an} and $a_{\mathrm G,n}\to0$ yield
\[
\frac{\xi_{\mathrm G,n}\sqrt{d_{\mathrm G,n}/n}}{n}
\le\frac{K_xd_{\mathrm G,n}}{n^{3/2}}
\le\frac{K_xa_{\mathrm G,n}^2}{A_0^2\sqrt nL_{\mathrm G,n}}
\longrightarrow0.
\]
Therefore
\begin{align*}
	\Pr\bigg\{c_{\mc T}c_{\mathrm{sb}}n
	&\le(1+c_{\mc T})\empmean F+K_xB_\beta
	+C\xi_{\mathrm G,n}\sqrt{d_{\mathrm G,n}/n}\bigg\}
	\longrightarrow0.
\end{align*}
On the complement of this event, equations \eqref{eqn:gic_localization_lower} and \eqref{eqn:gic_underfit_competitor_upper} imply
\begin{align*}
	\inf_{\substack{\bm b_{D_B^c}=\bm0\\\|\bm b-\bvec^*(\tau)\|_2>n}}
	\mc Q_{\tau,n}(\bm b)
	&>\empmean \rho_\tau\{Y-\xvec^T\bm b^0_{a,B}(\tau)\},\\
	\inf_{\substack{\bm b_{D_B^c}=\bm0,\ \Cmat\bm b_{\Mset}=\bm\Psi(\tau)\\\|\bm b-\bvec^*(\tau)\|_2>n}}
	\mc Q_{\tau,n}(\bm b)
	&>\empmean \rho_\tau\{Y-\xvec^T\bm b^0_{0,B}(\tau)\}.
\end{align*}
The vector in \eqref{eqn:gic_underfit_competitor_feasible} has distance less than $n$ by \eqref{eqn:gic_underfit_competitor_upper}; the infimum can consequently be restricted to $\|\bm b-\bvec^*(\tau)\|_2\le n$.
\end{proof}

\begin{lemma}\label{lem:gic_empirical_process}
There is a constant $C_2<\infty$ such that
\[
P\left\{
\sup_{A\in\mathcal A_n}
\sup_{\tau\in\mc T}
\sup_{\substack{\bm b_{D_A^c}=\bm0\\
		\|\bm b-\bvec^*(\tau)\|_2\le n}}
\frac{|(\empmean -\E )\ell_{\tau,\bm b}|}
{\|\bm b-\bvec^*(\tau)\|_2+a_{\mathrm G,n}}
>
C_2a_{\mathrm G,n}
\right\}\to0.
\]
\end{lemma}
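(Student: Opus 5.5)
We would prove the ratio-type bound by peeling in the radius $t=\|\bm b-\bvec^*(\tau)\|_2$. For $k\ge0$, let $t_k=2^ka_{\mathrm G,n}$, and consider the shells $\{t\le a_{\mathrm G,n}\}$ and $\{t_k<t\le t_{k+1}\}$ for $k$ up to $K_n=\lceil\log_2(n/a_{\mathrm G,n})\rceil=O(L_{\mathrm G,n})$. On each shell the denominator $t+a_{\mathrm G,n}$ is at least $t_k$. It therefore suffices to show that, with probability $1-o(1)$ uniformly in $k$,
\[
\sup_{A\in\mathcal A_n}\sup_{\tau\in\mc T}\sup_{\bm b_{D_A^c}=\bm0,\ \|\bm b-\bvec^*(\tau)\|_2\le t_{k+1}}|(\empmean-\E)\ell_{\tau,\bm b}|\le C t_{k+1}a_{\mathrm G,n}.
\]
We would prove this with a Bernstein-type tail bound that is summable over $k$.

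Fix $k$ and write $t=t_{k+1}$. The class is indexed by a support $D=D_A\cup\Sset$ with $|D|\le d_{\mathrm G,n}$, by $\tau\in\mc T$, and by $\bm\delta$ supported on $D$ with $\|\bm\delta\|_2\le t$. By Lemma \ref{lem:gic_excess_identity}, each function satisfies $|\ell_{\tau,\bm b}|\le|\xvec^T\bm\delta|\le\xi_{\mathrm G,n}t$. Its variance is at most $\E(\xvec^T\bm\delta)^2\le\kappa_+t^2$ by \ref{as:gic_design}.

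For the discretization we would use three ingredients: an $\varepsilon$-net of $\mc T$ of mesh $\varepsilon=n^{-2}$; an $\varepsilon$-net of the $t$-ball in each coordinate subspace $\reals^D$; and a union over supports, as in \eqref{eqn:gic_smallball_cardinality}. The total cardinality is $\exp(Cd_{\mathrm G,n}L_{\mathrm G,n})$. Passing from the nets to the whole class costs, by the Lipschitz bound in Lemma \ref{lem:gic_excess_identity} and \eqref{eqn:gic_sparse_envelope}, at most
\[
\xi_{\mathrm G,n}\varepsilon+2f_-^{-1}\varepsilon+\varepsilon\,\xi_{\mathrm G,n}n
\]
deterministically, in both the empirical and the population terms. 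With $\varepsilon=n^{-2}$ this is $O(n^{-1})$, which is negligible against $t a_{\mathrm G,n}\ge a_{\mathrm G,n}^2\gtrsim L_{\mathrm G,n}/n$. Here $\tau$ enters through $\bvec^*(\tau)$; its variation is handled by Lemma \ref{lem:gic_quantile_path}, with $\bm c-\bvec^*(t)$ bounded by $n$.

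At each net point, Bernstein's inequality gives
\[
\Pr\{|(\empmean-\E)\ell|>x\}\le2\exp\left\{-\frac{nx^2}{2(\kappa_+t^2+\xi_{\mathrm G,n}tx/3)}\right\}.
\]
We take $x=C t\{\sqrt{d_{\mathrm G,n}L_{\mathrm G,n}/n}+\xi_{\mathrm G,n}d_{\mathrm G,n}L_{\mathrm G,n}/n\}$, which is at most $C t a_{\mathrm G,n}/A_0$. The exponent then exceeds $2Cd_{\mathrm G,n}L_{\mathrm G,n}$, so it beats the net cardinality. After the union bound over the net and the $O(L_{\mathrm G,n})$ shells, the failure probability is $\exp(-cd_{\mathrm G,n}L_{\mathrm G,n})\,O(L_{\mathrm G,n})\to0$. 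Enlarging $A_0$ or $C_2$ absorbs the constants.

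The main obstacle is making the union bound effective: the entropy must scale like $d_{\mathrm G,n}L_{\mathrm G,n}$ and not like $\log p$ times a larger dimension. The subtle point is that $\mathcal A_n$ allows $A\supset\Sset$, so $|D_A\cup\Sset|\le m+1+s+K_{\mathrm G,n}=d_{\mathrm G,n}$ still holds, as it must. A further check is that the unbounded outer radius $n$ only enters the shell count and the discretization error, and both are logarithmic or negligible. Outer probability is used since the suprema need not be measurable.
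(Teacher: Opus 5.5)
Your proposal is correct and follows essentially the same route as the paper. Both use dyadic peeling in $\|\bm b-\bvec^*(\tau)\|_2$ up to radius $2n$, a finite net in $(A,\tau,\bm\delta)$ of log-cardinality $O(d_{\mathrm G,n}L_{\mathrm G,n})$ with a deterministic sup-norm transfer via Lemma \ref{lem:gic_excess_identity}, and Bernstein's inequality with a union bound over the net and the $O(L_{\mathrm G,n})$ shells. The only difference is the net: the paper's is radius-relative (vector mesh $a_{\mathrm G,n}r/(4\xi_{\mathrm G,n})$, $\tau$-mesh $a_{\mathrm G,n}r/\{4(2f_-^{-1}+\xi_{\mathrm G,n}r)\}$), so its transfer error is at most $a_{\mathrm G,n}r/2$, whereas yours is a fixed fine mesh.

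One small slip: with $\varepsilon=n^{-2}$, the term $\varepsilon\xi_{\mathrm G,n}n$ equals $\xi_{\mathrm G,n}/n$, not $O(n^{-1})$. It is still $o(a_{\mathrm G,n}^2)$, because $\xi_{\mathrm G,n}\le K_x\sqrt{d_{\mathrm G,n}}$ under \ref{as:gic_design}. Alternatively, take $\varepsilon=1/(n^2\xi_{\mathrm G,n})$, which keeps the log-cardinality at $O(d_{\mathrm G,n}L_{\mathrm G,n})$.
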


\begin{proof}
We use the finite-dimensional covering calculation in Theorem 2.7.11 of \citet{van_der_vaart_weak_1996}, followed by Bernstein's inequality. The quantile-regression function class in Lemma 24 of \citet{belloni_conditional_2019} is treated by a similar covering calculation. For $r>0$ and $A\in\mathcal A_n$, let
\[
\mathcal F(r,A)
=\{\ell_{\tau,\bm b}:\tau\in\mc T,\ \bm b_{D_A^c}=\bm0,
\ \|\bm b-\bvec^*(\tau)\|_2\le r\}.
\]
For $f=\ell_{\tau,\bm b}\in\mathcal F(r,A)$, equations \eqref{eqn:gic_excess_integral} and \eqref{eqn:gic_sparse_envelope}, together with Condition \ref{as:gic_design}, give
\begin{equation}\label{eqn:gic_ep_envelope}
	\|f\|_\infty\le\xi_{\mathrm G,n}r,
	\qquad \E f^2\le\kappa_+r^2.
\end{equation}
For $\bm\delta,\bm\gamma$ supported on $D_A\cup\Sset$ with norms at most $r$, Lemma \ref{lem:gic_excess_identity} gives
\begin{equation}\label{eqn:gic_ep_lipschitz}
	\begin{aligned}
		&\|\ell_{\tau,\bvec^*(\tau)+\bm\delta}
		-\ell_{t,\bvec^*(t)+\bm\gamma}\|_\infty\\
		&\qquad\le\xi_{\mathrm G,n}\|\bm\delta-\bm\gamma\|_2
		+(2f_-^{-1}+\xi_{\mathrm G,n}r)|\tau-t|.
	\end{aligned}
\end{equation}
Fix $r\in[a_{\mathrm G,n},2n]$. Choose a finite subset $\mathcal V_{r,A}$ of
$\{\bm\delta:\operatorname{supp}(\bm\delta)\subset D_A\cup\Sset,\ \|\bm\delta\|_2\le r\}$
whose Euclidean covering radius is $a_{\mathrm G,n}r/(4\xi_{\mathrm G,n})$. Since $|D_A\cup\Sset|\le d_{\mathrm G,n}$,
\begin{equation}\label{eqn:gic_ep_vector_cardinality}
	|\mathcal V_{r,A}|
	\le\left(1+\frac{8\xi_{\mathrm G,n}}{a_{\mathrm G,n}}\right)^{d_{\mathrm G,n}}.
\end{equation}
Choose a finite grid $\mathcal T_r\subset\mc T$ containing both endpoints
and with maximal spacing at most
\[
\frac{a_{\mathrm G,n}r}{4(2f_-^{-1}+\xi_{\mathrm G,n}r)}.
\]
Then
\begin{equation}\label{eqn:gic_ep_quantile_cardinality}
	|\mathcal T_r|
	\le2+\frac{4(\tau_U-\tau_L)(2f_-^{-1}+\xi_{\mathrm G,n}r)}{a_{\mathrm G,n}r}
	\le C\{1+a_{\mathrm G,n}^{-2}+\xi_{\mathrm G,n}/a_{\mathrm G,n}\}.
\end{equation}
Define
\[
\mathcal N_{r,A}
=\{\ell_{t,\bvec^*(t)+\bm\gamma}:t\in\mathcal T_r,
\ \bm\gamma\in\mathcal V_{r,A}\}.
\]
Equation \eqref{eqn:gic_ep_lipschitz} gives
\begin{equation}\label{eqn:gic_ep_approximation}
	\sup_{f\in\mathcal F(r,A)}\min_{g\in\mathcal N_{r,A}}\|f-g\|_\infty
	\le\frac{a_{\mathrm G,n}r}{2}.
\end{equation}
By \eqref{eqn:gic_an}, $a_{\mathrm G,n}^{-1}\le\sqrt n/A_0$ eventually. The definition of $L_{\mathrm G,n}$ in \eqref{eqn:gic_log_factor} consequently implies
\[
\log(1+a_{\mathrm G,n}^{-1})+\log(1+\xi_{\mathrm G,n})
+\log(p+1)+\log(d_{\mathrm G,n}+1)\le CL_{\mathrm G,n}.
\]
Moreover, we get
\[
|\mathcal A_n|
\le2^s\sum_{j=0}^{K_{\mathrm G,n}}\binom{p+1}{j}
\le2^s(K_{\mathrm G,n}+1)(p+1)^{K_{\mathrm G,n}}.
\]
Combining these bounds with \eqref{eqn:gic_ep_vector_cardinality}--\eqref{eqn:gic_ep_quantile_cardinality} yields
\begin{equation}\label{eqn:gic_ep_cardinality}
	\log\left(\sum_{A\in\mathcal A_n}|\mathcal N_{r,A}|\right)
	\le Cd_{\mathrm G,n}L_{\mathrm G,n}.
\end{equation}
For every $g\in\mathcal N_{r,A}$, Lemma \ref{lem:gic_excess_identity}
and Condition \ref{as:gic_design} give
\[
|g-\E g|\le2\xi_{\mathrm G,n}r,\qquad
\operatorname{Var}(g)\le\kappa_+r^2.
\]
Bernstein's inequality and \eqref{eqn:gic_ep_cardinality} imply, for every $u\ge1$,
\begin{align}
	&\Pr\left\{\max_{A\in\mathcal A_n}\max_{g\in\mathcal N_{r,A}}
	|(\empmean -\E )g|>ur a_{\mathrm G,n}\right\}\notag\\
	&\qquad\le2\exp\left\{
	Cd_{\mathrm G,n}L_{\mathrm G,n}
	-\frac{nu^2a_{\mathrm G,n}^2}
	{2\kappa_++4u\xi_{\mathrm G,n}a_{\mathrm G,n}/3}
	\right\}.
	\label{eqn:gic_ep_bernstein}
\end{align}
The two terms in \eqref{eqn:gic_an} yield, respectively,
\[
nu^2a_{\mathrm G,n}^2\ge u^2A_0^2d_{\mathrm G,n}L_{\mathrm G,n},
\qquad
\frac{nu a_{\mathrm G,n}}{\xi_{\mathrm G,n}}
\ge uA_0d_{\mathrm G,n}L_{\mathrm G,n}.
\]
Since $1/(x+y)\ge\frac12\min(1/x,1/y)$ for $x,y>0$, we have
\begin{align}
	\frac{nu^2a_{\mathrm G,n}^2}{2\kappa_++4u\xi_{\mathrm G,n}a_{\mathrm G,n}/3}
	&\ge c\min\left\{nu^2a_{\mathrm G,n}^2,
	\frac{nu a_{\mathrm G,n}}{\xi_{\mathrm G,n}}\right\}\notag\\
	&\ge c\min\{u^2A_0^2,uA_0\}d_{\mathrm G,n}L_{\mathrm G,n}.
	\label{eqn:gic_ep_exponent}
\end{align}
Choose a fixed $u\ge1$ so that $c\min(u^2A_0^2,uA_0)>C+4$. For
\[
0\le k\le\left\lceil\log_2(n/a_{\mathrm G,n})\right\rceil,
\qquad r=2^ka_{\mathrm G,n},
\]
equations \eqref{eqn:gic_ep_bernstein}--\eqref{eqn:gic_ep_exponent} bound each failure probability by $2e^{-4d_{\mathrm G,n}L_{\mathrm G,n}}$. Furthermore,
\[
1+\left\lceil\log_2(n/a_{\mathrm G,n})\right\rceil
\le CL_{\mathrm G,n}.
\]
It follows that, except on a measurable event with probability at most
$CL_{\mathrm G,n}e^{-4d_{\mathrm G,n}L_{\mathrm G,n}}=o(1)$,
\begin{equation}\label{eqn:gic_ep_grid_event}
	|(\empmean -\E )g|\le u2^ka_{\mathrm G,n}^2
	\quad\text{for every }A,k,\ g\in\mathcal N_{2^ka_{\mathrm G,n},A}.
\end{equation}
For $f\in\mathcal F(2^ka_{\mathrm G,n},A)$, equations \eqref{eqn:gic_ep_approximation} and \eqref{eqn:gic_ep_grid_event} give
\begin{align*}
	|(\empmean -\E )f|
	&\le\min_{g\in\mathcal N_{2^ka_{\mathrm G,n},A}}
	\{|(\empmean -\E )g|+\empmean |f-g|+\E |f-g|\}\\
	&\le (u+1)2^ka_{\mathrm G,n}^2.
\end{align*}
When $\|\bm b-\bvec^*(\tau)\|_2\le a_{\mathrm G,n}$, choose $k=0$.
Otherwise choose $k$ with $2^{k-1}a_{\mathrm G,n}<\|\bm b-\bvec^*(\tau)\|_2\le2^ka_{\mathrm G,n}$.
In both cases,
\[
|(\empmean -\E )\ell_{\tau,\bm b}|
\le2(u+1)a_{\mathrm G,n}(\|\bm b-\bvec^*(\tau)\|_2+a_{\mathrm G,n}).
\]
This inequality holds simultaneously for all parameters in the lemma outside the measurable event bounded after \eqref{eqn:gic_ep_grid_event}. Thus the stated probability conclusion holds with $C_2=2(u+1)$.
\end{proof}

\begin{lemma}\label{lem:gic_empirical_curvature}
There are constants $c_3,C_3>0$ such that
\[
P\left\{
\inf_{A\in\mathcal A_n}
\inf_{\tau\in\mc T}
\inf_{\substack{\bm b_{D_A^c}=\bm0\\
		\|\bm b-\bvec^*(\tau)\|_2\le n}}
\left\{
\empmean \ell_{\tau,\bm b}
-c_3(\|\bm b-\bvec^*(\tau)\|_2^2\wedge1)
\right\}
<
-C_3a_{\mathrm G,n}^2
\right\}\to0.
\]
\end{lemma}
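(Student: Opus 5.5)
The plan is to combine the population curvature bound of Lemma \ref{lem:gic_population_curvature} with the uniform empirical process bound of Lemma \ref{lem:gic_empirical_process}. Write $r=\|\bm b-\bvec^*(\tau)\|_2$ and decompose $\empmean\ell_{\tau,\bm b}=\E\ell_{\tau,\bm b}+(\empmean-\E)\ell_{\tau,\bm b}$. On the event of Lemma \ref{lem:gic_empirical_process}, which has probability tending to one, we have simultaneously over $A\in\mathcal A_n$, $\tau\in\mc T$ and admissible $\bm b$ with $r\le n$ that
\[
\empmean\ell_{\tau,\bm b}\ge \E\ell_{\tau,\bm b}-C_2a_{\mathrm G,n}r-C_2a_{\mathrm G,n}^2 .
\]
Lemma \ref{lem:gic_population_curvature} gives $\E\ell_{\tau,\bm b}\ge c_1(r^2\wedge r)$. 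So it suffices to show the deterministic inequality
\[
c_1(r^2\wedge r)-C_2a_{\mathrm G,n}r-C_2a_{\mathrm G,n}^2\ge c_3(r^2\wedge1)-C_3a_{\mathrm G,n}^2
\]
for all $r\ge0$, once $n$ is large enough that $a_{\mathrm G,n}$ is small, using $a_{\mathrm G,n}\to0$ from Condition \ref{as:gic_underfit_rates}.

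I will take $c_3=c_1/4$ and split into two cases. If $r\le1$, then $r^2\wedge r=r^2$. Young's inequality gives $C_2a_{\mathrm G,n}r\le (c_1/2)r^2+(C_2^2/(2c_1))a_{\mathrm G,n}^2$. The left side is therefore at least $(c_1/2)r^2-(C_2+C_2^2/(2c_1))a_{\mathrm G,n}^2$, which settles this case with $C_3=C_2+C_2^2/(2c_1)$. If $r>1$, then $r^2\wedge r=r$ and $r^2\wedge1=1$. The left side is at least $(c_1-C_2a_{\mathrm G,n})r-C_2a_{\mathrm G,n}^2$, and once $C_2a_{\mathrm G,n}\le c_1/2$ this is at least $(c_1/2)r-C_2a_{\mathrm G,n}^2\ge c_1/4-C_3a_{\mathrm G,n}^2$. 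Since both cases hold uniformly in $A,\tau,\bm b$, the infimum in the statement is bounded below by $-C_3a_{\mathrm G,n}^2$ on the good event, and the conclusion follows.

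The step I expect to need the most care is checking that the two invoked lemmas apply over exactly the same index set. Lemma \ref{lem:gic_population_curvature} holds for every $\bm b$ with $\bm b_{D_A^c}=\bm 0$ and $A\in\mathcal A_n$, with no radius restriction. Lemma \ref{lem:gic_empirical_process} covers $r\le n$ over the same $\mathcal A_n$. These coincide with the range in the statement, so no extra localization such as Lemma \ref{lem:gic_localization} is needed. The remaining concern is outer-measurability of the supremum, and this is inherited from the event constructed in Lemma \ref{lem:gic_empirical_process}.
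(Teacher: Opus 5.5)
Your proposal is correct and follows the paper's proof essentially line for line. Like the paper, you combine Lemma \ref{lem:gic_population_curvature} with the event of Lemma \ref{lem:gic_empirical_process}, take $c_3=c_1/4$ and $C_3=C_2+C_2^2/(2c_1)$, handle $r\le1$ by Young's inequality, and handle $1<r\le n$ by using $C_2a_{\mathrm G,n}\le c_1/2$ for large $n$.
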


\begin{proof}
Lemmas \ref{lem:gic_population_curvature} and \ref{lem:gic_empirical_process} imply, simultaneously for all parameters in the statement, with outer probability tending to one,
\begin{equation}\label{eqn:gic_empirical_curvature_initial}
	\empmean \ell_{\tau,\bm b}
	\ge c_1(r^2\wedge r)-C_2a_{\mathrm G,n}r-C_2a_{\mathrm G,n}^2,
	\qquad r=\|\bm b-\bvec^*(\tau)\|_2.
\end{equation}
Set $c_3=c_1/4$ and $C_3=C_2+C_2^2/(2c_1)$. For $r\le1$, Young's inequality gives
\[
C_2a_{\mathrm G,n}r
\le\frac{c_1}{2}r^2+\frac{C_2^2}{2c_1}a_{\mathrm G,n}^2.
\]
Consequently, \eqref{eqn:gic_empirical_curvature_initial} yields
\[
\empmean \ell_{\tau,\bm b}\ge\frac{c_1}{2}r^2-C_3a_{\mathrm G,n}^2
\ge c_3(r^2\wedge1)-C_3a_{\mathrm G,n}^2.
\]
For $1<r\le n$, $C_2a_{\mathrm G,n}\le c_1/2$ eventually, and
\begin{align*}
	\empmean \ell_{\tau,\bm b}
	&\ge(c_1-C_2a_{\mathrm G,n})r-C_2a_{\mathrm G,n}^2\\
	&\ge c_1/2-C_2a_{\mathrm G,n}^2
	\ge c_3(r^2\wedge1)-C_3a_{\mathrm G,n}^2.
\end{align*}
\end{proof}

\begin{lemma}\label{lem:gic_one_missing_gap}
If $s\ge1$, then, for every $\varepsilon\in(0,1)$, there is $c_\varepsilon>0$ such that
\[
\limsup_{n\to\infty}
P\left\{
\inf_{\substack{B\in\mathcal B_n\\
		\Sset\setminus B\ne\varnothing}}
\int_{\mc T}
\log\frac{\widehat L_{e,B}(\tau)}{\widehat L_{e,\Sset}(\tau)}\,d\tau
<
c_\varepsilon\mu_{\mathrm G,n}^2
\right\}
\le\varepsilon.
\]
\end{lemma}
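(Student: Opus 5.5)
Since $\widehat L_{e,B}/\widehat L_{e,\Sset}$ involves $\log$, the plan is to work with differences of losses and convert them to ratios at the end. By Lemma \ref{lem:gic_loss_scale} and the oracle rates, $\widehat L_{e,\Sset}(\tau)\le\mc Q_{\tau,n}\{\bvec^*(\tau)+\bm v_e(\tau)/\sqrt n\}$, and this is $O_p(1)$ uniformly in $\tau$. The feasible competitor $\bm b^0_{e,\Sset}(\tau)$ from Lemma \ref{lem:gic_localization} differs from $\bvec^*(\tau)$ by $O(\sqrt{d_{\mathrm G,n}/n})$. Lemma \ref{lem:gic_empirical_process} then gives $\widehat L_{e,\Sset}(\tau)\le \mc Q_{\tau,n}\{\bvec^*(\tau)\}+O_p(a_{\mathrm G,n}^2+\sqrt{d_{\mathrm G,n}/n})$ uniformly in $\tau$, so $\widehat L_{e,\Sset}(\tau)\le M$ on an event of probability at least $1-\varepsilon/2$. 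Using $\log(x/y)\ge (x-y)/\max(x,y)$, together with the fact that both losses are bounded above by a constant with high probability, it then suffices to show
\[
\int_{\mc T}\{\widehat L_{e,B}(\tau)-\widehat L_{e,\Sset}(\tau)\}_{+}\,d\tau\ \ge\ c\,\mu_{\mathrm G,n}^2
\]
uniformly over $B$ missing some $j\in\Sset$. To handle the negative part, I will first show it is $o_p(\mu_{\mathrm G,n}^2)$.

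\emph{Lower bound on $\widehat L_{e,B}$.} By Lemma \ref{lem:gic_localization}, the infimum defining $\widehat L_{e,B}(\tau)$ may be restricted to $\|\bm b-\bvec^*(\tau)\|_2\le n$. Every $B\in\mathcal B_n$ with $\Sset\setminus B\neq\varnothing$ has the form $A\cup(\Sset\setminus\{j\})$ with $|A|\le K_{\mathrm G,n}$, so $B\in\mathcal A_n$. If $j\in\Sset\setminus B$, then any feasible $\bm b$ has $b_j=0$, hence $\|\bm b-\bvec^*(\tau)\|_2\ge|\beta_j^*(\tau)|$. Lemma \ref{lem:gic_empirical_curvature} then yields, with probability tending to one and uniformly in $B,\tau$,
\[
\widehat L_{e,B}(\tau)\ge \mc Q_{\tau,n}\{\bvec^*(\tau)\}+c_3\{\beta_j^*(\tau)^2\wedge1\}-C_3a_{\mathrm G,n}^2 .
\]

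\emph{Upper bound on $\widehat L_{e,\Sset}$.} Evaluating at $\bm b^0_{e,\Sset}$ and using Lemma \ref{lem:gic_empirical_process} with the population expansion \eqref{eqn:gic_population_expansion} gives
\[
\widehat L_{e,\Sset}(\tau)\le \mc Q_{\tau,n}\{\bvec^*(\tau)\}+C\|\bm v_e\|_2^2/n+C a_{\mathrm G,n}(\|\bm v_e\|_2/\sqrt n+a_{\mathrm G,n}),
\]
and by \eqref{eqn:gic_feasible_shift_bound} the right-hand remainder is $O(a_{\mathrm G,n}^2)$, since $(m-r+s)/n\lesssim a_{\mathrm G,n}^2$. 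Subtracting the two bounds and integrating, then applying Jensen and Cauchy--Schwarz with $B_\beta$ bounded, gives
\[
\int_{\mc T}\{\beta_j^*(\tau)^2\wedge 1\}\,d\tau\ \ge\ c\Big(\int_{\mc T}|\beta_j^*|\Big)^2\ \ge\ c\,\mu_{\mathrm G,n}^2 .
\]
Hence
\[
\int_{\mc T}(\widehat L_{e,B}-\widehat L_{e,\Sset})\ \ge\ c\,\mu_{\mathrm G,n}^2-C a_{\mathrm G,n}^2 ,
\]
and $a_{\mathrm G,n}=o(\mu_{\mathrm G,n})$ by \ref{as:gic_underfit_rates}.

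\emph{Main obstacle.} The hard step is converting this into the log ratio while keeping the constant $c_\varepsilon$ free of $n$. The ratio needs a lower bound for $\widehat L_{e,\Sset}$ that stays away from zero and an upper bound that is $O_p(1)$. The upper bound comes from the step above: since $\widehat L_{e,B}-\widehat L_{e,\Sset}$ is bounded, I would use $\log(1+x)\ge x/(1+x)$ with $x\le C$, and the lower bound comes from Lemma \ref{lem:gic_loss_scale} combined with the expansion. I would fix $M_\varepsilon$ such that $\sup_\tau\mc Q_{\tau,n}\{\bvec^*(\tau)\}\le M_\varepsilon$ with probability at least $1-\varepsilon/2$, and then take $c_\varepsilon=c/(2(M_\varepsilon+C))$. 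A second delicacy is uniformity over the exponentially many $B$. This is already built into Lemmas \ref{lem:gic_empirical_process}–\ref{lem:gic_empirical_curvature}, whose suprema are taken over all of $\mathcal A_n$, so no further union bound should be needed.
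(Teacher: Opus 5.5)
Your proposal follows the paper's proof essentially step for step: the competitor $\bm b^0_{e,\Sset}$ with Lemma \ref{lem:gic_empirical_process} gives $\widehat L_{e,\Sset}(\tau)\le \mc Q_{\tau,n}\{\bvec^*(\tau)\}+Ca_{\mathrm G,n}^2$. Lemmas \ref{lem:gic_localization} and \ref{lem:gic_empirical_curvature} with $b_j=0$ give the pointwise gap $c\,\beta_j^*(\tau)^2-Ca_{\mathrm G,n}^2$. Cauchy--Schwarz then produces $\mu_{\mathrm G,n}^2$, and the logarithm is handled using upper and lower bounds on the losses.

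Two small repairs are needed. First, $\log(x/y)\ge (x-y)/\max(x,y)$ is false when $x<y$. The valid form is $\log(x/y)\ge (x-y)/x$, which is exactly what your $\log(1+x)\ge x/(1+x)$ gives, so the negative part needs the lower bound $\widehat L_{e,B}\ge c_\rho/2$, as in the paper. Second, your claim that $\widehat L_{e,B}-\widehat L_{e,\Sset}$ is bounded above does not follow from your earlier steps, which only bound $\widehat L_{e,B}$ from below. The paper supplies it through the competitor $\bm b^0_{e,B}(\tau)=\bvec^*(\tau)+\bm v_e(\tau)/\sqrt n-\beta_j^*(\tau)\bm e_j$; alternatively you can sidestep it, because $\log(x/y)$ is increasing in $x$ and $\beta_j^*$ is bounded by $B_\beta$.
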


\begin{proof}
Fix $\varepsilon\in(0,1)$. Lemma \ref{lem:gic_loss_scale} gives a constant $M_\varepsilon$ such that
\begin{equation}\label{eqn:gic_underfit_loss_event}
	\limsup_{n\to\infty}P\left\{
	\sup_{\tau\in\mc T}\mc Q_{\tau,n}\{\bvec^*(\tau)\}>M_\varepsilon
	\ \text{or}\ \inf_{\tau\in\mc T}\mc Q_{\tau,n}\{\bvec^*(\tau)\}<c_\rho
	\right\}\le\varepsilon/2.
\end{equation}
We will work on the complement of this event and of the events in Lemmas
\ref{lem:gic_localization}, \ref{lem:gic_empirical_process}, and \ref{lem:gic_empirical_curvature}.
The union of the excluded events has probability with limit superior at most $\varepsilon/2$.

First, \eqref{eqn:gic_population_expansion} and Conditions \ref{as:density conditions}(i) and \ref{as:gic_design} imply, for every vector with $|\operatorname{supp}\{\bm b-\bvec^*(\tau)\}|\le d_{\mathrm G,n}$,
\begin{equation}\label{eqn:gic_population_upper}
	0\le \E \ell_{\tau,\bm b}
	\le\frac{f_+}{2}\E\left[(\xvec^T\{\bm b-\bvec^*(\tau)\})^2\right]
	\le\frac{f_+\kappa_+}{2}\|\bm b-\bvec^*(\tau)\|_2^2.
\end{equation}
For $B=\Sset$, the feasible vector in \eqref{eqn:gic_underfit_competitor_feasible} is
$\bm b^0_{e,\Sset}(\tau)=\bvec^*(\tau)+\bm v_e(\tau)/\sqrt n$. Equations \eqref{eqn:gic_feasible_shift_bound} and \eqref{eqn:gic_an} give
\begin{equation}\label{eqn:gic_underfit_true_radius}
	\sup_\tau \|\bm b^0_{e,\Sset}(\tau)-\bvec^*(\tau)\|_2
	\le C\sqrt{\frac{m-r+s}{n}}
	\le C\sqrt{\frac{d_{\mathrm G,n}}{n}}
	\le\frac{C a_{\mathrm G,n}}{A_0\sqrt{L_{\mathrm G,n}}}.
\end{equation}
By \eqref{eqn:gic_population_upper}, \eqref{eqn:gic_underfit_true_radius}, and Lemma \ref{lem:gic_empirical_process}, we get
\begin{align}
	\sup_\tau \empmean \ell_{\tau,\bm b^0_{e,\Sset}(\tau)}
	&\le C\frac{d_{\mathrm G,n}}{n}
	+C_2a_{\mathrm G,n}\left(C\sqrt{\frac{d_{\mathrm G,n}}{n}}+a_{\mathrm G,n}\right)\notag\\
	&\le C a_{\mathrm G,n}^2.
	\label{eqn:gic_underfit_true_loss}
\end{align}
The feasible comparison and Lemmas \ref{lem:gic_localization} and \ref{lem:gic_empirical_curvature} now yield
\begin{equation}\label{eqn:gic_underfit_true_loss_bounds}
	\mc Q_{\tau,n}\{\bvec^*(\tau)\}-C_3a_{\mathrm G,n}^2
	\le\widehat L_{e,\Sset}(\tau)
	\le \mc Q_{\tau,n}\{\bvec^*(\tau)\}+C a_{\mathrm G,n}^2.
\end{equation}
For $\Sset\setminus B=\{j\}$, the two terms in
$\bm b^0_{e,B}(\tau)-\bvec^*(\tau)=\bm v_e(\tau)/\sqrt n-\beta_j^*(\tau)\bm e_j$
have disjoint supports. Therefore
\begin{equation}\label{eqn:gic_underfit_missing_radius}
	\|\bm b^0_{e,B}(\tau)-\bvec^*(\tau)\|_2^2
	=\frac{\|\bm v_e(\tau)\|_2^2}{n}+\beta_j^*(\tau)^2
	\le C\frac{d_{\mathrm G,n}}{n}+B_\beta^2.
\end{equation}
Equations \eqref{eqn:gic_population_upper}, \eqref{eqn:gic_underfit_missing_radius}, and Lemma \ref{lem:gic_empirical_process} give
\begin{align}
	\widehat L_{e,B}(\tau)
	&\le \mc Q_{\tau,n}\{\bvec^*(\tau)\}+\empmean \ell_{\tau,\bm b^0_{e,B}(\tau)}\notag\\
	&\le M_\varepsilon
	+\frac{f_+\kappa_+}{2}\left(B_\beta^2+C\frac{d_{\mathrm G,n}}{n}\right)
	+C_2a_{\mathrm G,n}\left(B_\beta+C\sqrt{\frac{d_{\mathrm G,n}}{n}}+a_{\mathrm G,n}\right).
	\label{eqn:gic_underfit_missing_upper}
\end{align}
Since $a_{\mathrm G,n}\to0$ and $d_{\mathrm G,n}/n\to0$, equations
\eqref{eqn:gic_underfit_loss_event}, \eqref{eqn:gic_underfit_true_loss_bounds},
and \eqref{eqn:gic_underfit_missing_upper}, together with Lemma \ref{lem:gic_empirical_curvature}, imply that a constant $M'_{\varepsilon}<\infty$ can be chosen so that, for all sufficiently large $n$,
\begin{equation}\label{eqn:gic_underfit_loss_interval}
	\frac{c_\rho}{2}\le\widehat L_{e,B}(\tau),\widehat L_{e,\Sset}(\tau)
	\le M'_{\varepsilon}
	\quad\text{for every }B\in\mathcal B_n,\ \tau\in\mc T.
\end{equation}
Every vector feasible for $B$ has $b_j=0$, and hence
\[
\|\bm b-\bvec^*(\tau)\|_2\ge|\beta_j^*(\tau)|,\qquad
\|\bm b-\bvec^*(\tau)\|_2^2\wedge1
\ge\frac{\beta_j^*(\tau)^2}{1\vee B_\beta^2}.
\]
Let $c_5=c_3/(1\vee B_\beta^2)$. Lemmas \ref{lem:gic_localization} and \ref{lem:gic_empirical_curvature}, and \eqref{eqn:gic_underfit_true_loss_bounds}, give
\begin{equation}\label{eqn:gic_one_missing_pointwise}
	\widehat L_{e,B}(\tau)-\widehat L_{e,\Sset}(\tau)
	\ge c_5\beta_j^*(\tau)^2-Ca_{\mathrm G,n}^2.
\end{equation}
In particular, we get
\begin{equation}\label{eqn:gic_one_missing_negative}
	\{\widehat L_{e,\Sset}(\tau)-\widehat L_{e,B}(\tau)\}_+
	\le Ca_{\mathrm G,n}^2.
\end{equation}
Integrating \eqref{eqn:gic_one_missing_pointwise}, applying Cauchy--Schwarz, and using \eqref{eqn:gic_mu}, we obtain
\begin{align}
	\int_{\mc T}\{\widehat L_{e,B}(\tau)-\widehat L_{e,\Sset}(\tau)\}\,d\tau
	&\ge c_5\int_{\mc T}\beta_j^*(\tau)^2\,d\tau
	-C(\tau_U-\tau_L)a_{\mathrm G,n}^2\notag\\
	&\ge\frac{c_5}{\tau_U-\tau_L}
	\left(\int_{\mc T}|\beta_j^*(\tau)|\,d\tau\right)^2
	-C(\tau_U-\tau_L)a_{\mathrm G,n}^2\notag\\
	&\ge\frac{c_5\mu_{\mathrm G,n}^2}{\tau_U-\tau_L}
	-C(\tau_U-\tau_L)a_{\mathrm G,n}^2.
	\label{eqn:gic_underfit_integrated_difference}
\end{align}
For $u,v\in[c_\rho/2,M'_{\varepsilon}]$, direct integration of $1/t$ gives
\begin{equation}\label{eqn:gic_underfit_log_difference}
	\log u-\log v
	\ge\frac{u-v}{M'_{\varepsilon}}
	-\left(\frac{2}{c_\rho}-\frac{1}{M'_{\varepsilon}}\right)(v-u)_+.
\end{equation}
Using \eqref{eqn:gic_underfit_loss_interval} in \eqref{eqn:gic_underfit_log_difference}, then integrating and applying
\eqref{eqn:gic_one_missing_negative} and \eqref{eqn:gic_underfit_integrated_difference}, yields
\begin{align*}
	\int_{\mc T}\log\frac{\widehat L_{e,B}(\tau)}{\widehat L_{e,\Sset}(\tau)}\,d\tau
	&\ge\frac{c_5\mu_{\mathrm G,n}^2}{(\tau_U-\tau_L)M'_{\varepsilon}}
	-C_{\varepsilon}a_{\mathrm G,n}^2\\
	&\ge\frac{c_5\mu_{\mathrm G,n}^2}{2(\tau_U-\tau_L)M'_{\varepsilon}}
\end{align*}
for all sufficiently large $n$, since $a_{\mathrm G,n}=o(\mu_{\mathrm G,n})$ by Condition \ref{as:gic_underfit_rates}.
The last inequality holds simultaneously for every $B$ in the statement on an event whose complementary probability has limit superior at most $\varepsilon/2$.
Thus the assertion holds with $c_\varepsilon=c_5/\{2(\tau_U-\tau_L)M'_{\varepsilon}\}$.
\end{proof}

\begin{theorem}\label{thm:gic_underfit}
Under Conditions \ref{as:iid array}, \ref{as:density conditions}, and \ref{as:gic_design}--\ref{as:gic_underfit_rates}, and, when $e=0$, either \eqref{hyp:linear hypothesis} or Condition \ref{as:local alts},
\[
P\left\{
\inf_{A\in\mathcal U_n}
\operatorname{GIC}^{\circ}_e(A)
\le
\operatorname{GIC}^{\circ}_e(\Sset)
\right\}
\to0.
\]
\end{theorem}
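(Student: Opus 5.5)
The plan is to reduce every underfitted candidate to a ``one-missing'' candidate in $\mathcal B_n$ and then invoke Lemma \ref{lem:gic_one_missing_gap}. Fix $A\in\mathcal U_n$. Since $\Sset\nsubseteq A$, choose $j\in\Sset\setminus A$ and set $A^{+j}=A\cup(\Sset\setminus\{j\})\in\mathcal B_n$. Because $D_A\subset D_{A^{+j}}$, the feasible set defining $\widehat L_{e,A}(\tau)$ is contained in that defining $\widehat L_{e,A^{+j}}(\tau)$, for both $e=a$ and $e=0$ (the linear constraint is identical). Hence $\widehat L_{e,A}(\tau)\ge\widehat L_{e,A^{+j}}(\tau)$ for every $\tau$. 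The key point is that this monotonicity requires no probabilistic control over the possibly huge set $\mathcal U_n$, because all uniformity is already built into the events of Lemma \ref{lem:gic_one_missing_gap}.

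Integrating $\log$ and using $|A|\ge 0$ gives
\[
\operatorname{GIC}^{\circ}_e(A)-\operatorname{GIC}^{\circ}_e(\Sset)
\ge\int_{\mc T}\log\frac{\widehat L_{e,A^{+j}}(\tau)}{\widehat L_{e,\Sset}(\tau)}\,d\tau-s\phi_n .
\]
On the event of Lemma \ref{lem:gic_one_missing_gap} with a fixed $\varepsilon$, the integral is at least $c_\varepsilon\mu_{\mathrm G,n}^2$ uniformly over all $B\in\mathcal B_n$ with $\Sset\setminus B\neq\varnothing$, and $A^{+j}$ is such a $B$. Condition \ref{as:gic_underfit_rates} gives $s\phi_n=o(\mu_{\mathrm G,n}^2)$, so for large $n$ the right side is at least $c_\varepsilon\mu_{\mathrm G,n}^2/2>0$ simultaneously for all $A\in\mathcal U_n$. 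Therefore
\[
\limsup_n P\Bigl\{\inf_{\mathcal U_n}\operatorname{GIC}^\circ_e(A)\le\operatorname{GIC}^\circ_e(\Sset)\Bigr\}\le\varepsilon ,
\]
and since $\varepsilon$ is arbitrary this limit is $0$.

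Two further checks are needed. First, if $s=0$ then $\mathcal U_n$ is empty and the statement is trivial; note that $\mu_{\mathrm G,n}$ may be $\infty$ only in that case. Second, the infima defining $\widehat L$ must be finite and positive so that the logarithms are meaningful. This comes from \eqref{eqn:gic_underfit_loss_interval}, proved within Lemma \ref{lem:gic_one_missing_gap}, applied to $A^{+j}$ and $\Sset$; the value $\widehat L_{e,A}$ only needs to be positive, which follows from $\widehat L_{e,A}\ge\widehat L_{e,A^{+j}}\ge c_\rho/2$.

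The main obstacle is already isolated in Lemma \ref{lem:gic_one_missing_gap}; the remaining care here is to confirm that the constraint $\Cmat\bm b_\Mset=\bm\Psi(\tau)$ is common to both problems, so the nesting argument is valid for $e=0$ under either \eqref{hyp:linear hypothesis} or \ref{as:local alts}.
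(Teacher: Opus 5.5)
Your proposal is correct and follows the paper's proof essentially step for step: you nest each $A\in\mathcal U_n$ inside the one-missing candidate $A\cup(\Sset\setminus\{j\})\in\mathcal B_n$, use monotonicity of the infimum under the common constraint $\Cmat\bm b_{\Mset}=\bm\Psi(\tau)$, invoke Lemma \ref{lem:gic_one_missing_gap}, and absorb the $-s\phi_n$ penalty difference via $s\phi_n=o(\mu_{\mathrm G,n}^2)$. The paper picks $j_A=\min(\Sset\setminus A)$ instead of an arbitrary $j$ and leaves the positivity of the losses (which you justify through \eqref{eqn:gic_underfit_loss_interval}) implicit; neither difference is substantive.
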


\begin{proof}
As in the first comparison in the proof of Lemma 7.9 of \citet{zheng2015globally}, we use inclusion of the candidate parameter sets. If $s=0$, then $\mathcal U_n=\varnothing$. For $s\ge1$, fix $\varepsilon\in(0,1)$ and choose, for each $A\in\mathcal U_n$, $j_A=\min(\Sset\setminus A)$. By the definition of $A^{+j_A}$, one has
\[
A\subset A^{+j_A},\qquad
\Sset\setminus A^{+j_A}=\{j_A\},\qquad
A^{+j_A}\in\mathcal B_n.
\]
The equality constraint for $e=0$ is the same for both candidates. Thus \eqref{eqn:gic_aux_loss_full}--\eqref{eqn:gic_aux_loss_reduced} imply
\begin{equation}\label{eqn:gic_underfit_inclusion}
	\widehat L_{e,A}(\tau)\ge\widehat L_{e,A^{+j_A}}(\tau)
	\quad\text{for all }\tau\in\mc T.
\end{equation}
Lemma \ref{lem:gic_one_missing_gap} and \eqref{eqn:gic_underfit_inclusion} give
\begin{align*}
	&\limsup_{n\to\infty}P\left\{
	\inf_{A\in\mathcal U_n}
	\int_{\mc T}\log\frac{\widehat L_{e,A}(\tau)}{\widehat L_{e,\Sset}(\tau)}\,d\tau
	<c_\varepsilon\mu_{\mathrm G,n}^2
	\right\}\le\varepsilon.
\end{align*}
On the complement of this event, \eqref{eqn:gic_auxiliary} and $|A|-s\ge-s$ imply
\begin{align*}
	\inf_{A\in\mathcal U_n}
	\{\operatorname{GIC}^{\circ}_e(A)-\operatorname{GIC}^{\circ}_e(\Sset)\}
	&\ge c_\varepsilon\mu_{\mathrm G,n}^2-s\phi_n\\
	&=\mu_{\mathrm G,n}^2\left(c_\varepsilon-\frac{s\phi_n}{\mu_{\mathrm G,n}^2}\right)>0
\end{align*}
for all sufficiently large $n$, by Condition \ref{as:gic_underfit_rates}. Therefore the limit superior of the probability in the theorem is at most $\varepsilon$ for every $\varepsilon>0$, and equals zero.
\end{proof}

\subsubsection{Overfitted candidates}\label{app:gic_overfit}

Fix $e\in\{0,a\}$. For $A\in\mathcal O_n\cup\{\Sset\}$, define
\[
\mathcal H_A=
\begin{cases}
\{\bm h\in\reals^{p+1}:\bm h_{D_A^c}=\bm0\},&e=a,\\
\{\bm h\in\reals^{p+1}:\bm h_{D_A^c}=\bm0,\ \Cmat\bm h_{\Mset}=\bm0\},&e=0.
\end{cases}
\]
Since $\operatorname{rank}(\Cmat)=r$ and $\Mset\subset D_A$,
\[
\dim(\mathcal H_A)=|D_A|-r1(e=0),
\qquad
\dim(\mathcal H_A)-\dim(\mathcal H_{\Sset})=|A|-s.
\]
For vectors supported in $D_A$, let
\begin{equation}\label{eqn:gic_jacobian_form}
J_\tau(\bm h,\bm g)
=\E\left[
f_{y\mid\xvec}\{\xvec^T\bvec^*(\tau)\mid\xvec\}
(\xvec^T\bm h)(\xvec^T\bm g)
\right].
\end{equation}
Conditions \ref{as:density conditions} and \ref{as:gic_design} imply
\begin{equation}\label{eqn:gic_jacobian_equivalence}
f_-\kappa_-\|\bm h\|_2^2
\le J_\tau(\bm h,\bm h)
\le f_+\kappa_+\|\bm h\|_2^2
\qquad (|\operatorname{supp}(\bm h)|\le d_{\mathrm G,n}).
\end{equation}
For $A\in\mathcal O_n$, define
\[
\mathcal K_A(\tau)
=\{\bm g\in\mathcal H_A:
J_\tau(\bm u,\bm g)=0\text{ for every }\bm u\in\mathcal H_{\Sset}\}.
\]
Then
\begin{equation}\label{eqn:gic_orthogonal_sum}
\mathcal H_A=\mathcal H_{\Sset}\oplus_{J_\tau}\mathcal K_A(\tau),
\qquad
\dim\{\mathcal K_A(\tau)\}=|A|-s.
\end{equation}

For all $\bm h,\bm g$ supported in $D_A$, define
\begin{align}
M_A(\bm h,\tau)
&=n\empmean \left[
\rho_\tau\{Y-\xvec^T\bvec^*(\tau)-n^{-1/2}\xvec^T\bm h\}
-\rho_\tau\{Y-\xvec^T\bvec^*(\tau)\}
\right],\label{eqn:gic_local_criterion}\\
G_n(\bm g,\tau)
&=\frac1{\sqrt n}\sum_{i=1}^n
\xvec_i^T\bm g\,[\tau-1\{Y_i\le\xvec_i^T\bvec^*(\tau)\}],
\label{eqn:gic_score_process}\\
Z_A(\bm h,\tau;\bm g)
&=\frac1{\sqrt n}\sum_{i=1}^n\xvec_i^T\bm g
\left[
1\{Y_i\le\xvec_i^T\bvec^*(\tau)+n^{-1/2}\xvec_i^T\bm h\}
-1\{Y_i\le\xvec_i^T\bvec^*(\tau)\}
\right],\label{eqn:gic_score_perturbation}\\
\overline Z_A(\bm h,\tau;\bm g)
&=\sqrt n\,\E\left[\xvec^T\bm g
\left[
1\{Y\le\xvec^T\bvec^*(\tau)+n^{-1/2}\xvec^T\bm h\}
-1\{Y\le\xvec^T\bvec^*(\tau)\}
\right]\right].\label{eqn:gic_score_perturbation_mean}
\end{align}
The vector $\bm v_e(\tau)$ in \eqref{eqn:gic_feasible_shift} satisfies
\[
\{\bm h:\bm h_{D_A^c}=\bm0,
\ \Cmat[\bvec^*_{\Mset}(\tau)+n^{-1/2}\bm h_{\Mset}]=\bm\Psi(\tau)\}
=\bm v_0(\tau)+\mathcal H_A
\qquad(e=0).
\]
Let $\widehat{\bm h}_A(\tau)$ be any minimizer of
$M_A(\cdot,\tau)$ over $\bm v_e(\tau)+\mathcal H_A$.
The feasible set is nonempty and affine. The criterion is a finite-valued
polyhedral convex function bounded below by $-n\mc Q_{\tau,n}\{\bvec^*(\tau)\}$; therefore the
minimum is attained. By \eqref{eqn:gic_aux_loss_full} and
\eqref{eqn:gic_aux_loss_reduced},
\begin{equation}\label{eqn:gic_local_loss_identity}
\widehat L_{e,A}(\tau)
=\mc Q_{\tau,n}\{\bvec^*(\tau)\}+n^{-1}M_A\{\widehat{\bm h}_A(\tau),\tau\}.
\end{equation}

For fixed $M\ge1$, let
\[
\mathcal L_A(M,\tau)
=\{\bm h:\bm h_{D_A^c}=\bm0,\
J_\tau(\bm h,\bm h)\le M|D_A|L_{\mathrm G,n}\}.
\]
This set imposes no restriction on $\Cmat\bm h_{\Mset}$.
Taylor's expansion of the conditional distribution function gives
\begin{align}
&\left|\overline Z_A(\bm h,\tau;\bm g)-J_\tau(\bm h,\bm g)\right|
\notag\\
&\quad\le\frac{L_f}{2\sqrt n}\E\{|\xvec^T\bm g|(\xvec^T\bm h)^2\}
\le\frac{L_f\xi_{\mathrm G,n}\|\bm h\|_2}{2\sqrt n}
\{\E(\xvec^T\bm g)^2\}^{1/2}\{\E(\xvec^T\bm h)^2\}^{1/2}
\notag\\
&\quad\le\frac{L_f\kappa_+\xi_{\mathrm G,n}}{2\sqrt n}
\|\bm h\|_2^2\|\bm g\|_2.
\label{eqn:gic_score_mean_expansion}
\end{align}
Here \eqref{eqn:gic_sparse_envelope} bounds one factor
$|\xvec^T\bm h|$, and Condition \ref{as:gic_design} bounds the two second moments.

\begin{lemma}\label{lem:gic_overfit_process_bounds}
Under Conditions \ref{as:iid array}, \ref{as:density conditions}, and
\ref{as:gic_design}--\ref{as:gic_overfit_rates}, for every fixed $M\ge1$
there is a constant $C<\infty$ such that
\begin{align}
	P\left\{
	\exists A\in\mathcal O_n\cup\{\Sset\},\ \tau\in\mc T:
	\sup_{\substack{\bm g_{D_A^c}=\bm0\\J_\tau(\bm g,\bm g)\le1}}
	|G_n(\bm g,\tau)|>C\sqrt{|D_A|L_{\mathrm G,n}}
	\right\}&\to0,\label{eqn:gic_score_size}\\
	P\left\{
	\sup_{A\in\mathcal O_n\cup\{\Sset\}}
	\sup_{\tau\in\mc T}
	\sup_{\bm h\in\mathcal L_A(M,\tau)}
	\sup_{\substack{\bm g_{D_A^c}=\bm0\\J_\tau(\bm g,\bm g)\le1}}
	|Z_A(\bm h,\tau;\bm g)-\overline Z_A(\bm h,\tau;\bm g)|
	>\epsilon\sqrt{L_{\mathrm G,n}}
	\right\}&\to0\label{eqn:gic_score_modulus}
\end{align}
for every $\epsilon>0$. In addition,
\begin{equation}\label{eqn:gic_extra_score}
	P\left\{
	\exists A\in\mathcal O_n,\ \tau\in\mc T:
	\sup_{\substack{\bm g\in\mathcal K_A(\tau)\\J_\tau(\bm g,\bm g)\le1}}
	|G_n(\bm g,\tau)|>C\sqrt{(|A|-s)L_{\mathrm G,n}}
	\right\}\to0.
\end{equation}
The constants in \eqref{eqn:gic_score_size} and
\eqref{eqn:gic_extra_score} do not depend on $M$.
\end{lemma}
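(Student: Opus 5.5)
All three bounds will come from one recipe: discretize the index set, apply Bernstein's inequality at the grid points, and take a union bound whose log-cardinality is of order $|D_A|L_{\mathrm G,n}$. This mirrors the proof of Lemma~\ref{lem:gic_empirical_process}. Throughout, \eqref{eqn:gic_jacobian_equivalence} lets us trade the $J_\tau$-ball $\{J_\tau(\bm g,\bm g)\le1\}$ for a Euclidean ball of radius $(f_-\kappa_-)^{-1/2}$ supported on $D_A$. Every linear functional $\xvec^T\bm g$ is then bounded by $C\xi_{\mathrm G,n}$ and has second moment $O(1)$ by Condition~\ref{as:gic_design}.

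\emph{Step 1: \eqref{eqn:gic_score_size}.} We fix $A$ and cover the set of unit vectors supported on $D_A$ by a $1/2$-net $\mathcal N_A$, with $|\mathcal N_A|\le 5^{|D_A|}$. The usual net argument gives $\sup_{\bm g}|G_n(\bm g,\tau)|\le 2\max_{\bm u\in\mathcal N_A}|G_n(\bm u,\tau)|$, since $G_n$ is linear in $\bm g$. For the $\tau$-dependence we use a grid $\mathcal T_n$ of mesh $n^{-2}$. Lemma~\ref{lemma:Ui equivalence} writes the summands as $\xvec_i^T\bm u\,(\tau-1\{U_i\le\tau\})$, so passing from $\tau$ to the nearest grid point changes $G_n$ by the mesh term plus a count of the $U_i$ in an interval of length $n^{-2}$. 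By Bernstein that count is $O_p(\log n)$ uniformly over the grid, and it is multiplied by $\xi_{\mathrm G,n}/\sqrt n$, which is negligible by \eqref{eqn:gic_sharp_consequences}. At each fixed $(\bm u,t)$, Bernstein gives the tail $2\exp\{-cx^2/(1+\xi_{\mathrm G,n}x/\sqrt n)\}$.

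We take $x=C\sqrt{|D_A|L_{\mathrm G,n}}$. The factor $\xi_{\mathrm G,n}x/\sqrt n$ is $o(1)$ by \eqref{eqn:gic_sharp_consequences}. We then union over $|\mathcal O_n\cup\{\Sset\}|\le (K_{\mathrm G,n}+1)(p+1)^{K_{\mathrm G,n}}$ (grouped by $|A|$), the nets, and $|\mathcal T_n|\le n^2$. The total log-cardinality is $\lesssim |D_A|L_{\mathrm G,n}$, where $|A|\log p\le |D_A|L_{\mathrm G,n}$ because sets of size $k$ number at most $p^k$. So a large $C$, not depending on $M$, makes the sum $o(1)$.

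\emph{Step 2: \eqref{eqn:gic_extra_score}.} This is identical, except that the net is taken in the $(|A|-s)$-dimensional space $\mathcal K_A(\tau)$. The difficulty is that this space moves with $\tau$. I would handle it by writing $\mathcal K_A(\tau)=\{\bm g-\Pi_\tau\bm g:\bm g\in\mathcal H_A\}$, where $\Pi_\tau$ is the $J_\tau$-projection onto $\mathcal H_{\Sset}$. Because $\tau\mapsto J_\tau$ is Lipschitz by Conditions~\ref{as:density conditions}(iii) and \ref{as:gic_design} together with Lemma~\ref{lem:gic_quantile_path}, the projections are Lipschitz in $\tau$ on the grid. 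A basis $\bm g_1(t),\dots,\bm g_{|A|-s}(t)$ of $\mathcal K_A(t)$ is fixed at each grid point $t$, and a net is built in the coefficient space. The sup over this $(|A|-s)$-dimensional ball then needs only $5^{|A|-s}$ points per grid point. The union over $A$ of a given size $k$ costs about $k\log p\lesssim (k-s)L_{\mathrm G,n}$ only after accounting for the fixed $\Sset$; since $A\supsetneq\Sset$, the choice of $A$ is a choice of $|A|-s$ extra indices, so the entropy is $(|A|-s)L_{\mathrm G,n}$ as required.

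\emph{Step 3: \eqref{eqn:gic_score_modulus}.} This is the main obstacle: a stochastic equicontinuity bound for indicator increments, uniformly over $\bm h$ in a ball of $J$-radius $\sqrt{M|D_A|L_{\mathrm G,n}}$. The plan is as follows.
\begin{itemize}
\item Cover the $\bm h$-ball at Euclidean resolution $\delta_n=n^{-1}$, giving about $(Cn)^{|D_A|}$ points, so the log-cardinality is $\lesssim |D_A|L_{\mathrm G,n}$. Cover the $\bm g$-ball by a $1/2$-net and $\tau$ by the grid $\mathcal T_n$.
\item Monotonicity of the indicators in $\xvec^T\bm h$ gives bracket bounds. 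Between a point and its net neighbour, the increment is at most $(\xi_{\mathrm G,n}/\sqrt n)\sum_i 1\{|Y_i-\xvec_i^T\bvec^*(\tau)-n^{-1/2}\xvec_i^T\bm h_0|\le \xi_{\mathrm G,n}\delta_n/\sqrt n\}$. Its mean is $O(\xi_{\mathrm G,n}^2\delta_n)=o(1)$, and its fluctuation is controlled by Bernstein in the same way as in Step 1.
\item At a fixed grid point, each summand is bounded by $\xi_{\mathrm G,n}$. Its variance is $\le C\,\E[(\xvec^T\bm g)^2|\xvec^T\bm h|]/\sqrt n\lesssim \xi_{\mathrm G,n}\|\bm h\|_2/\sqrt n\lesssim \xi_{\mathrm G,n}\sqrt{d_{\mathrm G,n}L_{\mathrm G,n}/n}=:v_n$.
\item Bernstein at level $\epsilon\sqrt{L_{\mathrm G,n}}$ then gives the exponent $c\,\epsilon^2L_{\mathrm G,n}/(v_n+\xi_{\mathrm G,n}\epsilon\sqrt{L_{\mathrm G,n}/n})$.
\end{itemize}
This exponent must dominate $|D_A|L_{\mathrm G,n}$. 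That requires $v_n|D_A|\to0$ and $\xi_{\mathrm G,n}|D_A|\sqrt{L_{\mathrm G,n}/n}\to0$, and both are exactly \eqref{eqn:gic_sharp_growth}, i.e.\ the third display of \eqref{eqn:gic_sharp_consequences}. If this union bound proves slightly too lossy, I would fall back on the peeling-in-$r$ device used in Lemma~\ref{lem:gic_empirical_process}.
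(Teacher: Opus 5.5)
Your proposal is correct and follows essentially the paper's route. You discretize $(\bm g,\bm h,\tau)$ and apply Bernstein at finitely many points, with log-cardinality $\lesssim|D_A|L_{\mathrm G,n}$, or $(|A|-s)L_{\mathrm G,n}$ for \eqref{eqn:gic_extra_score} (counting only the extra indices of $A\setminus\Sset$ and using a Lipschitz-in-$\tau$ parametrization of $\mathcal K_A(\tau)$); the variance bound $\xi_{\mathrm G,n}\sqrt{|D_A|L_{\mathrm G,n}/n}$ together with \eqref{eqn:gic_sharp_growth} then closes \eqref{eqn:gic_score_modulus}. The only difference is how the discretization error is absorbed: the paper uses $L_2$-brackets whose width is tied to the threshold and applies Bernstein to the bracket endpoints, whereas you use linearity in $\bm g$ with coarse $1/2$-nets, very fine grids in $\bm h$ and $\tau$, and a uniform count of small-window indicators. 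Note that in Step 3 this count is taken over $e^{C|D_A|L_{\mathrm G,n}}$ windows, so it is $O(|D_A|L_{\mathrm G,n})$ rather than the $O(\log n)$ of Step 1; it remains negligible because $\xi_{\mathrm G,n}d_{\mathrm G,n}L_{\mathrm G,n}/\sqrt n=o(L_{\mathrm G,n}^{1/2})$ by \eqref{eqn:gic_sharp_consequences}.
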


\begin{proof}
We use the products and indicator differences in Lemmas 23--24 and 26
of \citet{belloni_conditional_2019}. We construct pointwise brackets
and apply Bernstein's inequality.
We note that all constants
in this proof are independent of $n,A,\tau$, and the integer used to
index candidate sizes. Only constants used to prove
\eqref{eqn:gic_score_modulus} may depend on the fixed $M$.

Fix $1\le k\le d_{\mathrm G,n}$ and consider candidates with $|D_A|=k$.
For $\bm g_{D_A^c}=\bm0$ and $J_\tau(\bm g,\bm g)\le1$, write
\[
f_{\bm g,\tau}(Y,\xvec)
=(\xvec^T\bm g)[\tau-1\{Y\le\xvec^T\bvec^*(\tau)\}].
\]
By \eqref{eqn:conditional quantile linear expr},
$\E f_{\bm g,\tau}=0$. Equations \eqref{eqn:gic_sparse_envelope} and
\eqref{eqn:gic_jacobian_equivalence} imply
\begin{equation}\label{eqn:gic_score_envelope_variance}
	\|\bm g\|_2\le(f_-\kappa_-)^{-1/2},
	\qquad
	|f_{\bm g,\tau}|\le C\xi_{\mathrm G,n},
	\qquad
	\E f_{\bm g,\tau}^2\le C.
\end{equation}
For $0<\delta\le1$, select finite sets of coefficient vectors and quantiles
so that each admissible pair has an approximation $(\bm g_0,t)$ satisfying
\[
\|\bm g-\bm g_0\|_2\le\delta,
\qquad |\tau-t|\le\delta^2,
\qquad \|\bm g_0\|_2\le(f_-\kappa_-)^{-1/2}+1.
\]
Lemma \ref{lem:gic_quantile_path} gives the pointwise bound
\begin{align}
	|f_{\bm g,\tau}-f_{\bm g_0,t}|
	&\le\xi_{\mathrm G,n}\delta+C\xi_{\mathrm G,n}\delta^2+C\xi_{\mathrm G,n}
	1\{|Y-\xvec^T\bvec^*(t)|\le f_-^{-1}\delta^2\}
	=:b_{\bm g_0,t,\delta}.
	\label{eqn:gic_score_bracket_width}
\end{align}
The coefficient $C$ is chosen to bound $\|\bm g_0\|_2$ in
\eqref{eqn:gic_score_envelope_variance}. Condition
\ref{as:density conditions} implies
\[
\Pr\{|Y-\xvec^T\bvec^*(t)|\le f_-^{-1}\delta^2\}
\le2f_+f_-^{-1}\delta^2,
\qquad
\|b_{\bm g_0,t,\delta}\|_{L_2}\le C\xi_{\mathrm G,n}\delta.
\]
Thus $[f_{\bm g_0,t}-b_{\bm g_0,t,\delta},
f_{\bm g_0,t}+b_{\bm g_0,t,\delta}]$ is a pointwise bracket.
The covering bound used in Theorem 2.7.11 of
\citet{van_der_vaart_weak_1996} yields at most
$(1+C/\delta)^k$ coefficient vectors. The number of quantiles is at most
$1+\delta^{-2}$, and the number of candidate supports with $|D_A|=k$ is at most
$(p+1)^k$. Therefore, for the union $\mathcal F$ of these score functions, we have
\begin{align}
	\log N_{[]}\{C\xi_{\mathrm G,n}\delta,\mathcal F,L_2\}
	&\le k\log(p+1)+k\log(1+C/\delta)+\log(1+\delta^{-2})
	\notag\\
	&\le Ck\{L_{\mathrm G,n}+\log(C/\delta)\}.
	\label{eqn:gic_score_brackets}
\end{align}

We compute the finite-bracket probability bound. Suppose a
class $\mathcal F$ has envelope $C\xi_{\mathrm G,n}$,
$\sup_{f\in\mathcal F}\E f^2\le\sigma^2$, and satisfies
\eqref{eqn:gic_score_brackets}. For
$c/\sqrt n\le t\le C\xi_{\mathrm G,n}\sqrt n$, take
$\delta=t/(C_1\xi_{\mathrm G,n}\sqrt n)$, where $C_1$ is sufficiently
large that every bracket has $L_2$ width at most $t/(8\sqrt n)$ and
$\delta\le1$. Since
\[
\delta^{-1}\le C_2\xi_{\mathrm G,n}n,
\qquad
\log(C/\delta)\le C_3L_{\mathrm G,n},
\]
the total number of endpoints is at most $\exp(C_4kL_{\mathrm G,n})$.
For $l\le f\le u$, one has
\begin{align*}
	\sqrt n(\empmean -\E )f
	&\le\sqrt n(\empmean -\E )u+\sqrt nP(u-l),\\
	\sqrt n(\empmean -\E )f
	&\ge\sqrt n(\empmean -\E )l-\sqrt nP(u-l),\\
	\sqrt nP(u-l)&\le\sqrt n\|u-l\|_{L_2}\le t/8.
\end{align*}
For every nonempty bracket, choose any function $f$ inside it. Its
endpoints $v\in\{l,u\}$ satisfy
\[
\E v^2\le2\E f^2+2\E (u-l)^2\le2\sigma^2+t^2/(32n),
\qquad |v-\E v|\le C\xi_{\mathrm G,n}.
\]
Bernstein's inequality applied to each endpoint gives
\begin{equation}\label{eqn:gic_bracket_bernstein}
	P^*\left\{\sup_{f\in\mathcal F}|\sqrt n(\empmean -\E )f|>t\right\}
	\le2\exp\left\{
	C_4kL_{\mathrm G,n}
	-\frac{ct^2}{\sigma^2+t^2/n+\xi_{\mathrm G,n}t/\sqrt n}
	\right\}.
\end{equation}
All brackets constructed here have pointwise endpoints bounded by
$C\xi_{\mathrm G,n}$, as required in this calculation.

Apply \eqref{eqn:gic_bracket_bernstein} to
\eqref{eqn:gic_score_envelope_variance} with
$t=C_0\sqrt{kL_{\mathrm G,n}}$. By \eqref{eqn:gic_sharp_consequences},
\[
\sup_{k\le d_{\mathrm G,n}}
\left\{\frac{t^2}{n}+\frac{\xi_{\mathrm G,n}t}{\sqrt n}\right\}
\le C_0^2\frac{d_{\mathrm G,n}L_{\mathrm G,n}}n
+C_0\xi_{\mathrm G,n}\sqrt{\frac{d_{\mathrm G,n}L_{\mathrm G,n}}n}
\longrightarrow0.
\]
Consequently the exponent in \eqref{eqn:gic_bracket_bernstein} is at most
$(C_4-c_1C_0^2)kL_{\mathrm G,n}$ for all sufficiently large $n$.
Choose $C_0$ so that $c_1C_0^2-C_4\ge4$. Then
\[
\sum_{k=1}^{d_{\mathrm G,n}}2e^{-4kL_{\mathrm G,n}}
\le\frac{2e^{-4L_{\mathrm G,n}}}{1-e^{-4L_{\mathrm G,n}}}\to0,
\]
which proves \eqref{eqn:gic_score_size}.

For \eqref{eqn:gic_score_modulus}, we use a similar argument used in Lemma 32 of \citet{belloni_conditional_2019}.
For $|D_A|=k$, $\bm h\in\mathcal L_A(M,\tau)$, and admissible $\bm g$, let
\[
f_{\bm h,\bm g,\tau}
=(\xvec^T\bm g)
\left[1\{Y\le\xvec^T\bvec^*(\tau)+n^{-1/2}\xvec^T\bm h\}
-1\{Y\le\xvec^T\bvec^*(\tau)\}\right].
\]
Its absolute value is at most $C\xi_{\mathrm G,n}$. Conditioning on
$\xvec$, then using \eqref{eqn:gic_sparse_envelope},
\eqref{eqn:gic_jacobian_equivalence}, and Condition \ref{as:gic_design}, gives
\begin{align}
	\E f_{\bm h,\bm g,\tau}^2
	&\le\frac{f_+}{\sqrt n}\E\{(\xvec^T\bm g)^2|\xvec^T\bm h|\}
	\notag\\
	&\le\frac{f_+\xi_{\mathrm G,n}\|\bm h\|_2}{\sqrt n}
	\E(\xvec^T\bm g)^2
	\le C\xi_{\mathrm G,n}\sqrt{\frac{kL_{\mathrm G,n}}n}.
	\label{eqn:gic_perturbation_variance}
\end{align}
Choose approximations supported in the same $D_A$ with
\[
\|\bm g-\bm g_0\|_2\le\delta,
\qquad
\|\bm h-\bm h_0\|_2\le\frac{\delta^2\sqrt n}{\xi_{\mathrm G,n}},
\qquad
|\tau-t|\le\delta^2.
\]
Both $\bm h$ and $\bm h_0$ may be chosen in the Euclidean ball of radius
$C\sqrt{kL_{\mathrm G,n}}$ given by \eqref{eqn:gic_jacobian_equivalence}.
With $c_2=1+f_-^{-1}$, the pointwise difference is bounded by
\begin{align}
	|f_{\bm h,\bm g,\tau}-f_{\bm h_0,\bm g_0,t}|
	&\le\xi_{\mathrm G,n}\delta
	+C\xi_{\mathrm G,n}
	1\{|Y-\xvec^T\bvec^*(t)-n^{-1/2}\xvec^T\bm h_0|\le c_2\delta^2\}
	\notag\\
	&\quad+C\xi_{\mathrm G,n}
	1\{|Y-\xvec^T\bvec^*(t)|\le c_2\delta^2\}.
	\label{eqn:gic_perturbation_bracket_width}
\end{align}
Each indicator on the right has expectation at most $2f_+c_2\delta^2$.
The $L_2$ norm of the right side is therefore at most
$C\xi_{\mathrm G,n}\delta$. Adding and subtracting this bound from
$f_{\bm h_0,\bm g_0,t}$ produces pointwise brackets. Their number is at most
\[
(p+1)^k(1+C/\delta)^k
\left(1+\frac{C\xi_{\mathrm G,n}\sqrt{kL_{\mathrm G,n}}}{\delta^2\sqrt n}\right)^k
(1+\delta^{-2}).
\]
Since $k\le d_{\mathrm G,n}\le e^{L_{\mathrm G,n}}$,
$\xi_{\mathrm G,n}\le e^{L_{\mathrm G,n}}$, and
$L_{\mathrm G,n}\le e^{L_{\mathrm G,n}}$, its logarithm is at most
$Ck\{L_{\mathrm G,n}+\log(C/\delta)\}$. Hence
\eqref{eqn:gic_bracket_bernstein} applies with
$\sigma^2=C\xi_{\mathrm G,n}\sqrt{kL_{\mathrm G,n}/n}$.

Set
\begin{equation}\label{eqn:gic_perturbation_threshold}
	t=C_0\left[
	\left\{\frac{\xi_{\mathrm G,n}(kL_{\mathrm G,n})^{3/2}}{\sqrt n}\right\}^{1/2}
	+\frac{\xi_{\mathrm G,n}kL_{\mathrm G,n}}{\sqrt n}
	\right].
\end{equation}
Condition \ref{as:gic_design} and \eqref{eqn:gic_sparse_envelope} imply
$\xi_{\mathrm G,n}\ge\sqrt{\kappa_-}$. Thus, we get
\[
t\ge C_0\sqrt{\kappa_-}\,\frac{kL_{\mathrm G,n}}{\sqrt n}
\ge\frac{C_0\sqrt{\kappa_-}\log3}{\sqrt n}.
\]
Equation \eqref{eqn:gic_sharp_growth} also gives
$d_{\mathrm G,n}L_{\mathrm G,n}/n\to0$. Consequently, we obtain
\begin{align*}
	\frac{t}{\xi_{\mathrm G,n}\sqrt n}
	&=C_0\left[
	\xi_{\mathrm G,n}^{-1/2}
	\left(\frac{kL_{\mathrm G,n}}n\right)^{3/4}
	+\frac{kL_{\mathrm G,n}}n\right]\\
	&\le C_0\left[
	\kappa_-^{-1/4}
	\left(\frac{d_{\mathrm G,n}L_{\mathrm G,n}}n\right)^{3/4}
	+\frac{d_{\mathrm G,n}L_{\mathrm G,n}}n\right]\to0.
\end{align*}
To evaluate the exponent in \eqref{eqn:gic_bracket_bernstein},
\eqref{eqn:gic_perturbation_threshold} implies separately
\begin{align*}
	C\xi_{\mathrm G,n}\sqrt{\frac{kL_{\mathrm G,n}}n}
	&\le\frac{C}{C_0^2}\frac{t^2}{kL_{\mathrm G,n}},\\
	\frac{\xi_{\mathrm G,n}t}{\sqrt n}
	&\le\frac1{C_0}\frac{t^2}{kL_{\mathrm G,n}},\\
	\frac{t^2}{n}
	&\le\frac{d_{\mathrm G,n}L_{\mathrm G,n}}n
	\frac{t^2}{kL_{\mathrm G,n}}.
\end{align*}
It follows that
\[
\frac{ct^2}{
	C\xi_{\mathrm G,n}\sqrt{kL_{\mathrm G,n}/n}+t^2/n+\xi_{\mathrm G,n}t/\sqrt n}
\ge
\frac{ckL_{\mathrm G,n}}{
	C/C_0^2+1/C_0+d_{\mathrm G,n}L_{\mathrm G,n}/n}.
\]
Choose $C_0$ sufficiently large, then $n$ sufficiently large, so that
the right side is at least $(C_4+4)kL_{\mathrm G,n}$. The sum of the
resulting probabilities is bounded by
$2e^{-4L_{\mathrm G,n}}/(1-e^{-4L_{\mathrm G,n}})$.
After division of \eqref{eqn:gic_perturbation_threshold} by
$\sqrt{L_{\mathrm G,n}}$, its maximum over $k\le d_{\mathrm G,n}$ is at most
\[
C_0\left[
\left\{\frac{\xi_{\mathrm G,n}^2d_{\mathrm G,n}^3L_{\mathrm G,n}}n\right\}^{1/4}
+\left\{\frac{\xi_{\mathrm G,n}^2d_{\mathrm G,n}^2L_{\mathrm G,n}}n\right\}^{1/2}
\right]\to0
\]
by \eqref{eqn:gic_sharp_growth}. This proves \eqref{eqn:gic_score_modulus}.

For \eqref{eqn:gic_extra_score}, we use an analogous argument used in Lemmas 18--19 of \citet{belloni_conditional_2019}.
Lemma \ref{lem:gic_quantile_path} and Condition \ref{as:density conditions}
give, for vectors supported in a common $D_A$,
\begin{align}
	|J_\tau(\bm h,\bm g)-J_t(\bm h,\bm g)|
	&\le L_ff_-^{-1}|\tau-t|\E|\xvec^T\bm h\,\xvec^T\bm g|
	\notag\\
	&\le L_ff_-^{-1}\kappa_+|\tau-t|\|\bm h\|_2\|\bm g\|_2.
	\label{eqn:gic_jacobian_lipschitz}
\end{align}
Choose a matrix $\bm U$ whose columns form a Euclidean orthonormal basis
of $\mathcal H_{\Sset}$. For each $A\in\mathcal O_n$, let $\bm V_A$ have
columns $\bm e_j$, $j\in A\setminus\Sset$, in increasing order. Thus
\[
\bm U^T\bm U=I_{\dim(\mathcal H_{\Sset})},\qquad \bm V_A^T\bm V_A=I_{|A|-s},
\qquad \bm U^T\bm V_A=0,
\qquad
\mathcal H_A=\operatorname{range}[\bm U,\bm V_A].
\]
Write $\bm U_i$ and $(\bm V_A)_j$ for the respective columns.
For this proof, define
\[
[\bm B(\tau)]_{ij}=J_\tau(\bm U_i,\bm U_j),
\qquad
[\bm R_A(\tau)]_{ij}=J_\tau(\bm U_i,(\bm V_A)_j),
\qquad
\bm T_A(\tau)=\bm V_A-\bm U\bm B(\tau)^{-1}\bm R_A(\tau).
\]
Equations \eqref{eqn:gic_jacobian_equivalence} and
\eqref{eqn:gic_jacobian_lipschitz} give
\begin{align}
	\lambda_{\min}\{\bm B(\tau)\}&\ge f_-\kappa_-,
	&\|\bm R_A(\tau)\|_{\mathrm{op}}&\le f_+\kappa_+,
	\notag\\
	\|\bm B(\tau)-\bm B(t)\|_{\mathrm{op}}
	+\|\bm R_A(\tau)-\bm R_A(t)\|_{\mathrm{op}}
	&\le C|\tau-t|.
	\label{eqn:gic_orthogonal_matrices}
\end{align}
For every $\bm z\in\reals^{|A|-s}$,
\[
J_\tau\{\bm U\bm a,\bm T_A(\tau)\bm z\}
=\bm a^T[\bm R_A(\tau)-\bm B(\tau)\bm B(\tau)^{-1}\bm R_A(\tau)]\bm z=0,
\qquad
\bm V_A^T\bm T_A(\tau)\bm z=\bm z.
\]
Conversely, if $\bm U\bm a+\bm V_A\bm z\in\mathcal K_A(\tau)$, then
$\bm B(\tau)\bm a+\bm R_A(\tau)\bm z=0$, so
$\bm a=-\bm B(\tau)^{-1}\bm R_A(\tau)\bm z$. Therefore
\begin{equation}\label{eqn:gic_orthogonal_parameterization}
	\mathcal K_A(\tau)=\{\bm T_A(\tau)\bm z:\bm z\in\reals^{|A|-s}\},
	\qquad
	\|\bm z\|_2\le\|\bm T_A(\tau)\bm z\|_2\le C\|\bm z\|_2.
\end{equation}
The upper bound follows from \eqref{eqn:gic_orthogonal_matrices} and
$\|\bm T_A(\tau)\|_{\mathrm{op}}
\le1+(f_-\kappa_-)^{-1}f_+\kappa_+$.
The exact inverse identity yields
\begin{align*}
	\bm B(\tau)^{-1}-\bm B(t)^{-1}
	&=\bm B(\tau)^{-1}[\bm B(t)-\bm B(\tau)]\bm B(t)^{-1},\\
	\|\bm T_A(\tau)-\bm T_A(t)\|_{\mathrm{op}}
	&\le\|\bm B(\tau)^{-1}-\bm B(t)^{-1}\|_{\mathrm{op}}
	\|\bm R_A(\tau)\|_{\mathrm{op}}\\
	&\quad+\|\bm B(t)^{-1}\|_{\mathrm{op}}
	\|\bm R_A(\tau)-\bm R_A(t)\|_{\mathrm{op}}\\
	&\le C|\tau-t|.
\end{align*}
Consequently, we get
\begin{equation}\label{eqn:gic_orthogonal_parameter_lipschitz}
	\|\bm T_A(\tau)\bm z-\bm T_A(t)\bm z_0\|_2
	\le C\|\bm z-\bm z_0\|_2+C|\tau-t|\|\bm z_0\|_2.
\end{equation}
If $J_\tau\{\bm T_A(\tau)\bm z,\bm T_A(\tau)\bm z\}\le1$, then
\eqref{eqn:gic_jacobian_equivalence} and
\eqref{eqn:gic_orthogonal_parameterization} imply
$\|\bm z\|_2\le(f_-\kappa_-)^{-1/2}$.
Fix $j=|A|-s\ge1$. Choose coefficient and quantile approximations with
$\|\bm z-\bm z_0\|_2\le\delta$ and $|\tau-t|\le\delta^2$.
By \eqref{eqn:gic_orthogonal_parameter_lipschitz},
\eqref{eqn:gic_score_bracket_width} holds with $C\xi_{\mathrm G,n}\delta$
in place of its first term. There are at most $(p+1)^j$ choices of
$A\setminus\Sset$, $(1+C/\delta)^j$ choices of $\bm z_0$, and
$1+\delta^{-2}$ choices of $t$. Thus the logarithm of the number of brackets
is at most
\[
j\log(p+1)+j\log(1+C/\delta)+\log(1+\delta^{-2})
\le Cj\{L_{\mathrm G,n}+\log(C/\delta)\}.
\]
Their envelope and variance obey \eqref{eqn:gic_score_envelope_variance}.
Equation \eqref{eqn:gic_bracket_bernstein}, with $k$ replaced by $j$ and
$t=C_0\sqrt{jL_{\mathrm G,n}}$, therefore gives a probability at most
$2e^{-4jL_{\mathrm G,n}}$ for sufficiently large $C_0$ and $n$.
Summing over $j$ proves \eqref{eqn:gic_extra_score}.
\end{proof}

\begin{lemma}\label{lem:gic_overfit_localization}
Under the conditions of Lemma \ref{lem:gic_overfit_process_bounds}, and
either \eqref{hyp:linear hypothesis} or Condition \ref{as:local alts}
when $e=0$, there is a fixed $M<\infty$ such that
\[
P\{\exists A\in\mathcal O_n\cup\{\Sset\},\ \tau\in\mc T:
\widehat{\bm h}_A(\tau)\notin\mathcal L_A(M,\tau)\}\to0.
\]
The assertion holds for every minimizer.
\end{lemma}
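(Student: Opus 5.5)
The plan is a standard convexity-based localization, run uniformly over $A\in\mathcal O_n\cup\{\Sset\}$ and $\tau\in\mc T$.

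\emph{Step 1 (expansion).} Write the local criterion through Knight's identity. For $\bm h$ supported in $D_A$,
\[
M_A(\bm h,\tau)=-G_n(\bm h,\tau)+\int_0^1 Z_A(u\bm h,\tau;\bm h)\,du .
\]
This is obtained by integrating the representation in Lemma \ref{lem:gic_excess_identity} along the segment, after the substitution $u\mapsto n^{-1/2}\xvec_i^T\bm h$.

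\emph{Step 2 (quadratic term).} Replace $Z_A$ by $\overline Z_A$, and then $\overline Z_A$ by $J_\tau$ via \eqref{eqn:gic_score_mean_expansion}. On $\mathcal L_A(M,\tau)$ this gives
\[
M_A(\bm h,\tau)\ge -G_n(\bm h,\tau)+\tfrac12 J_\tau(\bm h,\bm h)-\mathrm{err}.
\]
The error is controlled by \eqref{eqn:gic_score_modulus}, after normalizing $\bm g=\bm h/J_\tau(\bm h,\bm h)^{1/2}$. It is at most
\[
\epsilon\sqrt{L_{\mathrm G,n}}\,J_\tau(\bm h,\bm h)^{1/2}+C\xi_{\mathrm G,n}n^{-1/2}\|\bm h\|_2^3 .
\]
The cubic term is $o(|D_A|L_{\mathrm G,n})$ uniformly on the boundary $J_\tau=M|D_A|L_{\mathrm G,n}$, because $\xi_{\mathrm G,n}(d_{\mathrm G,n}L_{\mathrm G,n})^{1/2}/\sqrt n\to0$ by \eqref{eqn:gic_sharp_consequences}.

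\emph{Step 3 (boundary comparison).} The minimization is over the affine set $\bm v_e(\tau)+\mathcal H_A$. Use the feasible anchor $\bm v_e(\tau)$ together with convexity of $M_A(\cdot,\tau)$ on this set. Two facts are needed.

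First, $\bm v_e(\tau)$ lies deep inside $\mathcal L_A(M,\tau)$. Indeed, $J_\tau(\bm v_e,\bm v_e)\le C(m-r+s)\le C|D_A|$ by \eqref{eqn:gic_feasible_shift_bound}, and $L_{\mathrm G,n}\ge\log 3$. Its criterion value therefore satisfies
\[
M_A(\bm v_e,\tau)\le C|D_A|^{1/2}L_{\mathrm G,n}^{1/2}\cdot|D_A|^{1/2}+C|D_A|,
\]
which is $O(|D_A|L_{\mathrm G,n})$ with a constant independent of $M$. This uses \eqref{eqn:gic_score_size} and the quadratic upper bound.

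Second, on the relative boundary $\{\bm h\in\bm v_e+\mathcal H_A:J_\tau(\bm h,\bm h)=M|D_A|L_{\mathrm G,n}\}$, \eqref{eqn:gic_score_size} gives
\[
|G_n(\bm h,\tau)|\le C\sqrt{|D_A|L_{\mathrm G,n}}\,(M|D_A|L_{\mathrm G,n})^{1/2}=C\sqrt M\,|D_A|L_{\mathrm G,n},
\]
while the quadratic term equals $\tfrac M2|D_A|L_{\mathrm G,n}$.

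Choosing $M$ large, with $C$ independent of $M$ as noted in Lemma \ref{lem:gic_overfit_process_bounds}, and $\epsilon$ small, the boundary value strictly exceeds $M_A(\bm v_e,\tau)$. Convexity along rays from $\bm v_e$ then forces every minimizer into the interior. The region $\mathcal L_A(M,\tau)$ is a $J_\tau$-ellipsoid centred at $0$, not at $\bm v_e$. Its intersection with the affine space is still convex and contains $\bm v_e$, so the ray argument applies: any point outside has a boundary point on the segment toward $\bm v_e$ with a larger value than $\bm v_e$, and convexity would then give the outside point an even larger value.

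\emph{Main obstacle.} The delicate part is the uniformity. The events of Lemma \ref{lem:gic_overfit_process_bounds} hold simultaneously over all $A$ and $\tau$, so the comparison is carried out on their intersection. The one subtlety is that $M$ must be fixed before \eqref{eqn:gic_score_modulus} is invoked. That is consistent here: the constants in \eqref{eqn:gic_score_size} do not depend on $M$, while \eqref{eqn:gic_score_modulus} holds for every fixed $M$ and every $\epsilon$. So I pick $M$ from the $M$-free constants first, then choose $\epsilon$ relative to $M$.
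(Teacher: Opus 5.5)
Your proposal is correct and follows the paper's proof essentially step by step: Knight's identity, the uniform quadratic approximation on $\mathcal L_A(M,\tau)$ via \eqref{eqn:gic_score_size}--\eqref{eqn:gic_score_modulus}, an $M$-free bound $O(|D_A|L_{\mathrm G,n})$ on the criterion at the feasible anchor $\bm v_e(\tau)$, and the convexity argument along the segment toward $\bm v_e(\tau)$. The only difference is cosmetic: you keep $\epsilon$ free and treat the cubic remainder as $o(|D_A|L_{\mathrm G,n})$ on the boundary, whereas the paper takes $\epsilon=1$ and folds the cubic term into $2\sqrt{L_{\mathrm G,n}}\,J_\tau(\bm h,\bm h)^{1/2}$ using \eqref{eqn:gic_sharp_consequences}.
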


\begin{proof}
Knight's identity, in the form used in Lemma 7.1 of
\citet{zheng2015globally}, gives
\begin{align}
	M_A(\bm h,\tau)
	&=-G_n(\bm h,\tau)
	+\sum_{i=1}^n\int_0^{n^{-1/2}\xvec_i^T\bm h}
	\left[1\{Y_i-\xvec_i^T\bvec^*(\tau)\le u\}
	-1\{Y_i-\xvec_i^T\bvec^*(\tau)\le0\}\right]du
	\notag\\
	&=-G_n(\bm h,\tau)+\int_0^1Z_A(t\bm h,\tau;\bm h)\,dt.
	\label{eqn:gic_knight_identity}
\end{align}
For $\bm h\in\mathcal L_A(M,\tau)$,
\eqref{eqn:gic_score_mean_expansion} and
\eqref{eqn:gic_jacobian_equivalence} imply
\begin{align*}
	\left|\int_0^1\overline Z_A(t\bm h,\tau;\bm h)\,dt
	-\frac12J_\tau(\bm h,\bm h)\right|
	&\le C\frac{\xi_{\mathrm G,n}}{\sqrt n}\|\bm h\|_2^3\\
	&\le\frac{CM\xi_{\mathrm G,n}d_{\mathrm G,n}L_{\mathrm G,n}}{\sqrt n}
	J_\tau(\bm h,\bm h)^{1/2}.
\end{align*}
Together with \eqref{eqn:gic_score_modulus} for $\epsilon=1$, this proves
that, outside an event of outer probability tending to zero,
\begin{align}
	\left|M_A(\bm h,\tau)+G_n(\bm h,\tau)-\frac12J_\tau(\bm h,\bm h)\right|
	&\le\left[\sqrt{L_{\mathrm G,n}}
	+\frac{CM\xi_{\mathrm G,n}d_{\mathrm G,n}L_{\mathrm G,n}}{\sqrt n}\right]
	J_\tau(\bm h,\bm h)^{1/2}
	\notag\\
	&\le2\sqrt{L_{\mathrm G,n}}\,J_\tau(\bm h,\bm h)^{1/2}
	\label{eqn:gic_local_quadratic_error}
\end{align}
uniformly over $A,\tau$, and $\bm h\in\mathcal L_A(M,\tau)$, for all
sufficiently large $n$. The last inequality is
\eqref{eqn:gic_sharp_consequences} with fixed $M$.

Write $k=|D_A|$. Equations \eqref{eqn:gic_feasible_shift_bound} and
\eqref{eqn:gic_jacobian_equivalence} give a constant $C_v$ such that
\begin{equation}\label{eqn:gic_shift_quadratic_bound}
	J_\tau\{\bm v_e(\tau),\bm v_e(\tau)\}
	\le C_v\sup_{t\in\mc T}\|\bm v_e(t)\|_2^2,
	\qquad
	J_\tau\{\bm v_e(\tau),\bm v_e(\tau)\}\le C_vk.
\end{equation}
Increase $C_v$ if necessary. For any fixed $M>C_v$, the vector
$\bm v_e(\tau)$ belongs to $\mathcal L_A(M,\tau)$ for all sufficiently
large $n$. By \eqref{eqn:gic_score_size},
\eqref{eqn:gic_local_quadratic_error}, and
\eqref{eqn:gic_shift_quadratic_bound}, we get
\begin{align}
	M_A\{\bm v_e(\tau),\tau\}
	&\le C\sqrt{kL_{\mathrm G,n}}\sqrt{C_vk}
	+\frac{C_vk}{2}+2\sqrt{L_{\mathrm G,n}}\sqrt{C_vk}
	\notag\\
	&\le C_1kL_{\mathrm G,n}.
	\label{eqn:gic_feasible_criterion_bound}
\end{align}
The constant $C_1$ is independent of $M$. On the boundary
$J_\tau(\bm h,\bm h)=MkL_{\mathrm G,n}$,
\eqref{eqn:gic_score_size} and \eqref{eqn:gic_local_quadratic_error} give
\[
M_A(\bm h,\tau)
\ge\left\{\frac M2-C\sqrt M-2\sqrt M\right\}kL_{\mathrm G,n}.
\]
Choose fixed $M>C_v$ so that $M/2-(C+2)\sqrt M>C_1+1$.
This boundary value then exceeds
$M_A\{\bm v_e(\tau),\tau\}$ by \eqref{eqn:gic_feasible_criterion_bound}.
If a minimizer $\widehat{\bm h}$ were outside the ball, continuity would
give $t\in(0,1)$ such that
\[
\bm h_t=(1-t)\bm v_e(\tau)+t\widehat{\bm h},
\qquad J_\tau(\bm h_t,\bm h_t)=MkL_{\mathrm G,n}.
\]
The affine feasible set contains $\bm h_t$, and convexity gives
\[
M_A(\bm h_t,\tau)
\le(1-t)M_A\{\bm v_e(\tau),\tau\}+tM_A(\widehat{\bm h},\tau)
\le M_A\{\bm v_e(\tau),\tau\},
\]
contradicting the boundary inequality. All comparisons are simultaneous
over $A$ and $\tau$, which proves the lemma.
\end{proof}

\begin{theorem}\label{thm:gic_overfit}
Under the conditions of Lemma \ref{lem:gic_overfit_localization}, there is
$C<\infty$ such that
\begin{equation}\label{eqn:gic_overfit_loss_gain}
	P\left\{
	\sup_{A\in\mathcal O_n}\sup_{\tau\in\mc T}
	\frac{n\{\widehat L_{e,\Sset}(\tau)-\widehat L_{e,A}(\tau)\}}
	{(|A|-s)L_{\mathrm G,n}+\sup_{t\in\mc T}\|\bm v_e(t)\|_2^2}
	>C\right\}\to0.
\end{equation}
The numerator is nonnegative. If $e=a$, or if $e=0$ and
\eqref{eqn:gic_local_alt_rate} holds, then
\[
P\left\{\inf_{A\in\mathcal O_n}
[\operatorname{GIC}^{\circ}_e(A)-\operatorname{GIC}^{\circ}_e(\Sset)]
\le0\right\}\to0.
\]
\end{theorem}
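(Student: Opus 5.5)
The plan is to compare the two local minimization problems through the quadratic approximation of $M_A$ furnished by Lemmas \ref{lem:gic_overfit_process_bounds} and \ref{lem:gic_overfit_localization}. By \eqref{eqn:gic_local_loss_identity},
\[
n\{\widehat L_{e,\Sset}(\tau)-\widehat L_{e,A}(\tau)\}=M_{\Sset}\{\widehat{\bm h}_{\Sset}(\tau),\tau\}-M_A\{\widehat{\bm h}_A(\tau),\tau\}.
\]
This difference is nonnegative because $\mathcal H_{\Sset}\subset\mathcal H_A$, so the feasible set for $\Sset$ is contained in the one for $A$.

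First, work on the event where Lemma \ref{lem:gic_overfit_localization} holds and where \eqref{eqn:gic_score_size}, \eqref{eqn:gic_score_modulus} and \eqref{eqn:gic_extra_score} hold. On this event, \eqref{eqn:gic_local_quadratic_error} is too crude, since it only gives an error of order $\sqrt{L}J^{1/2}\sim kL$. I would therefore redo that bound with \eqref{eqn:gic_score_modulus} applied at an arbitrary $\epsilon$. The Taylor error, by \eqref{eqn:gic_sharp_consequences}, is $O(\xi d^{3/2}L/\sqrt n)\cdot\sqrt{kL}=o(\sqrt{L})\sqrt{kL}$. Combining the two gives, uniformly over $\bm h\in\mathcal L_A(M,\tau)$,
\[
M_A(\bm h,\tau)=-G_n(\bm h,\tau)+\tfrac12J_\tau(\bm h,\bm h)+o\bigl(\sqrt{L}\,\sqrt{kL}\bigr).
\]
This is still not enough, because $k=|D_A|$ can be much larger than $|A|-s$. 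The fix is to approximate only the difference.

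Second, lower-bound $M_{\Sset}(\widehat{\bm h}_{\Sset})-M_A(\widehat{\bm h}_A)$ with a competitor. Decompose $\widehat{\bm h}_A-\bm v_e=\bm u+\bm g$ with $\bm u\in\mathcal H_{\Sset}$ and $\bm g\in\mathcal K_A(\tau)$, using \eqref{eqn:gic_orthogonal_sum}. The point $\bm v_e+\bm u$ is feasible for $\Sset$, so
\[
n\{\widehat L_{e,\Sset}-\widehat L_{e,A}\}\le M_{\Sset}(\bm v_e+\bm u)-M_A(\widehat{\bm h}_A).
\]
The plan is to expand this difference using the convexity and Knight representation \eqref{eqn:gic_knight_identity} along the segment from $\widehat{\bm h}_A$ to $\widehat{\bm h}_A-\bm g$. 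The difference equals $G_n(\bm g,\tau)$, plus a curvature term, plus an integral of $Z_A(\cdot,\tau;\bm g)$ increments. Replace the $Z_A$ terms by $\overline Z_A$ and then by $J_\tau(\cdot,\bm g)$. The $J$-orthogonality of $\bm g$ to $\mathcal H_{\Sset}$ kills the cross term with $\bm u$, leaving the cross term with $\bm v_e$, which is at most $J(\bm v_e)^{1/2}J(\bm g)^{1/2}$. The main terms are then $|G_n(\bm g)|\lesssim\sqrt{(|A|-s)L}\,J(\bm g)^{1/2}$ by \eqref{eqn:gic_extra_score}, the cross term $\lesssim\|\bm v_e\|\,J(\bm g)^{1/2}$, and $-\tfrac12J(\bm g,\bm g)$. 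Maximizing over $J(\bm g)^{1/2}$ gives
\[
O\bigl((|A|-s)L+\sup_t\|\bm v_e(t)\|_2^2\bigr).
\]
The main obstacle is the remainder from $Z_A-\overline Z_A$ tested against $\bm g$. Bounding it by \eqref{eqn:gic_score_modulus} gives $\epsilon\sqrt{L}\,J(\bm g)^{1/2}$, which is absorbed into the quadratic by AM--GM as $\epsilon^2L$, and this is at most $(|A|-s)L$ since $|A|-s\ge1$. The Taylor error tested against $\bm g$ is $\xi\|\bm h\|^2\|\bm g\|/\sqrt n=o(\sqrt{L})J(\bm g)^{1/2}$ by \eqref{eqn:gic_sharp_consequences}, and is absorbed the same way. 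This proves \eqref{eqn:gic_overfit_loss_gain}.

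Finally, for the GIC comparison, the loss ratio for $A\in\mathcal O_n$ satisfies
\[
\log\frac{\widehat L_{e,\Sset}}{\widehat L_{e,A}}\le\frac{\widehat L_{e,\Sset}-\widehat L_{e,A}}{\widehat L_{e,A}}.
\]
The denominator is bounded below by $c_\rho/2$ uniformly with probability tending to one, by Lemma \ref{lem:gic_loss_scale} together with the curvature lower bound from Lemma \ref{lem:gic_empirical_curvature}, arguing as in \eqref{eqn:gic_underfit_true_loss_bounds}. Integrating over $\mc T$ gives
\[
\operatorname{GIC}^\circ_e(A)-\operatorname{GIC}^\circ_e(\Sset)\ge(|A|-s)\phi_n-\frac{C}{n}\bigl\{(|A|-s)L+\sup_t\|\bm v_e(t)\|_2^2\bigr\}.
\]
This is positive for all $A$ simultaneously because $L/n=o(\phi_n)$ by \eqref{eqn:gic_penalty_growth}. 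The term $\sup_t\|\bm v_e(t)\|_2^2/n$ is zero when $e=a$. When $e=0$ it is $o(\phi_n)$, since by \eqref{eqn:gic_feasible_shift_bound} it is at most $C\sup\|\bm\eta\|^2/n$, which is $o(\phi_n)$ by \eqref{eqn:gic_local_alt_rate}. In both cases it is dominated by $(|A|-s)\phi_n\ge\phi_n$.
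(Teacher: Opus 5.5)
Your proposal is correct and takes essentially the paper's route: the $J_\tau$-orthogonal split $\widehat{\bm h}_A(\tau)-\bm v_e(\tau)=\bm u+\bm g$ with $\bm g\in\mathcal K_A(\tau)$, the $\Sset$-feasible competitor $\bm v_e+\bm u$, the exact Knight identity along the segment combined with \eqref{eqn:gic_extra_score}, the $\bm v_e$ cross term and completing the square, and finally $\log x\le x-1$, $\widehat L_{e,A}\ge c_\rho/2$, \eqref{eqn:gic_penalty_growth} and \eqref{eqn:gic_local_alt_rate}. Your arbitrary-$\epsilon$ use of \eqref{eqn:gic_score_modulus}, versus the paper's $\epsilon=1$, is immaterial because $|A|-s\ge1$.

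Two steps should be made explicit. First, since $J_\tau(\bm u,\bm g)=0$, every point $\bm v_e+\bm u+t\bm g$ with $t\in[0,1]$ lies in $\mathcal L_A((\sqrt M+2\sqrt{C_v})^2,\tau)$; this is what licenses \eqref{eqn:gic_score_modulus} and your Taylor bound on the segment. Second, for $A\in\mathcal O_n$ the localization needed to invoke Lemma \ref{lem:gic_empirical_curvature} comes from Lemma \ref{lem:gic_overfit_localization}, not from Lemma \ref{lem:gic_localization} as in \eqref{eqn:gic_underfit_true_loss_bounds}. The paper avoids this by bounding $\widehat L_{e,A}$ below via nonnegativity of the Knight integrals and \eqref{eqn:gic_score_size}.
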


\begin{proof}
By \eqref{eqn:gic_aux_loss_full} and \eqref{eqn:gic_aux_loss_reduced},
$\widehat L_{e,A}(\tau)\le\widehat L_{e,\Sset}(\tau)$ for every
$A\in\mathcal O_n$. Apply Lemmas \ref{lem:gic_overfit_process_bounds} and
\ref{lem:gic_overfit_localization}. For each $A,\tau$, equation
\eqref{eqn:gic_orthogonal_sum} yields unique vectors $\bm u,\bm w$ such that
\[
\widehat{\bm h}_A(\tau)=\bm v_e(\tau)+\bm u+\bm w,
\quad\bm u\in\mathcal H_{\Sset},\quad\bm w\in\mathcal K_A(\tau),
\quad J_\tau(\bm u,\bm w)=0.
\]
Let $k=|D_A|$. The orthogonality and
\eqref{eqn:gic_shift_quadratic_bound} imply, for every $0\le t\le1$,
\begin{align*}
	J_\tau(\bm u+t\bm w,\bm u+t\bm w)^{1/2}
	&\le J_\tau(\bm u+\bm w,\bm u+\bm w)^{1/2}\\
	&\le J_\tau\{\widehat{\bm h}_A(\tau),\widehat{\bm h}_A(\tau)\}^{1/2}
	+J_\tau\{\bm v_e(\tau),\bm v_e(\tau)\}^{1/2},\\
	J_\tau\{\bm v_e(\tau)+\bm u+t\bm w,
	\bm v_e(\tau)+\bm u+t\bm w\}^{1/2}
	&\le(\sqrt M+2\sqrt{C_v})\sqrt{kL_{\mathrm G,n}}.
\end{align*}
Thus every $\bm v_e(\tau)+\bm u+t\bm w$ belongs to
$\mathcal L_A((\sqrt M+2\sqrt{C_v})^2,\tau)$; equation
\eqref{eqn:gic_score_modulus} applies on this whole segment.
Since $\bm v_e(\tau)+\bm u$ is feasible for $\Sset$, the exact
integral identity in \eqref{eqn:gic_knight_identity}, applied along this
segment, gives
\begin{align}
	0\le n\{\widehat L_{e,\Sset}(\tau)-\widehat L_{e,A}(\tau)\}
	&\le M_A\{\bm v_e(\tau)+\bm u,\tau\}
	-M_A\{\bm v_e(\tau)+\bm u+\bm w,\tau\}
	\notag\\
	&=G_n(\bm w,\tau)
	-\int_0^1Z_A\{\bm v_e(\tau)+\bm u+t\bm w,\tau;\bm w\}\,dt.
	\label{eqn:gic_affine_loss_difference}
\end{align}
Write $z=J_\tau(\bm w,\bm w)^{1/2}$. If $z=0$, then $\bm w=0$
by \eqref{eqn:gic_jacobian_equivalence}, and the loss difference is zero.
For $z>0$, equation \eqref{eqn:gic_extra_score} gives
\[
|G_n(\bm w,\tau)|\le C\sqrt{(|A|-s)L_{\mathrm G,n}}\,z.
\]
Moreover, we have
\begin{align}
	\int_0^1J_\tau\{\bm v_e(\tau)+\bm u+t\bm w,\bm w\}\,dt
	&=J_\tau\{\bm v_e(\tau),\bm w\}+\frac12z^2,
	\notag\\
	|J_\tau\{\bm v_e(\tau),\bm w\}|
	&\le\sqrt{C_v}\,\sup_{t\in\mc T}\|\bm v_e(t)\|_2\,z.
	\label{eqn:gic_affine_quadratic_difference}
\end{align}
Equations \eqref{eqn:gic_score_mean_expansion},
\eqref{eqn:gic_score_modulus}, and
\eqref{eqn:gic_jacobian_equivalence}, applied with $\bm g=\bm w/z$, imply
\begin{align}
	&\left|\int_0^1
	\left[Z_A\{\bm v_e(\tau)+\bm u+t\bm w,\tau;\bm w\}
	-J_\tau\{\bm v_e(\tau)+\bm u+t\bm w,\bm w\}\right]dt\right|
	\notag\\
	&\quad\le
	\left[\sqrt{L_{\mathrm G,n}}
	+C\frac{\xi_{\mathrm G,n}d_{\mathrm G,n}L_{\mathrm G,n}}{\sqrt n}\right]z
	\le2\sqrt{L_{\mathrm G,n}}\,z.
	\label{eqn:gic_affine_remainder}
\end{align}
The last inequality follows from \eqref{eqn:gic_sharp_consequences}.
Equations \eqref{eqn:gic_affine_loss_difference}--\eqref{eqn:gic_affine_remainder}
and $|A|-s\ge1$ give
\begin{align*}
	n\{\widehat L_{e,\Sset}(\tau)-\widehat L_{e,A}(\tau)\}
	&\le\left[C\sqrt{(|A|-s)L_{\mathrm G,n}}
	+C\sup_{t\in\mc T}\|\bm v_e(t)\|_2\right]z-\frac12z^2\\
	&\le\frac12\left[C\sqrt{(|A|-s)L_{\mathrm G,n}}
	+C\sup_{t\in\mc T}\|\bm v_e(t)\|_2\right]^2\\
	&\le C\left[(|A|-s)L_{\mathrm G,n}
	+\sup_{t\in\mc T}\|\bm v_e(t)\|_2^2\right].
\end{align*}
This proves \eqref{eqn:gic_overfit_loss_gain}.

Every summand integral in \eqref{eqn:gic_knight_identity} is nonnegative.
Equations \eqref{eqn:gic_score_size} and
\eqref{eqn:gic_local_loss_identity}, together with Lemma
\ref{lem:gic_overfit_localization}, therefore imply
\[
\widehat L_{e,A}(\tau)
\ge \mc Q_{\tau,n}\{\bvec^*(\tau)\}-\frac1n|G_n\{\widehat{\bm h}_A(\tau),\tau\}|
\ge \mc Q_{\tau,n}\{\bvec^*(\tau)\}-C\frac{d_{\mathrm G,n}L_{\mathrm G,n}}n.
\]
Since $\xi_{\mathrm G,n}^2\ge\kappa_-$,
\eqref{eqn:gic_sharp_growth} implies
$d_{\mathrm G,n}L_{\mathrm G,n}/n\to0$. Lemma \ref{lem:gic_loss_scale} gives
\begin{equation}\label{eqn:gic_overfit_positive_loss}
	P^*\left\{\inf_{A\in\mathcal O_n\cup\{\Sset\}}\inf_{\tau\in\mc T}
	\widehat L_{e,A}(\tau)<c_\rho/2\right\}\to0.
\end{equation}
On its complementary event, we get
\begin{align*}
	\int_{\mc T}\log\frac{\widehat L_{e,A}(\tau)}{\widehat L_{e,\Sset}(\tau)}\,d\tau
	&=-\int_{\mc T}\int_{\widehat L_{e,A}(\tau)}^{\widehat L_{e,\Sset}(\tau)}
	\frac{du}{u}\,d\tau\\
	&\ge-\frac2{c_\rho}\int_{\mc T}
	[\widehat L_{e,\Sset}(\tau)-\widehat L_{e,A}(\tau)]\,d\tau\\
	&\ge-\frac Cn\left[(|A|-s)L_{\mathrm G,n}
	+\sup_{t\in\mc T}\|\bm v_e(t)\|_2^2\right].
\end{align*}
It follows that
\begin{align*}
	\operatorname{GIC}^{\circ}_e(A)-\operatorname{GIC}^{\circ}_e(\Sset)
	&\ge(|A|-s)\phi_n
	-\frac Cn\left[(|A|-s)L_{\mathrm G,n}
	+\sup_{t\in\mc T}\|\bm v_e(t)\|_2^2\right]\\
	&\ge(|A|-s)\phi_n\left[
	1-C\frac{L_{\mathrm G,n}}{n\phi_n}
	-C\frac{\sup_{t\in\mc T}\|\bm v_e(t)\|_2^2}{n\phi_n}\right]>0
\end{align*}
for all sufficiently large $n$. Indeed, \eqref{eqn:gic_penalty_growth}
gives $L_{\mathrm G,n}/(n\phi_n)\to0$, while
\[
\frac{\sup_{t\in\mc T}\|\bm v_a(t)\|_2^2}{n\phi_n}=0,
\qquad
\frac{\sup_{t\in\mc T}\|\bm v_0(t)\|_2^2}{n\phi_n}
\le C\frac{\sup_{t\in\mc T}\|\bm\eta(t)\|_2^2}{n\phi_n}\to0
\]
by \eqref{eqn:gic_feasible_shift_bound} and
\eqref{eqn:gic_local_alt_rate}; under \eqref{hyp:linear hypothesis},
$\bm v_0$ is identically zero.
\end{proof}

\begin{theorem}\label{thm:gic_auxiliary_separation}
Under the conditions of Theorem \ref{thm:GIC.consistency},
\begin{equation}\label{eqn:gic_auxiliary_separation}
	P\left\{
	\inf_{\substack{A\subset\Mset^c\setminus\{0\},\ |A|\le K_{\mathrm G,n}\\A\ne\Sset}}
	[\operatorname{GIC}^{\circ}_e(A)-\operatorname{GIC}^{\circ}_e(\Sset)]\le0
	\right\}\to0.
\end{equation}
In addition,
\begin{equation}\label{eqn:gic_all_positive_loss}
	P\left\{
	\inf_{\substack{A\subset\Mset^c\setminus\{0\}\\|A|\le K_{\mathrm G,n}}}
	\inf_{\tau\in\mc T}\widehat L_{e,A}(\tau)<c_\rho/2\right\}\to0.
\end{equation}
\end{theorem}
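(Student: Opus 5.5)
The plan is to assemble the two already-proved comparison theorems. Every candidate $A\subset\Mset^c\setminus\{0\}$ with $|A|\le K_{\mathrm G,n}$ and $A\ne\Sset$ satisfies exactly one of two conditions. Either $\Sset\nsubseteq A$, so $A\in\mathcal U_n$, or $\Sset\subsetneq A$, so $A\in\mathcal O_n$. The event in \eqref{eqn:gic_auxiliary_separation} is therefore contained in the union of
\[
\Bigl\{\inf_{A\in\mathcal U_n}\operatorname{GIC}^{\circ}_e(A)\le\operatorname{GIC}^{\circ}_e(\Sset)\Bigr\}
\quad\text{and}\quad
\Bigl\{\inf_{A\in\mathcal O_n}[\operatorname{GIC}^{\circ}_e(A)-\operatorname{GIC}^{\circ}_e(\Sset)]\le0\Bigr\}.
\]
Theorem \ref{thm:gic_underfit} shows that the first event has vanishing probability under \ref{as:iid array}, \ref{as:density conditions}, and \ref{as:gic_design}--\ref{as:gic_underfit_rates}. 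Theorem \ref{thm:gic_overfit} handles the second: for $e=a$ it applies directly, and for $e=0$ it applies under either \eqref{hyp:linear hypothesis} or \ref{as:local alts} together with \eqref{eqn:gic_local_alt_rate}. All of these are hypotheses of Theorem \ref{thm:GIC.consistency}. A union bound then gives \eqref{eqn:gic_auxiliary_separation}. When $s=0$ the set $\mathcal U_n$ is empty and only the overfit part remains.

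For \eqref{eqn:gic_all_positive_loss} I would again split the candidates into $\mathcal O_n\cup\{\Sset\}$ and $\mathcal U_n$. On the first class the bound is exactly \eqref{eqn:gic_overfit_positive_loss}, which was established inside the proof of Theorem \ref{thm:gic_overfit}. For $A\in\mathcal U_n$ I would use the inclusion argument from Theorem \ref{thm:gic_underfit}. With $j_A=\min(\Sset\setminus A)$, we have $\widehat L_{e,A}(\tau)\ge\widehat L_{e,A^{+j_A}}(\tau)$, because dropping coordinates shrinks the feasible set. Since $A^{+j_A}\in\mathcal B_n$, it suffices to bound $\widehat L_{e,B}$ from below uniformly over $B\in\mathcal B_n$. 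Lemma \ref{lem:gic_localization} restricts each infimum to the ball $\|\bm b-\bvec^*(\tau)\|_2\le n$. On that ball, Lemma \ref{lem:gic_empirical_curvature} gives $\empmean\ell_{\tau,\bm b}\ge -C_3a_{\mathrm G,n}^2$ uniformly. Hence $\widehat L_{e,B}(\tau)\ge\mc Q_{\tau,n}\{\bvec^*(\tau)\}-C_3a_{\mathrm G,n}^2$ uniformly in $B$ and $\tau$. Lemma \ref{lem:gic_loss_scale} supplies $\inf_\tau\mc Q_{\tau,n}\{\bvec^*(\tau)\}\ge c_\rho$ with probability tending to one, and since $a_{\mathrm G,n}\to0$, the bound $c_\rho/2$ follows for large $n$. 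This is the same bound recorded in \eqref{eqn:gic_underfit_loss_interval}, but there it was stated on an $\varepsilon$-dependent event. Here I use only the lower half, which holds with probability tending to one and needs no $M_\varepsilon$.

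The only delicate point is the index set in \eqref{eqn:gic_all_positive_loss}. Lemmas \ref{lem:gic_empirical_process} and \ref{lem:gic_empirical_curvature} are stated over $\mathcal A_n$. I need to check that $\mathcal B_n\subset\mathcal A_n$: for $B=A\cup(\Sset\setminus\{j\})$ we have $|B\setminus\Sset|\le|A|\le K_{\mathrm G,n}$, so the inclusion holds. I also need the supports $D_B\cup\Sset$ to stay within size $d_{\mathrm G,n}$, which follows from the definition \eqref{eqn:gic_dimensions}. Once these bookkeeping checks are done, the combination of finitely many events, each with probability tending to one, completes the proof.
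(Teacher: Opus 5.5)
Your proposal is correct and follows the paper's proof essentially step for step. For \eqref{eqn:gic_auxiliary_separation} you split the candidates disjointly into $\mathcal U_n$ and $\mathcal O_n$ and invoke Theorems \ref{thm:gic_underfit} and \ref{thm:gic_overfit}. For \eqref{eqn:gic_all_positive_loss} you use \eqref{eqn:gic_overfit_positive_loss} on $\mathcal O_n\cup\{\Sset\}$. On $\mathcal U_n$ you use the inclusion $\widehat L_{e,A}\ge\widehat L_{e,A^{+j_A}}$ together with Lemmas \ref{lem:gic_localization}, \ref{lem:gic_empirical_curvature}, and \ref{lem:gic_loss_scale}. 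Your explicit checks that $\mathcal B_n\subset\mathcal A_n$ and that the supports stay within size $d_{\mathrm G,n}$ are bookkeeping the paper leaves implicit.
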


\begin{proof}
The candidates in \eqref{eqn:gic_auxiliary_separation} form the disjoint
union $\mathcal U_n\cup\mathcal O_n$. The probability in that equation is
therefore bounded by
\begin{align*}
	P\left\{\inf_{A\in\mathcal U_n}
	[\operatorname{GIC}^{\circ}_e(A)-\operatorname{GIC}^{\circ}_e(\Sset)]\le0\right\}+P\left\{\inf_{A\in\mathcal O_n}
	[\operatorname{GIC}^{\circ}_e(A)-\operatorname{GIC}^{\circ}_e(\Sset)]\le0\right\}\to0
\end{align*}
by Theorems \ref{thm:gic_underfit} and \ref{thm:gic_overfit}.
For $A\in\mathcal U_n$, choose $j_A=\min(\Sset\setminus A)$.
Then $A\subset A^{+j_A}$ and $A^{+j_A}\in\mathcal B_n$. Lemmas
\ref{lem:gic_localization} and \ref{lem:gic_empirical_curvature} give
\[
\widehat L_{e,A}(\tau)\ge\widehat L_{e,A^{+j_A}}(\tau)
\ge \mc Q_{\tau,n}\{\bvec^*(\tau)\}-C_3a_{\mathrm G,n}^2\ge c_\rho/2
\]
uniformly over $A\in\mathcal U_n$ and $\tau\in\mc T$, outside events of
probability tending to zero, by Lemma \ref{lem:gic_loss_scale} and
Condition \ref{as:gic_underfit_rates}. For
$A\in\mathcal O_n\cup\{\Sset\}$, use
\eqref{eqn:gic_overfit_positive_loss}. This proves
\eqref{eqn:gic_all_positive_loss}.
\end{proof}

\subsubsection{Selection of the tuning parameter}\label{app:gic_lla_bridge}

\begin{proof}[Proof of Theorem \ref{thm:GIC.consistency}]
We use the comparison of an estimated loss with the minimum on its
selected coordinates from the proof of Corollary 3.2 of
\citet{zheng2015globally}.
Fix $e\in\{0,a\}$ and $\lambda\in\mathcal P_{e,n}$ satisfying
$|\widehat{\Sset}_e(\lambda)|\le K_{\mathrm G,n}$.
Equation \eqref{eqn:gic_selected_support} and Algorithms
\ref{algo:LLA for full estimator}--\ref{algo:LLA for reduced estimator} give
\[
\widehat{\bvec}^{(2)}_{e,D_{\widehat{\Sset}_e(\lambda)}^c}(\tau;\lambda)=\bm0,
\qquad
\Cmat\widehat{\bvec}^{(2)}_{0,\Mset}(\tau;\lambda)=\bm\Psi(\tau).
\]
Thus \eqref{eqn:gic_aux_loss_full}--\eqref{eqn:gic_aux_loss_reduced} imply
\begin{equation}\label{eqn:gic_path_lower_bound}
	\mc Q_{\tau,n}\{\widehat{\bvec}_e^{(2)}(\tau;\lambda)\}
	\ge\widehat L_{e,\widehat{\Sset}_e(\lambda)}(\tau).
\end{equation}

On the event that all losses in \eqref{eqn:gic_all_positive_loss} are
at least $c_\rho/2$, equations \eqref{eqn:gic_lla},
\eqref{eqn:gic_auxiliary}, and \eqref{eqn:gic_path_lower_bound} give
\begin{equation}\label{eqn:gic_path_aux_lower}
	\operatorname{GIC}_{e;\mc T}(\lambda)
	\ge\operatorname{GIC}^{\circ}_e\{\widehat{\Sset}_e(\lambda)\}.
\end{equation}
Condition \ref{as:gic_oracle_path} gives, outside an event of outer
probability tending to zero,
\[
\widehat{\bvec}^{(2)}_e(\tau;\lambda_{e,n}^{\circ})
=\bvecOR_e(\tau)\quad(\tau\in\mc T),
\qquad
\widehat{\Sset}_e(\lambda_{e,n}^{\circ})=\Sset.
\]
Since $s\le K_{\mathrm G,n}$, this tuning parameter is admissible in
\eqref{eqn:gic_lambda_selector}. By \eqref{eqn:FE.oracle},
\eqref{eqn:RE.oracle}, \eqref{eqn:gic_aux_loss_full}, and
\eqref{eqn:gic_aux_loss_reduced}, we get
\[
\mc Q_{\tau,n}\{\bvecOR_e(\tau)\}=\widehat L_{e,\Sset}(\tau).
\]
Together with \eqref{eqn:gic_lla} and \eqref{eqn:gic_auxiliary}, this yields
\begin{equation}\label{eqn:gic_oracle_path_equality}
	\operatorname{GIC}_{e;\mc T}(\lambda_{e,n}^{\circ})
	=\operatorname{GIC}^{\circ}_e(\Sset).
\end{equation}

Outside the union of the events in
\eqref{eqn:gic_oracle_path_assumption},
\eqref{eqn:gic_auxiliary_separation}, and
\eqref{eqn:gic_all_positive_loss}, every admissible $\lambda$ with
$\widehat{\Sset}_e(\lambda)\ne\Sset$ satisfies
\[
\begin{aligned}
	\operatorname{GIC}_{e;\mc T}(\lambda)
	\ge\operatorname{GIC}^{\circ}_e\{\widehat{\Sset}_e(\lambda)\}>\operatorname{GIC}^{\circ}_e(\Sset)
	=\operatorname{GIC}_{e;\mc T}(\lambda_{e,n}^{\circ}),
\end{aligned}
\]
by \eqref{eqn:gic_path_aux_lower} and
\eqref{eqn:gic_oracle_path_equality}. The finite minimization set is
nonempty because it contains $\lambda_{e,n}^{\circ}$. Hence
\[
\argmin_{\lambda\in\mathcal P_{e,n}:\,
	|\widehat{\Sset}_e(\lambda)|\le K_{\mathrm G,n}}
\operatorname{GIC}_{e;\mc T}(\lambda)
\subset
\{\lambda\in\mathcal P_{e,n}:\widehat{\Sset}_e(\lambda)=\Sset\}.
\]
Outer subadditivity and the probability bounds in
\eqref{eqn:gic_oracle_path_assumption},
\eqref{eqn:gic_auxiliary_separation}, and
\eqref{eqn:gic_all_positive_loss} prove
\eqref{eqn:gic_selection_consistency}.
\end{proof}

\subsection{Power comparison}\label{sec:power_comparison_details}

We make explicit the homoskedasticity required for the comparison in
Remark \ref{rem:Power_comparison}. Under this error structure, recall that the model has a common
regression error independent of $\xvec_i$, with distribution function
$F$ and density $f$.  Note that
\[
f_{y\mid\xvec}
\{Q_y(\tau\mid\xvec)\mid\xvec\}
=
f\{F^{-1}(\tau)\},
\qquad
\bm J(\tau)
=
f\{F^{-1}(\tau)\}\bm\Sigma.
\]	
Let \(\Sigma_X=\E(\xvec_i\xvec_i^T)\) denote the full Gram matrix, and let $\bm e_j$ denote the standard basis
vector corresponding to the tested coordinate in $\Sigma_X$ and in its
principal submatrix $\bm\Sigma$. 
Lemma S.1 of \citet{shi2019linear} gives
\begin{equation}\label{eqn:information.comparison}
	\bm e_j^T\bm\Sigma^{-1}\bm e_j
	\leq
	\bm e_j^T\Sigma_X^{-1}\bm e_j.
\end{equation}

\subsubsection{Pointwise local alternatives}

Fix $\tau\in\mc T$ and consider
\[
\beta_j^*(\tau)=n^{-1/2}h(\tau),
\qquad
h(\tau)\ne0.
\]
By the results above, the Bahadur representation
\eqref{eqn:FE.part pen bahadur}, together with Theorem 4.2 and
Lemma 7 of \citet{bradic2017uniform}, gives the respective pointwise
asymptotic variances
\[
\frac{\tau(1-\tau)}{[f\{F^{-1}(\tau)\}]^2}
\bm e_j^T\bm\Sigma^{-1}\bm e_j
\quad\text{and}\quad
\frac{\tau(1-\tau)}{[f\{F^{-1}(\tau)\}]^2}
\bm e_j^T\Sigma_X^{-1}\bm e_j.
\]
Consequently, the noncentrality parameters of the two pointwise Wald
limits are
\begin{equation}\label{eqn:gamma-ours-bk}
	\begin{split}
	\gamma_{\mathrm{ours}}(\tau)
	&=
	\frac{
		h^2(\tau)[f\{F^{-1}(\tau)\}]^2
	}{
		\tau(1-\tau)
		\bm e_j^T\bm\Sigma^{-1}\bm e_j
	},
	\notag\\
	\gamma_{\mathrm{BK}}(\tau)
	&=
	\frac{
		h^2(\tau)[f\{F^{-1}(\tau)\}]^2
	}{
		\tau(1-\tau)
		\bm e_j^T\Sigma_X^{-1}\bm e_j
	}.
\end{split}
\end{equation}
The display \eqref{eqn:information.comparison} implies
$\gamma_{\mathrm{ours}}(\tau)\geq\gamma_{\mathrm{BK}}(\tau)$.
By Theorem \ref{thm:part pen bessel},
$\gamma_{\mathrm{ours}}(\tau)$ is the common pointwise noncentrality
parameter of our Wald and rank score statistics.  Since the upper-tail
probability of a noncentral $\chi_1^2$ variable is increasing in its
noncentrality parameter, our pointwise asymptotic local power is no
smaller than that of \citet{bradic2017uniform}.

For the pointwise procedure of \citet{yan2023confidence}, let
$\varepsilon_i(\tau)=y_i-\xvec_i^T\bvec^*(\tau)$ and write their
population Hessian as
\[
\widetilde{\bm J}_{h_n}(\tau)
=
\E\left[
K_{h_n}\{-\varepsilon_i(\tau)\}
\xvec_i\xvec_i^T
\right].
\]
Under the same homoskedasticity and the conditions of Theorems 3--5
of \citet{yan2023confidence},
\[
\widetilde{\bm J}_{h_n}(\tau)
=
\E[K_{h_n}\{-\varepsilon_i(\tau)\}]\Sigma_X,
\qquad
\E[K_{h_n}\{-\varepsilon_i(\tau)\}]
\longrightarrow
f\{F^{-1}(\tau)\}.
\]
Consequently, the population counterpart of the studentizing variance
in equation (19) of \citet{yan2023confidence} satisfies
\begin{align*}
	\tau(1-\tau)\bm e_j^T
	\widetilde{\bm J}_{h_n}^{-1}(\tau)
	\Sigma_X
	\widetilde{\bm J}_{h_n}^{-1}(\tau)\bm e_j\longrightarrow
	\frac{\tau(1-\tau)}{[f\{F^{-1}(\tau)\}]^2}
	\bm e_j^T\Sigma_X^{-1}\bm e_j.
\end{align*}
Theorems 4--5 of \citet{yan2023confidence} therefore give the same
pointwise limiting variance, and hence the same limiting noncentrality
parameter, as the procedure of \citet{bradic2017uniform}.  Now, equation
\eqref{eqn:information.comparison} applies again.

\subsubsection{Global local alternatives}

We first remark that \citet{bradic2017uniform} formulate and prove their results under the additive location-shift model (see the display (1.1) therein), for which only the intercept component of \(\beta^*(\tau)\) varies with \(\tau\).
Now, consider a global local alternative
\[
\beta_j^*(\tau)=n^{-1/2}h,
\qquad
\tau\in\mc T.
\]
Theorem 4.2 of
\citet{bradic2017uniform} gives their uniform Bahadur representation,
and Theorem 5.1 gives their null Brownian-bridge limit.  Together with
Theorem \ref{thm:part pen bessel}, these results show that the two
asymptotic target laws may be represented using the same standard
Brownian bridge $\bm B_1$, with respective deterministic drift terms
\begin{align*}
	\nu_{\mathrm{ours}}(\tau)
	=
	\frac{
		hf\{F^{-1}(\tau)\}
	}{
		(\bm e_j^T\bm\Sigma^{-1}\bm e_j)^{1/2}
	}, \quad
	\nu_{\mathrm{BK}}(\tau)
	=
	\frac{
		hf\{F^{-1}(\tau)\}
	}{
		(\bm e_j^T\Sigma_X^{-1}\bm e_j)^{1/2}
	}.
\end{align*}
Again, we get
\begin{equation}\label{eqn:mu-scaling}
	\nu_{\mathrm{ours}}(\tau)
	=
	\left\{
	\frac{
		\bm e_j^T\Sigma_X^{-1}\bm e_j
	}{
		\bm e_j^T\bm\Sigma^{-1}\bm e_j
	}
	\right\}^{1/2}
	\nu_{\mathrm{BK}}(\tau),
\end{equation}
where the multiplier is independent of $\tau$ and is at least one by
\eqref{eqn:information.comparison}.

Let $c_\alpha$ be the upper-$\alpha$ critical value of
\[
\sup_{\tau\in\mc T}
\frac{|\bm B_1(\tau)|^2}{\tau(1-\tau)},
\]
and define
\[
\mathcal A_\alpha
=
\left\{
g\in C(\mc T):
\sup_{\tau\in\mc T}
\frac{|g(\tau)|^2}{\tau(1-\tau)}
\leq c_\alpha
\right\}.
\]
The set $\mathcal A_\alpha$ is closed, convex, and symmetric.
Anderson's inequality of \citet{anderson1955integral} for centered
Gaussian measures, also discussed by \citet{lifshits2012lectures},
implies that, for every fixed $\nu\in C(\mc T)$, the map
\[
t\longmapsto
\Pr\{\bm B_1+t\nu\in\mathcal A_\alpha\}
\]
is non-increasing on $[0,\infty)$.  Applying this result with
$\nu=\nu_{\mathrm{BK}}$ and the multiplier in
\eqref{eqn:mu-scaling} gives
\begin{align*}
	\Pr\left\{
	\bm B_1+\nu_{\mathrm{ours}}
	\in\mathcal A_\alpha
	\right\}\leq
	\Pr\left\{
	\bm B_1+\nu_{\mathrm{BK}}
	\in\mathcal A_\alpha
	\right\}.
\end{align*}
Taking complements proves that our global asymptotic local power is
no smaller than that of \citet{bradic2017uniform}.

	\subsection{Grid interpolation for the reduced two-step LLA process}\label{app:grid_interpolation}

We first derive the required path bound from
\eqref{eqn:conditional quantile linear expr} and Conditions
\ref{as:density conditions}(ii)--(iii) and
\ref{as:Gram lower bound}.

\begin{lemma}\label{lem:grid_truth_modulus}
	Under \eqref{eqn:conditional quantile linear expr} and Conditions
	\ref{as:density conditions}(ii)--(iii) and
	\ref{as:Gram lower bound}, there is a finite constant $C>0$,
	independent of $n$, such that
	\[
	\left\|
	\bvec^*(\tau)-\bvec^*(t)
	\right\|_2
	\leq
	C|\tau-t|,
	\qquad
	\tau,t\in\mc T.
	\]
\end{lemma}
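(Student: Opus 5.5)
We want $\|\bvec^*(\tau)-\bvec^*(t)\|_2\le C|\tau-t|$ with $C$ independent of $n$. The plan is to combine the almost-sure linear-predictor bound of Lemma \ref{lem:gic_quantile_path} with the eigenvalue lower bound in \ref{as:Gram lower bound}, converting a sup-norm bound on $\xvec^T\{\bvec^*(\tau)-\bvec^*(t)\}$ into a Euclidean bound on the coefficient difference. The argument is written for the relevant coordinates $\MSset$, which are the ones controlled by \ref{as:Gram lower bound} and used in Theorem \ref{thm:grid_interpolation}. Outside $\mc A_{\mc T}$ the coefficients vanish identically, so the difference is supported in $\MSset$, and $\xvec^T\bm\delta=\tilde\xvec^T\bm\delta_{\MSset}$.

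\textbf{Step 1.} Show that for $\tau<t$ in $\mc T$ the quantile process satisfies, almost surely in $\xvec$,
\[
t-\tau=\int_{\xvec^T\bvec^*(\tau)}^{\xvec^T\bvec^*(t)} f_{y|\xvec}(y|\xvec)\,dy .
\]
This needs a lower bound $f_-$ on the density along the whole segment, not only at $Q_y(\tau|\xvec)$. The segment consists of points $Q_y(u|\xvec)$ with $u\in[\tau,t]\subset\overline{\mc T}$, and \ref{as:density conditions}(ii) gives $f_{y|\xvec}\ge f_-$ at every such point. Hence
\[
|\xvec^T\bm\delta|\le f_-^{-1}|\tau-t|\quad\text{a.s.},\qquad \bm\delta=\bvec^*(t)-\bvec^*(\tau).
\]
This is exactly the argument of Lemma \ref{lem:gic_quantile_path}, which the proof may cite. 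The standing assumptions \ref{as:density conditions}(iii) ensure that $Q_y(\cdot|\xvec)$ is differentiable, with $\partial_u Q=1/f_{y|\xvec}(Q)$, so the change of variables is justified.

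\textbf{Step 2.} Square and take expectations:
\[
\bm\delta_{\MSset}^T\bm\Sigma\,\bm\delta_{\MSset}=\E[(\tilde\xvec^T\bm\delta_{\MSset})^2]\le f_-^{-2}|\tau-t|^2 .
\]
By \ref{as:Gram lower bound}, $\lambda_{\min}(\bm\Sigma)\ge c>0$ uniformly in $n$. Therefore $\|\bm\delta\|_2\le (c^{1/2}f_-)^{-1}|\tau-t|$, which is the claim with $C=(c^{1/2}f_-)^{-1}$.

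\textbf{Main obstacle.} The delicate point is the support reduction. One must ensure that the coordinates on which $\bvec^*(\tau)$ and $\bvec^*(t)$ differ lie in a set whose Gram matrix has eigenvalues bounded below; this holds because both vectors vanish off $\mc A_{\mc T}\subseteq\MSset$. One must also make sure the almost-sure bound is used on the support of $\xvec$, so that the $L_2(P)$ step is legitimate. Everything else reduces to the one-line integral identity of Step 1.
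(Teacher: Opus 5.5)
Your proposal is correct and is essentially the paper's proof. The paper obtains the same almost-sure bound $|Q_y(\tau\mid\xvec)-Q_y(t\mid\xvec)|\lesssim|\tau-t|$ from the inverse-CDF argument of Lemma 19 of \citet{belloni_conditional_2019}, which is the same computation as Lemma \ref{lem:gic_quantile_path}; it then squares, takes expectations, and applies the eigenvalue lower bound in \ref{as:Gram lower bound} after noting that the coefficients vanish outside $\{0\}\cup\Mset\cup\Sset$. You also check that the density lower bound holds along the whole segment, using continuity of $f_{y\mid\xvec}$ from \ref{as:density conditions}(iii), which is a detail the paper leaves implicit.
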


\begin{proof}
	The inverse-conditional-CDF argument in the proof of Lemma 19 of
	\citet{belloni_conditional_2019} gives
	\[
	\left|
	Q_y(\tau\mid\xvec)-Q_y(t\mid\xvec)
	\right|
	\lesssim
	|\tau-t|
	\]
	almost surely, uniformly over $\tau,t\in\mc T$ and $n$.  
	By the definition of $\Sset$,
	$\bvec^*_{(\{0\}\cup\Mset\cup\Sset)^c}(u)=\bm0$ for every
	$u\in\mc T$.  Hence
	\begin{align*}
		&
		\left[
		\bvec^*_{\{0\}\cup\Mset\cup\Sset}(\tau)
		-
		\bvec^*_{\{0\}\cup\Mset\cup\Sset}(t)
		\right]^T
		\bm\Sigma
		\left[
		\bvec^*_{\{0\}\cup\Mset\cup\Sset}(\tau)
		-
		\bvec^*_{\{0\}\cup\Mset\cup\Sset}(t)
		\right]\\
		&\qquad=
		\E\left[
		\left\{
		Q_y(\tau\mid\xvec)-Q_y(t\mid\xvec)
		\right\}^2
		\right]
		\lesssim
		|\tau-t|^2.
	\end{align*}
	Condition \ref{as:Gram lower bound} and the zero coefficients
	outside $\{0\}\cup\Mset\cup\Sset$ prove the assertion.
\end{proof}

\begin{lemma}\label{lem:grid_point_rate}
	Suppose Conditions
	\ref{as:iid array}--\ref{as:new_re} and
	\ref{as:l1re}--\ref{as:s_size} hold, together with
	\[
	(m+s)\xi_n^2\log^2n=o(n),
	\qquad
	m-r+s\geq1,
	\qquad
	\sqrt n\,\Delta_n=O(1).
	\]
	Then
	\[
	\Pr\left\{
	\max_{t\in\mathcal G_n}
	\left\|
	\widehat{\bvec}_{0,\mathcal G_n}^{(2)}(t)
	-
	\widehat{\bvec}_0^o(t)
	\right\|_2
	=0
	\right\}
	\longrightarrow1,
	\]
	and
	\[
	\max_{t\in\mathcal G_n}
	\left\|
	\widehat{\bvec}_{0,\mathcal G_n}^{(2)}(t)
	-
	\bvec^*(t)
	\right\|_2
	=
	O_p\left(
	\sqrt{\frac{m-r+s}{n}}
	\right).
	\]
\end{lemma}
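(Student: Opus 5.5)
The plan is to treat the grid $\mathcal G_n$ as a (finite) choice of quantile index set and rerun the strong oracle argument of Theorems \ref{thm:global33mod}, \ref{thm:global33modStep2} and \ref{thm:strong oracle} with $\mc T$ replaced by $\mathcal G_n$. Those deterministic theorems never use that $\mc T$ is an interval; they only use Assumptions \ref{as:strength of signal} and \ref{as:oracleDefined} and the events $\mc E_{01},\mc E_{02},\mc E_{03}$ with the suprema taken over the index set. So the first step is to define grid versions of these events, with $\sup_{\tau\in\mc T}$ replaced by $\max_{t\in\mathcal G_n}$, and observe that the proofs of Theorems \ref{thm:global33mod}--\ref{thm:global33modStep2} go through verbatim. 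The adaptive weights used in the grid algorithm are precisely $p_\lambda'(\max_{u\in\mathcal G_n}|\cdot|)$. On the intersection of these events, the two-step grid estimator therefore equals $\widehat{\bvec}_0^o(t)$ at every knot.

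The second step verifies that the grid events hold with probability tending to one. The events $\mc E_{01}$ and $\mc E_{02}$ are immediate from Lemmas \ref{lemma:global33modAssumptions} and \ref{lemma:global33modAssumptions2}, since a maximum over $\mathcal G_n\subset\mc T$ is dominated by the supremum over $\mc T$. The delicate event is $\mc E_{03}$, because $\max_{t\in\mathcal G_n}$ can be smaller than $\sup_{\tau\in\mc T}$.

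I would handle $\mc E_{03}$ as follows. For each $j\in\Sset$, pick $\tau_j$ that nearly attains $\sup_\tau|\beta_j^*(\tau)|>(d+1)\lambda$, and take the nearest knot $t_j$, so that $|t_j-\tau_j|\le\Delta_n$. Lemma \ref{lem:grid_truth_modulus} gives $|\beta^*_j(t_j)|\ge(d+1)\lambda-C\Delta_n$, and $C\Delta_n=O(n^{-1/2})=o(\lambda)$ by \ref{as:lambda}. Theorem \ref{thm:oracle rate} then gives $|\hat\beta^o_{0,j}(t_j)|\ge(d+1)\lambda-o(\lambda)-O_p(\sqrt{(m-r+s)/n})>d\lambda$ with probability tending to one. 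Condition \ref{as:bvec} could replace Lemma \ref{lem:grid_truth_modulus} here, but it only yields $\sqrt{m+s}\,\Delta_n$. That is still $o(\lambda)$ under \ref{as:lambda}, so it serves as a fallback.

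Combining the three events gives the first display of the lemma. The second display then follows exactly as in the proof of Theorem \ref{thm:part pen rates}. On the recovery event, $\max_{t\in\mathcal G_n}\|\widehat{\bvec}^{(2)}_{0,\mathcal G_n}(t)-\bvec^*(t)\|_2\le\sup_{\tau\in\mc T}\|\widehat{\bvec}_0^o(\tau)-\bvec^*(\tau)\|_2$, which is $O_p(\sqrt{(m-r+s)/n})$ by \eqref{eqn:RE.oracle rate}. The hypothesis $m-r+s\ge1$ only ensures this rate is nondegenerate.

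The main obstacle is the $\mc E_{03}$ step: transferring the sup-type signal condition from $\mc T$ to the grid. This is exactly where $\sqrt n\Delta_n=O(1)$ and the Lipschitz bound on the truth enter.
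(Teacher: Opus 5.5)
Your overall route matches the paper's: rerun both LLA steps knot by knot, and transfer the sup-type signal condition to the grid using the Lipschitz bound and $\Delta_n=O(n^{-1/2})=o(\lambda)$. The first LLA step, however, has a gap. You claim that Theorem~\ref{thm:global33mod} goes through verbatim with $\mc T$ replaced by $\mathcal G_n$. You also claim the grid version of $\mc E_{01}$ is immediate because a maximum over $\mathcal G_n$ is dominated by the supremum over $\mc T$. That domination does give the error bound in $\mc E_{01}$.

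But the proof of Theorem~\ref{thm:global33mod} also invokes \ref{as:strength of signal} over the same index set. It needs this to show $\inf_{j\in\Sset}\max_{t\in\mathcal G_n}|\hat\beta^{(0)}_{0,j}(t)|>d\lambda$, i.e.\ that the first grid weights $p_\lambda'(\max_{u\in\mathcal G_n}|\hat\beta^{(0)}_{0,j}(u)|)$ vanish on $\Sset$. The grid analogue $\inf_{j\in\Sset}\max_{t\in\mathcal G_n}|\beta_j^*(t)|>(d+1)\lambda$ does not follow from \ref{as:strength of signal}; you only get $(d+1)\lambda-C\Delta_n$. Combined with the accuracy $d_0\lambda$ in $\mc E_{01}$, this yields only $\max_{t}|\hat\beta^{(0)}_{0,j}(t)|>(d+1-d_0)\lambda-C\Delta_n$. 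For $d_0=1$ (SCAD and MCP both have $d_2=1$) this is $d\lambda-C\Delta_n$, which does not force the weight to zero. So the sup-to-grid transfer you single out as the delicate point for $\mc E_{03}$ is needed already in the first iteration.

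The repair is what the paper does. Use the sharper bound $\sup_{\tau\in\mc T}\|\hat\bvec^{(0)}_{0,\Mset^c}(\tau)-\bvec^*_{\Mset^c}(\tau)\|_\infty=o_p(\lambda)$, which the proof of Lemma~\ref{lemma:global33modAssumptions} actually establishes, rather than only the $d_0\lambda$ event. Then apply your nearest-knot argument to the initial estimator, giving $|\hat\beta^{(0)}_{0,j}(t_j)|\ge(d+1)\lambda-C\Delta_n-o_p(\lambda)>d\lambda$ with probability tending to one.

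The remaining pieces coincide with the paper's argument:
\begin{itemize}
\item the bound $\max_{t}|\hat\beta^{(0)}_{0,j}(t)|\le d_2\lambda$ for $j\in\MSset^c$;
\item the event $\mc E_{02}$;
\item your handling of $\mc E_{03}$, including the valid fallback via \ref{as:bvec};
\item the rate display.
\end{itemize}
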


\begin{proof}
Recall that we already showed
	\[
	\sup_{\tau\in\mc T}
	\left\|
	\widehat{\bvec}_0^{(0)}(\tau)-\bvec^*(\tau)
	\right\|_\infty
	=
	o_p(\lambda).
	\]
	The mesh condition and Condition \ref{as:lambda} imply
	\[
	\Delta_n=O(n^{-1/2})=o(\lambda),
	\qquad
	\sqrt{\frac{m-r+s}{n}}=o(\lambda).
	\]
	
	If $\Sset\neq\varnothing$, Lemma
	\ref{lem:grid_truth_modulus} and Condition
	\ref{as:strength of signal} yield
	\begin{align*}
		\inf_{j\in\Sset}
		\max_{t\in\mathcal G_n}
		\left|
		\widehat\beta_{0,\mathcal G_n,j}^{(0)}(t)
		\right|
		&\geq
		\inf_{j\in\Sset}
		\sup_{\tau\in\mc T}
		|\beta_j^*(\tau)|
		-
		C\Delta_n
		-
		\sup_{\tau\in\mc T}
		\left\|
		\widehat{\bvec}_0^{(0)}(\tau)-\bvec^*(\tau)
		\right\|_\infty\\
		&>
		d\lambda
	\end{align*}
	with probability tending to one.  For every
	$j\in(\{0\}\cup\Mset\cup\Sset)^c$,
	\[
	\max_{t\in\mathcal G_n}
	\left|
	\widehat\beta_{0,\mathcal G_n,j}^{(0)}(t)
	\right|
	\leq
	\sup_{\tau\in\mc T}
	\left\|
	\widehat{\bvec}_0^{(0)}(\tau)-\bvec^*(\tau)
	\right\|_\infty
	=
	o_p(\lambda)
	<
	d_2\lambda
	\]
	with probability tending to one.  Thus Conditions
	\ref{as:FCP ii}--\ref{as:FCP iv} imply that the first grid LLA
	weights are zero on $\Sset$ and at least $d_1\lambda$ on
	$(\{0\}\cup\Mset\cup\Sset)^c$. 
	
	The event in \eqref{eqn:thm33as2} holds uniformly over $\mc T$
	and hence over $\mathcal G_n\subset\mc T$.  The deterministic
	subgradient comparison in the proof of Theorem
	\ref{thm:global33mod}, applied at each $t\in\mathcal G_n$ with
	the preceding grid weights, therefore gives
	\[
	\max_{t\in\mathcal G_n}
	\left\|
	\widehat{\bvec}_{0,\mathcal G_n}^{(1)}(t)
	-
	\widehat{\bvec}_0^o(t)
	\right\|_2
	=0
	\]
	with probability tending to one.
	
	By Theorem \ref{thm:oracle rate},
	we obtain	\[
	\sup_{\tau\in\mc T}
	\left\|
	\widehat{\bvec}_0^o(\tau)-\bvec^*(\tau)
	\right\|_2
	=
	O_p\left(
	\sqrt{\frac{m-r+s}{n}}
	\right)
	=
	o_p(\lambda).
	\]
	Now, Lemma
	\ref{lem:grid_truth_modulus} and Condition
	\ref{as:strength of signal} consequently give
	\begin{align*}
		\inf_{j\in\Sset}
		\max_{t\in\mathcal G_n}
		\left|
		\widehat\beta_{0,j}^o(t)
		\right|
		&\geq
		\inf_{j\in\Sset}
		\sup_{\tau\in\mc T}
		|\beta_j^*(\tau)|
		-
		C\Delta_n
		-
		\sup_{\tau\in\mc T}
		\left\|
		\widehat{\bvec}_0^o(\tau)-\bvec^*(\tau)
		\right\|_2\\
		&>
		d\lambda
	\end{align*}
	with probability tending to one.  Moreover,
	$\widehat\beta_{0,j}^o(t)=0$ for
	$j\in(\{0\}\cup\Mset\cup\Sset)^c$.  Conditions
	\ref{as:FCP ii}--\ref{as:FCP iv} therefore classify the second
	grid LLA weights in the same manner.  Applying the deterministic
	subgradient comparison in the proof of Theorem
	\ref{thm:global33modStep2} at every grid point proves
	\[
	\Pr\left\{
	\max_{t\in\mathcal G_n}
	\left\|
	\widehat{\bvec}_{0,\mathcal G_n}^{(2)}(t)
	-
	\widehat{\bvec}_0^o(t)
	\right\|_2
	=0
	\right\}
	\longrightarrow1.
	\]
	On this event,
	\[
	\max_{t\in\mathcal G_n}
	\left\|
	\widehat{\bvec}_{0,\mathcal G_n}^{(2)}(t)
	-
	\bvec^*(t)
	\right\|_2
	\leq
	\sup_{\tau\in\mc T}
	\left\|
	\widehat{\bvec}_0^o(\tau)-\bvec^*(\tau)
	\right\|_2,
	\]
	and the claimed convergence rate follows from Theorem
	\ref{thm:oracle rate}.
\end{proof}

\begin{proof}[Proof of Theorem \ref{thm:grid_interpolation}]
	First, it is obvious that $\widehat{\bvec}_{0,\mathcal G_n}^{(2)}(\cdot)$ is c\`adl\`ag.
	Next, we prove the other assertions.
	Fix $\tau\in\mc T$ and let $t=\pi_n(\tau)$.  Since the knot
	estimate is feasible, we have $\Cmat
	\widehat{\bvec}_{0,\mathcal G_n,\Mset}^{(2)}(t)
	=
	\bm\Psi(t)$.
	Equation \eqref{eqn:grid_interpolated_estimator} therefore gives
	\begin{align*}
		\Cmat
		\widehat{\bvec}_{0,\mathcal G_n,\Mset}^{(2)}(\tau)
		&=
		\bm\Psi(t)
		+
		\Cmat\Cmat^T(\Cmat\Cmat^T)^{-1}
		\{\bm\Psi(\tau)-\bm\Psi(t)\}\\
		&=
		\bm\Psi(\tau).
	\end{align*}
	Condition \ref{as:local alts} gives
	\[
	\bm\Psi(\tau)-\bm\Psi(t)
	=
	\Cmat
	\{\bvec_{\Mset}^*(\tau)-\bvec_{\Mset}^*(t)\}
	-
	\frac{\bm\eta(\tau)-\bm\eta(t)}{\sqrt n}.
	\]
	Note that the matrix $\Cmat^T(\Cmat\Cmat^T)^{-1}\Cmat$
	is the orthogonal projection operator onto the row space of
	$\Cmat$.  Hence, we get
	\begin{align*}
		\left\|
		\left\{
		\bm I_m-
		\Cmat^T(\Cmat\Cmat^T)^{-1}\Cmat
		\right\}
		\{\bvec_{\Mset}^*(t)-\bvec_{\Mset}^*(\tau)\}
		\right\|_2^2+
		\left\|
		\bvec_{\Mset^c}^*(t)-\bvec_{\Mset^c}^*(\tau)
		\right\|_2^2
		\leq
		\left\|
		\bvec^*(t)-\bvec^*(\tau)
		\right\|_2^2.
	\end{align*}
	Subtracting $\bvec^*(\tau)$ from
	\eqref{eqn:grid_interpolated_estimator}, adding and subtracting
	$\bvec^*(t)$, and applying the preceding projection inequality
	yield
	\begin{align*}
		\left\|
		\widehat{\bvec}_{0,\mathcal G_n}^{(2)}(\tau)
		-
		\bvec^*(\tau)
		\right\|_2
		&\leq
		\left\|
		\widehat{\bvec}_{0,\mathcal G_n}^{(2)}(t)
		-
		\bvec^*(t)
		\right\|_2
		+
		\left\|
		\bvec^*(t)-\bvec^*(\tau)
		\right\|_2\\
		&\qquad+
		\left\|
		\Cmat^T(\Cmat\Cmat^T)^{-1}
		\right\|_{\mathrm{op}}
		\frac{
			\|\bm\eta(\tau)-\bm\eta(t)\|_2
		}{\sqrt n}.
	\end{align*}
	Condition \ref{as:local alts} also provides
	\[
	\left\|
	\Cmat^T(\Cmat\Cmat^T)^{-1}
	\right\|_{\mathrm{op}}^2
	=
	\lambda_{\max}
	\{(\Cmat\Cmat^T)^{-1}\}
	=
	O(1)
	\]
	and
	\[
	\frac{
		\|\bm\eta(\tau)-\bm\eta(t)\|_2
	}{\sqrt n}
	\lesssim
	\sqrt{m-r+s}\,|\tau-t|.
	\]
	Together with Lemma \ref{lem:grid_truth_modulus} and
	$m-r+s\geq1$, this implies
	\[
	\left\|
	\widehat{\bvec}_{0,\mathcal G_n}^{(2)}(\tau)
	-
	\bvec^*(\tau)
	\right\|_2
	\leq
	\left\|
	\widehat{\bvec}_{0,\mathcal G_n}^{(2)}(t)
	-
	\bvec^*(t)
	\right\|_2
	+
	C_\star\sqrt{m-r+s}\,|\tau-t|.
	\]
	Since $|\tau-t|\leq\Delta_n$, taking the supremum over
	$\tau\in\mc T$ proves \eqref{eqn:grid_interpolation_bound}.
	
	Lemma \ref{lem:grid_point_rate} and
	$\sqrt n\,\Delta_n=O(1)$ yield
	\[
	\sup_{\tau\in\mc T}
	\left\|
	\widehat{\bvec}_{0,\mathcal G_n}^{(2)}(\tau)
	-
	\bvec^*(\tau)
	\right\|_2
	=
	O_p\left(
	\sqrt{\frac{m-r+s}{n}}
	\right).
	\]
\end{proof}

\begin{proof}[Proof of Corollary \ref{cor:grid_bahadur}]
	For $u\in\mc T$, write, within this proof,
	\begin{align*}
		\bm U_n(u)
		&=
		\frac1{\sqrt n}
		\sum_{i\in[n]}
		\tilde\xvec_i
		\psi_u\{y_i-\xvec_i^T\bvec^*(u)\},\\
		\bm A(u)
		&=
		[\bm J(u)]^{-1/2}
		\{\bm I-\bm P(u)\}
		[\bm J(u)]^{-1/2},\\
		\bm D(u)
		&=
		[\bm J(u)]^{-1}
		\widetilde{\bm C}^{T}
		\bm H^{-1}(u).
	\end{align*}
	By \eqref{eqn:RE.part pen bahadur}, we obtain
	\[
	\sqrt n
	\left\{
	\widehat{\bvec}_{0,\MSset}^{(2)}(u)
	-
	\bvec_{\MSset}^*(u)
	\right\}
	=
	\bm A(u)\bm U_n(u)
	-
	\bm D(u)\bm\eta(u)
	+
	\tilde{\bm r}_n(u),
	\]
	where Theorem \ref{thm:part pen bahadur} provides
	\[
	\sup_{u\in\mc T}
	\|\tilde{\bm r}_n(u)\|_2=o_p(1).
	\]
	
	We first compute the required operator bounds.  Recall that, uniformly over $u,v\in\mc T$,
	\[
	\left|
	Q_y(u\mid\xvec)-Q_y(v\mid\xvec)
	\right|
	\lesssim |u-v|
	\]
	almost surely, and hence
	\[
	\left|
	f_{y\mid\xvec}\{Q_y(u\mid\xvec)\mid\xvec\}
	-
	f_{y\mid\xvec}\{Q_y(v\mid\xvec)\mid\xvec\}
	\right|
	\lesssim |u-v|.
	\]
	It follows from Condition \ref{as:Gram lower bound} that
	\[
	\|\bm J(u)-\bm J(v)\|_{\mathrm{op}}
	\lesssim |u-v|.
	\]
	Conditions \ref{as:density conditions}(i)--(ii) and
	\ref{as:Gram lower bound} also imply that the eigenvalues of
	$\bm J(u)$ are bounded from above and away from zero uniformly
	over $u\in\mc T$ and $n$.
	
	Let $\widetilde{\bm Z}$ be the orthonormal null-space matrix used
	in the proof of Theorem \ref{thm:oracle rate}.  One has
	\[
	\bm A(u)
	=
	\widetilde{\bm Z}
	\left\{
	\widetilde{\bm Z}^{T}\bm J(u)\widetilde{\bm Z}
	\right\}^{-1}
	\widetilde{\bm Z}^{T}.
	\]
	The preceding bounds and the inverse identity therefore yield
	\[
	\sup_{u\in\mc T}\|\bm A(u)\|_{\mathrm{op}}\lesssim1,
	\qquad
	\|\bm A(u)-\bm A(v)\|_{\mathrm{op}}
	\lesssim |u-v|.
	\]
	
	For every $\bm a\in\reals^r$, $\bm D(u)\bm a$ is the unique
	minimizer of $\bm b^T\bm J(u)\bm b$ subject to
	$\widetilde{\bm C}\bm b=\bm a$.  The vector
	\[
	\widetilde{\bm C}^{T}
	(\Cmat\Cmat^T)^{-1}\bm a
	\]
	is feasible because
	$\widetilde{\bm C}\widetilde{\bm C}^{T}=\Cmat\Cmat^T$.
	Condition \ref{as:local alts} and the uniform eigenvalue bounds
	for $\bm J(u)$ consequently imply
	\[
	\sup_{u\in\mc T}\|\bm D(u)\|_{\mathrm{op}}\lesssim1.
	\]
	Moreover,
	\[
	\widetilde{\bm C}\bm D(u)=\bm I_r,
	\qquad
	\bm A(u)\widetilde{\bm C}^{T}=\bm 0,
	\qquad
	\bm A(u)\bm J(u)
	=
	\bm I-\bm D(u)\widetilde{\bm C}.
	\]
	These identities imply
	\[
	\bm D(u)-\bm D(v)
	=
	-\bm A(u)
	\{\bm J(u)-\bm J(v)\}
	\bm D(v),
	\]
	and hence
	\[
	\|\bm D(u)-\bm D(v)\|_{\mathrm{op}}
	\lesssim |u-v|.
	\]
	Condition \ref{as:local alts} now implies
	\begin{equation}\label{eqn:grid_bias_modulus}
		\left\|
		\bm D(u)\bm\eta(u)
		-
		\bm D(v)\bm\eta(v)
		\right\|_2
		\lesssim
		\left\{
		\sqrt{(m-r+s)n}
		+
		\sqrt{m-r+s}
		\right\}|u-v|.
	\end{equation}
	
	We next control the score increment.  Lemmas
	\ref{lemma:Ui uniform} and \ref{lemma:Ui equivalence}, together
	with monotonicity in $u$, give, outside one null event,
	simultaneously over $u\in\mc T$,
	\[
	\bm U_n(u)
	=
	\frac1{\sqrt n}
	\sum_{i\in[n]}
	\tilde\xvec_i\{u-1(U_i<u)\}.
	\]
	Consider the scalar functions
	\[
	(\tilde\xvec,U)
	\longmapsto
	(\bm\alpha^T\tilde\xvec)
	\left[
	(u-v)-\{1(U<u)-1(U<v)\}
	\right],
	\]
	where $u,v\in\mc T$, $|u-v|\leq\Delta_n$, and
	$\|\bm\alpha\|_2\leq1$.  Conditional uniformity of $U$ implies
	that every such function is centered and has second moment
	bounded by $C\Delta_n$, while its envelope is bounded by
	$C\xi_n$.  The covering argument in the proof of Lemma 24,
	together with Lemmas 22 and 23, of
	\citet{belloni_conditional_2019} gives uniform entropy bounded by
	\[
	C(m+s)\log(C/\varepsilon),
	\qquad
	0<\varepsilon\leq1.
	\]
	Lemma 21(3) of \citet{belloni_conditional_2019} and duality
	therefore yield
	\begin{equation}\label{eqn:grid_score_modulus}
		\sup_{\substack{u,v\in\mc T\\|u-v|\leq\Delta_n}}
		\|\bm U_n(u)-\bm U_n(v)\|_2
		=
		O_p\!\left\{
		\sqrt{(m+s)\Delta_n\log n}
		+
		\frac{(m+s)\xi_n\log n}{\sqrt n}
		\right\}.
	\end{equation}
	Condition \ref{as:Gram lower bound} gives
	\[
	\xi_n^2
	\geq
	\E\|\tilde\xvec\|_2^2
	=
	\operatorname{tr}(\bm\Sigma)
	\gtrsim m+s.
	\]
	Thus $(m+s)^3\xi_n^2=o(n^{1-\epsilon})$ implies
	$(m+s)^4=o(n^{1-\epsilon})$.  The mesh and growth conditions give
	\begin{align*}
		(m+s)\Delta_n\log n
		&=
		\{\sqrt{n(m+s)}\,\Delta_n\}
		\sqrt{\frac{m+s}{n}}\log n
		=o(1),\\
		\frac{(m+s)^2\xi_n^2\log^2n}{n}
		&=
		\frac{(m+s)^3\xi_n^2}{n^{1-\epsilon}}
		\frac{\log^2n}{(m+s)n^\epsilon}
		=o(1).
	\end{align*}
	Consequently, the right-hand side of
	\eqref{eqn:grid_score_modulus} is $o_p(1)$.  Lemma 31 of
	\citet{belloni_conditional_2019} also gives
	\[
	\sup_{u\in\mc T}\|\bm U_n(u)\|_2
	=
	O_p\{\sqrt{m+s}\}.
	\]
	
	Set $t=\pi_n(\tau)$.  Since $|t-\tau|\leq\Delta_n$,
	the preceding operator and score bounds imply
	\begin{align*}
	\sup_{\tau\in\mc T}
		\|\bm A(t)\bm U_n(t)-\bm A(\tau)\bm U_n(\tau)\|_2
	&	\leq
		\sup_{\tau\in\mc T}
		\|\bm A(t)\|_{\mathrm{op}}
		\|\bm U_n(t)-\bm U_n(\tau)\|_2+
		\sup_{\tau\in\mc T}
		\|\bm A(t)-\bm A(\tau)\|_{\mathrm{op}}
		\|\bm U_n(\tau)\|_2\\
		&=o_p(1),
	\end{align*}
	since $\sqrt{m+s}\,\Delta_n=o(1)$ holds under the mesh condition.
	Similarly, \eqref{eqn:grid_bias_modulus},
	$m-r+s\leq m+s$, and the mesh condition give
	\[
	\sup_{\tau\in\mc T}
	\|\bm D(t)\bm\eta(t)-\bm D(\tau)\bm\eta(\tau)\|_2
	=o(1).
	\]
	
	The deterministic interpolation term is controlled directly by
	Condition \ref{as:local alts} and Lemma
	\ref{lem:grid_truth_modulus}.  In fact, Condition
	\ref{as:local alts} gives
	\[
	\bm\Psi(\tau)-\bm\Psi(t)
	=
	\widetilde{\bm C}
	\{\bvec_{\MSset}^*(\tau)-\bvec_{\MSset}^*(t)\}
	-
\frac{\bm\eta(\tau)-\bm\eta(t)}{\sqrt n}.
\]
Therefore, we get
\begin{align*}
	&\bvec_{\MSset}^*(t)-\bvec_{\MSset}^*(\tau)
	+
	\widetilde{\bm C}^{T}
	(\Cmat\Cmat^T)^{-1}
	\{\bm\Psi(\tau)-\bm\Psi(t)\}\\
	&\qquad=
	\left\{
	\bm I-
	\widetilde{\bm C}^{T}
	(\Cmat\Cmat^T)^{-1}
	\widetilde{\bm C}
	\right\}
	\{\bvec_{\MSset}^*(t)-\bvec_{\MSset}^*(\tau)\}-
	\widetilde{\bm C}^{T}
	(\Cmat\Cmat^T)^{-1}
	\frac{\bm\eta(\tau)-\bm\eta(t)}{\sqrt n}.
\end{align*}
Moreover, recall that
Condition \ref{as:local alts} gives
\[
\left\|
\widetilde{\bm C}^{T}
(\Cmat\Cmat^T)^{-1}
\right\|_{\mathrm{op}}^2
=
\lambda_{\max}\{(\Cmat\Cmat^T)^{-1}\}
\lesssim1.
\]
It follows from Lemma \ref{lem:grid_truth_modulus}, Condition
\ref{as:local alts}, and $m-r+s\geq1$ that
\begin{align*}
	\sup_{\tau\in\mc T}
	\sqrt n
	\left\|
	\bvec_{\MSset}^*(t)-\bvec_{\MSset}^*(\tau)
	+
	\widetilde{\bm C}^{T}
	(\Cmat\Cmat^T)^{-1}
	\{\bm\Psi(\tau)-\bm\Psi(t)\}
	\right\|_2\lesssim
	\sqrt{n(m-r+s)}\,\Delta_n
	=o(1).
\end{align*}

The displayed growth condition implies
\[
\frac{(m+s)\xi_n^2\log^2n}{n}
=
\frac{(m+s)^3\xi_n^2}{n^{1-\epsilon}}
\frac{\log^2n}{(m+s)^2n^\epsilon}
=o(1).
\]
The mesh condition implies
$\sqrt n\,\Delta_n=O(1)$.  Lemma
\ref{lem:grid_point_rate} therefore implies that, with
probability tending to one, simultaneously for every
$t\in\mathcal G_n$,
\[
\widehat{\bvec}_{0,\mathcal G_n}^{(2)}(t)
=
\widehat{\bvec}_0^o(t).
\]
On this event, \eqref{eqn:grid_interpolated_estimator} and
\eqref{eqn:RE.oracle bahadur}, evaluated at $t=\pi_n(\tau)$,
give
\begin{align*}
	\sqrt n
	\left\{
	\widehat{\bvec}_{0,\mathcal G_n,\MSset}^{(2)}(\tau)
	-
	\bvec_{\MSset}^*(\tau)
	\right\}
	&=
	\bm A(t)\bm U_n(t)
	-
	\bm D(t)\bm\eta(t)
	+
	\bm r_{2n}(t)\\
	&\quad+
	\sqrt n
	\left[
	\bvec_{\MSset}^*(t)-\bvec_{\MSset}^*(\tau)
	+
	\widetilde{\bm C}^{T}
	(\Cmat\Cmat^T)^{-1}
	\{\bm\Psi(\tau)-\bm\Psi(t)\}
	\right].
\end{align*}
Define
\begin{align*}
	\widetilde{\bm r}_{\mathcal G,n}(\tau)
	=
	\sqrt n
	\left\{
	\widehat{\bvec}_{0,\mathcal G_n,\MSset}^{(2)}(\tau)
	-
	\bvec_{\MSset}^*(\tau)
	\right\}-
	\bm A(\tau)\bm U_n(\tau)
	+
	\bm D(\tau)\bm\eta(\tau).
\end{align*}
On the preceding event, the displayed expansion shows that
\begin{align*}
	\widetilde{\bm r}_{\mathcal G,n}(\tau)
	&=
	\bm A(t)\bm U_n(t)-\bm A(\tau)\bm U_n(\tau)\\
	&\quad-
	\{\bm D(t)\bm\eta(t)-\bm D(\tau)\bm\eta(\tau)\}
	+
	\bm r_{2n}(t)\\
	&\quad+
	\sqrt n
	\left[
	\bvec_{\MSset}^*(t)-\bvec_{\MSset}^*(\tau)
	+
	\widetilde{\bm C}^{T}
	(\Cmat\Cmat^T)^{-1}
	\{\bm\Psi(\tau)-\bm\Psi(t)\}
	\right].
\end{align*}
The preceding bounds and
$\sup_{u\in\mc T}\|\bm r_{2n}(u)\|_2=o_p(1)$ from Theorem
\ref{thm:oracle bahadur} indicate
\[
\sup_{\tau\in\mc T}
\|\widetilde{\bm r}_{\mathcal G,n}(\tau)\|_2=o_p(1).
\]
The preceding display and the definitions of $\bm A(\tau)$,
$\bm U_n(\tau)$, and $\bm D(\tau)$ prove
\eqref{eqn:RE.grid bahadur}.

Finally, subtracting \eqref{eqn:RE.part pen bahadur} from
\eqref{eqn:RE.grid bahadur} yields
\[
\sqrt n
\left\{
\widehat{\bvec}_{0,\mathcal G_n,\MSset}^{(2)}(\tau)
-
\widehat{\bvec}_{0,\MSset}^{(2)}(\tau)
\right\}
=
\widetilde{\bm r}_{\mathcal G,n}(\tau)
-
\tilde{\bm r}_n(\tau).
\]
Taking the supremum over $\tau\in\mc T$ proves
\eqref{eqn:grid_exact_equivalence}.
\end{proof}

\clearpage
\section{Additional Simulation Results} 

\subsection{QQ plots of $p$-values under a heteroskedastic model} \label{sec:sim QQ plots}
In this section, we present QQ-plots of the $p$-values from the global and pointwise inference simulations in Section \ref{sec:additional simulation}.
Figures \ref{fig:add_hetero_p150_band_qq} - \ref{fig:add_hetero_p450_point_qq} plot the quantiles of the $p$-values from each testing procedure against the quantiles of a Uniform$(0,1)$ distribution in simulations where the null hypotheses are true (i.e. $h = 0$).
We see that the distribution of $p$-values for the QLLA rank score test very nearly matches both the theoretical distribution of $p$-values and the distribution of oracle rank score test $p$-values. Both the QLLA and oracle Wald tests appear to be conservative in the global inference simulations, though the QLLA test is more so. 
In the pointwise simulations the $p$-values for both Wald tests are closer to the theoretical Uniform$(0,1)$ distribution, particularly for testing H$_{(1)}$ and H$_{(3)}$.

\begin{figure}[H]
	\centering
	\includegraphics[width=\textwidth]{"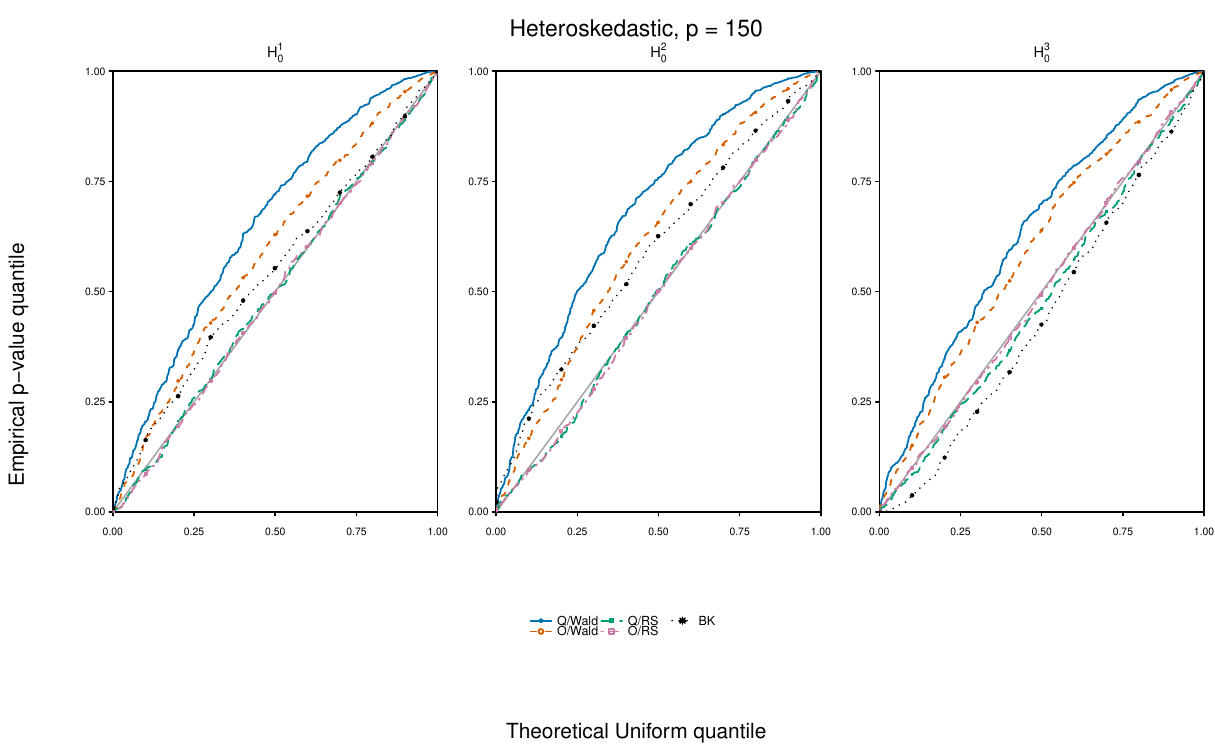"}
	\caption{Null $p$-value Q-Q plots for global inference in the
		heteroskedastic $p=150$ design.  }
	\label{fig:add_hetero_p150_band_qq}
\end{figure}

\begin{figure}[H]
	\centering
	\includegraphics[width=\textwidth]{"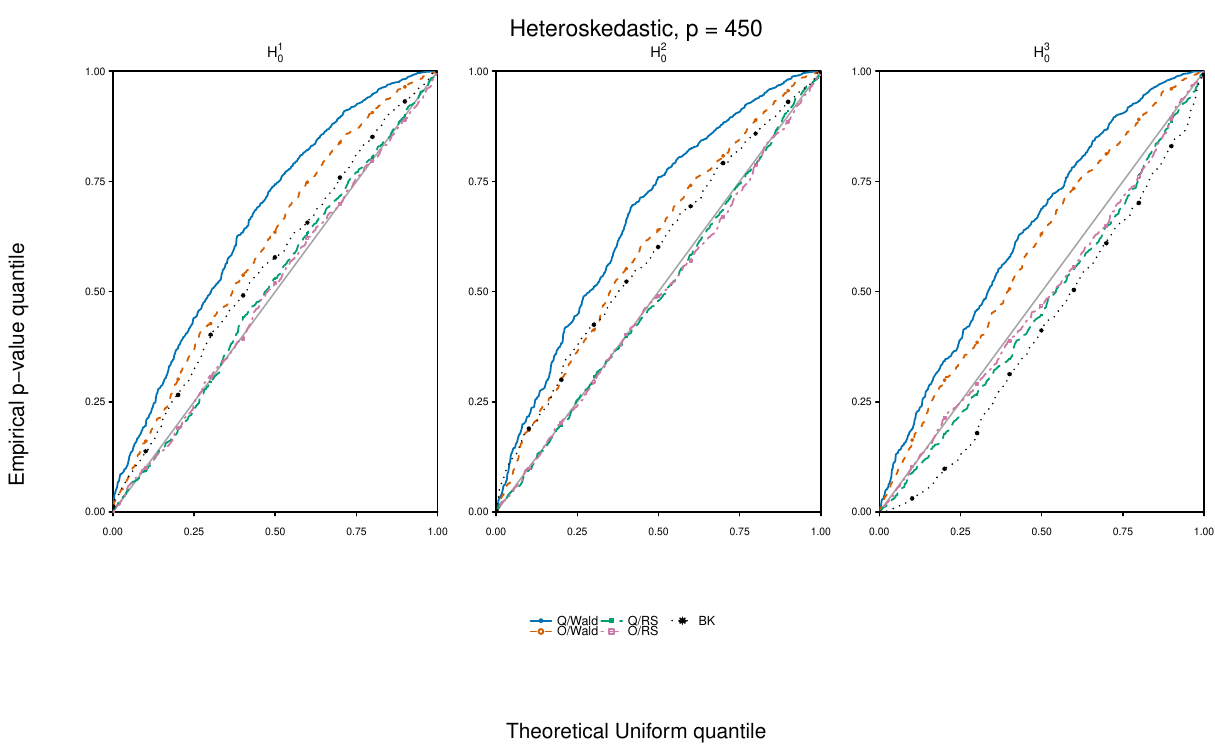"}
	\caption{Null $p$-value Q-Q plots for global inference in the
		heteroskedastic $p=450$ design.}
	\label{fig:add_hetero_p450_band_qq}
\end{figure}

\begin{figure}[H]
	\centering
	\includegraphics[width=\textwidth]{"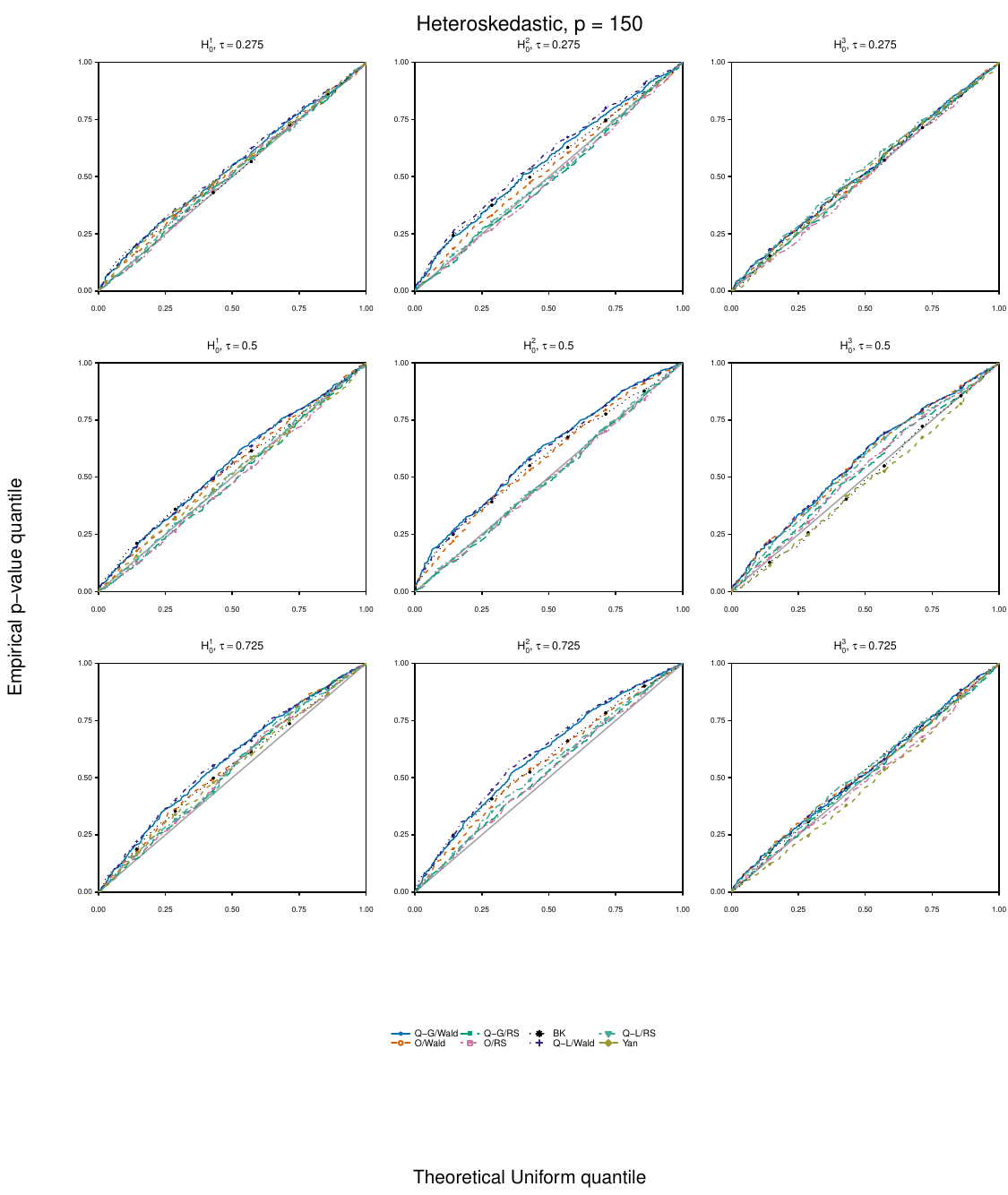"}
	\caption{Null $p$-value Q-Q plots for pointwise inference in the
		heteroskedastic $p=150$ design.}
	\label{fig:add_hetero_p150_point_qq}
\end{figure}

\begin{figure}[H]
	\centering
	\includegraphics[width=\textwidth]{"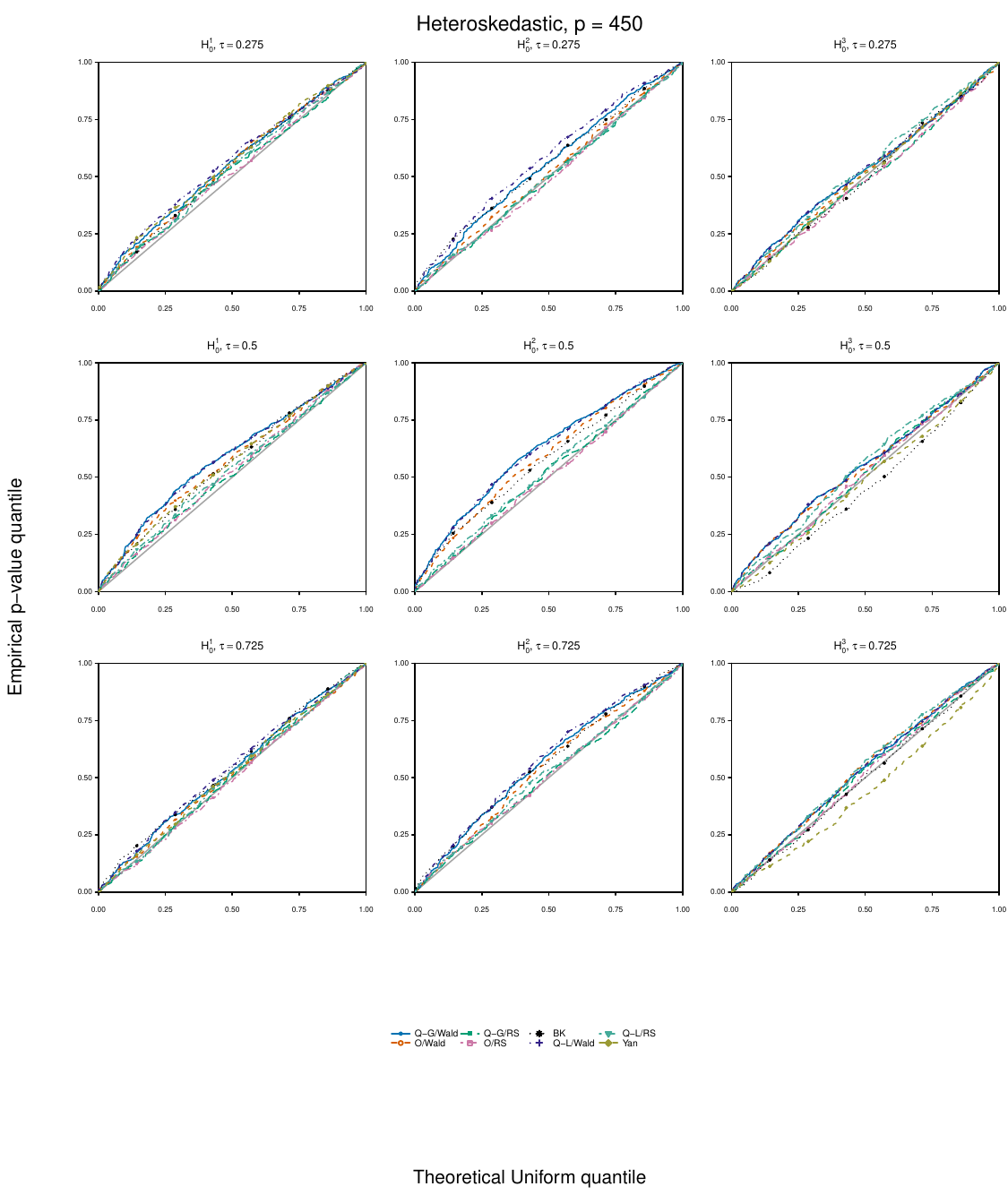"}
	\caption{Null $p$-value Q-Q plots for pointwise inference in the
		heteroskedastic $p=450$ design.}
	\label{fig:add_hetero_p450_point_qq}
\end{figure}

\subsection{Estimation comparison with homoskedastic design}\label{app:homo estimation comparison}
In this section, we conduct a estimation comparison under a location-shift model  
$$y_i = \beta_0^* + \xvec_i^T \bvec^* + \eps_i \text{\; for \;} i= 1, \ldots, n $$ 
with $\beta_0^* = 1 $, $\bvec^* = (2, 0, -2, 0, 1, 0, -1, 0, \ldots)^T$, and $\eps_i \overset{iid}{\sim} t_2$.
As in Section \ref{sec: estimation comparison}, we generate $\xvec_i \overset{iid}{\sim} N(\bm 0, \bm \Sigma_{\xvec})$ with $ [\bm \Sigma_{\xvec}]_{ij} = 0.5^{|i-j|}$ for $i,j \in [p]$ and fit the models at $80$ equally-spaced quantiles in $\mc T = [0.25, 0.75]$.
In each replication we generate $n = 200$ observations with either $p = 100$ or $p = 400$ predictors, and we perform $1000$ replications per setting.

Table \ref{tbl:homo est comp} reports the results from the homoskedastic model simulations.
The same patterns hold as in the heteroskedastic simulations, with the globally adaptive LLA estimators outperforming the alternative estimators in terms of the mean $\ell_1$ and $\ell_2$ estimation errors and the mean number of false negative variable selections, and the globally adaptive estimators outperforming their locally adaptive counterparts.

\spacingset{1}
\begin{table}[t]
	\centering
	\caption{Homoskedastic model estimation comparison} \label{tbl:homo est comp}
	\begin{tabular}[t]{l|lllll}			
		& $p$ & $\ell_1$ error & $\ell_2$ error & FP & FN\\
		\hline
		G-LLA(SCAD) & 100 & \textbf{0.565 (0.011)} & \textbf{0.137 (0.009)} & 0.026 (0.004) & \textbf{0.035 (0.008)}\\
		G-LLA(MCP) & 100 & \textbf{0.517 (0.007)} & \textbf{0.1 (0.005)} & \textbf{0.025 (0.005)} & \textbf{0.012 (0.004)}\\
		L-LLA(SCAD) & 100 & 1.032 (0.02) & 0.556 (0.02) & 0.062 (0.005) & 0.248 (0.018)\\
		L-LLA(MCP) & 100 & 0.83 (0.016) & 0.374 (0.015) & 0.048 (0.005) & 0.151 (0.013)\\
		G-Zheng & 100 & 0.978 (0.022) & 0.465 (0.019) & 0.092 (0.009) & 0.284 (0.016)\\
		L-Zheng & 100 & 1.716 (0.024) & 1.181 (0.023) & \textbf{0.018 (0.003)} & 0.877 (0.018)\\
		L1 & 100 & 4.195 (0.02) & 4.123 (0.04) & 0.066 (0.005) & 0.96 (0.02)\\
		Oracle & 100 & 0.47 (0.004) & 0.071 (0.001) & 0 (0) & 0 (0)\\
		\hline
		G-LLA(SCAD) & 400 & \textbf{1.131 (0.027)} & \textbf{0.645 (0.028)} & 0.029 (0.003) & \textbf{0.432 (0.026)}\\
		G-LLA(MCP) & 400 & \textbf{0.828 (0.022)} & \textbf{0.362 (0.02)} & 0.038 (0.005) & \textbf{0.203 (0.018)}\\
		L-LLA(SCAD) & 400 & 2.094 (0.028) & 1.722 (0.037) & 0.031 (0.003) & 1.052 (0.03)\\
		L-LLA(MCP) & 400 & 1.707 (0.027) & 1.258 (0.032) & 0.029 (0.003) & 0.78 (0.028)\\
		G-Zheng & 400 & 1.932 (0.023) & 1.372 (0.024) & 0.034 (0.006) & 1.075 (0.021)\\
		L-Zheng & 400 & 2.598 (0.021) & 2.168 (0.029) & \textbf{0.004 (0.001)} & 1.653 (0.015)\\
		L1 & 400 & 5.123 (0.019) & 6.116 (0.05) & \textbf{0.017 (0.002)} & 1.719 (0.019)\\
		Oracle & 400 & 0.472 (0.004) & 0.071 (0.001) & 0 (0) & 0 (0)\\
		\hline
	\end{tabular}
\end{table}
\spacingset{\myspacing}

\subsection{Test comparison with homoskedastic design} \label{sec:additional simulation results homoskedasticity}

In this section, we present the homoskedastic counterpart to the heteroskedastic model test comparison in Section \ref{sec:additional simulation}. We use the same dimensions, effect values, hypotheses, 81-knot grid, procedures as in those simulations, and generate the data from
\[
\xvec_i\overset{\mathrm{iid}}{\sim}N(\bm 0, \bm \Sigma_x),
\qquad [\bm \Sigma_x]_{jk}=0.5^{|j-k|},
\qquad
y_i=\beta_0^*+\xvec_i^T\bvec_h^*+\eps_i,
\qquad \eps_i\overset{\mathrm{iid}}{\sim}t_5.
\]
The conditional-quantile slopes are constant in $\tau$, and the intercept is $1+F_{t_5}^{-1}(\tau)$.

We present power comparisons for global and pointwise inference with $p \in \{150, 450\}$ as well as QQ-plots of the $p$-values under the null hypotheses (as in Section \ref{sec:sim QQ plots}).
Figures \ref{fig:add_homo_p150_band_rates} and \ref{fig:add_homo_p450_band_rates} plot the rejection rates for the global comparison and Figures \ref{fig:add_homo_p150_point_rates} and \ref{fig:add_homo_p450_point_rates} plot the rejection rates for the pointwise comparison.
Figures \ref{fig:add_homo_p150_band_qq} and \ref{fig:add_homo_p450_band_qq} present QQ-plots of the $p$-values for the global comparison and Figures \ref{fig:add_homo_p150_point_qq} and \ref{fig:add_homo_p450_point_qq} present the QQ-plots for the pointwise comparison.
The same general patterns hold in these results as in the heteroskedastic model simulations. 

\begin{figure}[H]
\centering
\includegraphics[width=\textwidth]{"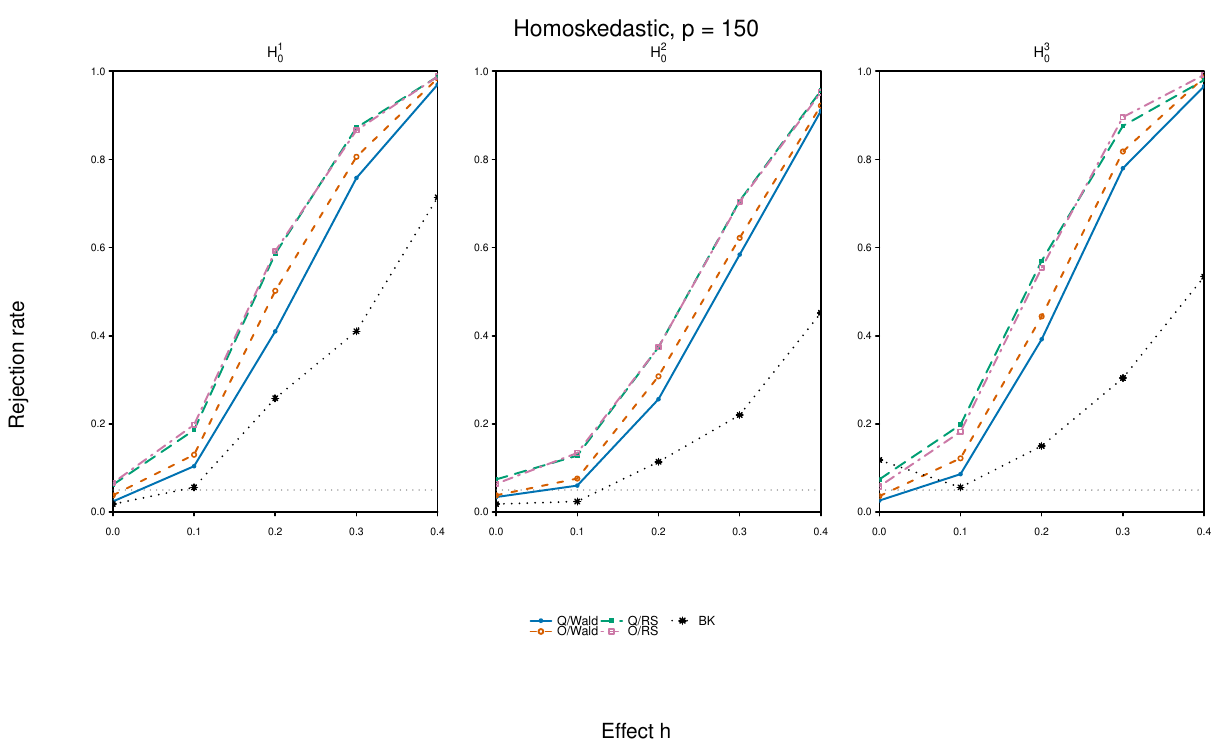"}
\caption{Global rejection rates at level $0.05$ in the
homoskedastic $p=150$ design.}
\label{fig:add_homo_p150_band_rates}
\end{figure}

\begin{figure}[H]
	\centering
	\includegraphics[width=\textwidth]{"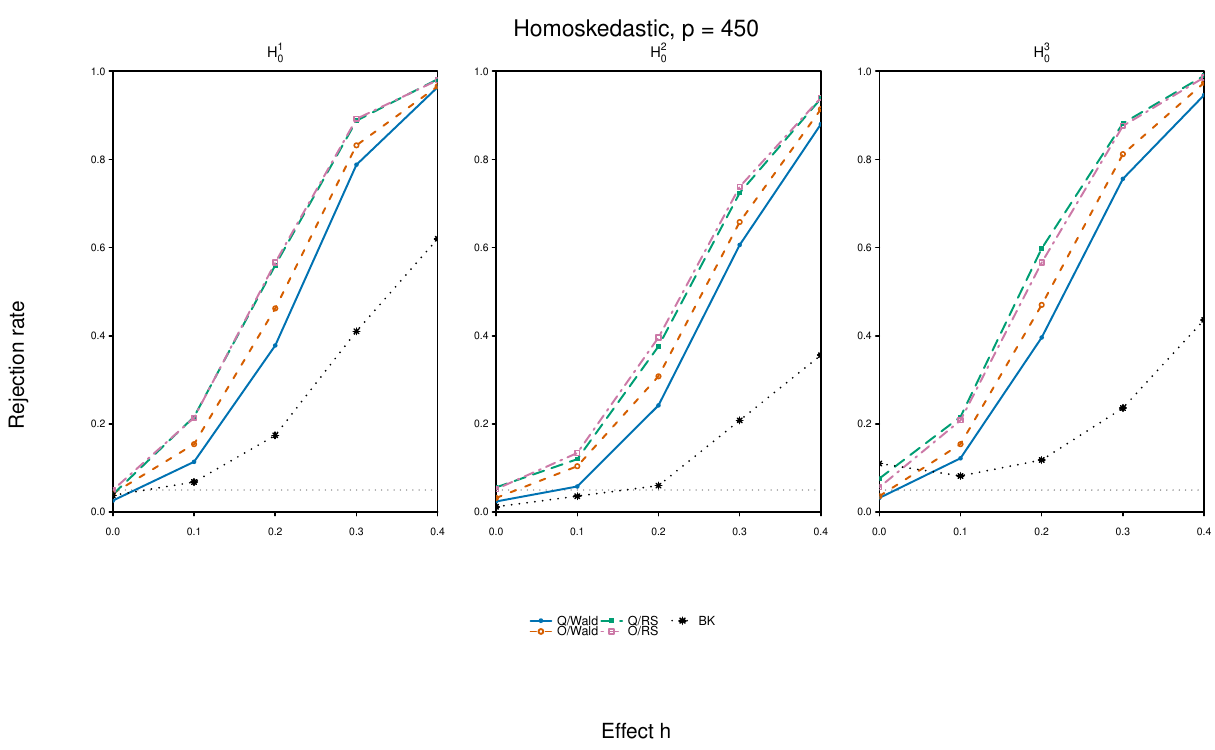"}
	\caption{Global rejection rates at level $0.05$ in the
		homoskedastic $p=450$ design.}
	\label{fig:add_homo_p450_band_rates}
\end{figure}

\begin{figure}[H]
\centering
\includegraphics[width=\textwidth]{"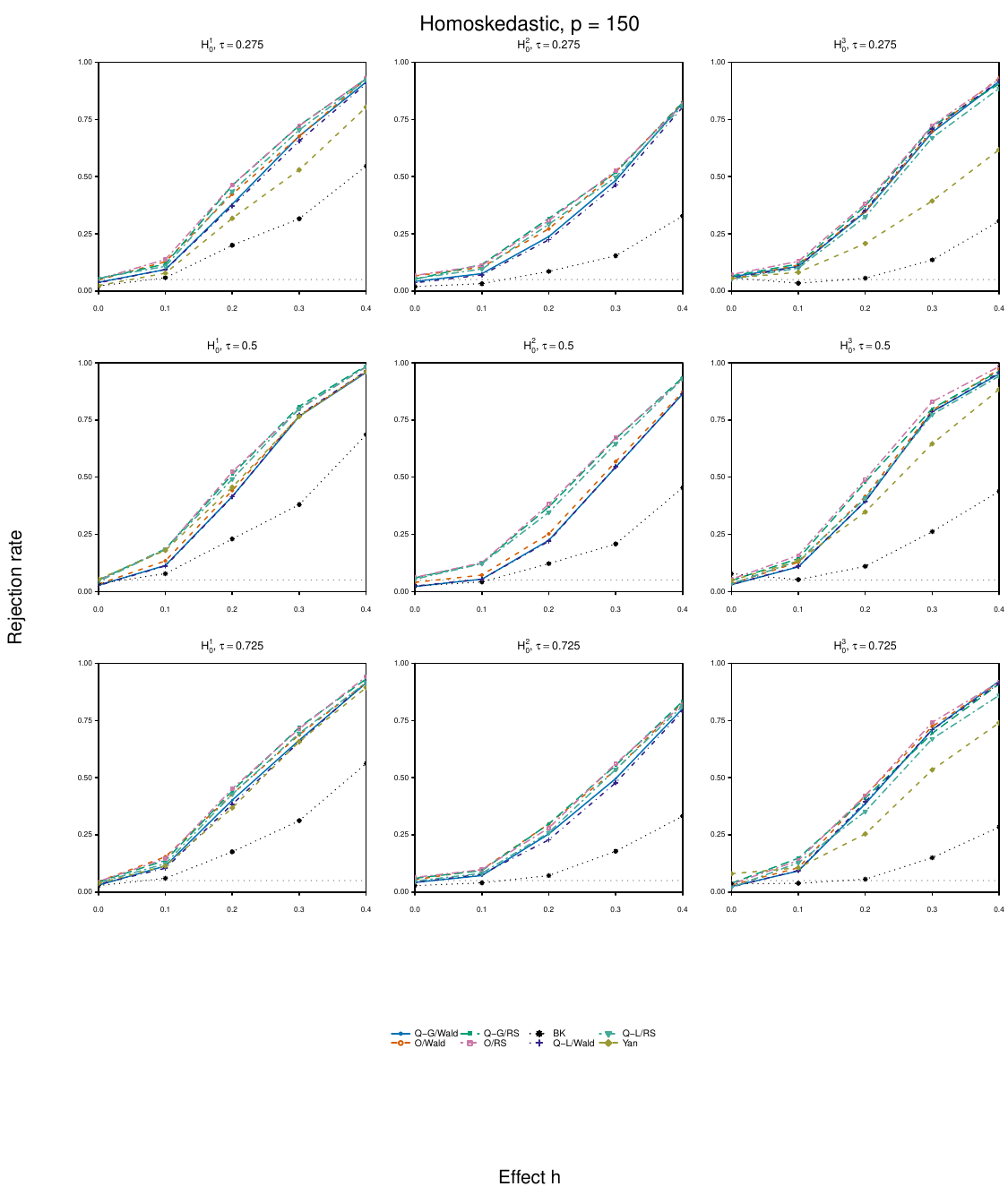"}
\caption{Pointwise rejection rates at level $0.05$ in the
homoskedastic $p=150$ design.}
\label{fig:add_homo_p150_point_rates}
\end{figure}

\begin{figure}[H]
\centering
\includegraphics[width=\textwidth]{"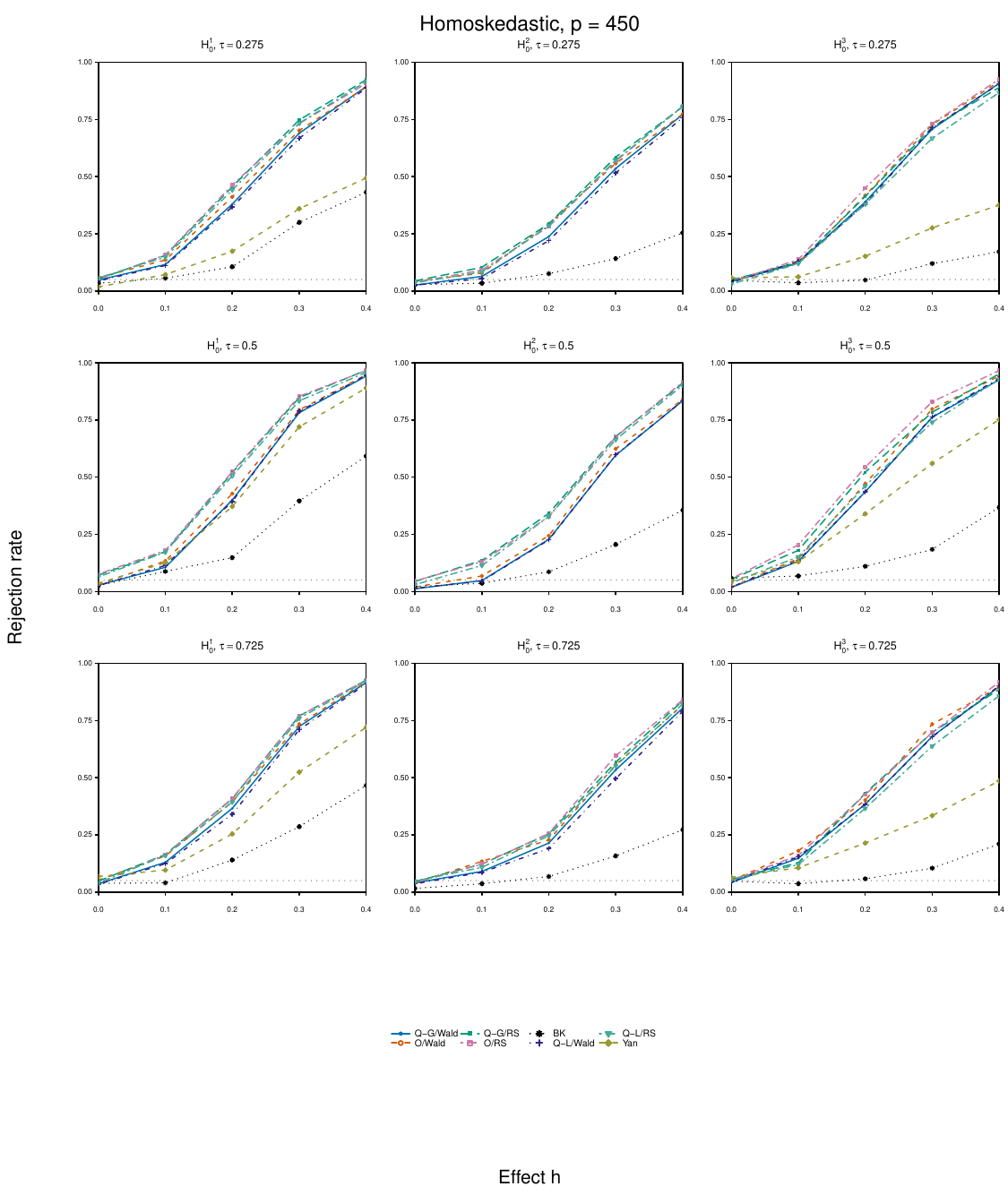"}
\caption{Pointwise rejection rates at level $0.05$ in the
homoskedastic $p=450$ design.}
\label{fig:add_homo_p450_point_rates}
\end{figure}

\begin{figure}[H]
	\centering
	\includegraphics[width=\textwidth]{"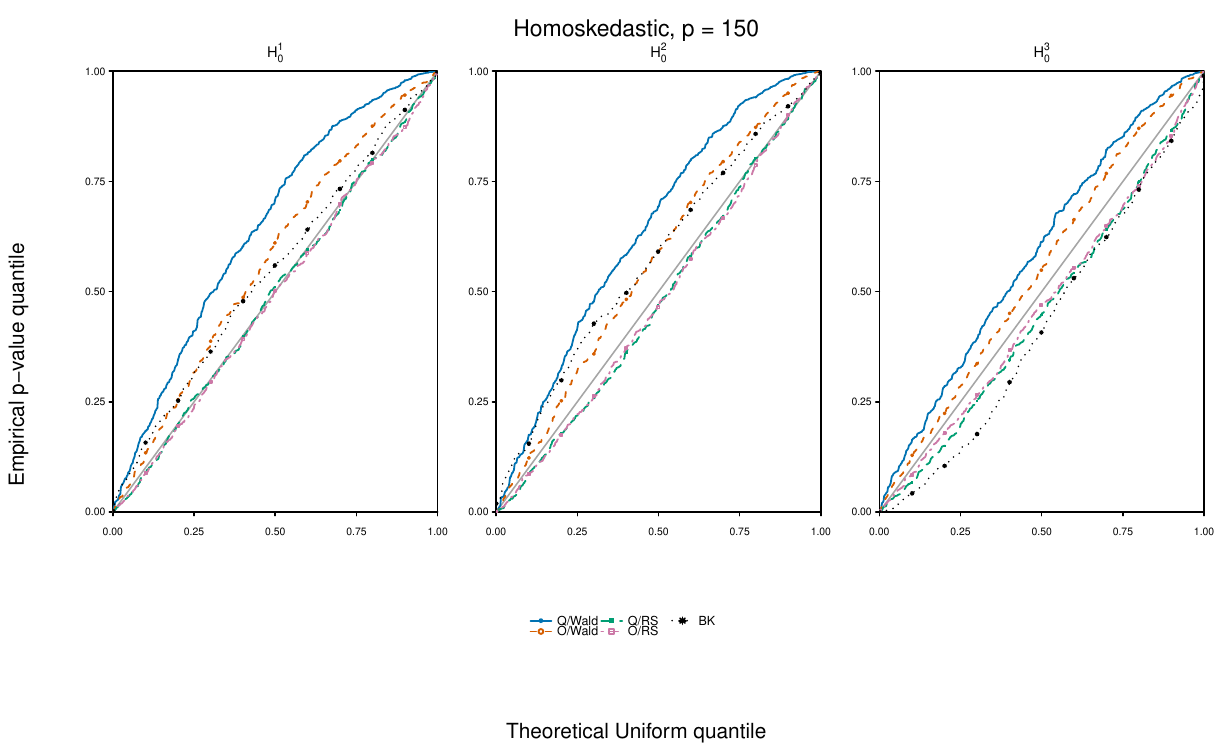"}
	\caption{Null $p$-value Q-Q plots for global inference in the
		homoskedastic $p=150$ design.}
	\label{fig:add_homo_p150_band_qq}
\end{figure}

\begin{figure}[H]
	\centering
	\includegraphics[width=\textwidth]{"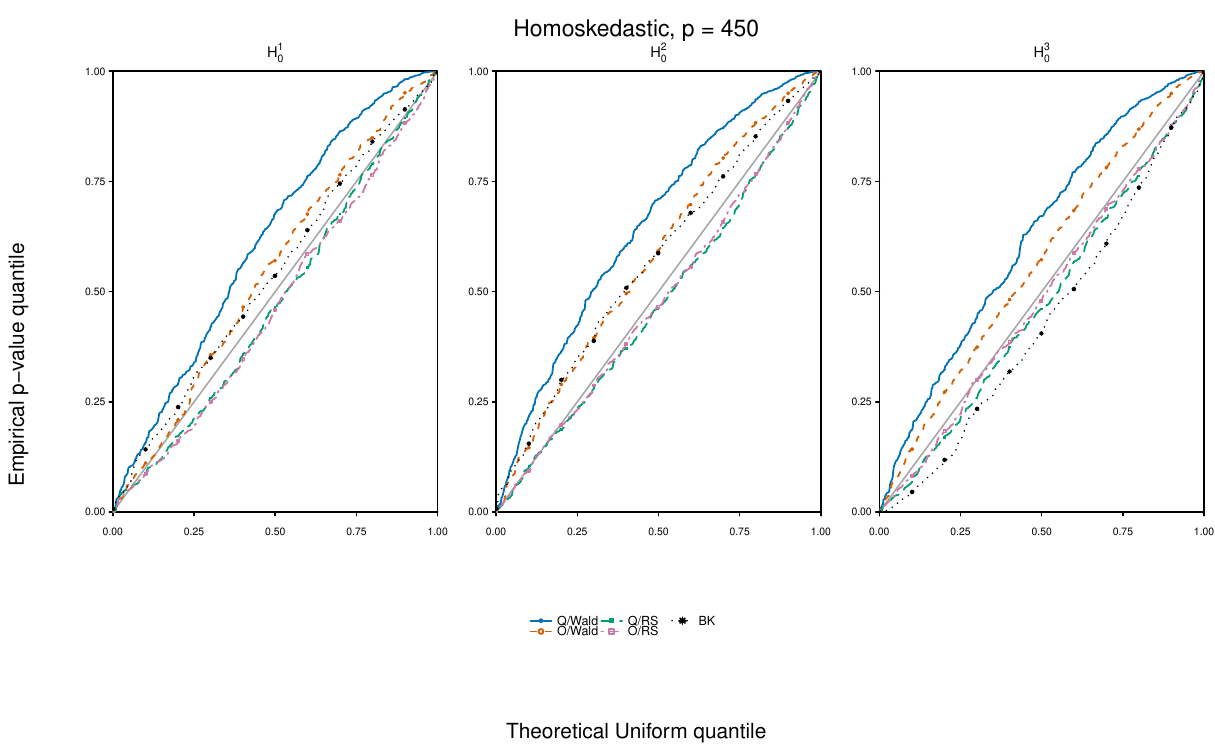"}
	\caption{Null $p$-value Q-Q plots for global inference in the
		homoskedastic $p=450$ design.}
	\label{fig:add_homo_p450_band_qq}
\end{figure}

\begin{figure}[H]
	\centering
	\includegraphics[width=\textwidth]{"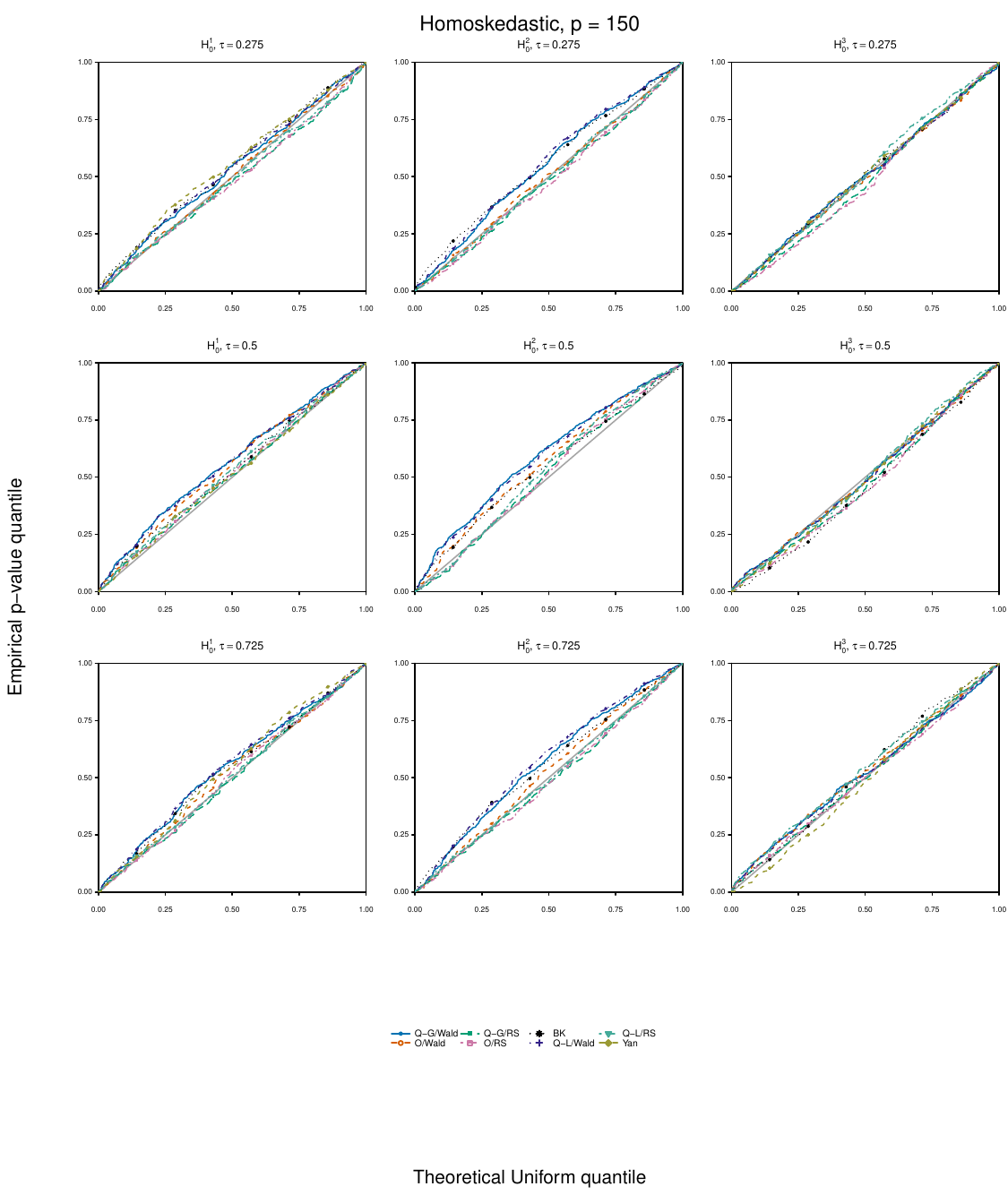"}
	\caption{Null $p$-value Q-Q plots for pointwise inference in the
		homoskedastic $p=150$ design.}
	\label{fig:add_homo_p150_point_qq}
\end{figure}

\begin{figure}[H]
	\centering
	\includegraphics[width=\textwidth]{"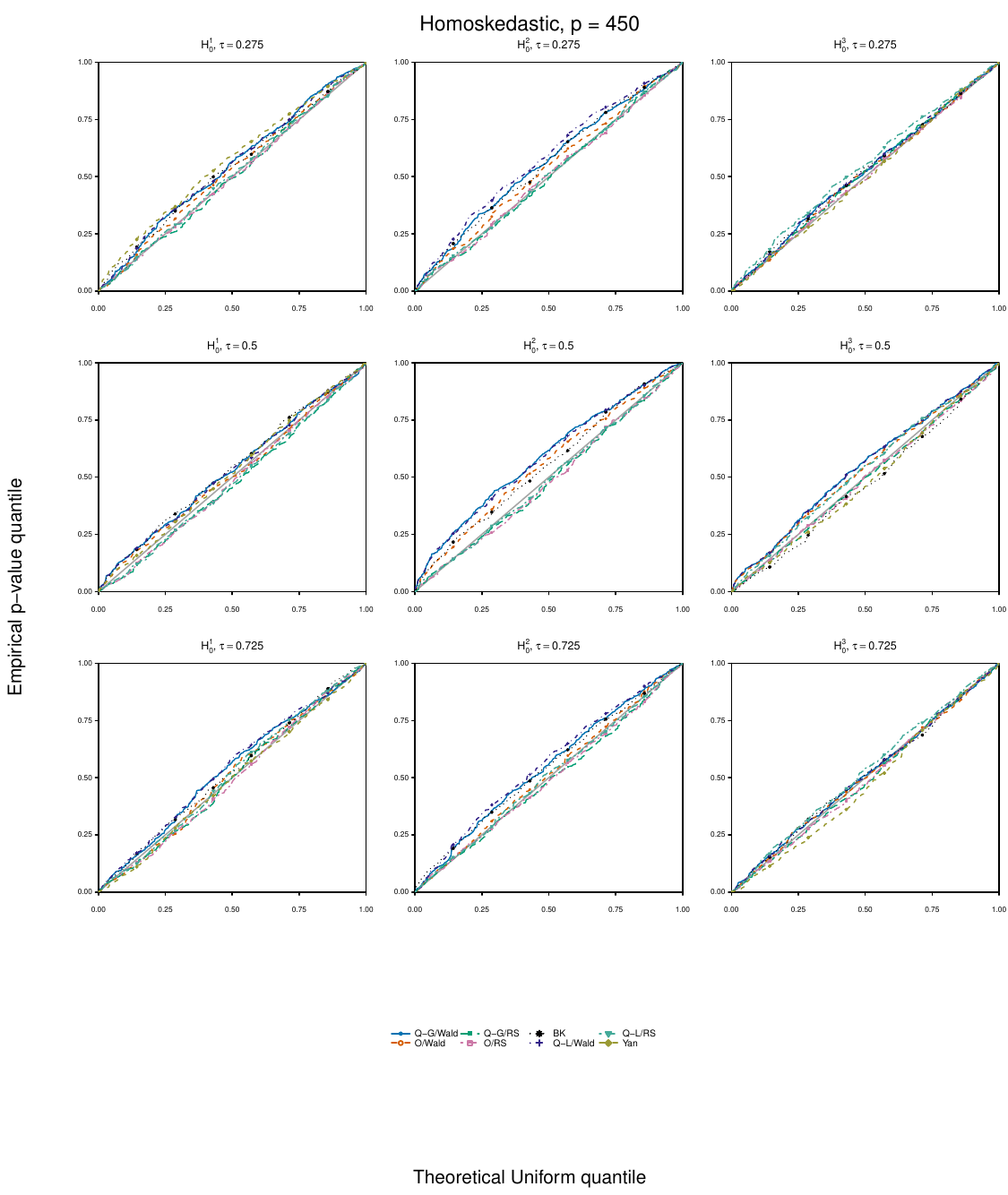"}
	\caption{Null $p$-value Q-Q plots for pointwise inference in the
		homoskedastic $p=450$ design.}
	\label{fig:add_homo_p450_point_qq}
\end{figure}

\end{document}